\documentclass[aps,prx,reprint,superscriptaddress,nofootinbib]{revtex4-2}
\usepackage{blindtext}
\usepackage{lipsum}
\usepackage{graphics}
\usepackage{amsmath}
\usepackage{bm}
\usepackage{bbm}
\usepackage{graphicx}
\usepackage{graphics}
\usepackage{amssymb}
\usepackage{verbatim}
\usepackage{physics}
\usepackage{float}
\usepackage{dsfont}
\usepackage[normalem]{ulem}
\usepackage[dvipsnames]{xcolor}
\usepackage{framed}
\definecolor{shadecolor}{RGB}{225,225,225}

\usepackage{amsthm}
\usepackage{thmtools}
\usepackage{thm-restate}
\usepackage{amsfonts}
\usepackage{url}
\usepackage{chngcntr}
\counterwithout{equation}{section}


\usepackage[dvipsnames]{xcolor}
\usepackage[normalem]{ulem}
\usepackage{wasysym}
\usepackage[export]{adjustbox}
\usepackage{ifthen}

\usepackage{titlesec,enumitem}
\renewcommand{\theparagraph}{\arabic{paragraph}}
\renewcommand{\thesubparagraph}{\alph{subparagraph}}
\titleformat{\paragraph}[runin]
  {\normalfont\normalsize\bfseries}
  {(\theparagraph) }
  {0pt}
  {}
  [:]
\titlespacing*{\paragraph}
  {0pt}
  {1.5ex plus .5ex minus .5ex}
  {0.5em}

\titleformat{\subparagraph}[runin]
  {\normalfont\normalsize\itshape}
  {(\theparagraph\thesubparagraph) }
  {0pt}
  {}
  []
\titlespacing*{\subparagraph}
  {0pt}
  {1.0ex plus .5ex minus .5ex}
  {0.5em}

\makeatletter
\DeclareFontFamily{OMX}{MnSymbolE}{}
\DeclareSymbolFont{MnLargeSymbols}{OMX}{MnSymbolE}{m}{n}
\SetSymbolFont{MnLargeSymbols}{bold}{OMX}{MnSymbolE}{b}{n}
\DeclareFontShape{OMX}{MnSymbolE}{m}{n}{
    <-6>  MnSymbolE5
   <6-7>  MnSymbolE6
   <7-8>  MnSymbolE7
   <8-9>  MnSymbolE8
   <9-10> MnSymbolE9
  <10-12> MnSymbolE10
  <12->   MnSymbolE12
}{}
\DeclareFontShape{OMX}{MnSymbolE}{b}{n}{
    <-6>  MnSymbolE-Bold5
   <6-7>  MnSymbolE-Bold6
   <7-8>  MnSymbolE-Bold7
   <8-9>  MnSymbolE-Bold8
   <9-10> MnSymbolE-Bold9
  <10-12> MnSymbolE-Bold10
  <12->   MnSymbolE-Bold12
}{}

\let\llangle\@undefined
\let\rrangle\@undefined
\DeclareMathDelimiter{\llangle}{\mathopen}%
                     {MnLargeSymbols}{'164}{MnLargeSymbols}{'164}
\DeclareMathDelimiter{\rrangle}{\mathclose}%
                     {MnLargeSymbols}{'171}{MnLargeSymbols}{'171}
\makeatother

\usepackage{tikz}
\usepackage{tikz-cd}
\usetikzlibrary{arrows}
\usetikzlibrary{snakes}
\usetikzlibrary{intersections}
\usetikzlibrary{shapes.geometric}
\usetikzlibrary{decorations.pathmorphing, patterns,shapes}
\usetikzlibrary{decorations.markings}

\tikzset{
	partial ellipse/.style args={#1:#2:#3}{
		insert path={+ (#1:#3) arc (#1:#2:#3)}
	}
}

\tikzset{
	mid arrow/.style={postaction={decorate,decoration={
				markings,
				mark=at position .575 with {\arrow[#1]{stealth}}
	}}},
	near arrow/.style={postaction={decorate,decoration={
				markings,
				mark=at position .275 with {\arrow[#1]{stealth}}
	}}},
	far arrow/.style={postaction={decorate,decoration={
				markings,
				mark=at position .800 with {\arrow[#1]{stealth}}
	}}},
}

\pgfdeclarepatternformonly{south west lines}{\pgfqpoint{-0pt}{-0pt}}{\pgfqpoint{3pt}{3pt}}{\pgfqpoint{3pt}{3pt}}{
	\pgfsetlinewidth{0.4pt}
	\pgfpathmoveto{\pgfqpoint{0pt}{0pt}}
	\pgfpathlineto{\pgfqpoint{3pt}{3pt}}
	\pgfpathmoveto{\pgfqpoint{2.8pt}{-.2pt}}
	\pgfpathlineto{\pgfqpoint{3.2pt}{.2pt}}
	\pgfpathmoveto{\pgfqpoint{-.2pt}{2.8pt}}
	\pgfpathlineto{\pgfqpoint{.2pt}{3.2pt}}
	\pgfusepath{stroke}}

\newcommand{\exd}{\mathrm{d}}
\newcommand{\Hom}{\mathsf{Hom}}
\newcommand{\unit}{\mathbbm{1}}
\newcommand{\id}{\mathrm{id}}

\newcommand{\mcb}[1]{#1_{\bullet}}
\newcommand{\mcbb}[1]{#1^{\bullet}}

\definecolor{orange(ryb)}{HTML}{FFA500}
\definecolor{lightorange(ryb)}{HTML}{FFB300}
\definecolor{dodgerblue}{HTML}{1E90FF}
\definecolor{steelcityblue}{HTML}{088EBF}
\definecolor{lightdodgerblue}{HTML}{4dbff7}
\definecolor{crimson}{HTML}{FF4C4C}
\definecolor{pinkerton}{HTML}{EC368D}
\definecolor{forest}{HTML}{6DD189}
\definecolor{lightishgray}{HTML}{DFDFDF}
\definecolor{error-red}{HTML}{EFB2B6}

\usepackage[colorlinks=true,citecolor=dodgerblue,linkcolor=dodgerblue,urlcolor=dodgerblue,pdftitle={Chain Maps}]{hyperref}

\def \beq {\begin{equation}}
\def \eeq {\end{equation}}
\def \beqa {\begin{eqnarray}}
\def \eeqa {\end{eqnarray}}
\def \bseq {\begin{subequations}}
\def \eseq {\end{subequations}}
\def \btikz {\begin{tikzpicture}[scale = 1.0, baseline = {([yshift=-.5ex]current bounding box.center)}]}
\def \etikz {\end{tikzpicture}}
\newcommand{\phii}{\varphi}

\newcommand \transpose {\mathsf{T}}

\newcommand \nnb {\nonumber}

\newcommand \td {\tilde}
\newcommand \ZZ {\mathbb{Z}}
\newcommand \FF {\mathbb{F}}

\newcommand \C {\mathcal{C}}
\newcommand \D {\mathcal{D}}

\newcommand{\cupp}{\smallsmile}
\newcommand{\capp}{\smallfrown}

\newcommand \isom {\simeq}
\newcommand \CMH[1] {[\mathbf{C}^{[#1]}]}

\newcommand{\CZLogicalColor}[2]{\overline{\mathsf{\textcolor{orange}{C}\textcolor{dodgerblue}{Z}}}_{\textcolor{orange}{#1}\textcolor{dodgerblue}{#2}}}
\newcommand{\CNOTLogicalColor}[2]{\overline{\mathsf{\textcolor{orange}{C}\textcolor{dodgerblue}{NOT}}}_{\textcolor{orange}{#1}\textcolor{dodgerblue}{#2}}}
\newcommand{\CZLogical}[2]{\overline{\mathsf{CZ}}_{#1#2}}
\newcommand{\CNOTLogical}[2]{\overline{\mathsf{CNOT}}_{#1#2}}

\newtheoremstyle{boldtitle}
  {\topsep}
  {\topsep}
  {\normalfont}
  {}
  {\bfseries}
  {}
  {.5em}
  {\thmname{#1}\thmnumber{ #2}\thmnote{. #3}.}

\theoremstyle{boldtitle}

\newtheorem{prototheorem}{Theorem}[section]
\newtheorem{theorem}[prototheorem]{Theorem}
\newtheorem{lemma}[prototheorem]{Lemma}
\newtheorem{definition}[prototheorem]{Definition}
\newtheorem{conjecture}[prototheorem]{Conjecture}
\newtheorem{cor}[prototheorem]{Corollary}

\definecolor{shadecolor}{RGB}{235,235,235}
\colorlet{theoremshade}{gray!12}
\colorlet{lemmashade}{gray!12}

\newenvironment{theo}{
\colorlet{shadecolor}{theoremshade}\begin{shaded*}\begin{theorem}}
{\end{theorem}\end{shaded*}}

\newenvironment{lem}{\colorlet{shadecolor}{lemmashade}\begin{shaded*}\begin{lemma}}
{\end{lemma}\end{shaded*}}

\newenvironment{defn}{\colorlet{shadecolor}{lemmashade}\begin{shaded*}\begin{definition}}
{\end{definition}\end{shaded*}}

\newenvironment{coro}{\colorlet{shadecolor}{lemmashade}\begin{shaded*}\begin{cor}}
{\end{cor}\end{shaded*}}

\begin{document}

\title{Computing with qLDPC Codes by Climbing the Chain Map Hierarchy}

\author{Rahul Sahay\equalcontrib}
\email{rsahay@g.harvard.edu}
\affiliation{Department of Physics, Harvard University, Cambridge, Massachusetts 02138, USA}

\author{David M. Long\equalcontrib}
\email{dmlong@stanford.edu}
\affiliation{Department of Physics, Stanford University, Stanford, California 94305, USA}

\author{Vedika Khemani}
\email{vkhemani@stanford.edu}
\affiliation{Department of Physics, Stanford University, Stanford, California 94305, USA}

\makeatletter
\DeclareRobustCommand{\equalcontrib}{%
  \frontmatter@footnote{Equal contribution.}
}
\makeatother

\begin{abstract}
We develop a framework for logical computation with qLDPC codes that places logical Pauli, Clifford, and non-Clifford operations on the same footing.
This brings the simple homological description of Pauli logicals to the patchwork landscape of logical Clifford and non-Clifford operations, providing a tool for the discovery of new logical gates. 
In particular, we define the \textit{chain map hierarchy}: a family of chain complexes whose homology classes encode logical unitary and code surgery operations at any level of the Clifford hierarchy, precisely analogous to the chain complex description of Pauli logicals.  
Consequently, intuition for Pauli logicals can be leveraged to discover new logical operations on qLDPC codes. 
For instance, the familiar ability to deform Pauli logicals with stabilizers---i.e. boundaries of the chain complex---becomes a way to search for constant-depth unitary implementations of (non-)Clifford logical gates. 
Using this strategy, we discover constant-depth unitary implementations of the full logical Clifford group on any number of blocks of the 2D toric code, including within a single block, and addressable logical $\mathsf{CCZ}$ gates on any triple of logical qubits on any number of blocks of the 3D toric code. 
Beyond manifold codes, we find addressable non-Clifford gates on codes with many encoded qubits. The chain map hierarchy naturally encompasses and extends other constructions of logical gadgets, for instance providing a universal parameterization of cup products. 
As such, our work provides a unified, useful, and intuitive language for computing with qLDPC codes. 

\end{abstract}

\maketitle

\section{Introduction}
\label{sec:Intro}

\begin{figure*}[!t]
    \centering
    \includegraphics[width=\textwidth]{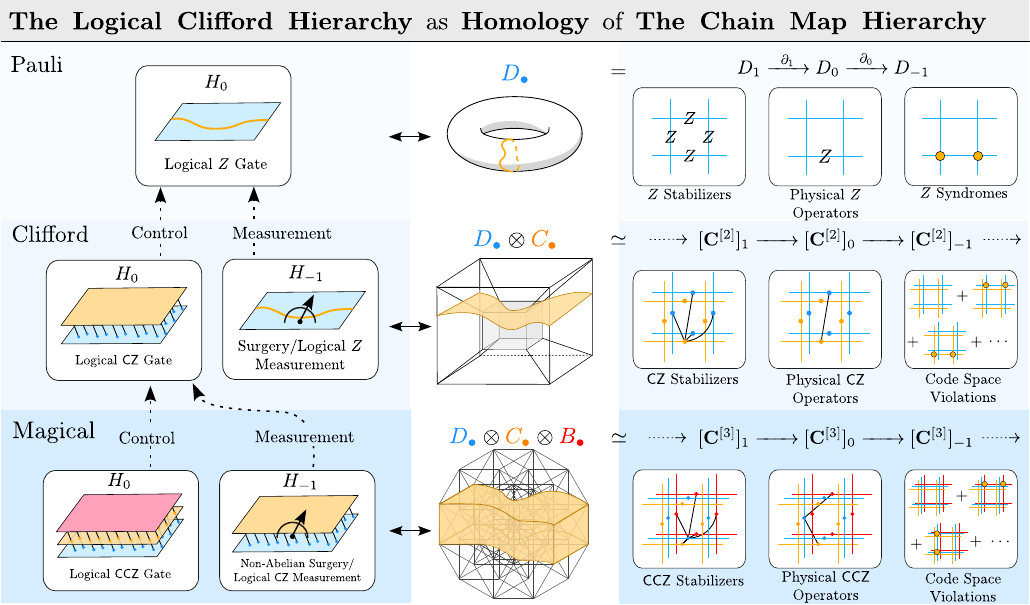}
    \caption{\textbf{Computation with qLDPC Codes and the Chain Map Hierarchy.}
    The relationship between Pauli logical gates and the homology of a chain complex $D_{\bullet}$ (top row) is extended to  Clifford gates (middle row)  and  non-Clifford gates up the Clifford hierarchy (bottom row). 
    In particular, (left column) certain logical operations---both unitary and measurement---from the \(\ell\)th level of the Clifford hierarchy (illustrated here for \(\mathsf{C}^{\ell-1}\mathsf{Z}\) gates) are encoded in the homology classes  of (right column) an auxiliary chain complex \(\CMH{\ell}\).
    We call the collection of these auxiliary chain complexes the \textit{chain map hierarchy}.
    (Middle row) Concretely, physical Clifford gates between codes \(C_{\bullet}\) and \(D_{\bullet}\) are encoded in $\CMH{2}$, which is isomorphic to the tensor product $\CMH{2} \isom D_{\bullet} \otimes C_\bullet$.
    For instance, a tensor like \(d \otimes c\) encodes a \(\mathsf{CZ}\) gate between qubits at \(c \in C_\bullet\) and \(d \in D_\bullet\), and is thought of as a map \(c \mapsto d\).
    Logical Clifford gates are encoded as homologically non-trivial cycles of this chain complex, and can be deformed by boundaries corresponding to logically trivial \textit{Clifford stabilizers}.
    (Bottom row) Higher levels of the hierarchy are iteratively defined as $\CMH{3} \isom D_{\bullet} \otimes C_{\bullet} \otimes B_{\bullet}$, and so on.
    A tensor like \(d \otimes c \otimes b\) encodes, for instance, a \(\mathsf{CCZ}\), and is thought of as a map from \(b\) to the \(\mathsf{CZ}\) encoded by \(d \otimes c\).
    }
    \label{fig:overview}
\end{figure*}

Determining how to reliably store and process quantum information in physical systems composed of unreliable components \cite{von1956probabilistic} is one of the most conceptually rich and technologically pressing fronts in physics. 
In this effort, much has been understood with regards to how to \emph{store} information---that is, in developing schemes for a robust quantum memory~\cite{Terhal2015QECreview,Breuckmann_2021_rev}.
As such, the critical next step is to determine: \emph{How do we manipulate stored information?}

This is especially important given recent advances in quantum low-density parity check (qLDPC) codes, the best of which store many logical qubits into a single code block~\cite{Breuckmann_2021_rev, MacKay_2004, Kovalev_2013, Tillich_2014, Leverrier_2015, Hastings_2021, Breuckmann_2021,panteleev_asymptotically_2022, leverrier2022quantumtannercodes, dinur_good_2023}.
Ultimately, for these logical qubits to be useful for computation, we need to be able to individually and arbitrarily manipulate them with the same fluency with which we control their constituent physical qubits.

However, even for simple qLDPC codes, we understand surprisingly little.
Forgetting more complicated codes such as those that store tens or hundreds of logical qubits, there are substantial gaps in our understanding even for the 2D toric code~\cite{Kitaev_2003}, which has just \emph{two} encoded qubits.
For instance, with current approaches~\cite{horsman2012surface,litinski2019game}, simply performing a logical Hadamard on one of the two qubits involves a complex architecture of ancillas scaffolding the code~\cite{cohen2022lowoverhead}.

This is indicative of the state of the art.
The current approach is frequently to construct examples of codes or logical gadgets that meet a given target---e.g. gadgets at a particular level of the Clifford hierarchy or with a given spacetime overhead---rather than to understand the principles and structures underlying logical computation.
Ideally, we would understand the landscape of \emph{all} codes \emph{and} the logical operations that can be performed with them.
Without such a holistic framework, gaps in our understanding are left behind that rapidly compound to impede future progress. 
Indeed, finding more principled approaches to fault-tolerant computation in qLDPC codes is increasingly the focus of recent work~\cite{huang_2023_homomorphicCNOT, ide2025faulttolerant, williamson2024lowoverhead, zhu2025,
breuckmann2025cupsgatesicohomology, lin2024transversalnoncliffordgatesquantum, zhu2025topological, hsin2025classifyinglogicalgatesquantum, li_poincare_2025, li2026theorycohomologicalinvariantsquantum, haruna2026note, haruna2026homological}.

It is therefore useful to take a step back and ask: \emph{How is it that we have made so much progress on understanding quantum memory?}
A conceptual approach for understanding quantum memories originated from the aforementioned 2D toric code, a topologically ordered system whose properties are derived from the homology of the 2D torus~\cite{Kitaev_2003}.
By generalizing the toric code, it was realized that any Calderbank-Shor-Steane (CSS) code \cite{CalderBankShor, Steane1996} can be understood through a homological formalism based on the mathematical structure of a chain complex \cite{Kitaev1997quantumcomputation,Kitaev_2003, Bombin_2007, Freedman2001}. 

The chain complex formalism has been a cornerstone of progress on quantum memory. 
It repackages the stabilizer description of quantum codes to mirror the structure of the toric code.
Importantly, the chain complex provides a full characterization of the code's Pauli logicals: homology classes label distinct logical operators, while boundaries label stabilizers that relate different physical representatives of the same logical.
Key properties of the codes, such as the code's rate and distance, are derived from this homological description.
Moreover, powerful tools from homological algebra---notably homological products~\cite{Weibel1994HomologicalBook, bravyi2013homologicalproductcodes, Tillich_2014, Hastings_2021,Breuckmann_2021, panteleev_asymptotically_2022}---are responsible for both the best finite instances~\cite{Bravyi_2024, yoder2025tourgross, cain2026shor, bhardwaj2026high, hong2026quantum},  and asymptotic families~\cite{Hastings_2021, Breuckmann_2021, 
Panteleev_2021,
panteleev_asymptotically_2022,leverrier2022quantumtannercodes, dinur_good_2023} of qLDPC codes.

In our work, we generalize the chain complex description of quantum codes---i.e. memory---to an organizing principle for computation with any qLDPC code. 
We define auxiliary chain complexes that, analogous to the Pauli case, provide a full characterization of the code's Clifford and non-Clifford logical gates.
Each logical action is labeled by a distinct homology class, with representative cycles encoding physical gate implementations.
Strikingly, the chain complex boundaries furnish a notion of (\textit{non-})\textit{Clifford stabilizers} that relate distinct physical implementations of the same logical action.
These stabilizer deformations can be used to find constant-depth implementations of logical actions (when they exist).
For instance, returning to the 2D toric code, we use this approach to find a logical \emph{unitary} gate that performs a Hadamard on one of its two qubits, circumventing the need for a complex ancilla architecture.
Indeed, our approach reveals constant-depth unitary implementations of the entire Clifford group in the toric code.

Our formalism encompasses and extends most known constructions of logical unitary gates, and even extends to measurement gadgets.
This includes Clifford gadgets (e.g. code surgery~\cite{horsman2012surface,litinski2019game,Vuillot_2019,cohen2022lowoverhead,cowtan2024ssip,Cowtan2024SurgeryUniversal,swaroop2024universal,williamson2024lowoverhead,cross2024improved,ide2025faulttolerant,zhang2025timeefficient,zhang2025accelerating,zheng2025highrate,baspin2025fast,cowtan_fast_2025,cowtan2025parallel,he2025extractors}, fold-transversal gates~\cite{Kubica2015Unfolding,Moussa2016FoldedSurface,Breuckmann2024foldtransversal}, homomorphic transversal gates~\cite{huang_2023_homomorphicCNOT,xu2024constant}, etc.) and constructions of non-Clifford gadgets (e.g. those based on cup products~\cite{zhu2025,breuckmann2025cupsgatesicohomology, lin2024transversalnoncliffordgatesquantum, zhu2025topological, hsin2025classifyinglogicalgatesquantum,menon2025magic,  tiew2026copycupgatestensorproducts, li_poincare_2025, li2026theorycohomologicalinvariantsquantum} and recently introduced non-Abelian code surgery gadgets~\cite{huang2025generatinglogicalmagicstates, margaritaknots, sajithcodes, zhu_non-abelian_2026, williamson_fast_2026, Christos2026nonabelian}). 
Consequently, our work presents a unified and useful framework for thinking about computation in qLDPC codes.

\subsection{Summary of Results}

\setcounter{secnumdepth}{5}
\paragraph{The Chain Map Hierarchy} 
We develop a chain complex formalism that places Pauli, Clifford, and non-Clifford operations on the same footing.

In the conventional chain complex formalism, \emph{physical} Pauli operators are associated with chains in the complex, while \emph{logical} Paulis correspond to cycles [Fig.~\ref{fig:overview}, top row]. 
Physical Clifford gates (e.g. $\mathsf{CZ}$ gates) acting between two codes map Paulis of one to Paulis of the other, and are thus encoded as linear maps between the code's (co)chain complexes $C^{\bullet}$ and $D_{\bullet}$. 
We show that these maps can themselves be organized as the chains of a chain complex $\CMH{2}$ [Fig.~\ref{fig:overview} (middle row, right)], whose cycles are \emph{chain maps}---maps with well-defined homological action.
The homology classes of $\CMH{2}$ hence label distinct \textit{logical} Clifford actions between these codes, and their representative cycles encode the \textit{full space} of physical implementations of the same logical action. 
The aforementioned Clifford stabilizers are precisely the boundaries of $\CMH{2}$ that deform these cycles/physical implementations.

The chain complex $\CMH{2}$ is simply the tensor product $ \CMH{2} \isom C_{\bullet} \otimes D_{\bullet}$ (closely related to the hypergraph product), analogous to how the space of maps between vector spaces $V$ and $W$ is isomorphic to $W \otimes V^*$. 
Hence, logical Clifford gates between $C_{\bullet}$ and $D_{\bullet}$ are in correspondence with logical Pauli gates of an auxiliary code given by their tensor product!

This construction iteratively extends to gates at any level $\ell$ of the \textit{Clifford hierarchy}~\cite{Gottesman1999Teleportation}.%
\footnote{The Clifford hierarchy is defined recursively: level $\ell = 1$ contains the Pauli operators, while level $\ell$ gates maps Paulis to  level $\ell -1$ gates (e.g. Cliffords live at $\ell = 2$ and non-Cliffords at $\ell > 2$).}
For instance, non-Clifford gates at level $\ell = 3$ (e.g. $\mathsf{CCZ}$)  map Pauli operators in a code $B_{\bullet}$ to Clifford operators between codes $C_{\bullet}$ and $D_{\bullet}$. 
Hence, these gates are captured by a chain complex of maps between $B_{\bullet}$ and $\CMH{2}$, which we dub $\CMH{3} \equiv D_\bullet \otimes C_\bullet \otimes B_\bullet$ [Fig.~\ref{fig:overview} (bottom row)].
This defines a collection of chain complexes encoding gates at increasing levels of the Clifford hierarchy, called the \textit{the chain map hierarchy}.
These complexes characterize inter-block unitary logical gates of $\mathsf{C}^{\ell} \mathsf{Z}$ and $\mathsf{C}^{\ell} \mathsf{X}$ type, certain intra-block gates (e.g. $\mathsf{S}$ gates), and even logical measurement (code surgery) gadgets up the Clifford hierarchy.

\paragraph{Newly Discovered Transversal Gates} The homology classes of the chain map hierarchy characterize the space of physical implementations of logical Clifford and non-Clifford gates.
We explicitly demonstrate how this can be used to find \textit{transversal}---i.e. constant depth%
\footnote{Some authors use transversal to mean strictly depth one, or make a distinction between geometrically local and non-local (``fold transversal''~\cite{Moussa2016FoldedSurface}) gates.
We say a circuit is transversal if it is constant depth with (possibly non-local) few-body gates.
}%
---unitary gates in qLDPC codes, e.g.:
\begin{itemize}
    \item \textit{The Full Clifford Group in the 2D Toric Code:} [Fig.~\ref{fig:applications}(a)] We find implementations of the entire Clifford group on any number of copies of the toric code (including a single copy).
    This includes new addressable $\mathsf{CZ}$ and $\mathsf{CNOT}$ gates acting between the two logical qubits of a single toric code and new addressable $\mathsf{S}$ and Hadamard gates each acting on only one of the qubits of the toric code.

    \item \textit{Addressable Non-Clifford Gates in the 3D Toric Code:} [Fig.~\ref{fig:applications}(b)] We discover addressable $\mathsf{CCZ}$ gates that can act on any triple of logical qubits in any number of copies of the 3D toric code, including an intra-block logical $\mathsf{CCZ}$ gate acting within a single copy.

    \item \textit{Addressable Gates in Non-Manifold Codes:} [Fig.~\ref{fig:applications}(c)] We find \textit{addressable}\footnote{For codes with a number of logical qubits that diverges with increasing system size, ``addressable'' means that a gate acts on an $O(1)$ logical qubits. For codes with a constant number of logical qubits (e.g. the toric code), addressable is taken to mean that the gate acts on the minimal number of logical qubits possible (i.e. a logical $\mathsf{CZ}$ must act on at least two logical qubits, a logical $\mathsf{S}$ must act on at least one logical qubit, etc.).} Clifford and non-Clifford gates in families of fracton models whose number of encoded logical qubits diverge with system size. 

\end{itemize}

Our formalism can also be used to assess when a logical gate is impossible to implement transversally.
While a fuller treatment is left to future work, we make an informal generalization of the Bravyi-K\"onig theorem~\cite{Bravyi2013GatesforLocalStabilizer} to non-local circuits.

\begin{figure}
    \centering
    \includegraphics[width = 247 pt]{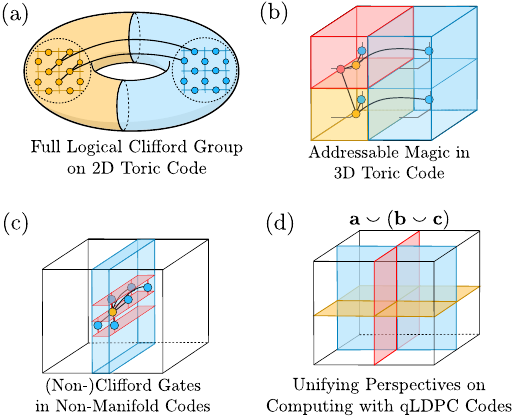}
    \caption{\textbf{Applications.}
    The chain map hierarchy can be used to discover new transversal logical gates.
    For instance, we use it to discover (a)~the full Clifford group on the 2D toric code, (b)~new addressable non-Clifford gates in the 3D toric code, and (c)~addressable Clifford and non-Clifford gates in families of fracton codes with many encoded logical qubits.
    (d)~The chain map hierarchy also unifies several other approaches to finding transversal gates, including cup product constructions and code surgery gadgets for logical measurement.
    }
    \label{fig:applications}
\end{figure}

\paragraph{Unifying Other Perspectives on Computing with qLDPC Codes} Our formalism encompasses and extends to other approaches to computation and to finding logical operations on qLDPC codes.
In particular, logical unitary gates and measurement (code surgery) gadgets appear as different homology classes of the complexes in the chain map hierarchy.
Moreover, approaches to finding transversal gates---e.g.\ via cup products~\cite{breuckmann2025cupsgatesicohomology, hsin2025classifyinglogicalgatesquantum, lin2024transversalnoncliffordgatesquantum}---can be understood and generalized within our formalism.
In particular, we use the chain map hierarchy to develop a universal parameterization of both the space of cup products and cohomology invariants obtained from cup products.
This enables us to find, for instance, cup products in qLDPC codes that live outside of any prior parameterizations of the cup product present in literature. 

The paper is organized as follows. 
Section~\ref{sec:Intuition} provides a minimal and intuitive presentation of some of the key ideas of our work.
Section~\ref{sec:ChainMapHierarchy} then precisely defines and characterizes the chain map hierarchy.
Section~\ref{sec:DiscoveringGates} applies this formalism to discover new transversal gates in several well known qLDPC codes, including the full Clifford group in the 2D toric code.
Section~\ref{sec:UniversalCup} explains how prior approaches to constructing gates---namely cup products---fit within our framework.
We conclude with discussion in Section~\ref{sec:discussion}, which notably includes an informal derivation of a generalization of the Bravyi-K\"onig theorem~\cite{Bravyi2013GatesforLocalStabilizer} to transversal circuits with non-local connectivity.

\makeatletter 
\def\l@subsubsection#1#2{}
\makeatother 
\tableofcontents

\section{Overview of Key Ideas}
\label{sec:Intuition}

Before presenting a complete treatment of our results, we provide an intuitive overview of some key ideas and applications of our work.
This is not intended to be exhaustive and several results of interest are left for later sections---e.g. intra-block gates, relations to cup products and code surgery, and several explicit gate constructions.

The bulk of Section~\ref{subsec:Intuition_UnifiedDescription} describes how physical implementations of logical Clifford gates on two CSS codes are directly encoded in the Pauli logicals of an auxiliary code given by their tensor product.
We illustrate this idea concretely in the case of the 2D toric code.
In particular, we show that the physical implementation of logical $\mathsf{CZ}$ gates on the 2D toric code can be ``read-off'' from the Pauli logicals of the 4D toric code [explicitly see  Fig.~\ref{fig:Cliffordspider} and Section~\ref{subsubsec:informal_chainmapastensorproduct}].
We then naturally generalize this idea to handle gates up the Clifford hierarchy, whose implementations are encoded in a family of auxiliary codes called the chain map hierarchy.

Section~\ref{subsec:Intuition_Application} then highlights a key application of these insights.
We primarily discuss how deformations of the logical Pauli operators of the auxiliary code map back to deformations of the physical implementation of a logical (non-)Clifford gate.
This can then be used as a tool for finding constant-depth implementations of logical gates up the Clifford hierarchy. 

\subsection{Unifying Logical Pauli and (Non-)Clifford Transversal Gates}
\label{subsec:Intuition_UnifiedDescription}

\begin{shaded*}
\noindent
    \textbf{Key Ideas.} The key ideas of this section are:
    \begin{enumerate}
        \item Logical Cliffords can be deformed by local \textit{Clifford stabilizers}.   [Section~\ref{subsubsec:informalCliffordstab}]

        \item Certain Clifford gates can be organized into a chain complex called the \textit{chain map complex} $\CMH{2}$, whose homology labels logical Clifford gates and whose boundaries encode Clifford stabilizers [Section~\ref{subsubsec:intuitive_Chainmapcomplex}].

        \item  If the Cliffords act between codes described by complexes $C_{\bullet}$ and $D_{\bullet}$, then $\CMH{2} \simeq D_{\bullet}\otimes C_{\bullet}$.
        Logical Cliffords can hence be ``read off' from the Pauli logicals of an auxilliary hypergraph product code [Fig.~\ref{fig:Cliffordspider} and Section~\ref{subsubsec:informal_chainmapastensorproduct}].

        \item Each of the above iteratively extends to certain gates up the Clifford hierarchy [Section~\ref{subsubsec:informalCMH}].
    \end{enumerate}

\end{shaded*}

We demonstrate these in the context of the toric code, a CSS stabilizer code defined on a square lattice tiling of a torus. 
As a CSS code, the toric code's stabilizers,  logical Pauli operators, and  syndromes under the action of Pauli operators can be understood through the mathematical concept of a chain complex, $C_\bullet$, and its dual cochain complex, $C^\bullet$ (see Section~\ref{subsec:review} for a detailed review)~\cite{Weibel1994HomologicalBook,hatcher2002algebraic,Kitaev1997quantumcomputation,Kitaev_2003,Bombin_2007}:
\begin{equation} \label{eq-chaincomplex}
        \includegraphics[valign = c, scale = 0.9]{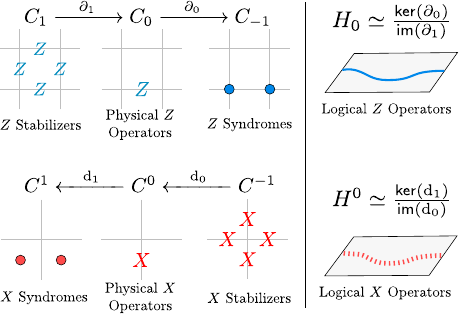}
\end{equation}
Both \(C_\bullet\) and \(C^\bullet\) are sequences of binary vector spaces
connected by matrices, called boundary maps ($\partial_j$)  and coboundary maps ($\exd_j = \partial_j^{\mathsf{T}}$), respectively. 
 Physically,
\(C_\bullet\) organizes the \(Z\)-type Pauli data: \(C_1\) labels the $Z$ stabilizers, \(C_0\) labels physical \(Z\) operators, and
\(C_{-1}\) labels how these operators can violate the code space. Dually, \(C^\bullet\)
organizes the corresponding \(X\)-type Pauli data, with \(C^{-1}\),
\(C^0\), and \(C^1\) labeling \(X\) stabilizers, physical \(X\)
operators, and their syndromes.  For the toric code, the spaces are in correspondence with the plaquettes, edges, and vertices of the lattice,  with the (co)boundary maps encoding their incidence relations.

In a bit more detail, the basis vectors of $C_0$ or $C^0$ are in correspondence with the qubits comprising the CSS code.
Vectors in these spaces are called $0$-chains and $0$-cochains and they describe products of $Z$ and $X$ operators at the qubit locations where the vectors are non-zero.
A similar idea is true for $C_1$ and $C^{-1}$ whose basis vectors label a basis of $Z$ and $X$ stabilizers that generate the stabilizer group.
The boundary maps $\partial_1: C_1 \to C_0$ and $\exd_0: C^{-1} \to C^0$ describe how this basis maps into physical products of $Z$ and $X$ operators.
Finally, the basis vectors of $C_{-1}$ and $C^1$ encode the elementary syndromes of the code; the boundary maps $\partial_0: C_{0} \to C_{-1}$ and $\exd_1: C^0 \to C^1$ encode the syndromes created by products of $Z$ and $X$ operators.
The fact that stabilizers don't violate the code space is encoded in the fact that the (co)boundary maps satisfy the characteristic property that $\partial_{j} \cdot \partial_{j + 1} = 0$ ($\exd_j \cdot \exd_{j - 1}= 0$)---i.e. the boundary of a boundary is zero.

We further note that the Pauli $X$ and $Z$ logical gates are labeled by the the (co)homology groups $H_0 \isom \mathsf{ker}(\partial_0)/\mathsf{im}(\partial_1)$ and $H^0 \isom \mathsf{ker}(\exd_1)/\mathsf{im}(\exd_0)$.
This is because such operators do not violate the code space---i.e. live in $\mathsf{ker}(\partial_0)$ and $\mathsf{ker}(\exd_1)$---and two such operators are treated as equivalent if they can be \textit{deformed} into each other by products of stabilizers---i.e.  $\mathsf{im}(\partial_1)$ and $\mathsf{im}(\exd_0)$.

\subsubsection{Logical Gates as Chain Maps}

We will soon demonstrate that Clifford and non-Clifford operators between codes also form a chain complex, whose zeroth homology group labels \textit{logical} (non-) Clifford gates.
We begin by recalling that logical Clifford gates are conventionally described  as \emph{maps} between chain complexes~\cite{huang_2023_homomorphicCNOT}. This is natural from the defining property of a Clifford unitary \(U\),  which maps every Pauli operator \(P\) to another Pauli operator \(P'\) under conjugation: \(UPU^\dagger=P'\). 

As a concrete example, consider two copies of the toric code, which we will depict in orange and blue, with a transversal Clifford $\mathsf{CZ}$ gate between them: 
\vspace{-4mm}
\begin{equation}\label{eq:globCZ}
    \includegraphics[valign = c]{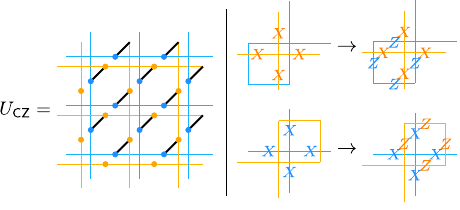}
\end{equation}
Here, each black line indicates a physical $\mathsf{CZ}$ gate, and the action of this gate on stabilizers is shown on the right. Under
conjugation, a two-qubit \(\mathsf{CZ}\) gate leaves \(Z\) unchanged and maps an \(X\) on
either qubit to itself times a \(Z\) on the other.  Thus,
conjugation by the $\mathsf{CZ}$'s above takes  $\textcolor{orange}{X} \to \textcolor{orange}{X}\textcolor{dodgerblue}{Z}$, so each \(\textcolor{orange}{X}\) acquires a $\textcolor{dodgerblue}{Z}$ acting a half diagonal translation backwards, and $\textcolor{dodgerblue}{X}\to \textcolor{dodgerblue}{X}\textcolor{orange}{Z}$, with the $\textcolor{orange}{Z}$ acting a half diagonal translation forwards. Consequently, each \(X\) stabilizer maps to itself times a half-translated \(Z\) stabilizer in the other code.
Since the stabilizer group is preserved,  the circuit preserves
the joint code space and implements a \emph{logical} Clifford gate.
We can use these rules to derive how the logical Pauli operators transform:
\begin{equation} \label{eq-logicalactionglobalCZ}
    \includegraphics[valign = c, scale = 0.9]{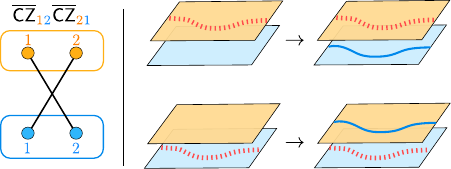},
\end{equation}
where the left shows the resulting logical \(\mathsf{CZ}\) gates between the
four encoded qubits---two logical qubits in each toric-code block, while the right shows how an $X$ Pauli logical of either code is transformed into itself times a $Z$ logical of the other.

Since \(X\)-type Pauli operators are represented by cochains and
\(Z\)-type Pauli operators by chains~\cite{zhu2025}, the additional
\(\textcolor{dodgerblue}{Z}\) acquired by an
\(\textcolor{orange}{X}\) under conjugation with $\mathsf{CZ}$ gate is naturally encoded  
in a map $\phii_{\bullet}: \textcolor{orange}{C^{\bullet}} \to \textcolor{dodgerblue}{C_{\bullet}}$: 
    \begin{equation} \label{eq-chainmapfirst}
    \begin{tikzpicture}[scale=0.7, baseline={([yshift=-.5ex]current bounding box.center)}]
      \node (C1)     at (2, 0)    {\small $\textcolor{orange}{C^{-1}}$};
      \node (C0)     at (4, 0)    {\small $\textcolor{orange}{C^0}$};
      \node (Cm1)    at (6, 0)    {\small $\textcolor{orange}{C^{1}}$};

      \node (D1)     at (2, -1.8) {\small $\textcolor{dodgerblue}{C_{1}}$};
      \node (D0)     at (4, -1.8) {\small $\textcolor{dodgerblue}{C_0}$};
      \node (Dm1)    at (6, -1.8) {\small $\textcolor{dodgerblue}{C_{-1}}$};

      \draw[-stealth, orange(ryb)] (C1)   -- node[above]{\small $\textcolor{orange(ryb)}{\textcolor{orange}{\exd_0}}$} (C0);
      \draw[-stealth, orange(ryb)] (C0)     -- node[above]{\small $\textcolor{orange(ryb)}{\textcolor{orange}{\exd_1}}$} (Cm1);

      \draw[-stealth, dodgerblue] (D1)     -- node[below]{\small $\partial_1$} (D0);
      \draw[-stealth, dodgerblue] (D0)     -- node[below]{\small $\partial_0$} (Dm1);

      \draw[-stealth] (C1)  -- node[left] {\small $\phii_{-1}$}  (D1);
      \draw[-stealth] (C0)  -- node[left] {\small $\phii_{0}$}  (D0);
      \draw[-stealth] (Cm1) -- node[right]{\small $\phii_1$} (Dm1);
    \end{tikzpicture} \quad  U_{\mathsf{CZ}} = \prod_{e} \mathsf{CZ}_{\textcolor{orange}{\mathbf{e}}, \textcolor{dodgerblue}{\phii_0(\textcolor{orange}{\textbf{e}})}}
    \end{equation}
where $\phii_{-1}, \phii_{0},$ and $\phii_{1}$ describe how the $X$ stabilizers, $X$ operators, and $X$-syndromes of the orange toric code map to their $Z$ analogs in the blue toric code.
 Notably, $\phii_0:\textcolor{orange}{C^0} \to \textcolor{dodgerblue}{C_0}$ specifies the pattern of physical $CZ$ gates: given an edge $\textcolor{orange}{\textbf{e}}$ labeling a qubit in the orange code, $\textcolor{dodgerblue}{\phii_0(\textcolor{orange}{\textbf{e}})}$ is the set of edges (qubits) in the blue code coupled to it with $\mathsf{CZ}$ gates; in
Eq.~\eqref{eq:globCZ}, this is the qubit in the blue code displaced by half a diagonal backwards.
Provided that the matrices in $\phii_{\bullet}$ are sparse, the gate above is implemented in constant depth.

To ensure the gates in Eq.~\eqref{eq-chainmapfirst} preserve the code space and hence define a \emph{logical} gate, we require that the chain map sends stabilizers to stabilizers, i.e.\ boundaries to (co)boundaries.
Formally, this is ensured when the  diagram in Eq.~\ref{eq-chainmapfirst} \textit{commutes}, i.e. 
\begin{equation} \label{eq-chainmapcondition}
    \phii_{j} \circ \textcolor{orange}{\text{d}_j} + \textcolor{dodgerblue}{\partial_{-j + 1}} \circ \phii_{j - 1} = 0
\end{equation}
for all $j$.
This is  the defining condition for
\(\phii_\bullet\) to be a \textit{chain map}, reviewed in detail in Sec.~\ref{subsubsec:chainmap}. 
We remark that, while we introduce chain maps in the context of logical $\mathsf{CZ}$ gates, they can also describe gates related to $\mathsf{CZ}$ by the Hadamards on either qubit (e.g. $\mathsf{CNOT}$) by replacing chains with cochains, or vice versa (Sec.~\ref{subsubsec:WhichCliffords}).
Furthermore, we later show how these chain maps can be used to describe gates acting within a single code block (c.f. Section~\ref{sec:IntrablockGates}).

The chain map perspective on Clifford gates is natural but, as described up to this point, has two important shortcomings.
First, it is apparently not clear how to organize the space of logical Clifford circuits.
This is in contrast with logical Paulis, which can be systematically found by computing the homology of the chain complex.
Second, it is not clear whether a logical Clifford gate can be modified while preserving its logical action---again in contrast to Pauli gates.
We now fill in both of these gaps.

\subsubsection{Local Deformations of Logical Clifford Gates} \label{subsubsec:informalCliffordstab}

Having reviewed how chain maps can be used to describe logical $\mathsf{CZ}$ gates, our work introduces the idea that these logical $\mathsf{CZ}$ gates can be locally \textit{deformed} by certain logically trivial \emph{Clifford stabilizers}---analogous to the way that logical Pauli gates are deformable by Pauli stabilizers. This will motivate a chain complex description of Clifford operators.

As an explicit example, consider the following \textit{local, logically trivial, $\mathsf{CZ}$ circuit} between the two toric codes: 
\begin{equation} \label{eq-firstnullhomotopy}
\includegraphics[valign = c, scale = 0.9]{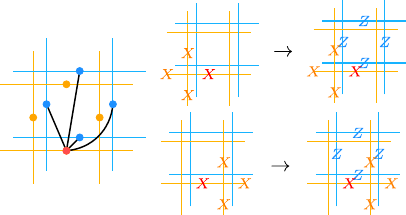}
\end{equation}
The circuit acts on only five qubits: a single orange qubit (marked red) is coupled by
four physical \(\mathsf{CZ}\) gates to the four blue qubits surrounding one
plaquette, and this local pattern can be translated anywhere on the lattice.
It preserves the code space since it maps stabilizers to stabilizers [right of Eq.~\eqref{eq-firstnullhomotopy}], and is logically trivial
because it changes Pauli operators only by stabilizers---in direct analogy to the Pauli stabilizers in the toric code.
Thus, it can be multiplied with a logically non-trivial Clifford gate to obtain a gate with the same logical action, but with a slightly different action on the physical qubits.

A general understanding of these gates can be obtained by noting that it maps an $\textcolor{orange}{X}$ operator  to a $\textcolor{dodgerblue}{Z}$ stabilizer (associated with a plaquette). 
As a result, it can be encoded in a map $Q_0: \textcolor{orange}{C^0} \to \textcolor{dodgerblue}{C_1}$.
Reversing the role of the orange and blue codes gives a distinct Clifford stabilizer that maps an $\textcolor{dodgerblue}{X}$ operator to a $\textcolor{orange}{Z}$ stabilizer,  encoded in a map $Q_1^{\mathsf{T}}: \textcolor{dodgerblue}{C^0} \to \textcolor{orange}{C_1}$.
More generally, Clifford stabilizers are generated from ``left diagonal" maps $Q_i: \textcolor{orange}{C^i} \to \textcolor{dodgerblue}{C_{-i+1}}$, as shown on the left of: 
\begin{equation} \label{eq:informal_null_homotopy}
    \includegraphics[valign = c, scale = 0.95]{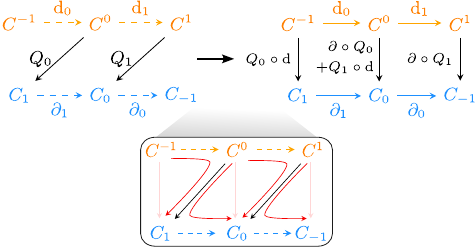}
\end{equation}
In the above equation, the left diagram is not required to be commutative (indicated by the dashed arrows); $Q_0$  and $Q_1$ can be an arbitrary linear maps. 
As an example, in Eq.~\eqref{eq-firstnullhomotopy}, $Q_0$ is the map that is zero on all edges except the one marked in red, which it maps to a plaquette stabilizer, and $Q_1 = 0$. 
Note that the \(Q_i\) do not themselves
directly specify a physical gate: gate patterns are instead encoded by 
``downward'' maps $\textcolor{dodgerblue}{C^0} \to \textcolor{orange}{C_0}$ which map qubits to qubits [cf. Eq.~\ref{eq-chainmapfirst}]. The construction in Eq.~\eqref{eq:informal_null_homotopy} converts an arbitrary collection of left-diagonal  maps $Q_i$  into precisely such a downward {chain map}  [right diagram in Eq.~\eqref{eq:informal_null_homotopy}] by composing with boundary and coboundary maps.  
For instance, its degree zero component,   $\partial \circ Q_0 + Q_1 \circ \text{d}$, arises from the two paths from $\textcolor{orange}{C^0} \to \textcolor{dodgerblue}{C_0}$  in the left diagram of Eq.~\eqref{eq:informal_null_homotopy}.
The resulting chain maps are called \textit{null homotopic}~\cite{Weibel1994HomologicalBook,hatcher2002algebraic}; they have trivial action on (co)homology and, through Eq.~\eqref{eq-chainmapfirst}, therefore define logically trivial Clifford gates. For binary chain complexes, all chain maps with trivial action on (co)homology are null homotopic~\cite{Weibel1994HomologicalBook}.  Section~\ref{subsec:null} provides a detailed treatment of these maps and their associated Clifford stabilizers.

Notice that the construction in Eq.~\eqref{eq:informal_null_homotopy} is reminiscent of a boundary operation: arbitrary maps $Q_i$  are mapped to Clifford stabilizers, similar to how arbitrary  
collections of plaquettes are mapped to Pauli stabilizers in an ordinary chain complex.
We  now make this intuition  precise.

\subsubsection{The Chain Map Complex: Unifying Cliffords and Paulis} \label{subsubsec:intuitive_Chainmapcomplex}

Having discovered the null homotopic $\mathsf{CZ}$ stabilizers of Eqs.~\eqref{eq-firstnullhomotopy} and~\eqref{eq:informal_null_homotopy}, 
we are naturally led to organizing the $\mathsf{CZ}$ operators of a code into a chain complex, in analogy to the chain complex description of Pauli operators in Eq.~\eqref{eq-chaincomplex}.
In particular, we show that the $\mathsf{CZ}$ gates of a qLDPC code are organized into a chain complex $\CMH{2}$, which we call the \textit{chain map complex} (developed formally in Sec.~\ref{subsubsec:chain_map_complex}).
The chain map complex 
comprises a sequence of vector spaces that correspond to the spaces of maps between two other (co)chain complexes~\cite{Eilenberg1966ClosedCategories,Weibel1994HomologicalBook}, which we label  $\textcolor{orange}{C^{\bullet}}$ and $\textcolor{dodgerblue}{D_{\bullet}}$: 
\begin{equation} \label{eq-chainmapcomplex}
    \includegraphics[valign = c, scale = 0.9]{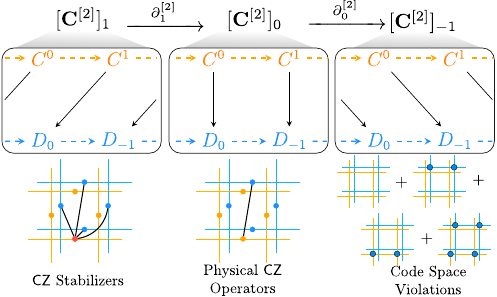}
\end{equation}
Mathematically, $\CMH{2}_{1} \CMH{2}_{0}$, and $\CMH{2}_{-1}$ denote collections of left diagonal ($+1$ over to the left), downwards ($0$ over to the left), and right diagonal maps ($-1$ over to the left) between two chain complexes respectively.
Notably, each of these maps do not need to be chosen such that any of the above diagrams commute.
As a remark, the ``$[2]$'' label in the chain map complex is used to indicate that it encodes operators at the second level of the Clifford hierarchy.
We will later discuss how to build chain complexes $\CMH{\ell}$ describing higher levels.

The physical interpretation of the levels of the chain map complex mirrors the chain complex of Eq.~\eqref{eq-chaincomplex}.
Namely, $\CMH{2}_1$  (maps that run left diagonal) encodes the space of Clifford stabilizers of in Eqs.~\eqref{eq-firstnullhomotopy}~and~\eqref{eq:informal_null_homotopy}, mirroring the Pauli $Z$ stabilizers of $C_1$.
Moreover,  $\CMH{2}_{0}$ encodes the full space of \(\mathsf{CZ}\) circuits (not necessarily logical \(\mathsf{CZ}\) gates) between the two codes, in the same way that $C_0$ in the normal chain complex encodes the space of Pauli $Z$ operators on a code.
Finally, the third space $\CMH{2}_{-1}$  encodes the way in which local $\mathsf{CZ}$ gates violate the code space.
In particular, in contrast to Pauli operators, which generate product states of syndrome violations, Clifford operators generate entangled states of syndrome violations.

As a concrete example, we consider the following vector in $\CMH{2}_0$, which is specified by a collection of maps $\{f_j\}$:
\begin{equation} \label{eq-f0}
    \begin{tikzpicture}[scale=0.7, baseline={([yshift=-.5ex]current bounding box.center)}]
      \node (C1)     at (2, 0)    {\small $\textcolor{orange}{C^{-1}}$};
      \node (C0)     at (4, 0)    {\small $\textcolor{orange}{C^0}$};
      \node (Cm1)    at (6, 0)    {\small $\textcolor{orange}{C^{1}}$};

      \node (D1)     at (2, -1.8) {\small $\textcolor{dodgerblue}{D_{1}}$};
      \node (D0)     at (4, -1.8) {\small $\textcolor{dodgerblue}{D_0}$};
      \node (Dm1)    at (6, -1.8) {\small $\textcolor{dodgerblue}{D_{-1}}$};

      \draw[-stealth, orange(ryb), dashed] (C1)   -- node[above]{\small $\textcolor{orange(ryb)}{\textcolor{orange}{\exd_0}}$} (C0);
      \draw[-stealth, orange(ryb), dashed] (C0)     -- node[above]{\small $\textcolor{orange(ryb)}{\textcolor{orange}{\exd_1}}$} (Cm1);

      \draw[-stealth, dodgerblue, dashed] (D1)     -- node[below]{\small $\partial_1$} (D0);
      \draw[-stealth, dodgerblue, dashed] (D0)     -- node[below]{\small $\partial_0$} (Dm1);

      \draw[-stealth] (C1)  -- node[left] {\small $f_{-1} = 0$}  (D1);
      \draw[-stealth] (C0)  -- node[left] {\small $f_{0}$}  (D0);
      \draw[-stealth] (Cm1) -- node[left]{\small $f_1 = 0$} (Dm1);
    \end{tikzpicture} \quad f_{0}(\textcolor{orange}{\mathbf{e}}) = \begin{cases}
        \textcolor{dodgerblue}{e_1} & \textcolor{orange}{\mathbf{e}} = \textcolor{orange}{\mathbf{e}_0} \\
        0 & \text{otherwise}
    \end{cases} 
\end{equation}
Such a vector is a basis vector of $\CMH{2}_0$ and specifies a single $\mathsf{CZ}$ gate between the qubit associated with $\textcolor{orange}{\mathbf{e}_0}$  in the orange code and the qubit labeled by $\textcolor{dodgerblue}{e_1}$ in the blue code.
It can be checked that the above diagram does not commute, reflective of the fact that a single $\mathsf{CZ}$ gate between two codes is not a logical operator.

The vector spaces $\CMH{2}_j$ above are connected by boundary maps $\partial_{j}^{[2]}$;  for example, $\partial_{1}^{[2]}$ converts left diagonal maps into downward maps in the same way as was done in Eq.~\eqref{eq:informal_null_homotopy}.
Crucially,
elements in the kernel of the boundary map $\partial_0^{[2]}$ are precisely maps that satisfy the chain map condition, Eq.~\eqref{eq-chainmapcondition}, or equivalently maps for which the diagrams above commute.
This tells us that logical $\mathsf{CZ}$ gates correspond to the $0$-cycles of the chain map complex, $\mathsf{ker}(\partial_0^{[2]})$.
Clifford stabilizers, meanwhile, correspond to $0$-boundaries in $\mathsf{im}(\partial_1^{[2]})$, which are automatically $0$-cycles since $\partial_0^{[2]}\partial_1^{[2]}=0$. 
Moreover, two logical Clifford operators are equivalent if they can be deformed into one another by products of Clifford stabilizers. 
Consequently, different logical Clifford gates are labeled by the homology groups $H_0^{[2]} \isom \mathsf{ker}(\partial^{[2]}_0)/\mathsf{im}(\partial_1^{[2]})$!

This tells us that the logical Clifford gates of the code are described by the homology of a particular chain complex.
Since this chain complex can also be used to define a CSS code, this says that logical Cliffords are in correspondences with the logical Paulis of an auxiliary code.

\subsubsection{The Chain Map Complex as a Tensor Product Complex} \label{subsubsec:informal_chainmapastensorproduct}

While this insight is tantalizing, at present, it may seem as though reasoning about the code associated with the chain map complex would be challenging.
However, as we show in more detail in Section~\ref{subsubsec:chain_maps_as_tensors}, the chain map complex for $C^{\bullet}$ and $D_{\bullet}$ is isomorphic to the tensor product $D_{\bullet} \otimes C_{\bullet}$~\cite{Eilenberg1966ClosedCategories,Weibel1994HomologicalBook}:
\begin{equation} \CMH{2} = 
     \cdots
     \CMH{2}_1 \xrightarrow[]{\partial_1^{[2]}} \CMH{2}_0 \xrightarrow[]{\partial_0^{[2]}} \CMH{2}_{-1}
     \cdots \isom D_{\bullet} \otimes C_{\bullet}
\end{equation}
Intuitively, this follows from the fact that, while chains and cochains in a chain complex are vectors, chain maps are matrices.
However, matrices are simply vectors in a tensor product vector space, motivating the fact that the chain map complex is isomorphic to a tensor product complex.

The tensor product of chain complexes is closely related to the hypergraph product of the codes $C_{\bullet}$ and $D_{\bullet}$~\cite{bravyi2013homologicalproductcodes,Tillich_2014} and there is a large body of literature that has built up an understanding of the properties of the tensor product code given its input codes.
As a concrete example, for the 2D toric code, the tensor product complex is simply the 4D toric code with qubits placed on the centers of plaquettes. 
A physical $Z$ operator on a plaquette of the 4D toric code is represented via the tensor product of two edge chains $f_0 = e_0 \otimes e_1$, which encodes a $\mathsf{CZ}$ gate between qubit $e_0$ in one 2D toric code and qubit $e_1$ in the second 2D toric code [c.f. Eq.~\eqref{eq-f0}].
Consequently, logical \(\mathsf{CZ}\) gates of the 2D toric code (cycles in the chain map complex) can be simply ``read-off'' from the Pauli $Z$ logicals of the 4D toric code!
This point is illustrated in Fig.~\ref{fig:Cliffordspider} and will be discussed in more detail shortly [Subsection~\ref{subsec:Intuition_Application}].

\subsubsection{The Chain Map Hierarchy} \label{subsubsec:informalCMH}

With the insight that Pauli $X$ and $Z$ gates are described with chain complexes and that $\mathsf{CZ}$ gates---traditionally represented as maps between chain complexes $C^{\bullet}$ and $D_{\bullet}$---are also described by a chain complex $\CMH{2} \isom D_{\bullet} \otimes C_{\bullet}$,
we are immediately led to a description of higher levels of the Clifford hierarchy within the same language.
The insight will follow from the fact that gates at the $\ell$-th level of the hierarchy send Pauli's to gates at the $\ell-1$th level.

As an example, note that a $\mathsf{\textcolor{red}{C}\textcolor{orange}{C}\textcolor{dodgerblue}{Z}}$ (which lives at $\ell = 3$) acting between three codes $\textcolor{red}{B_{\bullet}}, \textcolor{orange}{C_{\bullet}},$ and $\textcolor{dodgerblue}{D_{\bullet}}$ acts on Pauli $\textcolor{red}{X}$ by taking it  to $\textcolor{red}{X} \times \mathsf{\textcolor{orange}{C}\textcolor{dodgerblue}{Z}}$ (where $\mathsf{\textcolor{orange}{C}\textcolor{dodgerblue}{Z}}$ lives at $\ell = 2$).
Hence, intuitively, a $\mathsf{\textcolor{red}{C}\textcolor{orange}{C}\textcolor{dodgerblue}{Z}}$ can be viewed as a chain map between the chain complex of code $\textcolor{red}{B^{\bullet}}$ into the \textit{chain map complex} of $\mathsf{\textcolor{orange}{C}\textcolor{dodgerblue}{Z}}$ operators, which is isomorphic to  $\textcolor{dodgerblue}{D_{\bullet}}\otimes \textcolor{orange}{C_{\bullet}} $.
However, as we learned from studying $\mathsf{CZ}$ gates, chain maps between two chain complexes themselves form a chain complex.
In particular, this chain complex, denoted $\CMH{3}$ will be isomorphic to $ (\textcolor{dodgerblue}{D_{\bullet}}\otimes \textcolor{orange}{C_{\bullet}}) \otimes\textcolor{red}{B_{\bullet}}$.

Iterating the logic above tells us that logical gates of the form $\mathsf{C}^{\ell-1} \mathsf{Z}$ (or those related by Hadamards on any qubit) are labeled by homology classes of a chain complex $\CMH{\ell}$ isomorphic to a tensor product of all the codes upon which the gate acts.
The collection of chain complexes describing gates at each $\ell$ is dubbed the \textit{chain map hierarchy}. We develop this construction formally in Section~\ref{subsec:ChainMapHeirarchy}.

\subsection{An Application of the Formalism}
\label{subsec:Intuition_Application}

\begin{figure}
    \centering
    \includegraphics[width=247 pt]{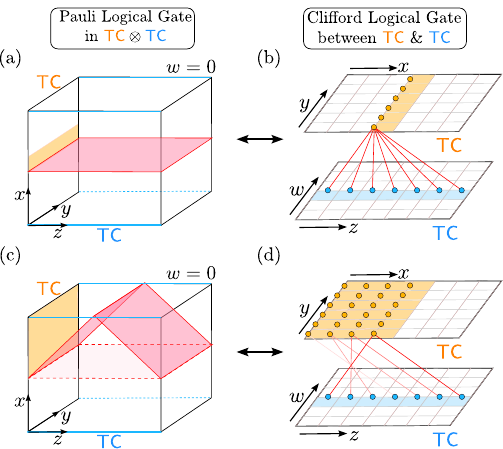}
    \caption{\textbf{Logical Clifford Gates Correspond to Pauli Logical Gates in the Tensor Product.} Logical Clifford gates between two codes---e.g.\ two 2D toric code's $\textcolor{orange}{\mathsf{TC}}$ and $\textcolor{dodgerblue}{\mathsf{TC}}$---can be ``read-off'' from the Pauli logicals of an auxiliary code that is isomorphic to the tensor product of the two codes (in this case, the 4D toric code).
    (a)~Thus, a sheet-like logical \(Z\) in the 4D toric code (shown in a cross-section at fourth coordinate \(w=0\)) encodes the implementation of  (b)~an addressable logical \(\mathsf{CZ}\) gate between two toric codes $\textcolor{orange}{\mathsf{TC}}$ and $\textcolor{dodgerblue}{\mathsf{TC}}$. 
    Explicitly, the 2D sheet is read as a ``plot'' of the gate.
    Every edge along $y$ at a fixed $x$ of $\textcolor{orange}{\mathsf{TC}}$ [every orange qubit in (b)] is connected via a $\mathsf{CZ}$ gate [red lines in (b)] to every edge along $z$ at a fixed $w$ [every blue qubit in (b)] in $\textcolor{dodgerblue}{\mathsf{TC}}$.
    The gate pattern is only illustrated for one orange qubit, but the circuit repeats this pattern along \(y\).
    (c,d) The depth of the corresponding gate is reduced by deforming the logical sheet into a tent shape, so that a qubit at $(x,y)$ in $\textcolor{orange}{\mathsf{TC}}$ only has a $\mathsf{CZ}$ gate with two other qubits in $\textcolor{dodgerblue}{\mathsf{TC}}$.
    }
    \label{fig:Cliffordspider}
\end{figure}

Thus far, we have discussed how Clifford gates between two CSS codes described by chain complexes $C_{\bullet}$ and $D_{\bullet}$ are themselves described by a chain complex isomorphic to $C_{\bullet} \otimes D_{\bullet}$.
We now show how this insight can be used to find transversal implementations of (addressable) gates in qLDPC codes (discussed in depth in Section~\ref{sec:DiscoveringGates}).

\begin{shaded*}
\noindent
    \textbf{Key Idea 2.} Deformations of the logical Pauli operators of the auxiliary code map back to deformations of the physical implementation of a logical (non-)Clifford gate, providing a framework for finding constant-depth realizations of logical gates at any level of the Clifford hierarchy (when they exist).
\end{shaded*}

As a simple illustration of this, we will demonstrate how to easily derive a logical $\mathsf{CZ}$ gate between one logical qubit of an orange toric code $\textcolor{orange}{\mathsf{TC}}$ and one logical qubit of a blue toric code $\textcolor{dodgerblue}{\mathsf{TC}}$.
This is to be contrasted with the gate of Eqs.~\eqref{eq:globCZ}~and~\eqref{eq-logicalactionglobalCZ}, which crucially affects both logical qubits of both toric codes.

To construct this gate, recall that, per our formalism, logical Clifford gates are in correspondence with the Pauli logical gates of the code described by the tensor product of two 2D toric codes.
This code is simply the 4D toric code, whose logical Pauli gates are well understood.
In particular, logical Pauli $Z$ gates of the 4D toric code are given by 2D sheets of $Z$ operators living within the 4D hypercubic lattice.
Any logical action of a gate can be encoded in the logical class of a Pauli, and each distinct logical action (e.g.\ $\CZLogicalColor{1}{2}$, $\CZLogicalColor{2}{1}$ etc.) corresponds to a distinct Pauli.
A canonical representation of one such logical Pauli $Z$ operator in a 3D slice of the 4D lattice is shown in Fig.~\ref{fig:Cliffordspider}(a).

We can view the Pauli logical operator as a \textit{plot} of the Clifford gate acting between the two toric codes.
Specifically, let us label the coordinates of $\textcolor{orange}{\mathsf{TC}}$ by $x$ and $y$ and the coordinates of $\textcolor{dodgerblue}{\mathsf{TC}}$ by $z$ and $w$.
Then, if the logical $Z$ operator of the 4D toric code is supported at coordinates $(x, y, z, w)$, this means that the qubit living at coordinates $(x, y)$ in $\textcolor{orange}{\mathsf{TC}}$ are coupled to the qubits at coordinates $(z, w)$ in $\textcolor{dodgerblue}{\mathsf{TC}}$ via a $\mathsf{CZ}$ gate.
For instance, for the Pauli logical shown in Fig.~\ref{fig:Cliffordspider}(a), it translates to the pattern of Clifford gates shown in Fig.~\ref{fig:Cliffordspider}(b).

While the gate in Fig.~\ref{fig:Cliffordspider}(b) has a depth that scales with code distance, it can easily be checked that the gate enacts an addressable logical $\mathsf{CZ}$ gate between a single qubit of $\textcolor{orange}{\mathsf{TC}}$ and $\textcolor{dodgerblue}{\mathsf{TC}}$, as we ultimately desire.
If we deform the Pauli logical in the 4D toric code, we are able to reduce the depth of this logical Clifford gate between the two toric codes.
In particular, we know that we can, for instance, deform the logical Pauli $Z$ gate in the 4D toric code from a flat sheet to a ``tent'' [Fig.~\ref{fig:Cliffordspider}(c)].
As a consequence, the logical is only supported when $x > 0$.
Moreover, for a fixed $x > 0$ and $y$ there are only two values of $z$ and one value of $w$ such that the logical is supported at $(x, y, z, w)$.
Consequently, the action of the logical gate between the two toric codes takes the form of Fig.~\ref{fig:Cliffordspider}(d).
We note that the resulting gate still takes a large depth to implement as the same values of $z$ and $w$ are ``hit'' for every value of $y$ at fixed $x > 0$.

By further deforming the 4D toric code's Pauli logical with the same ``tenting'' trick in the $w$ direction, we find a constant depth implementation of this addressable $\mathsf{CZ}$ gate.
Such an addressable gate is shown in Fig.~\ref{fig:Addressable_CZ}(a).
Intuitively, we can understand the action of the gate by folding the $\textcolor{dodgerblue}{\mathsf{TC}}$ gate in half and then acting the global $\mathsf{CZ}$ gate of Eq.~\eqref{eq:globCZ} between half of the $\textcolor{orange}{\mathsf{TC}}$ code and both layers of the folded $\textcolor{dodgerblue}{\mathsf{TC}}$.
With this in mind, we can check that the gates above yields an addressable $\mathsf{CZ}$ gate.
In particular, the action of this gate on the Pauli logical operators are given by: 
\begin{equation}
    \includegraphics[valign = c]{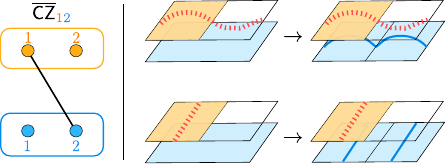}
\end{equation}
Notice that half of the ``horizontal'' $X$ logical in $\textcolor{orange}{\mathsf{TC}}$ gets mapped to a full non-trivial $Z$ logical operator in $\textcolor{dodgerblue}{\mathsf{TC}}$.
Moreover, the ``vertical'' $X$ logical in $\textcolor{orange}{\mathsf{TC}}$ maps to a trivial $Z$ logical operator.
This can be seen by the fact that if the vertical $X$ logical is on the left hand side of $\textcolor{orange}{\mathsf{TC}}$, it is doubled in the bottom, whereas if it is ``cleaned'' to the right side, it does not get acted on by the gate.

We conclude by noting that by lifting the action of this addressable inter-block gate such that the control and target are in the same code, we get an \textit{intra-block} $\mathsf{CZ}$ gate between the two logical qubits of a single toric code.
A depiction of the resulting gate is shown in Fig.~\ref{fig:Addressable_CZ}(b).

\begin{figure}
    \centering
    \includegraphics[width=247 pt]{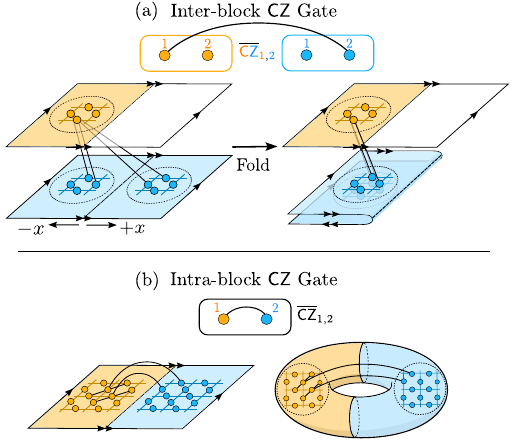}
    \caption{\textbf{An Addressable $\mathsf{CZ}$ Gate for the Toric Code.}
    (a)~Further deforming the logical from Fig.~\ref{fig:Cliffordspider} results in a transversal logical \(\overline{\mathsf{\textcolor{orange}{C}\textcolor{dodgerblue}{Z}}}_{\textcolor{orange}{1},\textcolor{dodgerblue}{2}}\) gate, acting on only two of the four qubits encoded in two blocks of the toric code.
    The pattern of gates is shown for a unit cell, and is repeated everywhere in the colored regions of the code.
    The gate can be made geometrically local by folding the blue toric code.
    (b)~Moving all targets of the physical \(\mathsf{CZ}\) gates to a single code block results in a transversal logical \(\overline{\mathsf{CZ}}_{1,2}\) gate in one copy of the toric code.}
    \label{fig:Addressable_CZ}
\end{figure}

\section{The Chain Map Hierarchy} \label{sec:ChainMapHierarchy}

Equipped with a conceptual overview of the chain map hierarchy, this section is devoted to formally defining it and developing the technical machinery necessary to practically wield it.
The tools developed in this section will be applied in Section~\ref{sec:DiscoveringGates} to discover new logical gates in well-known qLDPC codes.

We start by reviewing how CSS codes are described via chain complexes, and how logical Clifford operations---including unitary gates and code surgery measurement gadgets---are instead described as \emph{maps} between chain complexes (Section~\ref{subsec:review}).
While the content of Section~\ref{subsec:review} is well-known, the section introduces key notation and definitions, and is presented in a manner that naturally leads to our main results, emphasizing specifically how abstract homological data is concretely translated into logical quantum operations.

By holistically considering the entire space of maps between chain complexes, rather than individual examples, one realizes that maps themselves can be assembled into a chain complex---which we call the \emph{chain map complex}.
With this insight, we show how to understand logical Clifford operations on codes in the same chain complex framework used for the codes themselves (Section~\ref{subsec:chainmapcomplexandcliff}).
Thus, we uncover a notion of deformation for logical gates which is precisely analogous to deformation of Pauli logical gates.
This facilitates finding transversal implementations of given logical actions, when possible.

Further, by iteratively forming chain map complexes, this framework can be vastly extended to capture logical \emph{non-Clifford} operations from any level of the Clifford hierarchy.
We call the resulting hierarchy of auxiliary chain complexes the \emph{chain map hiearchy} (Section~\ref{subsec:ChainMapHeirarchy}).
Even commuting gates within a single code block can be captured with this formalism, facilitating optimal usage of qLDPC codes with many encoded qubits in each code block (Section~\ref{sec:IntrablockGates}).

\subsection{Review: Chain Complexes and CSS Codes, Chain Maps and Logical Clifford Operations}
\label{subsec:review}

Throughout this paper, we will take the approach of presenting a relevant mathematical structure, and then concretely translating this structure into quantum data or operations.
As the simplest illustration of this, we start with the well-known connection between chain complexes and CSS error correcting codes~\cite{Kitaev1997quantumcomputation,Kitaev_2003,Freedman2001,Bombin_2007,Haah2013latticequantumcodesexotic}.
In particular, we start by defining a \textit{based} chain complex as a sequence
of vector spaces equipped with privileged bases that will be identified with
the qubits, elementary checks, and syndromes of the CSS code. 

Since we are
ultimately interested in families of codes of increasing size, we implicitly consider families of chain complexes with finite bases whose sizes grow along the family. Thus asymptotic terms such as ``\(O(1)\)", ``sparse", and ``low-density" are understood with respect to increasing code size. We suppress the family index throughout.

\begin{defn}[(Sparse, Based) Chain Complexes]\label{def:SparseBasedChainComplex}
A \textit{chain complex} \(C_{\bullet}\) over the binary field
\(\mathbb{F}_2=\{0,1\}\) is a sequence of \(\mathbb{F}_2\)-vector spaces
connected by linear maps:
    \begin{equation}
         C_{\bullet} \equiv \cdots \xrightarrow[]{\partial_2} C_1 \xrightarrow[]{\partial_1} C_0 \xrightarrow[]{\partial_0} C_{-1} \xrightarrow[]{\partial_{-1}}  \cdots 
    \end{equation}
    where $\partial_{j} \circ \partial_{j+1} = 0$ for all $j$.
    Elements $x \in C_j$ of the degree-$j$ component are called $j$-chains, the $\partial_j$ maps are called boundary maps, and elements of \(\ker(\partial_j)\) are called \(j\)-cycles.

    Associated with a chain complex is a dual cochain complex:
    \begin{equation}\label{eqn:CochainComplex}
        C^{\bullet} \equiv \cdots \xrightarrow[]{\exd_{-1}} C^{-1} \xrightarrow[]{\exd_0} C^0 \xrightarrow[]{\exd_1} C^{1} \xrightarrow[]{\exd_2}  \cdots.
    \end{equation}
    Elements  $\mathbf{x} \in C^j$ are called $j$-cochains and are linear maps $\mathbf{x}: C_j \to \mathbb{F}_2$, i.e. $C^j \isom \Hom(C_j, \mathbb{F}_2)$, where $\Hom(V, W)$ is the space of linear maps from vector space $V$ to $W$. (If \(C_j\) is finite dimensional, \(C^j \isom C_j\).)
    The {coboundary} maps $\exd_j:C^j\to C^{j+1}$ are defined to satisfy $\mathbf{x} \circ \partial_{j+1} = \exd_{j+1} \circ \mathbf{x}$ for all $\mathbf{x} \in C^j$ and hence $\exd_{j+1}\circ \exd_{j} = 0$ for all $j$.
    Elements of \(\ker(\exd_{j+1})\) are called \(j\)-cocycles.

    The chain complex is \textit{based} if each $C_j$ has a privileged finite basis $\mathsf{B}(C_j)$ and \textit{sparse} if, in this basis, the boundary maps $\partial_j$ are sparse as matrices---i.e. have $O(1)$ entries per row and column. If the chain complex is based, the cochain complex has a natural basis $\mathsf{B}(C^j)$ from Kronecker pairing: \(
    \mathbf b(b')=\delta_{b,b'}\), for
    \(b,b'\in\mathsf B(C_j),\) and \(\mathbf b\in\mathsf B(C^j)\) is dual to \(b\).
    In this basis, we have $\exd_j = \partial_j^{\mathsf{T}}$ as matrices, so that \(C^\bullet\) is sparse if and only if \(C_\bullet\) is sparse.
\end{defn}
When obvious from context, we omit indices from the boundary and coboundary maps, simply writing \(\partial\) and \(\exd\), respectively. 
As a matter of notation, we always denote cochains in boldface and chains without boldface, and frequently suppress ``\(\circ\)'' for composition, simply writing \(\exd \mathbf{x}\), \(\partial^2\), etc.

Sparse and based chain complexes can be used to specify qLDPC CSS codes, as in the toric code example of Section~\ref{sec:Intuition} and Eq.~\eqref{eq-chaincomplex}. 
In the toric code, the basis elements $\mathsf B(C_1),\mathsf B(C_0),\mathsf B(C_{-1})$ are associated with the elementary plaquettes, edges and vertices of the 2D square lattice respectively.

In general, the physical Hilbert space is
\((\mathbb C^2)^{\otimes |\mathsf B(C_0)|}\). Thus the number of
physical qubits is \(n=|\mathsf B(C_0)|=\dim C_0\), with each qubit indexed by
a basis element \(e\in \mathsf B(C_0)\). Pauli \(Z\)- and \(X\)-operators are
labeled by \(0\)-chains and \(0\)-cochains, respectively~\cite{Bombin_2007}.
Given a 0-chain $b \in C_0$ and a $0$-cochain $\mathbf{a} \in C^0$, we associate products of $Z$ and $X$ operators to these as:
\begin{equation}\label{eq:Pauli-operators-chains}
    Z_b=\prod_{e\in\mathsf B(C_0)}Z_e^{\mathbf e(b)},
    \qquad
    X_{\mathbf a}=\prod_{e\in\mathsf B(C_0)}
    X_e^{\mathbf a(e)} .
\end{equation}
Here \(\mathbf e\in\mathsf B(C^0)\) is the basis cochain dual to
\(e\in\mathsf B(C_0)\), while \(Z_e\) and \(X_e\) are the physical Pauli
operators on the qubit indexed by \(e\). 
More generally, throughout, in products indexed by basis chains $e$, any boldfaced version of
the same index denotes the corresponding dual cochain. 
Expanding the chain and cochain in their respective
bases as \(b=\sum_e b_e e\) and \(\mathbf a=\sum_e a_e\mathbf e\), with coefficients
\(a_e,b_e\in\mathbb F_2\), the chain-cochain pairing gives
\(\mathbf e(b)=b_e\) and \(\mathbf a(e)=a_e\); these binary coefficients
indicate whether the corresponding physical Pauli acts on qubit \(e\).
More generally, \(\mathbf a(b)=\sum_e a_e b_e \pmod 2\) is the evaluation of the cochain $\mathbf{a}$ on the chain $b$, and this sets the commutation relations between Pauli operators: 
\begin{equation}\label{eqn:ZXCommutatorFromEvaluation}
    Z_{b} X_{\mathbf{a}} = (-1)^{\mathbf{a}(b)} X_{\mathbf{a}} Z_{b}. 
\end{equation}

So far, \(\mathsf B(C_0)\) has specified the physical qubits, while
\(C_0\) and \(C^0\) label Pauli \(Z\)- and \(X\)-operators on those qubits. The CSS code associated with the chain complex is obtained by using
the boundary and coboundary maps to specify the stabilizer generators: 

\begin{theo}[(qLDPC) CSS Codes From (Sparse) Based Chain Complexes]
\label{thm:CSScode}
Let \(C_\bullet\) be a based chain complex over \(\mathbb F_2\), and let
the physical qubits be indexed by the basis \(\mathsf B(C_0)\).
The chosen bases \(\mathsf B(C_1)\) and \(\mathsf B(C^{-1})\) label the
elementary \(Z\)- and \(X\)-type stabilizer generators
\begin{equation}\label{eq-qLDPCstab}
\begin{aligned}
    &Z_{\partial_1 p}
    &&\qquad p\in \mathsf B(C_1),\\
    &X_{\exd_0\mathbf v}
    &&\qquad \mathbf v\in \mathsf B(C^{-1}) .
\end{aligned}
\end{equation}
These generators commute and the subspace of states stabilized by these operators specifies a CSS code. 

If \(C_\bullet\) is sparse, the resulting CSS code is LDPC: the stabilizer generators in
Eq.~\eqref{eq-qLDPCstab} have finite weight, and each physical qubit
participates in only \(O(1)\) such generators.
\end{theo}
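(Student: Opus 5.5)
The plan is to verify two things: the proposed generators pairwise commute, and the LDPC property follows from sparsity. Both reduce to the chain-cochain pairing and the relation $\partial_0\partial_1=0$.

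\emph{Commutation.} Any two $Z$-type generators commute trivially, and so do any two $X$-type generators. It remains to treat a mixed pair $Z_{\partial_1 p}$ and $X_{\exd_0 \mathbf v}$.

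By Eq.~\eqref{eqn:ZXCommutatorFromEvaluation}, these commute up to the sign $(-1)^{(\exd_0\mathbf v)(\partial_1 p)}$. The defining property of the coboundary in Definition~\ref{def:SparseBasedChainComplex} gives
\[
(\exd_0\mathbf v)(\partial_1 p) = \mathbf v(\partial_0\partial_1 p).
\]
Equivalently, in the privileged basis, $\exd_0=\partial_0^{\mathsf T}$, so the pairing equals $\mathbf v^{\mathsf T}\partial_0\partial_1 p$. Since $\partial_0\partial_1=0$, the exponent vanishes and the generators commute.

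Hence the generators span an abelian subgroup of the Pauli group. This subgroup does not contain $-I$: it is generated by Hermitian $X$-type and $Z$-type strings, so any product is $\pm$ a product of an $X$-string and a $Z$-string. The only way to obtain $-I$ would be for both strings to be trivial. The $X$-part and $Z$-part multiply separately and each squares to the identity, so $-I$ never arises. The joint $+1$ eigenspace is therefore a nonzero stabilizer (CSS) code.

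\emph{LDPC property.} The weight of $Z_{\partial_1 p}$ is the number of nonzero entries in the column of $\partial_1$ indexed by $p$. Likewise, the weight of $X_{\exd_0\mathbf v}$ is the number of nonzero entries in the column of $\exd_0=\partial_0^{\mathsf T}$ indexed by $\mathbf v$, which is a row of $\partial_0$. Sparsity bounds both by $O(1)$.

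The number of generators acting on a qubit $e$ is the number of nonzero entries in row $e$ of $\partial_1$ plus those in column $e$ of $\partial_0$. Again, sparsity makes this $O(1)$.

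None of these steps is a real obstacle. The only point requiring care is the bookkeeping of transposes and dual bases, that is, checking that the Kronecker pairing makes $\exd_j=\partial_j^{\mathsf T}$ consistent with $\mathbf x\circ\partial=\exd\mathbf x$.
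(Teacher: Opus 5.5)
Your proof is correct and follows the paper's argument. Mixed pairs commute via Eq.~\eqref{eqn:ZXCommutatorFromEvaluation} and $(\exd_0\mathbf v)(\partial_1 p)=\mathbf v(\partial_0\partial_1 p)=0$, and the LDPC property comes from the row and column sparsity of $\partial_1$ and $\partial_0$. Your extra check that $-I$ is not in the stabilizer group, so the code space is nonzero, is a detail the paper leaves implicit; the sign there is in fact always $+$, because commuting generators can be reordered into $X_{\mathbf a}Z_b$ with no phase.
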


\begin{proof}
The proof of the theorem is standard. From
Eq.~\eqref{eqn:ZXCommutatorFromEvaluation}, the commutation between
\(Z_{\partial_1 p}\) and \(X_{\exd_0\mathbf v}\) is set by
\((\exd_0\mathbf v)(\partial_1 p)=\mathbf v(\partial_0\partial_1 p)=0\),
where the last equality follows from \(\partial_0\partial_1=0\).
 The LDPC condition follows from the sparseness of the (co)boundary maps.
 \end{proof}

The chain-complex description of the CSS code also identifies the syndrome
spaces, which record violations of the code space.  Concretely, the syndrome of a \(Z_b\) operator is its boundary
\(\partial_0 b\in C_{-1}\): its commutation with an \(X\)-type stabilizer
\(X_{\exd_0\mathbf v}\) is determined using
Eq.~\eqref{eqn:ZXCommutatorFromEvaluation} by
\(
    (\exd_0\mathbf v)(b)=\mathbf v(\partial_0 b).
\)
Dually, the syndrome of an \(X_{\mathbf a}\) operator is its coboundary
\(\exd_1\mathbf a\in C^1\), since its commutation with a \(Z\)-type
stabilizer \(Z_{\partial_1 p}\) is determined by
\(
    \mathbf a(\partial_1 p)=(\exd_1\mathbf a)(p)
\). 
Thus, as depicted in
Eq.~\eqref{eq-chaincomplex}, \(C_1,C_0,C_{-1}\) encode \(Z\)-checks,
\(Z\)-operators, and violations of \(X\)-checks, while \(C^{-1},C^0,C^1\)
encode \(X\)-checks, \(X\)-operators, and violations of \(Z\)-checks.

The logical Pauli $X$ and $Z$ gates of this CSS code correspond to the homology and cohomology classes of the chain complex:
\begin{coro}[Logical Pauli Gates from (Co)Homology]
    Given a $0$-cycle $b \in \mathsf{ker}(\partial_0)$ and $0$-cocycle $\mathbf{a} \in \mathsf{ker}(\exd_1)$, the operators $Z_b$ and $X_{\mathbf{a}}$ preserve the code space.
    If $\mathbf{a}$ and $b$ live in non-trivial cohomology and homology classes of $H^0$ and $H_0$ respectively, defined by: 
    \begin{equation} H_{j} = \frac{\mathsf{ker}(\partial_{j})}{\mathsf{Im}(\partial_{j + 1})}, \qquad H^{j} = \frac{\mathsf{ker}(\exd_{j + 1})}{\mathsf{Im}(\exd_{j})},
    \end{equation}
    then $Z_b$ and $X_{\mathbf{a}}$  act non-trivially on the code space i.e. they are logical Pauli $Z$ and $X$ gates.

    Thus the number of encoded qubits is
\(
    k = \dim H_0(C) = \dim H^0(C).
\)
\end{coro}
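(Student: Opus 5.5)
The plan is to prove the three claims in turn: code-space preservation, non-triviality, and the count of encoded qubits. Throughout I use the commutation rule Eq.~\eqref{eqn:ZXCommutatorFromEvaluation} together with the adjointness relation $(\exd\mathbf{x})(c) = \mathbf{x}(\partial c)$.

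\textbf{Preservation of the code space.} Take $b \in \ker(\partial_0)$. For every $X$-stabilizer generator $X_{\exd_0\mathbf v}$ we have $(\exd_0\mathbf v)(b) = \mathbf v(\partial_0 b) = 0$. Hence $Z_b$ commutes with all $X$-checks, and it trivially commutes with the $Z$-checks. An operator that commutes with every stabilizer maps the joint $+1$ eigenspace to itself, so $Z_b$ preserves the code space. The argument for $X_{\mathbf a}$ with $\mathbf a \in \ker(\exd_1)$ is the dual one, using $\mathbf a(\partial_1 p) = (\exd_1\mathbf a)(p) = 0$.

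\textbf{Non-trivial action.} First, if $b = b' + \partial_1 c$, then $Z_b = Z_{b'}\prod_p Z_{\partial_1 p}^{c_p}$. So $Z_b$ acts on the code space exactly as $Z_{b'}$ does, and its logical action depends only on the class $[b]\in H_0$. The same holds for $X_{\mathbf a}$ and $[\mathbf a]\in H^0$.

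To show that a nontrivial class acts nontrivially, I will use the evaluation pairing $H^0 \times H_0 \to \mathbb F_2$, $([\mathbf a],[b]) \mapsto \mathbf a(b)$. It is well defined, since $(\mathbf a+\exd_0\mathbf v)(b+\partial_1 c) = \mathbf a(b) + \mathbf v(\partial_0 b) + (\exd_1\mathbf a)(c) + \mathbf v(\partial_0\partial_1 c)$, and the last three terms vanish.

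The main step is showing this pairing is nondegenerate. Over a field, $H^0$ is the dual of $H_0$; this is the universal coefficient theorem, and here it reduces to linear algebra. Concretely, $\ker(\exd_1)$ is the annihilator of $\mathsf{Im}(\partial_1)$ and $\mathsf{Im}(\exd_0)$ is the annihilator of $\ker(\partial_0)$. Hence $\ker\exd_1/\mathsf{Im}\,\exd_0$ is naturally isomorphic to $(\ker\partial_0/\mathsf{Im}\,\partial_1)^*$.

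Given this, let $[b] \ne 0$. Then there is a cocycle $\mathbf a$ with $\mathbf a(b) = 1$. The logical $X_{\mathbf a}$ preserves the code space, and it anticommutes with $Z_b$ by Eq.~\eqref{eqn:ZXCommutatorFromEvaluation}. The restriction of $Z_b$ to the code space therefore cannot be a scalar: a scalar commutes with $X_{\mathbf a}$, and $X_{\mathbf a}$ is invertible on the code space. So $Z_b$ is not a stabilizer up to phase and acts nontrivially, and dually for $X_{\mathbf a}$.

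\textbf{Counting encoded qubits.} The equality $\dim H_0 = \dim H^0$ is immediate from the duality above. To identify this number with $k$, I count independent stabilizers. The stabilizer group has independent $Z$-part of dimension $\mathrm{rank}\,\partial_1$ and $X$-part of dimension $\mathrm{rank}\,\exd_0 = \mathrm{rank}\,\partial_0$. The code space therefore has dimension $2^{\,n-\mathrm{rank}\,\partial_1-\mathrm{rank}\,\partial_0}$. Rank--nullity gives $\dim H_0 = (n-\mathrm{rank}\,\partial_0) - \mathrm{rank}\,\partial_1$, which matches, so $k=\dim H_0$.

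I expect the only real obstacle to be the nondegeneracy step, namely the annihilator identities. These follow from $\exd = \partial^{\mathsf T}$ and the elementary fact that $\ker(A^{\mathsf T}) = \mathsf{Im}(A)^\perp$ over $\mathbb F_2$, for a finite-dimensional space with the standard bilinear form.
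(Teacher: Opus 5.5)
Your proof is correct. On the code-space-preservation step it matches the paper's argument exactly: both use $(\exd_0\mathbf v)(b)=\mathbf v(\partial_0 b)=0$ and its dual.

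You differ from the paper on the other two claims. The paper equates ``acts non-trivially'' with ``is not a stabilizer,'' i.e.\ $b\notin\mathsf{Im}(\partial_1)$. It leaves implicit the stabilizer-formalism fact that a Pauli commuting with every check acts as a scalar on the code space only if it lies in the stabilizer group up to phase. It gives no argument at all for $k=\dim H_0=\dim H^0$.

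You instead prove non-triviality directly. The annihilator identities give $H^0\cong (H_0)^*$ via the evaluation pairing, so every nonzero class $[b]$ has a cocycle partner $\mathbf a$ with $\mathbf a(b)=1$. The code-space-preserving, anticommuting $X_{\mathbf a}$ then rules out $Z_b$ acting as a scalar. You obtain $k$ by comparing rank--nullity with the $2^{\,n-\mathrm{rank}\,\partial_1-\mathrm{rank}\,\partial_0}$ stabilizer count.

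The paper's version is shorter but rests on that unstated fact. Yours is self-contained and gives $\dim H_0=\dim H^0$ for free. It also establishes the dual-basis pairing $\mathbf L_j(L_k)=\delta_{jk}$ between logical $X$ and $Z$ classes, which the paper later assumes without proof, e.g.\ in Theorem~\ref{thm:TransversalGatesFromChainMaps} and Appendix~\ref{subapp:CliffordLogic}.
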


\begin{proof}
The proof is once again standard. Since the anticommutation of $Z_b$ with $ X_{\exd_0 \mathbf{v}}$ is set by $\mathbf{v}(\partial_0 b)$, any $Z_b$ such that $b \in \mathsf{ker}(\partial_0)$, will commute with all $X$-stabilizers and hence preserve the code space.
If these $Z_b$ operators are not stabilizers---i.e. $b \notin \mathsf{Im}(\partial_{1})$---then they correspond to non-trivial logical $Z$ operators.
Hence, logical $Z$ gates are labeled by  non-trivial classes in $\mathsf{ker}(\partial_{0})/\mathsf{Im}(\partial_{1})$ and a similar line of reasoning works for logical $X$ gates.
\end{proof}

\subsubsection{Review of Chain Maps and Logical Clifford Gates} \label{subsubsec:chainmap}

The chain complex formalism  provides a clean way of working with the algebra of Pauli operators on CSS codes and, more importantly, provides a way to homologically compute physical implementations of  logical Pauli gates.
Logical Clifford gates are traditionally described somewhat differently---using \textit{maps} between chain complexes~\cite{huang_2023_homomorphicCNOT}. 
In particular, as we described in Section~\ref{sec:Intuition}, these gates are treated using \textit{chain maps}~\cite{Weibel1994HomologicalBook,hatcher2002algebraic}. 
Keeping with the theme of defining mathematical structure and translating it into quantum operations, we now review chain maps, and then describe how to associate these with physical implementations of logical Clifford gates. 
Formally: 
\begin{defn}[(Sparse) Chain Map]\label{def:SparseChainMap}
    Given two chain complexes \(C^\bullet\) and \(D_\bullet\), a chain map
    \(\phii_\bullet\) is a collection of linear maps
    \(
    \phii_j:C^j\to D_{-j}\),
    i.e. \(\phii_j\in \operatorname{Hom}(C^j,D_{-j})\), 
    such that the following diagram commutes:
    \begin{equation}\label{eqn:ChainMapDiagram}
    \begin{tikzpicture}[scale=0.8, baseline={([yshift=-.5ex]current bounding box.center)}]

      \node (Cdots1) at (0, 0)    {\small $\cdots$};
      \node (C1)     at (2, 0)    {\small $C^{-1}$};
      \node (C0)     at (4, 0)    {\small $C^0$};
      \node (Cm1)    at (6, 0)    {\small $C^{1}$};
      \node (Cdots2) at (8, 0)    {\small $\cdots$};

      \node (Ddots1) at (0, -1.8) {\small $\cdots$};
      \node (D1)     at (2, -1.8) {\small $D_1$};
      \node (D0)     at (4, -1.8) {\small $D_0$};
      \node (Dm1)    at (6, -1.8) {\small $D_{-1}$};
      \node (Ddots2) at (8, -1.8) {\small $\cdots$};

      \draw[-stealth] (Cdots1) -- node[above]{\small $\exd_{-1}$} (C1);
      \draw[-stealth] (C1)     -- node[above]{\small $\exd_0$} (C0);
      \draw[-stealth] (C0)     -- node[above]{\small $\exd_1$} (Cm1);
      \draw[-stealth] (Cm1)    -- node[above]{\small $\exd_{2}$} (Cdots2);

      \draw[-stealth] (Ddots1) -- node[below]{\small $\partial_2$} (D1);
      \draw[-stealth] (D1)     -- node[below]{\small $\partial_1$} (D0);
      \draw[-stealth] (D0)     -- node[below]{\small $\partial_0$} (Dm1);
      \draw[-stealth] (Dm1)    -- node[below]{\small $\partial_{-1}$} (Ddots2);

      \draw[-stealth] (C1)  -- node[left] {\small $\varphi_{-1}$}  (D1);
      \draw[-stealth] (C0)  -- node[left] {\small $\varphi_0$}  (D0);
      \draw[-stealth] (Cm1) -- node[right]{\small $\varphi_{1}$} (Dm1);

    \end{tikzpicture}
    \end{equation}

    The \emph{transpose} of \(\phii_\bullet\), $\phii^{\mathsf{T}}_{\bullet}$,  is a chain map from \(\mcbb{D}\) to \(\mcb{C}\) defined by components as
    \begin{align}
        \phii^{\mathsf{T}}_j : D^{-j} \to C_{j} \qquad
        \mathbf{x} \mapsto \mathbf{x} \circ \phii_j.
    \end{align}
    The chain map condition for this collection of transpose maps is obtained by taking the transpose of Eq.~\eqref{eqn:ChainMapDiagram} which reverses the arrows, interchanges chains and cochains, and replaces each boundary map with its dual, $\partial  \leftrightarrow d$.
    
    If \(\mcbb{C}\) and \(\mcb{D}\) are  based, we say that $\phii_\bullet$ is \emph{sparse} if each of the component maps $\phii_j$ above is sparse as a matrix in the associated bases ---i.e. has \(O(1)\) entries per row and column. As a matrix, each \(\phii^\mathsf{T}_j\) is just the transpose of \(\phii_j\) and hence is sparse if \(\phii\) is sparse.
    
    This definition generalizes to maps between any pair of chain or cochain
    complexes: for \(C_\bullet\to D_\bullet\), \(C^\bullet\to D^\bullet\),
    and \(C_\bullet\to D^\bullet\), the component maps $\varphi_j$
    are respectively from \(C_j\to D_j\), \(C^j\to D^j\) and
    \(C_j\to D^{-j}\), with the corresponding diagrams required to commute and
    transposes obtained by dualizing the diagrams.
\end{defn}

Commutativity of the diagram in Eq.~\eqref{eqn:ChainMapDiagram} is concretely the condition that
\begin{equation}
\label{eq:chainmapcommute}\phii_{j+1}\circ \exd_{j+1} = \partial_{-j}\circ \phii_j \quad \text{for all } j.
\end{equation}
Once again, when obvious from context, we will omit indices from the maps $\phii_j$, simply writing $\phii$ or \(\phii_\bullet\). 
In an expression such as
\(\partial\circ \varphi\) or \(\varphi \circ \exd\), the maps are understood to be
the unique ones whose domains and codomains make the composition well-defined.
With this convention, the commutativity condition of Eq.~\eqref{eq:chainmapcommute} is simply
$\varphi\circ \exd=\partial\circ\varphi.$

The commutativity condition ensures that a chain map induces a well-defined
map on (co)homology. In particular, a chain map
\(\phii_\bullet:C^\bullet\to D_\bullet\) induces maps
\begin{align}
    (\phii_*)_i:H^i(C)&\to H_{-i}(D), \nonumber \\
    [\mathbf c]&\mapsto [\phii_i(\mathbf c)] .
    \label{eqn:PhiiLogicalAction}
\end{align}
Here \([\mathbf c] = \mathbf{c} + \mathsf{Im}(\exd_{i})\) denotes the cohomology class of an
\(i\)-cocycle representative \(\mathbf c\in\ker(\exd_{i+1})\). The map is well defined because
the chain map condition sends cocycles to cycles and coboundaries to
boundaries: if
\(\mathbf c\) is a cocycle, \(\exd\mathbf c=0\), then
\(
\partial\varphi(\mathbf c)=\varphi( \exd\mathbf c)=0,
\)
so \(\varphi(\mathbf c)\) is a cycle. If \(\mathbf b=\exd\mathbf a\) is a
coboundary, then
\(
\varphi(\mathbf b)=\varphi (\exd\mathbf a)=\partial\varphi(\mathbf a),
\)
so \(\varphi(\mathbf b)\) is a boundary. Thus \((\phii_*)_i\) gives a
consistent map from \(H^i(C)\) to \(H_{-i}(D)\). In the gate constructions below, this induced map is the logical object of
interest and the rank of $(\varphi_*)_0$ counts the number of independent logicals on which the gate acts nontrivially. 
Analogous statements hold for
the other chain maps used below; for example, a chain map
\(\psi_\bullet:C^\bullet\to D^\bullet\) induces maps
\((\psi_*)_i:H^i(C)\to H^i(D)\).

As discussed in Section~\ref{sec:Intuition},  induced maps on (co)homology are precisely what is needed to define \textit{logical} Clifford gates because logical Pauli $Z$ and $X$ operators are labeled by homology and cohomology classes respectively. For a logical \(\mathsf{CZ}\) gate coupling codes \(C\) and \(D\), the Heisenberg action
\(X^C \mapsto X^C Z^D\) gives a nontrivial map from cohomology classes of
\(C\) to homology classes of \(D\), which can be induced by a chain map
\(\phii_\bullet:C^\bullet\to D_\bullet\). For a \(\mathsf{CNOT}\) gate, the action
\(X^C \mapsto X^C X^D\) instead gives a nontrivial map from cohomology classes of
\(C\) to cohomology classes of \(D\), which can be induced by a chain map
\(\psi_\bullet:C^\bullet\to D^\bullet\). The corresponding transformations of the \(Z\)-operators are
determined by the transpose maps.
Thus, given such chain maps $\phii_{\bullet}$ and $\psi_{\bullet}$
, we will consider the following product of physical Clifford gates:
\begin{equation}\label{eq:productofCZ}
    U_{\mathsf{CZ}}^{\phii} = \prod_{e \in \mathsf{B}(C_0)} \mathsf{CZ}^{CD}_{e, \phii_0(\mathbf{e})}, \quad
        U_{\mathsf{CNOT}}^{\psi} = \prod_{e \in \mathsf{B}(C_0)} \mathsf{CNOT}^{CD}_{e, \psi_0(\mathbf{e})}. 
\end{equation}
Here $\mathsf{CZ}^{CD}_{e, \phii(\textbf{e})}$ and $\mathsf{CNOT}^{CD}_{e, \psi(\mathbf{e})}$ are products of physical $\mathsf{CZ}$ and $\mathsf{CNOT}$ gates between a qubit at $e$ in code $C$ to the qubits at edges $\phii_0(\mathbf{e})$ and $\psi_0(\mathbf{e})$ in code $D$ respectively.
When no confusion can arise, we suppress the degree-zero subscript in physical
gate expressions and write \(\phii(\mathbf e)\) and \(\psi(\mathbf e)\). 
 In analogy to Eq.~\eqref{eq:Pauli-operators-chains}, we extend the elementary two-qubit gate notation to arbitrary
\(0\)-chains and \(0\)-cochains. For \(a\in C_0\), \(b\in D_0\): 
\begin{equation}\label{eq:CliffordGatesFromChains}
\begin{aligned}
    \mathsf{CZ}^{CD}_{a,b}
    &\equiv
    \prod_{\substack{
        e_C\in\mathsf B(C_0)\\
        e_D\in\mathsf B(D_0)
    }}
    \left(\mathsf{CZ}^{CD}_{e_C,e_D}\right)^{
        \mathbf e_C(a)\mathbf e_D(b)
    },
    \\
    \mathsf{CNOT}^{CD}_{a,\mathbf b}
    &\equiv
    \prod_{\substack{
        e_C\in\mathsf B(C_0)\\
        e_D\in\mathsf B(D_0)
    }}
    \left(\mathsf{CNOT}^{CD}_{e_C,e_D}\right)^{
        \mathbf e_C(a)\mathbf b(e_D)
    } .
\end{aligned}
\end{equation}

Then, one can prove that:
\begin{theo}[(Transversal) 
Logical Gates from (Sparse) Chain Maps]
\label{thm:TransversalGatesFromChainMaps}
Let \(C_\bullet\) and \(D_\bullet\) be sparse, based chain complexes, with associated
CSS codes as in Theorem~\ref{thm:CSScode}.
For chain maps \(\phii_\bullet:C^\bullet\to D_\bullet\) and
\(\psi_\bullet:C^\bullet\to D^\bullet\), the physical operators
\(U_{\mathsf{CZ}}^{\phii}\) and \(U_{\mathsf{CNOT}}^{\psi}\) defined in
Eq.~\eqref{eq:productofCZ} implement the following logical actions:
\begin{align}
    \overline{U}_{\mathsf{CZ}}^{\phii}
    &=
    \prod_{L\in\mathsf B(H_0(C))}
    \overline{\mathsf{CZ}}^{CD}_{
        L,\phii_*(\mathbf L)
    },
    \label{eq:logicalCZ}
    \\
    \overline{U}_{\mathsf{CNOT}}^{\psi}
    &=
    \prod_{L\in\mathsf B(H_0(C))}
    \overline{\mathsf{CNOT}}^{CD}_{
        L,\psi_*(\mathbf L)
    }.
    \label{eqn:UCXLogicalAction}
\end{align}
Here the bar over the gates indicates that these gates act on \emph{logical} qubits, \(\mathsf B(H_0(C))=\{L\}\) is a chosen basis of homology
classes, \(\mathbf L\in H^0(C)\) is the cohomology class dual to
\(L\), 
and \(\phii_*\) and \(\psi_*\) are the maps induced on (co)homology, as in
Eq.~\eqref{eqn:PhiiLogicalAction}.

If, moreover, \(\phii_\bullet\) and \(\psi_\bullet\) are sparse, then
\(U_{\mathsf{CZ}}^{\phii}\) and \(U_{\mathsf{CNOT}}^{\psi}\) are transversal,
i.e., they can be implemented in constant depth.
\end{theo}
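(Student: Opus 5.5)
The approach is a Heisenberg-picture computation: show that \(U_{\mathsf{CZ}}^{\phii}\) maps the stabilizer group to itself, so it preserves the code space, and then identify its action on logical Paulis. A Clifford is determined up to phase by this action, and the two sides turn out to agree including phases. I treat \(U_{\mathsf{CZ}}^{\phii}\) first; \(U_{\mathsf{CNOT}}^{\psi}\) is the same argument after a Hadamard on code \(D\), which swaps \(D_\bullet\leftrightarrow D^\bullet\).

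\textbf{Conjugation rules.} From the two-qubit identities \(\mathsf{CZ}\,X_1\,\mathsf{CZ}=X_1Z_2\) and \(\mathsf{CZ}\,X_2\,\mathsf{CZ}=Z_1X_2\), and since all factors in Eq.~\eqref{eq:productofCZ} commute, linearity gives
\begin{equation*}
    U X^C_{\mathbf a} U^\dagger = X^C_{\mathbf a}\, Z^D_{\phii_0(\mathbf a)}, \qquad
    U X^D_{\mathbf b} U^\dagger = X^D_{\mathbf b}\, Z^C_{\phii^{\mathsf T}_0(\mathbf b)}.
\end{equation*}
Here \(\phii^{\mathsf T}_0(\mathbf b)=\mathbf b\circ\phii_0\) is regarded as a \(0\)-chain of \(C\), and \(Z\)-type operators are fixed. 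A possible sign could arise from reordering overlapping \(Z\)'s and \(X\)'s. However, conjugating a Hermitian Pauli by a real-symmetric-diagonal unitary yields a Hermitian operator, so the stated equalities hold exactly once one notes that \(X_{\mathbf a}\) and \(Z_{\phii(\mathbf a)}\) act on different blocks.

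\textbf{Stabilizers and logical action.} For an \(X\)-stabilizer \(X^C_{\exd_0\mathbf v}\), the extra factor is \(Z^D_{\phii_0\exd_0\mathbf v}=Z^D_{\partial_1\phii_{-1}\mathbf v}\) by Eq.~\eqref{eq:chainmapcommute}, which is a \(D\)-stabilizer. Likewise \(Z^C_{\phii^{\mathsf T}\exd\mathbf w}=Z^C_{\partial\phii^{\mathsf T}\mathbf w}\), using the transposed chain map condition, is a \(C\)-stabilizer. Hence the stabilizer group is preserved and \(U\) acts logically.

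For logical representatives, a cocycle \(\mathbf a\in\ker\exd_1\) is sent to \(X^C_{\mathbf a}Z^D_{\phii_0(\mathbf a)}\), where \(\phii_0(\mathbf a)\) is a cycle representing \(\phii_*[\mathbf a]\). By symmetry, \(X^D_{\mathbf b}\mapsto X^D_{\mathbf b} Z^C_{\phii^{\mathsf T}_*[\mathbf b]}\). Next compute the Heisenberg action of the right-hand side of Eq.~\eqref{eq:logicalCZ}, reading \(\overline{\mathsf{CZ}}_{L,\phii_*(\mathbf L)}\) via the logical analog of Eq.~\eqref{eq:CliffordGatesFromChains}. It sends \(\overline X_{\mathbf L'}\mapsto \overline X_{\mathbf L'}\overline Z^D_{\phii_*(\mathbf L')}\), using \(\mathbf L(L')=\delta\). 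It sends \(\overline X^D_{\mathbf M}\mapsto \overline X^D_{\mathbf M}\prod_L \overline Z^C_L{}^{\mathbf M(\phii_*\mathbf L)}\), and \(\mathbf M(\phii_*\mathbf L)=(\phii^{\mathsf T}_*\mathbf M)(L)\) matches the transpose action. Since the logical \(Z\)'s are fixed by both, the two logical Cliffords agree up to a global phase. Because both are diagonal in the logical \(Z\) basis and fix the all-zero computational state, with \(U|0\rangle=|0\rangle\) and the zero logical state supported on \(Z\)-eigenvalue \(+1\) patterns, the phase is trivial. I would check this phase bookkeeping carefully; it is the main subtle point, the rest being formal.

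\textbf{Depth.} If \(\phii_0\) is sparse, the bipartite interaction graph between \(\mathsf B(C_0)\) and \(\mathsf B(D_0)\) has bounded degree. By König's edge-colouring theorem, its edges split into \(O(1)\) matchings, each executable in depth one. The same holds for \(\psi_0\).
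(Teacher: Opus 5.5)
Your proposal is correct and follows essentially the same route as the paper. The core step in both is the Heisenberg (group-commutator) identity $X^C_{\mathbf a}\mapsto X^C_{\mathbf a}Z^D_{\phii_0(\mathbf a)}$, after which the chain-map condition makes the action descend to the induced map on (co)homology.

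You are somewhat more thorough than the paper in three places. You treat the $X^D$ stabilizers and logicals via the transposed chain map. You fix the global phase; the clean way to close that step is $\langle 0\cdots 0|U=\langle 0\cdots 0|$ together with $\langle 0\cdots 0|\bar 0\bar 0\rangle\neq 0$. You also give an explicit depth bound $\Delta$ via K\"onig's edge-colouring theorem, whereas the paper only asserts constant depth here and later bounds it by greedy colouring.
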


The key step in the proof of this theorem is the following equation.
The transformation of a generic $X$ operators in code $C$, $X^C_{\mathbf{a}}$, under the gates of Eq.~\eqref{eq:productofCZ} is:
\begin{equation} \label{eq:phiactsona}
    [ U_{\mathsf{CZ}}^{\phii}, X^C_{\mathbf{a}}]_{\rm grp} =  Z_{\phii(\mathbf{a})}^D,
\end{equation}
where $[ A, B ]_{\rm grp} = B^{\dagger} A^{\dagger} B A $ is the group commutator.
Intuitively, notice that $U^{\phii}_{\mathsf{CZ}}$ maps the $X$ operator associated with $\mathbf{a}$ into the $Z$ operator associated with $\phii(\mathbf{a})$ under the group commutator. This becomes the logical operation of Eq.~\eqref{eq:logicalCZ} when \(X^C_{\mathbf a}\) is a logical because \(\phii\) is a chain map which descends to a well-defined map on (co)homology [Eq.~\eqref{eqn:PhiiLogicalAction}]. 
That is, if a cocycle \(\mathbf a\) represents a logical \(X\)-operator
\([\mathbf a]\in H^0(C)\), then \(\phii_0(\mathbf a)\) is a cycle
representing the corresponding logical \(Z\)-operator
\(\phii_*([\mathbf a])\in H_0(D)\). Thus
Eq.~\eqref{eq:phiactsona} reduces on the logical subspace to the action in
Eq.~\eqref{eq:logicalCZ}.

To prove Eq.~\eqref{eq:phiactsona}, first write
\begin{equation}
    U^\phii_{\mathsf{CZ}}
    =
    \prod_{e \in \mathsf{B}(C_0)}
    \mathsf{CZ}^{CD}_{e,\phii(\mathbf e)}
    =
    \prod_{e \in \mathsf{B}(C_0)}
    \left(Z_e^C\right)^{n^D_{\phii(\mathbf e)}} .
\end{equation}
where $n_e = (1 - Z_e)/2$ is a projector onto the $\ket{1}$ state or,   equivalently, the occupation-number form of the \(Z_e\) spin variable, converting
\(Z_e=+1,-1\) to \(n_e=0,1\) respectively. Likewise, $n_{\phii(\mathbf e)}$ is a sum (modulo 2) of these projectors for each basis element appearing in $\phii(\mathbf e)$.
In the above, we used the fact that $\mathsf{CZ}_{e, e'} = (-1)^{n_e n_e'}$---i.e. it is only $-1$ when both $e$ and $e'$ are in the $\ket{1}$ state---and the fact that $(-1)^{n_e n_e'} = (Z_e)^{n_e'}$.
With this re-writing, it becomes easy to derive Eq.~\eqref{eq:phiactsona}.
Specifically, using the anticommutation relation in Eq.~\eqref{eqn:ZXCommutatorFromEvaluation}, note that:
\begin{align}
     \left(Z_e^C\right)^{n^D_{\phii(\mathbf e)}} X_{\mathbf{a}}^C &=  (-1)^{\mathbf{a}(e) n^D_{\phii(\mathbf e)}}  \times X_{\mathbf{a}}^C \left(Z_e^C\right)^{n^D_{\phii(\mathbf e)}}
\end{align}
Because $(-1)^{n^D} = Z^D$, it follows that
\begin{equation}
    \left[\left(Z_e^C\right)^{n^D_{\phii(\mathbf{e})}} , X_{\mathbf{a}}^C\right]_{\rm grp} =\left(Z_{\phii(\mathbf{e})}^D\right)^{\mathbf{a}(e)}.   
\end{equation}
Taking the product over \(e\in \mathsf{B}(C_0)\), we obtain
\begin{align}
    [U_{\mathsf{CZ}}^{\phii}, X^C_{\mathbf a}]_{\rm grp}
    &=
    \prod_{e\in \mathsf{B}(C_0)}
    \left(Z^D_{\phii(\mathbf e)}\right)^{\mathbf a(e)}
    \nonumber
    =
    Z^D_{\sum_e \mathbf a(e)\phii(\mathbf e)}\\
    &=
    Z^D_{\phii(\mathbf a)},
\end{align}
as desired.

\subsubsection{Review of Chain Maps and Logical Pauli Measurement}
\label{subsubsec:logicalPaulimeasurementreview}

We conclude our review by noting that chain maps are also traditionally used to describe other Clifford operations on codes beyond logical \textit{unitary} Clifford gates.
Specifically, they can also describe the \textit{measurement} of logical Pauli gates via \textit{code surgery}~\cite{horsman2012surface,litinski2019game,cohen2022lowoverhead}---which in the Pauli-based framework for quantum computation acts as a logical Clifford operation.

Logical Pauli measurement via code surgery is a specific type of code switching protocol~\cite{cohen2022lowoverhead,cross2024improved,williamson2024lowoverhead,ide2025faulttolerant,baspin2025fast}.
In particular, to measure the logical Pauli operator of a CSS code, one brings in an ancilla system of qubits and uses measurement to ``switch'' the CSS code into another stabilizer code where the Pauli logical operators to be measured are now in the updated stabilizer group.
By switching back to the original CSS code, one has then measured the desired logical Pauli operator.

In literature, this protocol is typically presented by using a chain map to construct the new system-ancilla stabilizers that are to be measured, implementing the code switching~\cite{cohen2022lowoverhead,cross2024improved,williamson2024lowoverhead,ide2025faulttolerant,baspin2025fast,he2025extractors}.
Our presentation follows a slightly different (but ultimately equivalent) approach which facilitates a comparison to the unitary implementation of the Clifford gate [c.f. Fig.~\ref{fig:overview}]~\cite{verresenefficiently2022, Tantivasadakarn_2024_LRE, tantivasadakarn_shortest_2023, tanti_Hierarchy, Christos2026nonabelian}.

In particular, we show that code surgery can be understood as a two step process.
First, one performs a transversal Clifford gate between a certain ancilla stabilizer state (a hypergraph cluster state) and a CSS code of interest.
Second, logical measurement is performed by projectively measuring all of the qubits in the ancilla state.
Crucially, the transversal Clifford gate used is identical in form to the logical Clifford gate of Eq.~\eqref{eq:productofCZ}.

To establish this formally and make the connection with the unitary gates manifest, we introduce the ancilla stabilizer state of interest:

\begin{defn}[Hypergraph Cluster State]
    The \textit{hypergraph cluster state} associated to a sparse, based chain complex $C_{\bullet}$ is a CSS stabilizer state on a bipartite system of qubits labeled by elements of the bases $\mathsf{B}(C_{-1})$ and $\mathsf{B}(C_0)$.
    The elementary stabilizers of this state are:
    \begin{equation}
        X_{\mathbf{v}} X_{\exd_0 \mathbf{v}}, \qquad Z_{e} Z_{\partial_0 e}
    \end{equation}
    where $\mathbf{v} \in \mathsf{B}(C^{-1})$ and $e \in \mathsf{B}(C_0)$.
\end{defn}
We now show that logical Pauli measurement (via code surgery) follows from a product of Clifford gates between the $\mathsf{B}(C^{-1})$ qubits of the hypergraph state and the qubits  of the CSS code of interest [associated with $\mathsf{B}(D_{0})$], followed by a complete measurement of the hypergraph state.

To make contact with the chain map formalism, it is useful to define the notion of a shifted chain complex since gates will now be performed between qubits in $\mathsf{B}(C^{-1})$ and $\mathsf{B}(D_{0})$, which are shifted relative to one another in their respective chain complexes.
\begin{defn}[Shifted Chain Complex]
    Given a chain complex $C^{\bullet}$, we define the chain complex shifted by $j$, $C[j]^{\bullet}$, to be the chain complex with $C[j]^k \simeq C^{k + j}$ and boundary maps shifted similarly.
    Similarly, \(C[j]_\bullet\) has components \(C[j]_k = C_{k-j}\).
\end{defn}
Then, we specify the product of Clifford gates through the chain map $\phii_{\bullet}:C[-1]^{\bullet} \to D_{\bullet}$,  represented diagrammatically as:
\begin{equation} \label{eq-diagchainmap}
        \begin{tikzpicture}[scale=0.7, baseline={([yshift=-.5ex]current bounding box.center)}]
      \node (Cdots1) at (0, 0)    {\small $\cdots$};
      \node (C1)     at (2, 0)    {\small $C^{-1}$};
      \node (C0)     at (4, 0)    {\small $C^0$};
      \node (Cm1)    at (6, 0)    {\small $C^{1}$};
      \node (Cdots2) at (8, 0)    {\small $\cdots$};
      \node (Ddots1) at (0, -2.2) {\small $\cdots$};
      \node (D1)     at (2, -2.2) {\small $D_1$};
      \node (D0)     at (4, -2.2) {\small $D_0$};
      \node (Dm1)    at (6, -2.2) {\small $D_{-1}$};
      \node (Ddots2) at (8, -2.2) {\small $\cdots$};
      \draw[-stealth] (Cdots1) -- node[above]{\small $\exd_{-1}$} (C1);
      \draw[-stealth] (C1)     -- node[above]{\small $\exd_0$} (C0);
      \draw[-stealth] (C0)     -- node[above]{\small $\exd_1$} (Cm1);
      \draw[-stealth] (Cm1)    -- node[above]{\small $\exd_{2}$} (Cdots2);
      \draw[-stealth] (Ddots1) -- node[below]{\small $\partial_2$} (D1);
      \draw[-stealth] (D1)     -- node[below]{\small $\partial_1$} (D0);
      \draw[-stealth] (D0)     -- node[below]{\small $\partial_0$} (Dm1);
      \draw[-stealth] (Dm1)    -- node[below]{\small $\partial_{-1}$} (Ddots2);
    \draw[-stealth] (C1)     -- node[pos=0.4, left]{\small $\phii_0$}    (D0);
    \draw[-stealth] (C0)     -- node[pos=0.4, left]{\small $\phii_1$} (Dm1);
    \draw[-stealth] (Cdots1) -- node[pos=0.4, left]{\small $\phii_{-1}$}    (D1);
    \draw[-stealth] (Cm1)    -- node[pos=0.4, left]{\small $\phii_{2}$} (Ddots2);
    \end{tikzpicture}
\end{equation}
The shifted arrows ensures that  the gates are performed between the $\mathsf{B}(C_{-1})$ qubits of the hypergraph cluster state and the qubits of the CSS code associated with $D$.
The chain map above defines a product of $\mathsf{CZ}$ gates, $U_{\mathsf{CZ}}^\varphi$, via Eq.~\eqref{eq:productofCZ}.
Then, a quantum circuit that measures a logical Pauli $Z$ operator is given by:
\begin{equation} \label{eq-measurementcircuit}
M^{\phii}_{Z} = \left(\prod_{e \in \mathsf{B}(C_0)} M^C_{e, Z} \right) \times  \left( \prod_{\mathbf{v} \in \mathsf{B}(C^{-1}) } M^{C}_{\mathbf{v}, X} \right) U_{\mathsf{CZ}}^{\phii}
\end{equation}
where $M^C_{j, \alpha}$ is a projective measurement of the qubit $j$ in code $C$ in the $\alpha$-basis. As a reminder, the hypergraph cluster state on code $C$ has qubits on both $\mathsf{B}(C_0)$ and $\mathsf{B}(C_{-1})$, which are fully measured out by $M_Z^\varphi$. 
We have the following theorem:
\begin{shaded*}
\begin{restatable}[Code Surgery from Chain Maps]{theorem}{CodeSurgeryChainMaps} \label{thm:CodeSurgeryChainMaps} 
Let $\phii_{\bullet}:C[-1]^{\bullet} \to D_{\bullet}$ be a chain map.
Then, $M^{\phii}_Z$ of Eq.~\eqref{eq-measurementcircuit} leaves the qubits in $C$ in a product state and, after applying a product of Pauli operators to correct for random measurement outcomes, implements the following logical measurement on code $D$:
\begin{equation}
    \prod_{[\mathbf{c}] \in \mathsf{B}(H^{-1}(C))}
    \overline{M}^D_{\phii_*([\mathbf{c}]), Z}.
\end{equation}
i.e.\ a product of logical Pauli $Z$ measurements in code $D$.
In the above, $\mathsf{B}(H^{-1}(C))$ is a chosen basis of cohomology classes with \(\mathbf{c}\) denoting a chosen cocycle representative of
\([\mathbf{c}]\), and
\(\phii_*:H^{-1}(C)\to H_0(D)\) is the induced map on (co)homology.
\end{restatable}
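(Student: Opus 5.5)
The plan is to track stabilizers through the circuit in the Heisenberg/stabilizer-update picture, exactly as in the proof of Theorem~\ref{thm:TransversalGatesFromChainMaps}.

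\textbf{Initial stabilizer group.} The joint system starts in (hypergraph cluster state of $C$)$\otimes$(code state of $D$), with stabilizer group generated by three families:
\begin{equation}
X^C_{\mathbf v}X^C_{\exd_0\mathbf v}, \qquad Z^C_eZ^C_{\partial_0 e}, \qquad \text{the CSS stabilizers of } D .
\end{equation}

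\textbf{Conjugation by $U^{\phii}_{\mathsf{CZ}}$.} The gates act between the $\mathsf B(C_{-1})$ qubits and $\mathsf B(D_0)$, with pattern given by $\phii_0: C^{-1}\to D_0$. By the group-commutator identity of Eq.~\eqref{eq:phiactsona}, the factor $X^C_{\mathbf v}$ picks up $Z^D_{\phii_0(\mathbf v)}$. The factor $X^C_{\exd_0\mathbf v}$ lives on $C_0$ qubits, which are untouched. Symmetrically, an $X$-stabilizer $X^D_{\exd\mathbf w}$ of $D$ picks up a $Z$ on $C_{-1}$ qubits, given by the transpose $\phii_0^{\mathsf T}(\exd\mathbf w)$. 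All $Z$'s are invariant. The new generators are therefore
\begin{equation}
X^C_{\mathbf v}X^C_{\exd_0\mathbf v}Z^D_{\phii_0(\mathbf v)}, \qquad Z^C_eZ^C_{\partial_0 e}, \qquad X^D_{\exd\mathbf w}\,Z^C_{\phii^{\mathsf T}(\exd \mathbf w)}, \qquad Z^D_{\partial p}.
\end{equation}
Using the chain map condition, $\phii^{\mathsf T}(\exd\mathbf w)=\partial\,\phii^{\mathsf T}(\mathbf w)$. Hence the $Z^C$ tail on $C_{-1}$ qubits is a product of $Z^C_eZ^C_{\partial_0 e}$ times $Z^C$ on $C_0$ qubits, namely $Z^C_{\phii_1^{\mathsf T}(\mathbf w)}$. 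This rewriting is needed for the next step.

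\textbf{Measurement.} Next we measure every $C_{-1}$ qubit in $X$ and every $C_0$ qubit in $Z$. This is done with the standard stabilizer-measurement update, keeping only the stabilizer elements that commute with all measured single-qubit observables, and multiplying by the measured Paulis and their outcomes.
\begin{itemize}
\item For the $X$-type generators above, I take a product over $\mathbf v$ in a cochain $\mathbf c\in C^{-1}$. This yields $X^C_{\mathbf c}X^C_{\exd_0\mathbf c}Z^D_{\phii_0(\mathbf c)}$, which commutes with the $Z$-measurements on $C_0$ only if $\exd_0\mathbf c=0$, i.e.\ $\mathbf c$ is a cocycle. Its value is then fixed by the $X$-outcomes on $C_{-1}$, so the surviving information is the eigenvalue of $Z^D_{\phii_0(\mathbf c)}$ times a known sign. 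Since $\phii_0(\mathbf c)$ is a cycle, this is a logical measurement of $\phii_*([\mathbf c])$.
\item Coboundaries $\mathbf c=\exd\mathbf u$ give $\phii(\exd\mathbf u)=\partial\phii(\mathbf u)$, which is a $D$ stabilizer. So only cohomology classes matter, giving exactly the product over $\mathsf B(H^{-1}(C))$.
\item The modified $D$ $X$-stabilizers $X^D_{\exd\mathbf w}Z^C_{\phii_1^{\mathsf T}(\mathbf w)}$ (after the rewriting) commute with all measurements. Their $Z^C$ parts are replaced by measured outcomes, so the $D$ stabilizers are restored up to signs fixed by a Pauli $Z^D$ correction. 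A correcting chain exists because the outcome pattern is consistent with the $D$ syndrome structure.
\end{itemize}
Finally, $C$ ends in a product state because every qubit of $C$ was measured.

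\textbf{Main obstacle.} The delicate step is checking that no additional logical information of $D$ is measured or disturbed. Concretely, every stabilizer element surviving the measurements must restrict on $D$ to (a $D$ stabilizer) $\times$ $Z^D_{\phii_0(\mathbf c)}$ for a cocycle $\mathbf c$. I would verify this by writing a general product of generators and imposing commutation with all $C$-measurements. This forces the $X^C$ parts to be cocycle data, and forces the $X^D$ parts to be $D$-stabilizers. The remaining task is bookkeeping of the Pauli corrections needed to fix random signs, which I would make explicit via the chain map identities.
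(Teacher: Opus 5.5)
Your plan follows the paper's proof. You track the stabilizer group and use $\phii_0^{\mathsf T}(\exd_0\mathbf w)=\partial_0\phii_1^{\mathsf T}(\mathbf w)$ together with the cluster stabilizers $Z^C_eZ^C_{\partial_0 e}$ to move the $Z^C$ tails of $D$'s $X$-checks from $C_{-1}$ onto $C_0$. You then observe that only cocycle products $x_{\mathbf c}Z^D_{\phii_0(\mathbf c)}$ of the first family survive. The paper measures $C_{-1}$ and then $C_0$ rather than simultaneously, which is immaterial.

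The one step you assert without argument is the existence of the correcting chain. It does not follow from chain-map identities alone; it needs a constraint on the random outcomes. Write the $C_0$ outcomes as $z_e=(-1)^{\mathbf r(e)}$ with $\mathbf r\in C^0$. The $D$ checks then end up as $(-1)^{\mathbf w(\phii_1(\mathbf r))}X^D_{\exd_0\mathbf w}$, so a correction $Z^D_a$ exists iff $\phii_1(\mathbf r)\in\operatorname{im}\partial_0^D$. The paper gets this in two steps. For $\mathbf w\in\ker\exd_0^D$ the surviving element is the pure sign $(-1)^{\mathbf w(\phii_1(\mathbf r))}$, which must equal $+1$. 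It then uses $\operatorname{im}\partial_0=(\ker\exd_0)^{\perp}$ over $\mathbb F_2$. A sharper version goes as follows. For every cycle $c\in\ker\partial_0^C$, the operator $Z^C_c=\prod_{e\in c}Z^C_eZ^C_{\partial_0 e}$ is a stabilizer at measurement time, so $\mathbf r(c)=0$. Hence $\mathbf r=\exd_0\mathbf s$ for some $\mathbf s\in C^{-1}$, and the chain-map condition gives $\phii_1(\mathbf r)=\partial_0\phii_0(\mathbf s)$ explicitly.

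The sharper version also settles the ``disturbed'' half of your main obstacle, which stabilizer bookkeeping by itself (yours or the paper's) does not. A preimage $a$ of $\phii_1(\mathbf r)$ is fixed only up to a cycle, i.e.\ up to an unmeasured logical $Z^D$. Take a logical $X^D_{\mathbf L}$ that commutes with all measured $Z^D_{\phii_0(\mathbf c)}$. Its tail $\phii_0^{\mathsf T}(\mathbf L)\in C_{-1}$ pairs trivially with every cocycle of $C^{-1}$, so it equals $\partial_0 b$ for some $b\in C_0$. Pushed onto $C_0$ and measured, it leaves the byproduct sign $(-1)^{\mathbf r(b)}=(-1)^{\mathbf s(\partial_0 b)}=(-1)^{\mathbf L(\phii_0(\mathbf s))}$. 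The canonical choice $a=\phii_0(\mathbf s)$ therefore restores all checks and cancels every logical byproduct at once. The remaining ambiguity $\mathbf s\mapsto\mathbf s+\mathbf c$ only multiplies by measured logicals, and logical $Z$'s pass through untouched. With this, your argument closes.
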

\end{shaded*}
\begin{proof}[Proof (Informal)] 
The full proof of the above theorem is deferred to Appendix~\ref{subapp:PauliMeasurements}.
A sketch of why the theorem holds can be understood by replacing the measurements in Eq.~\eqref{eq-measurementcircuit} with projections on the $\ket{0} $ state and the $\ket{+}$ state.
To see this, let us note that prior to the circuit of Eq.~\eqref{eq-measurementcircuit}, the hypergraph cluster state has the property that $X^C_{\mathbf{c}}  = +1$ for all  $\mathbf{c} \in \mathsf{ker}(\exd_0)$.
After the unitary gate in Eq.~\eqref{eq-measurementcircuit}, this condition is updated to [Eq.~\eqref{eq:phiactsona}]:
\begin{equation}
    U_{\mathsf{CZ}}^{\phii} X^C_{\mathbf{c}} (U_{\mathsf{CZ}}^{\phii})^{\dagger} =X^C_{\mathbf{c}} Z_{\phii(\mathbf{c})}^D = +1 
\end{equation}
where $\phii(\mathbf{c})$ is a $0$-cycle of the chain complex $D_{\bullet}$, and hence \(Z_{\phii(\mathbf{c})}^D\) is a logical Pauli $Z$ gate of code $D$.
After the projection, the ancilla qubits will be in a product state satisfying $X_{\mathbf{c}}^{C} = +1$ and the qubits of code $D$ will be restored back into the code space of the CSS code associated with $D_{\bullet}$. 
But since $X^C_{\mathbf{c}} Z_{\phii(\mathbf{c})}^D = +1$, this means that we have projected the logical operator of code $D$, $Z_{\phii(\mathbf{c})}^D = +1$.

As a remark, if we replace the physical projections with physical measurements, two things change.
First, the \textit{logical} projection we just derived will be turned into a \textit{logical} measurement.
Second, it will be necessary to apply a pattern of Pauli operators to return \(D\) to a code state.
\end{proof}
The above shows that chain maps can specify a logical Pauli measurement operation via a code surgery.
Moreover, to make a connection with the typical code switching picture of code surgery, we note that after measuring the $\mathsf{B}(C^{-1})$ qubits in Eq.~\eqref{eq-measurementcircuit}, our code is ``switched'' into a stabilizer code typically called the ``mapping cone code''.
In particular, this code is associated with a chain complex known as the \textit{mapping cone} of the chain map  $\phii_{\bullet}$ of Eq.~\eqref{eq-diagchainmap}~\cite{ide2025faulttolerant,Cowtan2024SurgeryUniversal}:
\begin{equation}
        \mathsf{Cone}(\phii_{\bullet}) \equiv \begin{tikzpicture}[scale=0.7, baseline={([yshift=-.5ex]current bounding box.center)}]
      \node (Cdots1) at (0, 0)    {\small $\cdots$};
      \node (C1)     at (2, 0)    {\small $C^{-1}$};
      \node (C0)     at (4, 0)    {\small $C^0$};
      \node (Cm1)    at (6, 0)    {\small $C^{1}$};
      \node (Cdots2) at (8, 0)    {\small $\cdots$};
      \node (Ddots1) at (0, -1.2) {\small $\cdots$};
      \node (D1)     at (2, -1.2) {\small $D_1$};
      \node (D0)     at (4, -1.2) {\small $D_0$};
      \node (Dm1)    at (6, -1.2) {\small $D_{-1}$};
      \node (Ddots2) at (8, -1.2) {\small $\cdots$};
      
      \node at (2, -0.6) {\small $\oplus$};
      \node at (4, -0.6) {\small $\oplus$};
      \node at (6, -0.6) {\small $\oplus$};

      \draw[color = black] (1.55, 0.5) -- (1.55, -1.7) -- (2.45, -1.7) -- (2.45, 0.5) -- cycle;
      \draw[color = black] (3.55, 0.5) -- (3.55, -1.7) -- (4.45, -1.7) -- (4.45, 0.5) -- cycle;
      \draw[color = black] (5.55, 0.5) -- (5.55, -1.7) -- (6.45, -1.7) -- (6.45, 0.5) -- cycle;
      
      \draw[-stealth] (Cdots1) -- node[above]{\small $\exd_{-1}$} (C1);
      \draw[-stealth] (C1)     -- node[above]{\small $\exd_0$} (C0);
      \draw[-stealth] (C0)     -- node[above]{\small $\exd_1$} (Cm1);
      \draw[-stealth] (Cm1)    -- node[above]{\small $\exd_{2}$} (Cdots2);
      \draw[-stealth] (Ddots1) -- node[below]{\small $\partial_2$} (D1);
      \draw[-stealth] (D1)     -- node[below]{\small $\partial_1$} (D0);
      \draw[-stealth] (D0)     -- node[below]{\small $\partial_0$} (Dm1);
      \draw[-stealth] (Dm1)    -- node[below]{\small $\partial_{-1}$} (Ddots2);
    \draw[-stealth] (C1)     --     (D0);
    \draw[-stealth] (C0)     --  (Dm1);
    \draw[-stealth] (Cdots1) --     (D1);
    \draw[-stealth] (Cm1)    --  (Ddots2);
    \end{tikzpicture}
\end{equation}
which specifies stabilizers given by:
\begin{equation} \label{eq:conestab}
 X^C_{\exd_0 \mathbf{v}} Z^D_{\phii_0(\mathbf{v})} \quad Z_{\partial_1 p}^C \qquad Z_{\partial_1 q}^D \qquad X_{\exd_0 \mathbf{w}}^D Z^C_{\phii_1^{\mathsf{T}}(\mathbf{w})} 
\end{equation}
where $\mathbf{v} \in \mathsf{B}(C^{-1})$, $p \in \mathsf{B}(C_1)$, $q \in \mathsf{B}(D_1)$, $\mathbf{w} \in \mathsf{B}(D^{-1})$.
The above stabilizers can be shown to commute.
In particular, one has that:
\begin{shaded*}
\begin{restatable}[The Mapping Cone Code]{lemma}{MappingConeCode} \label{lem:MappingConeCode}
    After measuring the $\mathsf{B}(C^{-1})$ qubits in Eq.~\eqref{eq-measurementcircuit}, the state of the system is in a stabilizer state that is related to the state specified by Eq.~\eqref{eq:conestab} by a product of local Pauli operators.
\end{restatable}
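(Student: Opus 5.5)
The plan is to track the stabilizer group through the circuit of Eq.~\eqref{eq-measurementcircuit}, up to and including the $X$-measurements on the $\mathsf{B}(C^{-1})$ qubits. I will compare the result with Eq.~\eqref{eq:conestab}, and then show that the remaining sign discrepancies can be removed by local Pauli operators.

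\emph{Step 1: initial stabilizers.} Before $U^{\phii}_{\mathsf{CZ}}$ acts, the system is the hypergraph cluster state of $C$ tensored with a code state of $D$. Its stabilizer group is generated by $X_{\mathbf v}X_{\exd_0\mathbf v}$, $Z_eZ_{\partial_0 e}$, $Z^D_{\partial_1 q}$ and $X^D_{\exd_0\mathbf w}$.

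\emph{Step 2: conjugation by $U^{\phii}_{\mathsf{CZ}}$.} The $\mathsf{CZ}$ gates couple each $\mathbf v\in\mathsf B(C^{-1})$ to $\phii_0(\mathbf v)\in D_0$. By Eq.~\eqref{eq:phiactsona} and its transpose version, the generators transform as follows.
\begin{itemize}
\item $X_{\mathbf v}X_{\exd_0\mathbf v}\mapsto X_{\mathbf v}X_{\exd_0\mathbf v}Z^D_{\phii_0(\mathbf v)}$.
\item $X^D_{\exd_0\mathbf w}\mapsto X^D_{\exd_0\mathbf w}\,Z^C_{\phii_0^{\mathsf T}(\exd_0\mathbf w)}$, where the $Z^C$ factor acts on the $C_{-1}$ qubits.
\item All $Z$-type generators are unchanged.
\end{itemize}
The key algebraic input is the transpose of the chain map condition $\phii_1\exd_0=\partial_0\phii_0$, namely $\phii_0^{\mathsf T}\exd^D_0=\partial^C_0\phii_1^{\mathsf T}$. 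Using it, I multiply the second generator by $\prod_{e\in\phii_1^{\mathsf T}(\mathbf w)}Z_eZ_{\partial_0 e}$. This replaces the factor on the $C_{-1}$ qubits, $Z^C_{\partial_0\phii_1^{\mathsf T}\mathbf w}$, by $Z^C_{\phii_1^{\mathsf T}(\mathbf w)}$ on the $C_0$ qubits. The result is exactly the fourth stabilizer of Eq.~\eqref{eq:conestab}.

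\emph{Step 3: measuring the $\mathsf B(C^{-1})$ qubits in $X$.} Let the outcomes be $s_{\mathbf v}$. By the standard stabilizer update rule, the new group is generated by two families.
\begin{itemize}
\item The single-qubit operators $s_{\mathbf v}X_{\mathbf v}$.
\item Those elements of the old group that commute with every $X_{\mathbf v}$, with each $X_{\mathbf v}$ factor replaced by its measured sign.
\end{itemize}
The second family yields the following elements.
\begin{itemize}
\item $s_{\mathbf v}X_{\exd_0\mathbf v}Z^D_{\phii_0(\mathbf v)}$, the first cone stabilizer up to sign.
\item $X^D_{\exd_0\mathbf w}Z^C_{\phii_1^{\mathsf T}(\mathbf w)}$ and $Z^D_{\partial_1 q}$, both untouched.
\item Products $\prod_{e\in b}Z_eZ_{\partial_0 e}=Z^C_b Z^C_{\partial_0 b}$ that commute with all $X_{\mathbf v}$. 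These are exactly those with $\partial_0 b=0$, so they give $Z^C_b$ for every cycle $b$.
\end{itemize}
The last family contains the $Z^C_{\partial_1 p}$, so every generator in Eq.~\eqref{eq:conestab} appears up to sign. The extra $Z^C_b$ with $[b]\neq 0$ merely fix a particular state inside the cone code space, which is why the lemma says the state is \emph{related to} the state specified by Eq.~\eqref{eq:conestab}.

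\emph{Step 4: fixing signs.} This is the step I expect to require the most care. The sign pattern $s\in C_{-1}$ decomposes into two parts.
\begin{itemize}
\item The component pairing with cocycles $\mathbf c\in\ker\exd_0$. For these, $\prod_{\mathbf v\in\mathbf c}X_{\exd_0\mathbf v}Z^D_{\phii_0\mathbf v}=Z^D_{\phii_0(\mathbf c)}$, so this component is the logical measurement outcome of Theorem~\ref{thm:CodeSurgeryChainMaps}. It should \emph{not} be corrected away; instead it defines the sign of the logical in the post-measurement state.
\item The remaining component. Since $(\ker\exd_0)^{\perp}=\mathsf{Im}(\partial_0)$, I can choose $a\in C_0$ with $\mathbf v(\partial_0 a)=s_{\mathbf v}$ for a set of $\mathbf v$ complementary to the cocycles. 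Then $Z^C_a$ anticommutes with $X_{\exd_0\mathbf v}$ exactly when $(\exd_0\mathbf v)(a)=\mathbf v(\partial_0 a)=1$, while commuting with all other cone generators. Applying $Z^C_a$ therefore flips all the non-logical signs to $+1$.
\end{itemize}
So the post-measurement state equals, up to the Pauli $Z^C_a$, a state stabilized by Eq.~\eqref{eq:conestab}, with signs set only by the logical outcome. The main points to check carefully are the transpose-chain-map manipulation in Step~2 and this orthogonal-complement argument for the sign correction.
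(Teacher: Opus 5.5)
Your argument follows the paper's proof step for step. You track the generators through $U^{\phii}_{\mathsf{CZ}}$, use the transposed identity $\phii_0^{\mathsf T}\exd^D_0=\partial^C_0\phii_1^{\mathsf T}$ with the $Z_eZ_{\partial_0 e}$ generators to move the $Z^C$ tail onto $C_0$, and keep exactly the $Z^C_b$ with $\partial_0 b=0$. Your use of $(\ker\exd_0)^{\perp}=\mathsf{Im}\,\partial_0$ is a cleaner version of the paper's ``pairs trivially with all cocycles, hence a boundary'' step.

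The gap is in Step 4, where you deliberately leave the logical part of the signs uncorrected. Write $s_{\mathbf v}=(-1)^{\mathbf v(r)}$ with $r\in C_{-1}$. Conjugating by $Z^C_a$ shifts $r$ by $\partial_0 a$, which pairs to zero with every $\mathbf c\in\ker\exd_0$, so $Z^C_a$ can never change $s_{\mathbf c}$. Suppose some outcome has $s_{\mathbf c}=-1$. Then your final state has $Z^D_{\phii_0(\mathbf c)}=-1$, while the product over $\mathbf c$ of the generators of Eq.~\eqref{eq:conestab}, with the signs written there, is $+Z^D_{\phii_0(\mathbf c)}$. So you have only shown that the state is stabilized by a sign-twisted copy of Eq.~\eqref{eq:conestab}, with signs depending on the outcomes. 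That is weaker than the lemma, and ``stabilized by Eq.~\eqref{eq:conestab} with signs set by the logical outcome'' does not match it.

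The fix, which is what the paper does, is one more Pauli applied before $Z^C_a$: $X^D_{\mathbf m}$ for a $0$-cocycle $\mathbf m$ of $D$.
\begin{itemize}
\item It commutes with $Z^D_{\partial_1 q}$ (since $\exd\mathbf m=0$), with $X^D_{\exd_0\mathbf w}Z^C_{\phii_1^{\mathsf T}(\mathbf w)}$, and with all $Z^C_b$.
\item It shifts $r\mapsto r+\phii_0^{\mathsf T}(\mathbf m)$, hence $\mathbf c(r)\mapsto \mathbf c(r)+\mathbf m(\phii_0(\mathbf c))$ for every cocycle $\mathbf c$.
\end{itemize}
A cancelling $\mathbf m$ exists. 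If $\phii_0(\mathbf c)$ is a boundary, then $Z^D_{\phii_0(\mathbf c)}$ is a $D$-stabilizer, so $X^C_{\mathbf c}=+1$ before the measurement and $\mathbf c(r)=0$ deterministically. Hence the functional $\mathbf c\mapsto\mathbf c(r)$ on $\ker\exd_0$ factors through $\mathbf c\mapsto[\phii_0(\mathbf c)]\in H_0(D)$. It then extends to a functional on $H_0(D)$, which is pairing with some $[\mathbf m]\in H^0(D)\cong H_0(D)^*$. After this correction $r$ annihilates $\ker\exd_0$, so it lies in $\mathsf{Im}\,\partial_0$, and your $Z^C_a$ then sets every sign to $+1$. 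The measured outcome is recorded in which Pauli correction was applied, not in the stabilizers.

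Neither $X^D_{\mathbf m}$ nor $Z^C_a$ is geometrically local in general, so ``local'' in the lemma can only mean a tensor product of single-qubit Paulis. The logical correction is no worse than your $Z^C_a$ in this respect.
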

\end{shaded*}

At a remark, in the above, we focused on the logical measurement of the Pauli $Z$ operator.
The same protocol can be adjusted to measure the Pauli $X$ operator by performing the code surgery gadget with $\mathsf{CNOT}$ gates instead of $\mathsf{CZ}$ gates, corresponding to chain maps $\phii_{\bullet}: C^{\bullet}[1] \to D^{\bullet}$.
Concretely, given a chain map $\phii_{\bullet}: C^{\bullet}[1] \to D^{\bullet}$, we perform the quantum circuit:
\begin{equation} \label{eq-measurementcircuitX}
M^{\phii}_{X} = \left(\prod_{e \in \mathsf{B}(C_0)} M^C_{e, Z} \right) \times  \left( \prod_{\mathbf{v} \in \mathsf{B}(C^{-1}) } M^{C}_{\mathbf{v}, X} \right) U_{\mathsf{CNOT}}^{\phii}
\end{equation}
between the hypergraph cluster state and a CSS code of interest to measure a product of Pauli $X$ operators.
Moreover, it is worth remarking that the mapping cone code associated with the logical Pauli $X$ measurement will be a CSS code---in contrast to the stabilizers of Eq.~\eqref{eq:conestab}.

\subsection{The Chain Map Complex, Clifford Stabilizers and Clifford Logic}
\label{subsec:chainmapcomplexandcliff}

In the previous subsection, we reviewed how a CSS code is encoded by a chain complex: physical Pauli operators are chains, Pauli stabilizers are boundaries, and logical Pauli gates are labeled by the resulting homology classes. 
On the other hand, logical Clifford operations, including unitary
Clifford gates and logical Pauli measurements, are conventionally 
described rather differently: as chain \emph{maps} between the chain complexes
associated with CSS codes.

We now place these two descriptions on the same footing.
We use a mathematical ingredient that is standard in homological algebra:  collections of maps between two chain complexes themselves form a chain complex, the internal-hom complex, which we refer to as the \emph{chain map complex}~\cite{Eilenberg1966ClosedCategories,Weibel1994HomologicalBook}. Our contribution is to give this  mathematical object a physical interpretation as a  complex of \emph{operations} on quantum codes.
In this dictionary, physical Clifford gate patterns are chains and certain logically trivial circuits---which we call \emph{Clifford stabilizers}---are boundaries. The homology of the chain map complex then labels logical Clifford operations, in direct analogy with the way the homology of the original code complex labels logical Pauli operations. Since the chain map complex is itself a chain complex, it defines an auxiliary CSS code.
Thus, the chain map complex turns logical Clifford operations on the original codes into ordinary Pauli logicals of this auxiliary CSS code.

We develop this correspondence in stages. In Sec.~\ref{subsec:null}, we first identify Clifford stabilizers, which are local, logically trivial \emph{deformations} of Clifford gates, analogous to deforming Pauli logicals by Pauli stabilizers. We then introduce the full chain map complex and show that its homology labels logical Clifford operations in Sec.~\ref{subsubsec:chain_map_complex}.  Next, we identify the chain map complex with a tensor-product complex in Sec.~\ref{subsubsec:chain_maps_as_tensors}. This makes the auxiliary code concrete and turns the search for useful physical Clifford implementations into the familiar problem of finding and deforming Pauli logical representatives within a homology class, which will be exploited later in Sec.~\ref{sec:DiscoveringGates}
for constructing explicit logical gates. Importantly, because the chain map complex is itself a chain complex describing an auxiliary code, the construction can be iterated. This recursion produces the chain map hierarchy of Sec.~\ref{subsec:ChainMapHeirarchy}, extending the homological chain complex framework to non-Clifford operations arbitrarily high in the Clifford hierarchy.

For most of the discussion, we focus on inter-block \(\mathsf{CZ}\)-type gates
between two or more CSS codes. Generalizations to other Clifford gates and to
intra-block gates are discussed in
Sec.~\ref{subsubsec:WhichCliffords} and
Sec.~\ref{sec:IntrablockGates}, respectively. 

\subsubsection{Clifford Stabilizers and Null Homotopic Chain Maps}
\label{subsec:null}

The starting point is the observation, discussed informally in Section~\ref{sec:Intuition}, that logical Clifford gates admit local, logically trivial
deformations analogous to the familiar deformations of logical Paulis.
We will show that these deformations are generated by \emph{Clifford stabilizer} circuits that preserve the code space and act trivially on the logical qubits.

To define Clifford stabilizers formally, we first introduce the concept of a null-homotopic chain map~\cite{Weibel1994HomologicalBook,hatcher2002algebraic} [cf. Eq.~\eqref{eq:informal_null_homotopy} for the informal sketch], and then discuss how to associate this map with logically trivial Clifford circuits.

\begin{defn}[(Local) Null-Homotopic Chain Maps]
\label{def:NullChainMapChainHomotopy}
A chain map \(\theta_\bullet : C^\bullet \to D_\bullet\) with components \(\theta_j : C^j \to D_{-j}\) is
\emph{null-homotopic} if there exists a collection of maps \(Q_\bullet\), with
components
\(
    Q_{j}:C^j\to D_{1-j},
\)
such that for all integers $j$,
\begin{equation}\label{eq:nullTheta}
    \theta_j
    =
    \partial_{1-j} \circ Q_{j}
    +
    Q_{j+1} \circ \exd_{j+1}.
\end{equation}
Diagrammatically, the maps \(Q_\bullet\) appearing in
Eq.~\eqref{eq:nullTheta} have the form:
\begin{equation} \label{eq:defNull}
\begin{tikzpicture}[scale=0.7, baseline={([yshift=-.5ex]current bounding box.center)}]
  \node (Cdots1) at (0, 0)    {\small $\cdots$};
  \node (C1)     at (2, 0)    {\small $C^{-1}$};
  \node (C0)     at (4, 0)    {\small $C^0$};
  \node (Cm1)    at (6, 0)    {\small $C^{1}$};
  \node (Cdots2) at (8, 0)    {\small $\cdots$};

  \node (Ddots1) at (0, -2.2) {\small $\cdots$};
  \node (D1)     at (2, -2.2) {\small $D_1$};
  \node (D0)     at (4, -2.2) {\small $D_0$};
  \node (Dm1)    at (6, -2.2) {\small $D_{-1}$};
  \node (Ddots2) at (8, -2.2) {\small $\cdots$};

  \draw[-stealth, dashed] (Cdots1) -- node[above]{\small $\exd_{-1}$} (C1);
  \draw[-stealth, dashed] (C1)     -- node[above]{\small $\exd_0$} (C0);
  \draw[-stealth, dashed] (C0)     -- node[above]{\small $\exd_1$} (Cm1);
  \draw[-stealth, dashed] (Cm1)    -- node[above]{\small $\exd_2$} (Cdots2);

  \draw[-stealth, dashed] (Ddots1) -- node[below]{\small $\partial_2$} (D1);
  \draw[-stealth, dashed] (D1)     -- node[below]{\small $\partial_1$} (D0);
  \draw[-stealth, dashed] (D0)     -- node[below]{\small $\partial_0$} (Dm1);
  \draw[-stealth, dashed] (Dm1)    -- node[below]{\small $\partial_{-1}$} (Ddots2);

  \draw[-stealth] (C1)     -- node[pos=0.4, left]{\small $Q_{-1}$}    (Ddots1);
  \draw[-stealth] (C0)     -- node[pos=0.4, left]{\small $Q_0$}    (D1);
  \draw[-stealth] (Cm1)    -- node[pos=0.4, left]{\small $Q_{1}$} (D0);
  \draw[-stealth] (Cdots2) -- node[pos=0.4, left]{\small $Q_{2}$} (Dm1);
\end{tikzpicture}
\end{equation}
The dashed horizontal arrows in Eq.~\eqref{eq:defNull} indicate that the
diagram need not commute.

Two chain maps \(\phii_\bullet\) and \(\psi_\bullet\) are
\emph{chain homotopic} if there is a null-homotopic map
\(\theta_\bullet\) such that
\[
    \phii_\bullet = \psi_\bullet + \theta_\bullet .
\]
This is an equivalence relation. 

If \(\mcbb{C}\) and \(\mcb{D}\) are based, we say that
\(\theta_\bullet\) is \emph{local} (in the LDPC sense) if each of the component maps
\(\theta_j\), written as a matrix in the associated bases, has \(O(1)\) nonzero entries in total. This is to be contrasted with a sparse map, which
may have \(O(1)\) nonzero entries in each row and column.

Although written for maps \(\mcbb C\to \mcb D\), analogous definitions apply for
null homotopies between the other chain maps considered above, such as
\(\mcbb C\to \mcbb D\). 
\end{defn}
The distinction between \(Q_\bullet\) and \(\theta_\bullet\) is important. The maps \(Q_j\) in Eq.~\eqref{eq:defNull} do \emph{not} themselves specify
physical \(\mathsf{CZ}\) circuits. For example, \(Q_0\) maps \(C^0\) to
\(D_1\), whereas a physical \(\mathsf{CZ}\)-gate pattern is specified by a map
from \(C^0\) to the physical qubit space \(D_0\), as in
Eq.~\eqref{eq:productofCZ}. Instead, Eq.~\eqref{eq:nullTheta} converts \(Q_\bullet\) into chain maps
\(\theta_j:C^j\to D_{-j}\) of the form in Eq.~\eqref{eqn:ChainMapDiagram}, whose degree-0 component $\theta_0$ specifies a physical \(\mathsf{CZ}\)-gate pattern via
Eq.~\eqref{eq:productofCZ}.
 The two terms in Eq.~\eqref{eq:nullTheta}
are the two natural ways to use $Q$ to get from \(C^j\) to \(D_{-j}\): first apply \(Q_j\)
and then a boundary in \(D_\bullet\), or first apply a coboundary in
\(C^\bullet\) and then \(Q_{j+1}\).
\begin{equation}   \label{eq-illustrateboundary}
    \includegraphics[valign = c]
 {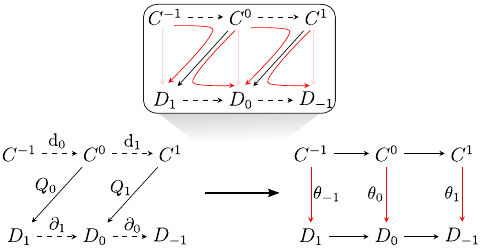}
\end{equation}
Specializing to the physical degree-zero component which determines gates, $\theta_0: C^0\longrightarrow D_0$, gives
\begin{equation}\label{eq:nullThetaZero} \theta_0 = \partial_1\circ Q_0 + Q_1\circ\exd_1, \end{equation}

A few additional remarks are in order.
First, let us explicitly verify that the null homotopic chain maps $\theta_\bullet$ defined in Eq.~\eqref{eq:nullTheta} do indeed form chain maps, i.e. they obey the commutation condition in Eq.~\eqref{eq:chainmapcommute}:
\begin{equation}
    \theta_{j} \circ \exd_{j} =  \partial_{1-j} \circ Q_{j} \circ \exd_j = \partial_{1-j}\circ \theta_{j-1} 
\end{equation}
where we used $ \exd_{j+1}\circ \exd_j= 0$ in the first equality, and  $\partial_{1-j}\circ \partial_{2-j} =0$ in the second. 

Second, null homotopic chain maps all have trivial action on (co)homology.
To see this, consider the action of a null homotopic map \(\theta = \partial \circ Q + Q \circ \exd\) on a cocycle \(\mathbf{c} \in \mathsf{ker}(\exd)\).
Note that \(\theta(\mathbf{c}) = \partial Q(\mathbf{c}) + 0\) is a boundary, and thus in the zero homology class. More generally, null-homotopic maps send every (co)homology
class to the trivial (co)homology class. Conversely, chain homotopies
are precisely the equivalence relations that identify chain maps with the same
homological action~\cite{Weibel1994HomologicalBook}:

\begin{lem}[Trivial Chain Maps as Null Homotopies] \label{lem:homotopyclass} Two chain maps $\phii_{\bullet}$ and $\psi_{\bullet}$ are chain homotopic if and only if they induce the same action on
all (co)homology groups.
As a consequence, any chain map with trivial action on homology is a null homotopy. 
\end{lem}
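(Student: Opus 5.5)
The forward direction is already essentially done in the text above: if $\phii-\psi=\theta=\partial Q+Q\exd$, then for any cocycle $\mathbf c$ we get $\theta(\mathbf c)=\partial Q(\mathbf c)$, a boundary. Hence $\phii_*=\psi_*$ in every degree. For the converse, set $\theta=\phii-\psi$. This is a chain map with $\theta_*=0$ on all (co)homology groups, and the goal is to construct $Q_\bullet$ with $\theta_j=\partial_{1-j}Q_j+Q_{j+1}\exd_{j+1}$. The tool is that everything lives over the field $\FF_2$. Every chain complex of vector spaces therefore splits, and I will build $Q$ degree by degree from such splittings.

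\emph{Step 1 (splitting).} For each $j$, choose complements in the cochain complex $C^\bullet$:
\[
C^j = B^j\oplus \mathcal H^j\oplus E^j ,
\]
where $B^j=\mathsf{Im}(\exd_j)$, $B^j\oplus\mathcal H^j=\ker(\exd_{j+1})$, and $\exd_{j+1}|_{E^j}:E^j\to B^{j+1}$ is an isomorphism. In $D_\bullet$, choose a linear section $s_k:\mathsf{Im}(\partial_{k+1})\to D_{k+1}$ with $\partial_{k+1}s_k=\mathrm{id}$.

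\emph{Step 2 (defining $Q$).} Define $Q_j:C^j\to D_{1-j}$ on each summand separately. On $\mathcal H^j$, the hypothesis $\theta_*=0$ says $\theta_j(\mathbf h)$ is a boundary, so set $Q_j(\mathbf h)=s(\theta_j(\mathbf h))$. On $B^j$, set $Q_j=0$. On $E^j$, define $Q_j$ as a lift of the part of $\theta_j$ not already accounted for; concretely, I will define it through the component on $B^{j+1}$ via the isomorphism $\exd|_E$. For $\mathbf b\in B^{j+1}$, write $\mathbf b=\exd\mathbf e$ with $\mathbf e\in E^j$ and set $Q_{j+1}(\mathbf b)=\theta_j(\mathbf e)-\partial Q_j(\mathbf e)$, with $Q_j(\mathbf e)$ chosen first. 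The simplest consistent choice is $Q_j|_{E^j}=0$, which gives $Q_{j+1}(\exd\mathbf e)=\theta_j(\mathbf e)$.

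\emph{Step 3 (verification on each summand).} On $\mathcal H^j$: $\partial Q\mathbf h+Q\exd\mathbf h=\theta\mathbf h$, since $\exd\mathbf h=0$. On $E^j$: $\partial Q\mathbf e+Q\exd\mathbf e=0+\theta(\mathbf e)$. On $B^j$, where $\mathbf b=\exd\mathbf e'$: the right-hand side is $\partial Q_j(\mathbf b)=\partial\theta_{j-1}(\mathbf e')$, and the chain map condition gives $\partial\theta_{j-1}(\mathbf e')=\theta_j(\exd\mathbf e')=\theta_j(\mathbf b)$. But on $B^j$ I previously set $Q_j=0$, so to be consistent $Q_j|_{B^j}$ must instead be defined as $\mathbf b\mapsto\theta_{j-1}(\mathbf e')$. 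With that definition the identity holds on all three summands, and hence on all of $C^j$.

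The consequence in the statement follows by taking $\psi=0$. The main obstacle is the bookkeeping in Steps 2 and 3: each $Q_j$ is determined by data in the adjacent degree, so its definition on $B^j$ has to be made consistently rather than fixed in advance. This uses the field hypothesis essentially, since over $\ZZ$ the splitting fails. I would also remark that the resulting $Q$ need not be sparse or local, so the lemma is purely algebraic and makes no claim about locality.
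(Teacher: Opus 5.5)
Your argument is correct. With the definitions you finally settle on, the identity $\theta_j=\partial_{1-j}Q_j+Q_{j+1}\exd_{j+1}$ holds on each summand. Those definitions are $Q_j|_{\mathcal H^j}=s\,\theta_j$, $Q_j|_{E^j}=0$ and $Q_j|_{B^j}=\theta_{j-1}\circ(\exd_j|_{E^{j-1}})^{-1}$. Only the $\mathcal H^j$ summand uses $\theta_*=0$; the other two need only the chain-map condition. Rewrite Step 2 so it gives this $B^j$ definition from the start. As written, it first sets $Q_j|_{B^j}=0$ and then contradicts that.

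Your route differs from the paper's, which never writes down $Q$. In the main text the lemma is simply cited to Weibel. In Appendix~\ref{subapp:HomotopyVsLogical} it is justified structurally, in one of two ways:
\begin{itemize}
\item chain Theorem~\ref{thm:HomTensorIsom}, the K\"unneth formula and $H^i(C)\isom H_i(C)^*$ to identify $H_0([C^\vee,D])$ with $\bigoplus_i\Hom(H^{-i}(C),H_i(D))$; or
\item observe that a complex over a field is homotopy equivalent to its homology complex.
\end{itemize}
Your splitting $C^j=B^j\oplus\mathcal H^j\oplus E^j$ is exactly the ingredient behind the second remark, applied directly to the map rather than to the internal hom.

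What your version buys:
\begin{itemize}
\item It is elementary and self-contained.
\item It avoids the check, left implicit in the paper, that the composite isomorphism really is $[\phii]\mapsto\phii_*$.
\item It isolates what is needed beyond a field: that $C^\bullet$ splits, and that $\theta_j|_{\mathcal H^j}$ lifts through $\partial_{1-j}$. This matches the paper's warning that the lemma fails over general rings.
\end{itemize}

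What the paper's version buys is more in one stroke. It computes all of $H_j([\mathbf C^{[2]}])$, so it yields, together with injectivity, surjectivity of $[\phii]\mapsto\phii_*$ (the exhaustiveness clause of Theorem~\ref{thm:Cliffordlogic}). The degree $-1$ measurement classes come out too, all in the tensor-product language later used for sparsification. Your argument gives only the injectivity the lemma asserts.
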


Thus null homotopies are exactly the deformations that do not change
the induced logical action of a chain map.
(We note that, for this lemma to hold, it is important that we work over \(\FF_2\) or another field.)

Third and finally, to use null homotopies as physical Clifford stabilizer circuits, we require these deformations to be generated locally, mirroring the locality of Pauli stabilizer generators in an LDPC code. The term ``local in the LDPC sense" is meant to emphasize the analog of locality
appropriate to non-local qLDPC codes: the map $\theta_j$ has support on only \(O(1)\)
basis elements \emph{in total},  although those basis elements need not be nearby in any geometric sense. When no confusion is likely, we simply call such maps local.
 For sparse
based complexes, this locality condition follows directly:

\begin{lem}[Local Basis for Null-Homotopic Chain Maps]
\label{lem:local_null_homotopies}
Let \(\mcbb C\) and \(\mcb D\) be sparse, based chain complexes. Then the vector
space of null-homotopic chain maps \( \theta_\bullet: \mcbb C\to \mcb D\) has a basis of \emph{local} null-homotopic chain maps.
\end{lem}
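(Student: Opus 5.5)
The space of null-homotopic chain maps is, by Definition~\ref{def:NullChainMapChainHomotopy}, the image of the linear map
\[
\Phi:\; Q_\bullet \longmapsto \theta_\bullet, \qquad \theta_j = \partial_{1-j}\circ Q_j + Q_{j+1}\circ \exd_{j+1},
\]
defined on the vector space of \emph{all} collections \(Q_\bullet\) with \(Q_j \in \Hom(C^j, D_{1-j})\). No commutativity is imposed on the \(Q_j\), so this domain is simply the direct sum \(\bigoplus_j \Hom(C^j, D_{1-j})\). The plan is to choose a local spanning set of the domain, push it forward under \(\Phi\), and extract a basis.

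\textbf{Step 1: a spanning set of elementary null homotopies.} Use the privileged bases to give the domain the elementary-matrix basis. A basis element \(Q^{(j;\mathbf b, b')}\) has a single nonzero component \(Q_j = b' \otimes \mathbf b\), sending the basis cochain \(\mathbf b \in \mathsf B(C^j)\) to the basis chain \(b' \in \mathsf B(D_{1-j})\) and all other basis cochains to zero. Every other component \(Q_k\), \(k\neq j\), vanishes. Since \(\Phi\) is linear, the images \(\theta^{(j;\mathbf b,b')} = \Phi(Q^{(j;\mathbf b,b')})\) span \(\mathrm{im}\,\Phi\), which is exactly the space of null-homotopic chain maps.

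\textbf{Step 2: each image is local.} For a single elementary \(Q\), only two components of \(\theta\) are nonzero.
\begin{itemize}
\item The component \(\theta_j = \partial_{1-j} Q_j\) is the rank-one matrix \(\partial b' \otimes \mathbf b\). It has at most \(|\partial b'|\) nonzero entries, and this is \(O(1)\) because \(\partial\) has \(O(1)\) entries per column.
\item The component \(\theta_{j-1} = Q_j \exd_j\) is the rank-one matrix \(b' \otimes (\mathbf b\circ \exd_j)\). The covector \(\mathbf b \circ \exd_j\) is a row of \(\exd_j = \partial_j^{\mathsf T}\) (equivalently a column of \(\partial_j\)), so it has \(O(1)\) nonzero entries by sparsity.
\end{itemize}
All other components vanish. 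Hence each \(\theta^{(j;\mathbf b,b')}\) has \(O(1)\) nonzero entries in total, i.e.\ it is local in the sense of Definition~\ref{def:NullChainMapChainHomotopy}. The bound is uniform along the code family, since it depends only on the sparsity constants. If the complexes are unbounded in degree, I will restrict to finitely many degrees, or simply note that each \(\theta\) involves only two adjacent degrees.

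\textbf{Step 3: extract a basis.} A spanning set of a finite-dimensional vector space contains a basis, obtained for instance by greedily discarding elements that are linearly dependent on those already kept. The elements retained are still local, so the space of null-homotopic chain maps has a basis of local null-homotopic chain maps.

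The only point needing care is the bookkeeping of which row and column sparsity condition bounds each of the two nonzero components; I expect this to be the main (minor) obstacle. Beyond that, the argument is just linearity of \(\Phi\) together with the observation that the domain of \(\Phi\) is unconstrained. That observation is the key conceptual point, and it mirrors how a Pauli stabilizer group is generated by the images \(\partial_1 p\) of the basis elements \(p \in \mathsf B(C_1)\).
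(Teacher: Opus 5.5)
Your proof is correct and follows the paper's argument. You push the elementary-matrix basis of the unconstrained $Q_\bullet$ through the linear map $Q\mapsto \partial Q + Q\exd$, use sparsity of the (co)boundary maps to get $O(1)$ entries per image, and extract a linearly independent subset. Your explicit accounting of the two nonzero components $\theta_j=\partial_{1-j}Q_j$ and $\theta_{j-1}=Q_j\exd_j$ fills in the detail the paper states in one line.
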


\begin{proof}
For each \(j\), the space of maps \(Q_j\in \Hom(C^j,D_{1-j})\) has an
elementary matrix basis determined by the bases of \(C^j\) and
\(D_{1-j}\). We denote these basis maps by \(Q_j^{\mathbf c,d}\), indexed by
\(\mathbf c\in\mathsf B(C^j)\) and \(d\in\mathsf B(D_{1-j})\), where
\(Q_j^{\mathbf c,d}(\mathbf c)=d\) and  zero on all
other basis elements.
For example, for \(\mathbf e\in\mathsf B(C^0)\) and \(p\in\mathsf B(D_1)\), one basis
map is \(Q_0^{\mathbf e,p}:C^0\to D_1\), defined by
\(Q_0^{\mathbf e,p}(\mathbf e')=\delta_{\mathbf e,\mathbf e'}p\) for
\(\mathbf e'\in\mathsf B(C^0)\) [cf. Eq.~\eqref{eq-firstnullhomotopy}]. 

Since  $\theta_{\bullet}$ in Eq.~\eqref{eq:nullTheta} depends linearly on \(Q\), applying Eq.~\eqref{eq:nullTheta} to these
basis maps spans the space of null-homotopic chain maps. Since \(\mcbb C\) and
\(\mcb D\) are sparse, composing  a basis map of $Q$ with the (co)boundary
maps in Eq.~\eqref{eq:nullTheta} produces only \(O(1)\) nonzero matrix entries. Hence each resulting
null-homotopic chain map is local. Choosing a linearly independent
subset of these spanning maps gives a local basis of null homotopic maps.
\end{proof}

The preceding discussion has a striking consequence. We are now ready to define Clifford stabilizers, the promised analog of Pauli stabilizers, by associating a physical CZ circuit with the chain map $\theta_\bullet$ according to Eq.~\eqref{eq:productofCZ}: 
\begin{defn}[(Local) Clifford Stabilizers]
\label{def:clifford_stabilizers}
Let \(\theta_\bullet:\mcbb C\to\mcb D\) be a null-homotopic chain map. The
physical Clifford circuit
\[
    U_{\mathsf{CZ}}^\theta
    =
    \prod_{e\in \mathsf B(C_0)}
    \mathsf{CZ}^{CD}_{e,\theta_0(\mathbf e)}
\]
is called a
\emph{Clifford stabilizer}. This is  obtained from Eq.~\eqref{eq:productofCZ} with \(\phii=\theta\).

Because \(\theta_\bullet\) is a chain map, \(U_{\mathsf{CZ}}^\theta\) preserves
the code space and has a well-defined action on (co)homology. Because
\(\theta_\bullet\) is null-homotopic, this induced action is trivial.

If \(\theta_\bullet\) is local, then
\(U_{\mathsf{CZ}}^\theta\) is called a \emph{local Clifford
stabilizer}. By Lemma~\ref{lem:local_null_homotopies}, for sparse
based chain complexes, the null-homotopic chain maps admit a local basis;
the corresponding Clifford stabilizers are called
\emph{local Clifford stabilizer generators}:

\begin{equation}\label{eq:elementaryCliffordStabilizers}
\begin{aligned}
    &\mathsf{CZ}^{CD}_{e_C,\partial_1 p_D}
    &&\qquad
    e_C\in\mathsf B(C_0),\quad p_D\in\mathsf B(D_1),
    \\
    &\mathsf{CZ}^{CD}_{\partial_1 p_C,e_D}
    &&\qquad
    p_C\in\mathsf B(C_1),\quad e_D\in\mathsf B(D_0).
\end{aligned}
\end{equation}
\end{defn}

The two generators in Eq.~\eqref{eq:elementaryCliffordStabilizers} arise from
the two contributions to
\(\theta_0=\partial_1Q_0+Q_1\exd_1\) in Eq.~\eqref{eq:nullThetaZero}. An elementary \(Q_0\) maps a qubit
label \(e_C\) to a \(Z\)-stabilizer label \(p_D\), while an elementary \(Q_1\)
maps a \(Z\)-stabilizer label \(p_C\) to a qubit label \(e_D\). In the toric
code, these are the edge-to-plaquette and plaquette-to-edge constructions discussed in Section~\ref{sec:Intuition},
respectively.

Clifford stabilizers thus deform logical Clifford gates in exactly the same
sense that Pauli stabilizers deform logical Pauli operators. Let
\(\phii_\bullet:C^\bullet\to D_\bullet\) be a chain map defining a logical
\(\mathsf{CZ}\) circuit \(U_{\mathsf{CZ}}^\phii\) via Eq.~\eqref{eq:productofCZ}. As
shown later in Theorem~\ref{thm:Cliffordlogic}, every logical
\(\mathsf{CZ}\) action admits at least one physical representative of the form
\(U_{\mathsf{CZ}}^\phii\).
Then, if
\(\theta_\bullet\) is null-homotopic, it can be used to construct a different physical representative which is logically equivalent:
\begin{equation}
\label{eq:deformUCZ}
     U_{\mathsf{CZ}}^\phii
    \sim
    U_{\mathsf{CZ}}^{\phii+\theta}.
\end{equation}
Indeed, \(\phii_\bullet\) and \(\phii_\bullet+\theta_\bullet\) induce the same
maps on (co)homology and hence the same logical action. Thus Clifford
stabilizers generate the deformation freedom within a fixed homology class of
the chain map complex. This allows one to search within that class for simpler
physical representatives of a given logical action, for example representatives of lower depth or
sparser support, although such representatives need not always exist. We
return to this problem in Sec.~\ref{sec:DiscoveringGates}.

\subsubsection{The Chain Map Complex and Clifford Logic}
\label{subsubsec:chain_map_complex}

The preceding subsection identified the Clifford analog of Pauli stabilizer
deformations. In the Pauli case, stabilizers are not isolated objects; rather, they appear as \emph{boundaries} in the chain complex that
also contains physical Pauli operators and their syndrome violations,  with logical Pauli operators labeled
by the resulting homology classes. We now show that Clifford gates admit an analogous chain complex structure.

For the chain complex which organizes Clifford gates, the relevant chains are collections of \emph{maps} between the constituent codes. For clarity, we begin with the
inter-block \(\mathsf{CZ}\) case, involving maps from \(C^\bullet \rightarrow D_\bullet\). We have already encountered several such
maps.  Logical \(\mathsf{CZ}\) gates are specified by a chain map
\(\phii_\bullet\), namely a collection of maps
\(\phii_i:C^i\to D_{-i}\), depicted as downward arrows in
Eq.~\eqref{eqn:ChainMapDiagram}, which satisfy the commutativity condition
Eq.~\eqref{eq:chainmapcommute}.
Clifford stabilizers are specified by null-homotopic chain maps, obtained from a collection of 
``shifted" maps \(Q_\bullet\) of Eq.~\eqref{eq:defNull}, with components
\(Q_i:C^i\to D_{1-i}\), shifted diagonally one step to the left relative to the downward maps. Similarly,
the code-surgery maps in Eq.~\eqref{eq-diagchainmap} have components $C^i\to D_{-1-i}$, shifted by \(-1\), i.e. one step to the right.

This section assembles
these various Clifford operations and constructions into a single homological structure, in which
Clifford stabilizers appear as boundaries and logical Clifford gates are
labeled by homology classes.
To do so, it is natural to consider collections of maps shifted by an integer $j$, with components
\(C^i\to D_{j-i}\).
In fact, these $j$-shifted collections of maps assemble into a chain
complex (See Eq.~\eqref{eq-chainmapcomplex} for a depiction). This is the standard
internal-hom complex of homological algebra~\cite{Eilenberg1966ClosedCategories,Weibel1994HomologicalBook}, which we denote as $[C^\bullet, D_\bullet]$.  We call it the
\emph{chain map complex}, reflecting the fact that its cycles are precisely
(shifted) chain maps.

\begin{defn}[Chain Map Complex/Internal Hom]\label{def:ChainMapComplex}
The \emph{chain map complex} or \emph{internal hom} associated to a pair of chain complexes $\mcbb{C}$ and $\mcb{D}$ is a chain complex 
\([\mcbb{C},\mcb{D}]\):
\begin{equation}
     \cdots \xrightarrow[]{\partial^{[2]}_2} [C,D]_1 \xrightarrow[]{\partial^{[2]}_1} [C,D]_0 \xrightarrow[]{\partial^{[2]}_0} [C,D]_{-1} \xrightarrow[]{\partial^{[2]}_{-1}}  \cdots 
\end{equation}
(suppressing bullets) whose degree-\(j\) component is 
\begin{equation}\label{eqn:CMCComponents}
    [C^{\bullet}, D_{\bullet}]_j = \bigoplus_{i \in \mathbb{Z}} \Hom(C^i, D_{j -i}),
\end{equation}
Thus a $j$-chain, \(f_{j, \bullet}\in [C^\bullet,D_\bullet]_j\) is a \(j\)-shifted
collection of maps, with components \[f_{j,i}:C^i\to D_{j-i}.\]
No commutativity condition is imposed at this stage.

For $f_{j,\bullet} \in [C^{\bullet}, D_{\bullet}]_{j}$ the boundary map $\partial^{[2]}_j: [C^{\bullet}, D_{\bullet}]_{j} \to [C^{\bullet}, D_{\bullet}]_{j-1}$ 
is defined componentwise by
\begin{equation}\label{eqn:CMCBoundaryIndices}
    \bigl(\partial^{[2]}_jf_{j,\bullet}\bigr)_i
    =
    \partial_{j-i}^D\circ f_{j,i}
    +
    f_{j,i+1}\circ \exd^C_{i+1}.
\end{equation}
Equivalently, suppressing indices, \(\partial^{[2]}f=\partial f+f\exd\).

\end{defn}

The boundary maps in Eq.~\eqref{eqn:CMCBoundaryIndices}, $\partial_j^{[2]}: [C,D]_j \to [C,D]_{j-1}$,  convert a \(j\)-shifted collection of maps into a \((j-1)\)-shifted collection by
composing with the boundary maps of \(D_\bullet\) and the coboundary maps of
\(C^\bullet\). One readily verifies that these satisfy the chain complex condition $\partial^{[2]}_{j-1}\circ \partial^{[2]}_j = 0$.%
\footnote{Suppresing component indices and applying two successive boundary maps to
\(f\in [\mathbf{C}^{[2]}]_j\) gives
\begin{equation}
    \partial^{[2]}_{j-1}\circ \partial^{[2]}_j f
    =
    \partial^2\circ f
    +
    \partial\circ f\circ \exd
    +
    \partial\circ f\circ \exd 
    +
    f\circ \exd^2=0.
\end{equation}
This vanishes over \(\mathbb F_2\), since
\(\partial^2=\exd^2=0\) and the two mixed terms add to zero modulo 2.}
The superscript ($^{[2]}$) anticipates that this complex is built from two constituent code complexes and will describe Clifford operations at the second level of the Clifford hierarchy.

To first introduce notation that will extend naturally to operations at all levels of the Clifford hierarchy, as developed in Sec.~\ref{subsec:ChainMapHeirarchy}, we package the two constituent complexes \(C^\bullet\) and \(D_\bullet\) into the ordered length-two tuple $\mathbf{C}^{[2]}=(D,C)$, with the associated chain complex denoted as $[\mathbf{C}^{[2]}]_\bullet = [C^\bullet,D_\bullet]$.
We suppress the chain or cochain orientations of the constituent complexes within $\mathbf{C}^{[2]}$, restoring them when specifying the operations under consideration. For example, $\mathsf{CNOT}$-type operations are instead described by the chain complex
$
[\mathbf{C}^{[2]}] = [C^\bullet,D^\bullet]
$.

The key observation that establishes the link between the chain map \emph{complex} and \emph{chain maps} is that the boundary map \(\partial_0^{[2]}\) exactly measures the failure of a map to send boundaries to boundaries.
Indeed, \(\phii_\bullet \in \CMH{2}_0\) is a \(0\)-cycle when
\begin{equation}
    \partial_0^{[2]}\phii_\bullet = \partial \phii_\bullet + \phii_\bullet \exd = 0.
\end{equation}
Over \(\mathbb F_2\), this is equivalent to
\(\partial\circ\phii_\bullet=\phii_\bullet\circ\exd\), which is the chain map commutation condition in Eq.~\eqref{eq:chainmapcommute}. 
Thus, the \(0\)-cycles are precisely the
downward chain maps of Eq.~\eqref{eqn:ChainMapDiagram}.
Further, \(0\)-boundaries in the chain map complex are exactly the null homotopies \(\theta_\bullet\),
\begin{equation}
     \theta_\bullet = \partial_1^{[2]}Q_\bullet = \partial Q_\bullet + Q_\bullet \exd,
\end{equation}
where \(Q_\bullet \in \CMH{2}_1\) is arbitrary.
Regarding the cycles \(\mathsf{ker}(\partial^{[2]}_0)\) up to addition of boundaries \(\mathsf{im}(\partial^{[2]}_0)\), we see that the 0-homology of the chain map complex \(H_0(\CMH{2}) = \mathsf{ker}(\partial^{[2]}_0)/\mathsf{im}(\partial^{[2]}_0)\) is nothing but the \emph{homotopy classes} of chain maps \(C^\bullet \to D_\bullet\)---classes of chain maps up to homotopy equivalence.
By Lemma~\ref{lem:homotopyclass}, these classes are exactly the same as classes of chain maps with the same induced action on (co)homology.

The chain map complex thus inherits all of the relations between chain maps and logical Clifford gates.
Given a generic \(f_\bullet \in \CMH{2}_0\), the component \(f_0\) specifies a pattern of \(\mathsf{CZ}\) gates by the same rule as Eq.~\eqref{eq:productofCZ}:
\begin{equation}
\label{eq-productofCZ_f}
    U_{\mathsf{CZ}}^f
    =
    \prod_{e\in\mathsf B(C_0)}
    \mathsf{CZ}^{CD}_{e,f_0(\mathbf e)}.
\end{equation}
As a general \(f\) need not commute with the boundary maps, \(U_{\mathsf{CZ}}^f\) is need not be a logical gate.
However, when \(f_\bullet = \phii_\bullet \in \mathsf{ker}(\partial^{[2]})\) is a cycle, \(U_{\mathsf{CZ}}^\phii\) is logical; and if two cycles \(\phii_\bullet\) and \(\psi_\bullet\) differ by a null-homotopy the gates \(U_{\mathsf{CZ}}^\phii\) and \(U_{\mathsf{CZ}}^{\psi}\) have the same logical action.

Other homology groups \(H_j(\CMH{2})\) encode shifted chain maps \(\phii_\bullet : C[j]^\bullet \to D_\bullet\), up to homotopy.
In particular, the logical measurement gadgets associated to chain maps \(\phii_\bullet : C[-1]^\bullet \to D_\bullet\) (Theorem~\ref{thm:CodeSurgeryChainMaps}) are associated to \((-1)\)-cycles of \(\CMH{2}_{-1}\), and homologous cycles encode gadgets that measure the same pattern of logicals.

This relation between the chain map complex and Clifford logic is summarized by the following theorem.

\begin{shaded*}
\begin{restatable}[Clifford Logic and the Homology of the Chain Map Complex]{theorem}{Cliffordlogic}
\label{thm:Cliffordlogic} 
Consider the chain map complex $[\mathbf{C}^{[2]}]_{\bullet} \simeq [C^{\bullet}, D_{\bullet}]$.
Then the following are true:
\begin{enumerate}
    \item \textit{Homology and Logical Clifford Gates:} Each homology class
\([\phii]\in H_0([\mathbf{C}^{[2]}]_\bullet)\) determines a unitary logical
\(\mathsf{CZ}\) action coupling codes \(C\) and \(D\). Any representative
\(\phii_\bullet\) of the class specifies a unitary logical CZ circuit
\(U_{\mathsf{CZ}}^\phii\) of the form
Eq.~\eqref{eq:productofCZ}.  Moreover, if two logical $\mathsf{CZ}$ circuits of this form act differently on logical qubits, they are labeled by different homology classes of $H_0([\mathbf{C}^{[2]}]_{\bullet})$.

    \item \textit{Homology and Logical Pauli Measurement:} Each homology class
\([\phii]\in H_{-1}([\mathbf{C}^{[2]}]_\bullet)\) determines a logical
\(Z\)-Pauli measurement pattern on code \(D\). Any representative
\(\phii_\bullet\) specifies a physical code-surgery gadget
\(M_Z^\phii\) of the form Eq.~\eqref{eq-measurementcircuit}.
    Moreover, if two code surgery gadgets of this form measure different patterns of logical qubits, they are labeled by different homology classes of $H_{-1}([\mathbf{C}^{[2]}]_{\bullet})$.

    \item \textit{Exhaustiveness:} For any logical $\mathsf{CZ}$ action, there is logical $\mathsf{CZ}$ gate of the form $U_{\mathsf{CZ}}^{\phii}$ [Eq.~\eqref{eq:productofCZ}] labeled by a homology class of $H_0([\mathbf{C}^{[2]}]_{\bullet})$ that performs that action.
    Moreover, for any Pauli $Z$ measurement on logical qubits, there is code surgery gadget of the form $M_{Z}^{\phii}$ [Eq.~\eqref{eq-measurementcircuit}] labeled by a homology class of $H_{-1}([\mathbf{C}^{[2]}]_{\bullet})$ that performs that action.
\end{enumerate}
\end{restatable}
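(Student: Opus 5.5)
The plan is to reduce all three parts to facts already established: Theorem~\ref{thm:TransversalGatesFromChainMaps} for unitary gates, Theorem~\ref{thm:CodeSurgeryChainMaps} for surgery gadgets, and Lemma~\ref{lem:homotopyclass} together with the identification of cycles and boundaries of $[C^\bullet,D_\bullet]$ with chain maps and null homotopies. The only genuinely new ingredient is exhaustiveness, which I will obtain by lifting a linear map on (co)homology to a chain map. Over $\mathbb{F}_2$ this lift always exists, because every complex splits.

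For part 1, I first note that $\ker(\partial^{[2]}_0)$ consists exactly of the chain maps $C^\bullet\to D_\bullet$; this was verified above from $\partial^{[2]}f=\partial f+f\exd$. Hence Theorem~\ref{thm:TransversalGatesFromChainMaps} says any representative $\phii$ gives a logical $\mathsf{CZ}$ circuit with action $\prod_L \overline{\mathsf{CZ}}_{L,\phii_*(\mathbf L)}$. If $\phii'=\phii+\partial^{[2]}_1Q$, then $\phii'-\phii$ is null homotopic, so $\phii'_*=\phii_*$ and the logical action is the same. The action is therefore a function of the class $[\phii]$. For the converse, suppose two circuits $U^\phii_{\mathsf{CZ}}$ and $U^\psi_{\mathsf{CZ}}$ act differently on logical qubits. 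Their logical actions are determined by the bilinear data $\phii_*,\psi_*:H^0(C)\to H_0(D)$, read off through Eq.~\eqref{eq:phiactsona} on logical $X$ operators. So $(\phii_*)_0\neq(\psi_*)_0$, hence $\phii$ and $\psi$ are not homotopic by Lemma~\ref{lem:homotopyclass}, and they lie in different classes of $H_0$.

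Part 2 is identical in structure. A $(-1)$-cycle is a chain map $C[-1]^\bullet\to D_\bullet$, and Theorem~\ref{thm:CodeSurgeryChainMaps} gives the measured pattern $\{\phii_*([\mathbf c])\}$. Homologous cycles induce the same $\phii_*:H^{-1}(C)\to H_0(D)$ and so measure the same logical operators. Conversely, if two gadgets measure different patterns, the images or maps differ, so the two cycles are not homotopic.

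For part 3, I will show that every linear map $A:H^0(C)\to H_0(D)$ is induced by some chain map. I choose splittings over $\mathbb F_2$: write $\ker \exd_1 = \mathrm{im}\,\exd_0\oplus \mathcal H^C$ and $C^0=\ker\exd_1\oplus W$, and similarly pick a subspace $\mathcal H^D\subset\ker\partial_0$ of representatives of $H_0(D)$. I then define $\phii_0$ to send $\mathcal H^C$ to the representatives of $A$, and to vanish on $\mathrm{im}\,\exd_0$ and on $W$. All other components $\phii_i$ are set to zero. To check the chain map condition I need $\partial\phii_0=\phii_1\exd=0$, which holds since the image lies in cycles, and $\phii_0\exd_0=\partial\phii_{-1}=0$, which holds since $\phii_0$ kills coboundaries. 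So $\phii$ is a $0$-cycle with $\phii_*=A$. An arbitrary logical $\mathsf{CZ}$ action between the blocks, being a product of logical $\mathsf{CZ}$s, is specified by such an $A$ (its $\mathbb F_2$ matrix of which pairs are coupled), and Theorem~\ref{thm:TransversalGatesFromChainMaps} then realizes it. The measurement statement follows the same way: I lift the desired map $H^{-1}(C)\to H_0(D)$ by the identical splitting argument. For an arbitrary set of $Z$ logicals on $D$, I take $C$ to be any complex whose $H^{-1}$ has the right dimension, for instance the trivial complex with $C^{-1}=\mathbb F_2^m$.

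The main obstacle is interpretive rather than technical: making ``logical $\mathsf{CZ}$ action'' precise enough that it is determined exactly by $(\phii_*)_0$. The point is that a product of $\mathsf{CZ}$s is diagonal, so its logical action is fixed by the conjugation of logical $X$ operators, and by Eq.~\eqref{eq:phiactsona} that conjugation is $\phii_*$. I also need to note that the lift in part 3 need not be sparse: exhaustiveness makes no claim of transversality.
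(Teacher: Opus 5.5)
Your proposal is correct and follows essentially the same route as the paper: parts 1 and 2 are read off from Theorems~\ref{thm:TransversalGatesFromChainMaps} and~\ref{thm:CodeSurgeryChainMaps} together with the fact that null homotopies act trivially on (co)homology, and part 3 lifts the desired homological map to a chain map concentrated in degree zero. The differences are cosmetic. The paper writes the lift explicitly as $\phii_0(\mathbf{c})=\sum_j \mathbf{c}(c_j)\,d_j$, with $c_j$ cycle representatives of the dual homology basis, so coboundaries are killed automatically; you obtain the same kind of map from a choice of complements. You also make explicit, by choosing $C$, that $\dim H^{-1}(C)$ must be large enough for measurement exhaustiveness, a proviso the paper leaves implicit.
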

\end{shaded*}

The first two statements follow directly from Lemma~\ref{lem:homotopyclass} together with Theorems~\ref{thm:TransversalGatesFromChainMaps} and~\ref{thm:CodeSurgeryChainMaps}.
Exhaustiveness follows from the fact that any linear action on (co)homology is achieved as the induced action of some chain map, with a direct proof being given in Appendix~\ref{subapp:CliffordLogic}.

A few remarks are in order.
First, while the above theorem is phrased in terms of logical $\mathsf{CZ}$ gates and logical Pauli $Z$ measurements, it can be generalized to other classes of gates and $X$ measurements easily as will be discussed shortly in Section~\ref{subsubsec:WhichCliffords}.

Second, the correspondence between homology classes and logical actions is generally not one-to-one. As discussed above, a homology class specifies the induced maps
\(
H^i(C)\to H_{-i}(D)
\)
for all degrees \(i\), whereas the logical \(\mathsf{CZ}\) action depends only on the degree-zero map \(H^0(C)\to H_0(D)\), where the qubits live. Thus, distinct homology classes can differ in their action at other degrees while inducing the same action on the logical qubits. We make this redundancy explicit below using the tensor-product description of the chain map complex.

Third, the exhaustiveness clause of the above theorem states that for any pattern of $\mathsf{CZ}$ gates on \textit{logical} qubits, there is a pattern of $\mathsf{CZ}$ gates of the form of Eq.~\eqref{eq:productofCZ} that one can perform on \textit{physical qubits} to implement this action. 
Crucially, this does not imply that this pattern of gates can necessarily be implemented in constant depth i.e. it does not impose that $\phii$ is sparse (cf. Theorem \ref{thm:TransversalGatesFromChainMaps}).
However, since the pattern of gates on physical qubits are associated with homology classes of $[\mathbf{C}^{[2]}]_{\bullet}$, they can be deformed with Clifford stabilizers as in Eq.~\eqref{eq:deformUCZ} to lower their depth. 
In some cases, this leads to transversal implementations of these Clifford gates and we give examples of this in Section~\ref{sec:DiscoveringGates}.
However, we leave the general problem of determining when these high depth gates realize constant depth implementations to forthcoming work~\cite{Sahay2026GoNoGo}.

The chain map complex has the same structural ingredients as the chain complex of an ordinary CSS code: chains, boundaries, and homology classes, each with a direct physical interpretation. This suggests promoting the analogy one step further and viewing the chain map complex itself as the chain complex of an auxiliary code.

\begin{coro}[Chain Map Complex as a qLDPC CSS Code]
\label{corr:auxiliarycode}
Let $C_\bullet$ and $D_\bullet$ be sparse, based chain complexes. Then the chain map complex
$[\mathbf{C}^{[2]}]
\equiv
[C^\bullet,D_\bullet]$
is itself a sparse, based chain complex and hence defines an auxiliary qLDPC CSS code by Theorem \ref{thm:CSScode}. 
\end{coro}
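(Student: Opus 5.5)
The plan is to check the three ingredients of Definition~\ref{def:SparseBasedChainComplex} for $[C^\bullet,D_\bullet]$ and then invoke Theorem~\ref{thm:CSScode}. The chain complex property $\partial^{[2]}_{j-1}\partial^{[2]}_j=0$ was already verified after Definition~\ref{def:ChainMapComplex}, so only a privileged finite basis and the sparseness of $\partial^{[2]}$ in that basis remain. Throughout I will assume, as is implicit for code families, that each of $C_\bullet$ and $D_\bullet$ has only finitely many (indeed $O(1)$) nonzero degrees. This ensures that the direct sum in Eq.~\eqref{eqn:CMCComponents} is finite-dimensional.

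For the basis, I will use the elementary matrices already introduced in the proof of Lemma~\ref{lem:local_null_homotopies}. For each $i$, $\mathbf c\in\mathsf B(C^i)$ and $d\in\mathsf B(D_{j-i})$, let $E^{\mathbf c,d}$ be the map sending $\mathbf c\mapsto d$ and all other basis cochains to $0$, placed in the summand $\Hom(C^i,D_{j-i})$. These maps form a finite basis $\mathsf B([C,D]_j)$. Under $\Hom(C^i,D_{j-i})\isom D_{j-i}\otimes C_i$, the element $E^{\mathbf c,d}$ corresponds to $d\otimes c$, which matches the physical reading in Eq.~\eqref{eq-f0} of a single $\mathsf{CZ}$ gate.

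For sparseness, I will compute $\partial^{[2]}E^{\mathbf c,d}=\partial^D\circ E^{\mathbf c,d}+E^{\mathbf c,d}\circ\exd^C$ directly from Eq.~\eqref{eqn:CMCBoundaryIndices}. The first term equals $\sum_{d'}(\partial^D)_{d'd}\,E^{\mathbf c,d'}$, which has $O(1)$ terms because each column of $\partial^D$ is sparse. The second term sends a basis cochain $\mathbf c'\in\mathsf B(C^{i-1})$ to $d$ exactly when $\mathbf c$ appears in $\exd\mathbf c'$, that is, when $\partial c=c'+\dots$. It therefore equals $\sum_{\mathbf c'}(\partial^C)_{c'c}\,E^{\mathbf c',d}$, which also has $O(1)$ terms. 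So each column of $\partial^{[2]}$ has $O(1)$ nonzero entries.

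For rows, I will run the same computation in reverse. A given $E^{\mathbf c,d'}$ receives contributions only from $E^{\mathbf c,d}$ with $d'\in\partial d$, and there are $O(1)$ of these by the row sparsity of $\partial^D$. It also receives contributions only from $E^{\mathbf c'',d'}$ with $\mathbf c''$ such that $\mathbf c\in\exd\mathbf c''$ picks up the right index, and these are bounded by the row sparsity of $\exd^C=(\partial^C)^{\mathsf T}$. Equivalently, in the tensor picture $\partial^{[2]}=\partial^D\otimes\unit+\unit\otimes\partial^C$, a sum of Kronecker products of sparse matrices with identities, which is manifestly sparse.

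Having checked these three properties, Theorem~\ref{thm:CSScode} applies verbatim and produces the auxiliary qLDPC CSS code. The only real obstacle is bookkeeping: the indices on the $\exd$ term must be kept straight, and the finiteness of $\bigoplus_i$ has to be noted. I expect neither to cause genuine difficulty.
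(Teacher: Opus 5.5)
Your proposal is correct and follows the paper's own argument. The paper's proof likewise takes the elementary-map basis induced by $\mathsf B(C^i)$ and $\mathsf B(D_{j-i})$ (as in Lemma~\ref{lem:local_null_homotopies}), reads off the sparsity of $\partial^{[2]}$ from Eq.~\eqref{eqn:CMCBoundaryIndices}, and then invokes Theorem~\ref{thm:CSScode}; your explicit row and column counts and the Kronecker-product remark spell out the step the paper states without detail. Your assumption of finitely many nonzero degrees is exactly what the appendix's Definition~\ref{def:ChainComplexAppendix} of a based complex builds in. In the row count the condition should read $\mathbf c''\in\operatorname{supp}(\exd\mathbf c)$ rather than $\mathbf c\in\exd\mathbf c''$; this slip is harmless because sparsity bounds both rows and columns.
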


\begin{proof}
The corollary follows immediately from the preceding results. The chosen bases of $C_\bullet$ and $D_\bullet$ induce the elementary-map basis of $[\mathbf{C}^{[2]}]_\bullet$ (analogous to the discussion below Lemma \ref{lem:local_null_homotopies}), and Eq.~\eqref{eqn:CMCBoundaryIndices} shows that its boundary maps are sparse whenever those of $C_\bullet$ and $D_\bullet$ are sparse. The auxiliary-code interpretation then follows from Theorems~\ref{thm:CSScode} and~\ref{thm:Cliffordlogic}.
\end{proof}

Then, to summarize and synthesize the discussion of this section, the chain map complex is a new auxiliary chain complex encoding \emph{operations} on codes $C$ and $D$:  
physical \(\mathsf{CZ}\)-gate patterns
are chains in the chain map complex, Clifford stabilizers are boundaries, and
logical \(\mathsf{CZ}\) gates and logical Pauli measurements are labeled by the resulting homology classes. The auxiliary chain map complex can itself be viewed as a CSS LDPC code. Then, logical Pauli operators of this auxiliary code encode logical Clifford  operations on codes $C$ and $D$.

\subsubsection{Chain Map Complex as a Tensor Product}
\label{subsubsec:chain_maps_as_tensors}

By identifying logical Clifford gates with homology classes of the chain map
complex, the preceding subsection places them in direct analogy with Pauli
logical operators in an auxiliary CSS code. 
To make this perspective useful, however, we need a more physical picture of
the auxiliary code. As presently formulated, the chain map complex is a rather abstract object about which we have little intuition: its chains
are collections of (shifted) maps between
 two chain complexes, which are much less familiar than the chains of the original CSS chain complex.  The goal of this
subsection is to give an equally concrete picture of the chain map complex, in order to be able to think about it more intuitively and concretely. We will do this by identifying the  chain map complex
with a familiar tensor-product complex.

It is helpful to make an analogy to vector spaces.
Chain complexes resemble vector spaces in many ways---not surprising, as they are made from a sequence of vector spaces.
In this sense, we should think of chain maps as matrices relating vectors.
The statement that chain maps can be organized into a chain complex is the analog of the fact that matrices themselves form a vector space.
In fact, the concrete realization of the space of linear maps between finite dimensional vector spaces \(V \to W\) as a vector space is well known: it is isomorphic to \(W \otimes V^*\), where \(V^*\) is the dual vector space of linear functionals on \(V\).
\begin{equation}\label{eqn:VSpaceHomTensorIsom}
    \Hom(V,W) \isom W \otimes V^*.
\end{equation}
Under this isomorphism, a map which takes \(\ket{v} \to \ket{w}\) is sent to a product \(\ket{w}\bra{v}\).
The isomorphism is then simply the statement that linear maps are spanned by products of kets in \(W\) and bras in \(V^*\).

Returning to chain complexes, the space of linear maps \(C^\bullet \to D_\bullet\) is the chain map complex we have already discussed.
The tensor product of chain complexes is also a familiar concept which has appeared extensively in the quantum error correction literature.
It is closely related to the hypergraph product of CSS codes.
\begin{defn}[Tensor Product of Chain Complex] \label{def:TensorProduct}
  Given chain complexes \(C_\bullet\) and \(D_\bullet\), the tensor product \(C_\bullet \otimes D_\bullet\) is a chain complex with components
  \begin{equation}\label{eqn:TensorComponents}
      (C_\bullet\otimes D_\bullet)_j = \bigoplus_{i \in \ZZ} C_{-i} \otimes D_{i+j},
  \end{equation}
  and boundary map defined by its action on pure tensors
  \begin{equation}\label{eqn:TensorBoundary}
      \partial_\otimes(c \otimes d) = (\partial c) \otimes d + c \otimes (\partial d).
  \end{equation}
\end{defn}

For tensor products of cochain complexes \(C^\bullet\) or \(D^\bullet\), one replaces any \(C_{-i}\) or \(D_{i+j}\) with \(C^i\) or \(D^{-i-j}\), respectively.
This is to maintain the ordering of cochain complexes written in Eq.~\eqref{eqn:CochainComplex}, where coboundary maps go from left to right (see Appendix~\ref{app:CMCDetails} for more details).
For example,
\begin{equation}
  (C^\bullet\otimes D_\bullet)_j = \bigoplus_{i \in \ZZ} C^{i} \otimes D_{i+j}.
\end{equation}

The isomorphism~\eqref{eqn:VSpaceHomTensorIsom} motivates the following theorem, which expresses that maps between chain complexes can be \emph{reshaped} into chains in a tensor product.

\begin{theo}[Chain Map Complex as a  Tensor Product]\label{thm:HomTensorIsom} 
    For based chain complexes \(\mcbb{C}\) and \(\mcb{D}\), there is an isomorphism between the chain map complex \([C^{\bullet}, D_{\bullet}]\) and the tensor product of \(\mcb{D}\) and the dual complex of \(\mcbb{C}\):
    \begin{equation}\label{eqn:CMCToTensor}
         [C^{\bullet}, D_{\bullet}] \isom \mcb{D} \otimes \mcb{C}.
    \end{equation}

\end{theo}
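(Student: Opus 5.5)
The plan is to build the isomorphism degree by degree from the vector-space identification $\Hom(V,W)\isom W\otimes V^*$ of Eq.~\eqref{eqn:VSpaceHomTensorIsom}, and then check that it intertwines the two boundary maps. For based complexes every $C^i$ is finite dimensional with dual basis $\mathsf B(C^i)$. Each elementary map sending $\mathbf c\mapsto d$ (and other basis cochains to zero) then corresponds to the elementary tensor $d\otimes c$, where $c\in\mathsf B(C_i)$ is the chain dual to $\mathbf c$.

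Concretely, I will define $\Phi:\Hom(C^i,D_{j-i})\to D_{j-i}\otimes C_i$ by
\[
\Phi(f)=\sum_{c\in\mathsf B(C_i)} f(\mathbf c)\otimes c ,
\]
which is basis-free in the sense that it is the inverse of the map $d\otimes c\mapsto(\mathbf x\mapsto \mathbf x(c)\,d)$. Summing over $i$, the left side is $[C^\bullet,D_\bullet]_j=\bigoplus_i\Hom(C^i,D_{j-i})$ by Eq.~\eqref{eqn:CMCComponents}. The right side is $\bigoplus_i D_{j-i}\otimes C_i$, and I will reindex it to match $(D_\bullet\otimes C_\bullet)_j=\bigoplus_k D_{-k}\otimes C_{k+j}$ of Eq.~\eqref{eqn:TensorComponents} via $k=i-j$. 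This gives a degreewise linear isomorphism, since all spaces are finite dimensional and $\Phi$ sends a basis to a basis. It also shows that the elementary-map basis matches the product basis, which is what makes the isomorphism one of based complexes, as needed for Corollary~\ref{corr:auxiliarycode}.

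The main step is the chain-map property $\Phi\circ\partial^{[2]}=\partial_\otimes\circ\Phi$. By linearity it suffices to check this on an elementary map $f=\mathbf c\mapsto d$, so that $\Phi(f)=d\otimes c$. Using Eq.~\eqref{eqn:CMCBoundaryIndices}, $\partial^{[2]}f=\partial f+f\exd$, and the two terms are handled separately.

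\textbf{First term.} $\partial\circ f$ is the elementary map $\mathbf c\mapsto\partial d$, so $\Phi(\partial f)=(\partial d)\otimes c$.

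\textbf{Second term.} $f\circ\exd$ sends a basis cochain $\mathbf c'\in C^{i-1}$ to $(\exd\mathbf c')(c)\,d=\mathbf c'(\partial c)\,d$, using the defining duality $\exd\mathbf x=\mathbf x\circ\partial$ from Definition~\ref{def:SparseBasedChainComplex}. Hence
\[
\Phi(f\exd)=\sum_{c'}\mathbf c'(\partial c)\,d\otimes c' = d\otimes\partial c .
\]

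Adding the two terms gives $\partial_\otimes(d\otimes c)$ by Eq.~\eqref{eqn:TensorBoundary}. No signs arise because we work over $\mathbb F_2$.

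The only real obstacle I expect is bookkeeping. The degree conventions for cochain complexes are reversed, with $\exd$ running left to right and $C^i$ paired with $C_i$. The reindexing must therefore put $D_{j-i}\otimes C_i$ in tensor degree $j$, and the coboundary $\exd_{i}:C^{i-1}\to C^{i}$ must transpose to $\partial_i:C_i\to C_{i-1}$. I will fix these indices once, by checking the $j=0$ case against the diagrams in Eq.~\eqref{eq-chainmapcomplex}, and then state the general case.
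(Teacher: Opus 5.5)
Your proof is correct and follows essentially the same route as the paper's proof in Appendix~\ref{subapp:MapTensorIsom}: a degreewise canonical identification $\Hom(C^i,D_{j-i})\isom D_{j-i}\otimes C_i$, followed by a direct check on elementary maps (equivalently, pure tensors) that this identification intertwines $\partial^{[2]}$ with $\partial_\otimes$. The differences are cosmetic: the paper writes the map in the opposite direction, $\gamma(x\otimes f)=[c\mapsto x\,f(c)]$, and works over a general commutative ring with the dual complex $C^\vee$, so it must track the signs in $\partial^{[2]}$, $\partial^\vee$, and $\partial_\otimes$, whereas your specialization to $\mathbb{F}_2$ makes all of these signs vanish, which suffices for the statement as posed in the main text.
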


Equation~\eqref{eqn:CMCToTensor} is the chain complex analog of Eq.~\eqref{eqn:VSpaceHomTensorIsom}:
the space of maps between chain complexes is isomorphic to the tensor product of the target complex \(D_\bullet\) and the dual of the control complex \(C^\bullet\).
Dualizing swaps chains and cochains, so that if \(C^\bullet\) appears in the chain map complex, \(C_\bullet\) appears in the tensor product.
Unlike vector spaces, there is usually no isomorphism between a chain complex and its dual, \(C^\bullet \not\isom C_\bullet\), so it is important to keep track of dualizations in Eq.~\eqref{eqn:CMCToTensor}.
For instance, a map \(\psi: C^\bullet \to D^\bullet\) encoding a \(\mathsf{CNOT}\) is a chain in \(D^\bullet \otimes C_\bullet\), which can be a very different space to \(D_\bullet \otimes C_\bullet\), which encodes \(\mathsf{CZ}\) gates.
Indeed, whether or not transversal gates can be found in these complexes may depend on the dualizations: later, in higher levels of the hierarchy, we will find \(\mathsf{CCZ}\) gates in a complex like \(C_\bullet \otimes C_\bullet \otimes C_\bullet\), but one expects to never find logical \(\mathsf{CCNOT}\) gates~\cite{JochymOConnor2018} in \(C^\bullet \otimes C_\bullet \otimes C_\bullet\).

Theorem~\ref{thm:HomTensorIsom} gives a concrete handle on the auxiliary CSS code defined by the chain map
complex. 
Rather than working with an abstract complex of maps, we can
view it as a tensor-product CSS code and use the geometry and deformability of its Pauli logicals
to guide the construction of Clifford gates on the original codes $C, D$. 

The proof of Theorem~\ref{thm:HomTensorIsom} is again standard in homological algebra~\cite{CartanEilenberg}. 
We recall it (in greater generality than needed here---including for non-binary chain complexes relevant for non-qubit codes) in Appendix~\ref{app:CMCDetails}. 

The tensor-product description makes two features of Theorem~\ref{thm:Cliffordlogic} particularly transparent: the exhaustiveness of the logical \(\mathsf{CZ}\) construction, and the fact that distinct homology classes can nevertheless have the same logical action. By the K\"unneth formula~\cite{Weibel1994HomologicalBook},
\begin{equation}\label{eqn:CMCKunneth}
H_0([C^\bullet,D_\bullet])
\cong
\bigoplus_i H_{-i}(D)\otimes H_i(C).
\end{equation}
Because qubits live at the degree-$0$ component of codes $C$ and $D$, the logical \(\mathsf{CZ}\) action depends only on the \(i=0\) sector,
\(H_0(D)\otimes H_0(C)\), which is isomorphic to
\(
\Hom(H^0(C),H_0(D)).
\)
Thus, an element of \(H_0(D)\otimes H_0(C)\) specifies precisely the induced map on the logical \(X\) operators of code \(C\).

In particular, if \(C\) and \(D\) encode \(k_C\) and \(k_D\) logical qubits, respectively, this sector has dimension \(k_Ck_D\). Choosing bases
\(\{L_\beta^C\}_{\beta=1}^{k_C}\) and
\(\{L_\alpha^D\}_{\alpha=1}^{k_D}\) of \(H_0(C)\) and \(H_0(D)\), with dual basis
\(\{\mathbf L_\beta^C\}\) of \(H^0(C)\), the tensor
\(L_\alpha^D\otimes L_\beta^C\) corresponds to the map
\(
\mathbf L_\beta^C\mapsto L_\alpha^D
\)
and hence to a logical \(\mathsf{CZ}\) between that pair of logical qubits. Arbitrary elements of this \(k_Ck_D\)-dimensional sector therefore realize arbitrary patterns of the \(k_Ck_D\) possible pairwise logical \(\mathsf{CZ}\) gates, giving a direct explanation of the exhaustiveness statement in Theorem~\ref{thm:Cliffordlogic}. At the same time, the remaining \(i\neq0\) sectors in Eq.~\eqref{eqn:CMCKunneth} do not affect the action on the encoded qubits. Consequently, distinct homology classes that differ only in these other sectors can implement the same logical \(\mathsf{CZ}\) action. Thus \(H_0([C^\bullet,D_\bullet])\) organizes the full homological action of chain maps, while its \(H_0(D)\otimes H_0(C)\) sector captures their action on the encoded qubits.

As a matter of notation and to make the tensorial perspective manifest, it will sometimes be helpful to keep track of indices when manipulating chains, cochains, and chain maps (especially, when discussing non-Clifford and intra-block gates).
A consistent assignment of indices would be to associate chains with vectors with upper indices $b^{\alpha}$, and cochains with vectors with lower indices $\mathbf{a}_{\alpha}$.%
\footnote{The choice of upper/lower indices here is to make contact with other fields (e.g. relativity) where upper indices are associated with vectors and lower indices are associated with their dual vectors. 
It is rather unfortunate that typically chains $b^{\alpha}$ are associated with chain complexes $C_{\bullet}$, which now have a lower bullet, and cochains $\mathbf{a}_{\alpha}$ are associated with $C^{\bullet}$, which now have an upper bullet.
}
Accordingly, a map \(f:C^\bullet\to D_\bullet\) is represented by a tensor
\(f^{\alpha\beta}\), while a map \(g:C^\bullet\to D^\bullet\) is represented
by the tensor \(g_{\alpha}^{\;\;\beta}\). Their actions map cochains to chains and cochains respectively: 
\begin{equation}\label{eq:chainmaptensor}
\begin{aligned}
    b^{\alpha}
    =
    f^{\alpha\beta}\mathbf a_{\beta},
    \qquad
    \mathbf b_{\alpha}
    =
    g_{\alpha}^{\;\;\beta}\mathbf a_{\beta},
    .
\end{aligned}
\end{equation}
Here and throughout, repeated upper and lower indices are summed according to
the Einstein summation convention. 
 Diagrammatically:
\begin{equation}\label{eqn:tensorcontract}
     \begin{tikzpicture}[scale = 1.2, baseline = {([yshift=-.5ex]current bounding box.center)}]
        \draw[color = black] (0, 0) --node[pos = 0.6, left]{\small $\alpha$} (0, 0.7);
         \draw[fill = lightgray] (0,0) circle (0.2);
         \node at (0, 0) {$b$};
     \end{tikzpicture}\  =\  \begin{tikzpicture}[scale = 1.2, baseline = {([yshift=-.5ex]current bounding box.center)}] 
        \draw[color = black] (0, 0) --node[pos=0.5, right]{\small $\beta$} (0.565, 0.565);
        \draw[color = black] (0,0) --node[pos=0.5, left]{\small $\alpha$} (-0.7, 0.7);
        \draw[fill = lightdodgerblue] (0,0) circle (0.3);
        \node at (-0.0,0) {\small $f$};
        \draw[fill = lightgray] (0.7,0.7) circle (0.2);
        \node at (0.7,0.7) {$\mathbf{a}$};
    \end{tikzpicture} \qquad \begin{tikzpicture}[scale = 1.2, baseline = {([yshift=-.5ex]current bounding box.center)}]
        \draw[color = black] (0, 0) --node[pos = 0.6, left]{\small $\alpha$} (0, -0.7);
         \draw[fill = lightgray] (0,0) circle (0.2);
         \node at (0, 0) {$\mathbf{b}$};
     \end{tikzpicture}\  =\  \begin{tikzpicture}[scale = 1.2, baseline = {([yshift=-.5ex]current bounding box.center)}] 
        \draw[color = black] (0, 0) --node[pos=0.7, right]{\small $\beta$} (0, 0.7);
        \draw[color = black] (0,0) --node[pos=0.7, left]{\small $\alpha$} (0, -0.7);
        \draw[fill = lightdodgerblue] (0,0) circle (0.3);
        \node at (-0.0,0) {\small $g$};
        \draw[fill = lightgray] (0,0.85) circle (0.2);
        \node at (0,0.85) {$\mathbf{a}$};
    \end{tikzpicture}
\end{equation}
In Eq.~\eqref{eqn:tensorcontract}, the legs of the tensor diagrams represent chain complexes, and joining legs corresponds to contraction of a cochain with a chain, providing a graphical calculus for chain maps.
In Eq.~\eqref{eqn:tensorcontract}, we draw legs corresponding to upper indices pointing up and legs corresponding to lower indices pointing down.

\subsubsection{Chains of \texorpdfstring{$[\mathbf{C}^{[2]}]_{\bullet}$}{G} and Physical Clifford Gates}
\label{subsubsec:WhichCliffords}

We conclude our discussion on the chain map complex by summarizing how to describe other Clifford gates.

We briefly recall the association of Paulis to (co)chains [Eq.~\eqref{eq:Pauli-operators-chains}], but now using the tensor notation of Eq.~\eqref{eq:chainmaptensor}.
Given a \(0\)-chain \(b \in C_0\) and a \(0\)-cochain \(\mathbf{a} \in C^0\), we write the associated Pauli $Z$ and $X$ operators as
\begin{equation}
\label{eq-Pauliassoc2}
    Z_b = (-1)^{n_\alpha b^\alpha},
    \qquad
    X_{\mathbf{a}} = (-1)^{m^\alpha \mathbf{a}_\alpha}.
\end{equation}
Here, writing the privileged basis as \(\mathsf{B}(C_0)=\{e_\alpha\}\), the coefficients \(b^\alpha\) and \(\mathbf{a}_\alpha\) are the components of \(b\) and \(\mathbf{a}\) in the corresponding chain and dual cochain bases---\(b=b^\alpha e_\alpha\) and \(\mathbf a=\mathbf a_\alpha \mathbf e^\alpha\)---and \(n_\alpha = (1-Z_{e_\alpha})/2\), \(m^\alpha = (1-X_{e_\alpha})/2\).

We can use a very similar notation now to associate physical $\mathsf{CZ}$ operators with the $0$-chains of the chain map complex.
The \(0\)-chain $f_\bullet \in [C^{\bullet}, D_{\bullet}]_0 \simeq (D\otimes C)_0$ can have its \(f_0 : C^0 \to D_0\) component represented by a matrix $f_0^{\alpha\beta}$, where $\alpha$ indexes the basis $\mathsf B(D_0)$ and $\beta$ indexes the basis $\mathsf B(C_0)$.
The circuit \(U_{\mathsf{CZ}}^f\) specified by \(f_0\) [Eq.~\eqref{eq:productofCZ}] can then, analgous to the Pauli case, be written as an exponential of a tensor contraction:
\begin{equation} \label{eq:CZopassoc}
\begin{tikzpicture}[scale=0.6, baseline={([yshift=-.5ex]current bounding box.center)}]
  \node (Cdots1) at (0.75, 0)    {\small $\ $};
  \node (C1)     at (2, 0)    {\small $C^{-1}$};
  \node (C0)     at (4, 0)    {\small $C^0$};
  \node (Cm1)    at (6, 0)    {\small $C^{1}$};
  \node (Cdots2) at (7.25, 0)    {\small $\ $};

  \node (Ddots1) at (0.75, -2.2) {\small $\ $};
  \node (D1)     at (2, -2.2) {\small $D_1$};
  \node (D0)     at (4, -2.2) {\small $D_0$};
  \node (Dm1)    at (6, -2.2) {\small $D_{-1}$};
  \node (Ddots2) at (7.25, -2.2) {\small $\ $};

  \draw[-stealth, dashed] (Cdots1) -- node[above]{\small $\ $} (C1);
  \draw[-stealth, dashed] (C1)     -- node[above]{\small $\exd_0$} (C0);
  \draw[-stealth, dashed] (C0)     -- node[above]{\small $\exd_1$} (Cm1);
  \draw[-stealth, dashed] (Cm1)    -- node[above]{\small $\ $} (Cdots2);

  \draw[-stealth, dashed] (Ddots1) -- node[below]{\small $\ $} (D1);
  \draw[-stealth, dashed] (D1)     -- node[below]{\small $\partial_1$} (D0);
  \draw[-stealth, dashed] (D0)     -- node[below]{\small $\partial_0$} (Dm1);
  \draw[-stealth, dashed] (Dm1)    -- node[below]{\small $\ $} (Ddots2);

  \draw[-stealth] (C1)  -- node[pos=0.4, left]{\small $f_{-1}$} (D1);
  \draw[-stealth] (C0)  -- node[pos=0.4, left]{\small $f_{0}$}  (D0);
  \draw[-stealth] (Cm1) -- node[pos=0.4, left]{\small $f_{1}$}  (Dm1);
\end{tikzpicture}
\hspace{-2mm}
U_{\mathsf{CZ}}^f
=
(-1)^{n_\alpha^D f_0^{\alpha\beta} n_\beta^C}.
\end{equation}
Equation~\eqref{eq:CZopassoc} is equivalent to the expression in Eq.~\eqref{eq:productofCZ}, which follows from the identity $\mathsf{CZ}_{e, e'} = (Z_e)^{n_{e'}} = (Z_{e'})^{n_e} = (-1)^{n_e n_{e'}}$.
Hence \(
    U_{\mathsf{CZ}}^f
    =
    \prod_{e_2\in\mathsf B(C_0)}
    \mathsf{CZ}^{CD}_{e_2,f_0(\mathbf e_2)}
\), where  $ f_0(\mathbf e_2) = \sum_{e_1\in\mathsf B(D_0)} f_0^{e_1e_2}e_1 \in D_0$. 
Note that the tensor indices are written in tuple order
\((D,C)\), while the physical gate is written in control-first order
\((C,D)\). 

Other Clifford gates are expressed as 0-chains of a chain map complex, but with some constituent chain complexes replaced by cochain complexes.
For instance, \(f \in [C^\bullet, D^\bullet]_0\) encodes a \(\mathsf{CNOT}\) circuit, which will be logical if \(f\) is a cycle.
All such chain maps can be expressed in a tensorial notation and converted to circuits by contracting with \(n_\alpha\) (for upper indices \(f_{\cdots}^{\alpha \cdots}\)) or \(m^\alpha\) (for lower indices \(f^{\cdots}_{\alpha \cdots}\)).
Explicitly, the $0$-chains of the following chain map complexes can be associated with the following Clifford gates:
\begin{equation}
\label{eq:CliffordGateDictionary}
\begin{aligned}
    f^{\alpha \beta}
    \in [C^\bullet,D_\bullet]_0
    &\isom (D_\bullet\otimes C_\bullet)_0
    &&\longrightarrow
    (-1)^{n_{\alpha}^{D}f_0^{\alpha \beta}n_{\beta}^{C}},
    \\[0.3em]
    f_{\alpha}^{\;\;\beta}
    \in [C^\bullet,D^\bullet]_0
    &\isom (D^\bullet\otimes C_\bullet)_0
    &&\longrightarrow
    (-1)^{m_D^{\alpha}(f_0)_{\alpha}^{\;\;\beta}n_{\beta}^{C}},
    \\[0.3em]
    f_{\alpha \beta}
    \in [C_\bullet,D^\bullet]_0
    &\isom (D^\bullet\otimes C^\bullet)_0
    &&\longrightarrow
    (-1)^{m_D^{\alpha}(f_0)_{\alpha\beta}m_C^{\beta}},
    \\[0.3em]
    f^{\alpha}_{\;\;\beta}
    \in [C_\bullet,D_\bullet]_0
    &\isom (D_\bullet\otimes C^\bullet)_0
    &&\longrightarrow
    (-1)^{n_{\alpha}^{D}(f_0)^{\alpha}_{\;\;\beta}m_C^{\beta}}.
\end{aligned}
\end{equation}
The first operation is \(\mathsf{CZ}\). The second is
\(\mathsf{CNOT}\) with control in \(C\) and target in \(D\). The third is
\(\mathsf{CZ}\) conjugated by Hadamards on both qubits. The fourth is
\(\mathsf{CNOT}\) with control in \(D\) and target in \(C\).

\subsection{The Chain Map Hierarchy and Non-Clifford Logic}
\label{subsec:ChainMapHeirarchy}

Having identified Clifford operations with the homology classes of an auxiliary chain complex, the generalization to non-Clifford operations---arbitrarily high up the Clifford hierarchy---is straightforward. 
The construction is recursive and, at this point, almost automatic! 

\subsubsection{Chain Map Hierarchy}

The \emph{chain map hierarchy} organizes operations arbitrarily high up the Clifford hierarchy into chain complexes. 
At a conceptual level, the construction mirrors the recursive structure of the Clifford hierarchy~\cite{Gottesman1999Teleportation}. We index the hierarchy so that Pauli operators form the first level. Clifford gates occupy the second level, and map Paulis to Paulis; third-level gates map Paulis to Cliffords; and, more generally, gates at each higher level map Paulis to operators from the level immediately below. 

\emph{Climbing the chain map hierarchy} means recursively iterating the chain map complex construction of Sec.~\ref{subsubsec:chain_map_complex} to ascend the Clifford hierarchy. At a given level $\ell$ of the Clifford hierarchy,  operations from that level are packaged into a chain (map) complex, which can be associated with an auxiliary code; Pauli logicals of that auxiliary code encode logical operators at that level. Maps into that auxiliary code then define a new chain map complex, whose Pauli logicals encode logical operations at the next level, $\ell+1$.

This recursion is already visible in the passage from \(\mathsf{CZ}\) to \(\mathsf{CCZ}\).
At the Clifford level, a logical \(\mathsf{CZ}\) maps a logical \(X\) in the control code $C_\bullet$ into a logical \(Z\) of code $D_\bullet$
The logical \(Z\) is an ordinary Pauli logical of an ordinary code, and logical \(\mathsf{CZ}\)'s are encoded by maps
\(
C^\bullet \to D_\bullet\). After the Clifford construction, however, the logical \(\mathsf{CZ}\) gate  has itself been repackaged as a  cycle in an auxiliary chain map complex \([\mathbf C^{[2]}] = [C^\bullet, D_\bullet]\). 
This means that, at the next level of the Clifford hierarchy, 
\(\mathsf{CZ}\) can play exactly the same role that \(Z\) played one level below! Correspondingly, \([\mathbf C^{[2]}]_\bullet\), which is a chain complex in its own right, can play the same role as $D_\bullet$ to iterate the chain map complex construction  (Definition \ref{def:ChainMapComplex}) to build an auxiliary code at the next level.  For example, a logical \(\mathsf{CCZ}\)  maps a logical \(X\) in a new control code \(B_\bullet\) into a logical \(\mathsf{CZ}\) on codes $C_\bullet$ and $D_\bullet$, so it is encoded by maps \(B^\bullet\to[\mathbf C^{[2]}]_\bullet\), or equivalently by the iterated chain map complex \([\mathbf C^{[3]}] \equiv  [B^\bullet,[\mathbf C^{[2]}]_\bullet] = [B^\bullet, [C^\bullet,D_\bullet]]\).

To define the chain map hierarchy formally, we fix a collection $\mathcal A$ of chain complexes describing codes of interest. We mathematically define the chain map hierarchy before showing how it encodes physical and logical operations  at arbitrary levels of the Clifford hierarchy.

\begin{defn}[Chain Map Hierarchy]\label{def:ChainMapHierarchy}
    Let $\mathcal{A} = \{C_{(i)}\}$ be some set of chain complexes of interest.
    Then, the \textit{chain map hierarchy} of $\mathcal{A}$ is the set of chain complexes
    \begin{equation}
        \mathcal{C}_{\mathcal{A}} = \bigcup_{\ell = 1}^{\infty} \mathcal{C}_{\mathcal{A}}^{[\ell]},
    \end{equation}
    graded by a positive integer $\ell$.
    The set $\mathcal{C}_{\mathcal{A}}^{[\ell]}$ is  referred to as the $\ell$-th level of the chain map hierarchy.
    
    The levels are defined recursively.
    The first level is $\mathcal{C}_{\mathcal{A}}^{[1]} = \mathcal{A}$.  Higher levels are sets of chain map complexes defined for $\ell \geq 2$ as: 
    \begin{equation}
        \mathcal{C}_{\mathcal{A}}^{[\ell]} = \{ [C^{}, D_{}]\, :\,  C^{} \in \mathcal{A} \text{ and } D_{} \in \mathcal{C}^{[\ell - 1]}_{\mathcal{A}} \}.
    \end{equation}
    Here \([C,D]\) denotes the chain map complex of
Definition~\ref{def:ChainMapComplex}.

    An element $ [\mathbf{C}^{[\ell]}] \in \mathcal{C}_\mathcal{A}^{[\ell]}$ in the $\ell$-th level is specified by an ordered length-\(\ell\) tuple
    \begin{equation}\label{eq:ltuple}
        \mathbf{C}^{[\ell]} = (C_{(i_1)}^{}, C_{(i_2)}^{}, \cdots C_{(i_\ell)}), 
    \end{equation}
    with entries $C_{(i_j)} \in \mathcal{A}$. 
    The associated iterated chain map complex is defined by
\(
    [\mathbf C^{[1]}]
    =
    C_{(i_1)}
\)
and, for \(\ell\geq 2\),
\(
    [\mathbf C^{[\ell]}]
    =
    \bigl[
        C_{(i_\ell)},
        [\mathbf C^{[\ell-1]}]
    \bigr].
\)
Each \([\mathbf C^{[\ell]}]\) is itself a chain complex, with  boundary maps
\[
\partial_j^{[\ell]}:
[\mathbf C^{[\ell]}]_j
\longrightarrow
[\mathbf C^{[\ell]}]_{j-1}.
\]
\end{defn}

In the tuple of chain complexes Eq.~\eqref{eq:ltuple}, we suppress the bullets on \(C^\bullet_{(i_1)}\), etc, and hence also whether the complexes are of chains or cochains.
Bullets will be restored where they are important.

We note that \(\mathcal{C}_{\mathcal A}^{[\ell]}\) is not a single chain complex,
but a collection of chain (map) complexes, each encoding operations at the $\ell$th level of the Clifford hierarchy. 
An individual
element is specified by two pieces of data: the ordered length-\(\ell\) tuple
of code complexes involved,
\(
    \mathbf{C}^{[\ell]}
    =
    \bigl(C_{(i_1)},C_{(i_2)},\ldots,C_{(i_\ell)}\bigr),
\)
with entries in \(\mathcal A\), and the choice of chain/cochain orientations
appropriate to the operation being described. The tuple itself records only
the code blocks participating in the operation. Repetitions are allowed, so a
tuple such as \((C,C,C)\) describes an operation involving three copies of the
same code. Since the tuple does not specify whether an entry is used as a
chain complex or a cochain complex, we suppress bullets and dualizations
inside \(\mathbf{C}^{[\ell]}\), restoring them only when writing the 
chain map complex \([\mathbf{C}^{[\ell]}]\) associated with a particular operation.  By construction, each \([\mathbf{C}^{[\ell]}]_\bullet\) is itself a chain complex: every new level is obtained by applying the chain map complex construction of Definition~\ref{def:ChainMapComplex} to a chain complex from the previous level, with boundary maps inherited recursively. Thus, at fixed level
\(\ell\), the hierarchy contains many distinct chain map complexes, indexed by
the participating code blocks and the operation type.

Specializing for now to \(\mathsf{C}^{\ell-1}\mathsf{Z}\)-type operations, it is useful to see how the construction works in the first few levels. 
At level \(\ell=1\), there is no chain map construction yet: an element is just an ordinary code complex \([\mathbf{C}^{[1]}]=C_{(i_1)}\), whose homology labels Pauli \(Z=\mathsf{C}^{0}\mathsf{Z}\) logicals. 
At level \(\ell=2\), we add a new control code and form maps into the ordinary code complex $C_{(i_1)}$, obtaining the auxiliary code complex \([\mathbf{C}^{[2]}]=[C_{(i_2)}^\bullet,[\mathbf{C}^{[1]}]_\bullet]=[C_{(i_2)}^\bullet,C_{(i_1)\bullet}]\), whose homology labels logical \(\mathsf{CZ}=\mathsf{C}^{1}\mathsf{Z}\)-type operations. 
At level \(\ell=3\), we repeat the same construction, now mapping into the auxiliary complex that encoded \(\mathsf{CZ}\) operations: \([\mathbf{C}^{[3]}]=[C_{(i_3)}^\bullet,[\mathbf{C}^{[2]}]_\bullet]=[C_{(i_3)}^\bullet,[C_{(i_2)}^\bullet,C_{(i_1)\bullet}]]\). Its homology labels logical \(\mathsf{CCZ}=\mathsf{C}^{2}\mathsf{Z}\)-type operations.  These examples illustrate the recursive \(\mathsf{C}^{\ell-1}\mathsf{Z}\)-type construction.  Each new level constructs  a new auxiliary chain map complex by adding one new input code as control, and mapping into the auxiliary complex encoding the previous level:
\[
\begin{array}{rcl@{}l}
Z=\mathsf{C}^{0}\mathsf{Z}
&
\leftrightarrow
&
[\mathbf{C}^{[1]}]=
&
\hbox to .57\columnwidth{\hss $C_{(i_1)\bullet}$}
\\[0.45em]
\mathsf{CZ}=\mathsf{C}^{1}\mathsf{Z}
&
\leftrightarrow
&
[\mathbf{C}^{[2]}]=
&
\hbox to .57\columnwidth{\hss $[C_{(i_2)}^\bullet,C_{(i_1)\bullet}]$}
\\[0.45em]
\mathsf{CCZ}=\mathsf{C}^{2}\mathsf{Z}
&
\leftrightarrow
&
[\mathbf{C}^{[3]}]=
&
\hbox to .57\columnwidth{\hss $[C_{(i_3)}^\bullet,[C_{(i_2)}^\bullet,C_{(i_1)\bullet}]]$}
\\[-0.1em]
&
\vdots
&
&
\\[0.45em]
\mathsf{C}^{\ell-1}\mathsf{Z}
&
\leftrightarrow
&
[\mathbf{C}^{[\ell]}]=
&
\hbox to .57\columnwidth{\hss $[C_{(i_\ell)}^\bullet,[C_{(i_{\ell-1})}^\bullet,\cdots[C_{(i_2)}^\bullet,C_{(i_1)\bullet}]]]$}
\end{array}
\]
where brackets are nested to the right. Note that the nesting order of the chain map complex is opposite to the ordering
of the tuple \(\mathbf C^{[\ell]}=(C_{(i_1)},\ldots,C_{(i_\ell)})\). 

\subsubsection{Tensor Product Structure of the Chain Map Hierarchy}

Having defined the chain map hierarchy, 
we will find it useful to extend Sec.~\ref{subsubsec:chain_maps_as_tensors} to provide a  concrete tensor product
interpretation to the iterated chain map complexes populating the hierarchy. We proceed inductively. At the Clifford level,
Theorem~\ref{thm:HomTensorIsom} identifies the ordinary chain map complex with the tensor product chain complex:
\(
    [C_{(i_2)}^\bullet,C_{(i_1)\bullet}]
    \isom
    C_{(i_1)\bullet}\otimes C_{(i_2)\bullet}.
\)
At the next level, we apply the same theorem again, now with the previous
auxiliary complex playing the role of the target:
\(
    [C_{(i_3)}^\bullet,[C_{(i_2)}^\bullet,C_{(i_1)\bullet}]]
    \isom
    [C_{(i_2)}^\bullet,C_{(i_1)\bullet}]\otimes C_{(i_3)\bullet}
    \isom
    C_{(i_1)\bullet}\otimes C_{(i_2)\bullet}\otimes C_{(i_3)\bullet}.
\)
Iterating this inductive structure arrives at the following general result:
\begin{theo}[Iterated Chain Map Complex as a qLDPC Tensor Product Code]
\label{thm:IteratedChainMapComplex}
Let
$
    \mathbf C^{[\ell]}
    =
    \bigl(C_{(i_1)},\ldots,C_{(i_\ell)}\bigr)
$
be an ordered length-\(\ell\) tuple of sparse, based chain complexes, with
\(\ell\) fixed. Then an iterated chain map complex
$[\mathbf C^{[\ell]}] \in \mathcal{C}_\mathcal{A}^{[\ell]}$, which belongs to the \(\ell\)th level of the chain map hierarchy,
is itself a sparse, based chain complex and
hence defines an auxiliary qLDPC CSS code by
Theorem~\ref{thm:CSScode}. Moreover, it is isomorphic to a repeated tensor
product of the constituent chain complexes:
\begin{equation}\label{eqn:MultiLegHomTensIsom}
    [\mathbf C^{[\ell]}]
    \isom
    \bigotimes_{j=1}^{\ell}
     C_{(i_j)}.
\end{equation}
\end{theo}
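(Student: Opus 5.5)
The plan is induction on $\ell$, using Theorem~\ref{thm:HomTensorIsom} and Corollary~\ref{corr:auxiliarycode} as the inductive step. For $\ell=1$ there is nothing to prove: $[\mathbf C^{[1]}]=C_{(i_1)}$ is sparse and based by hypothesis, and the tensor product has a single factor. For $\ell\ge 2$ we have $[\mathbf C^{[\ell]}]=\bigl[C_{(i_\ell)},[\mathbf C^{[\ell-1]}]\bigr]$. By the inductive hypothesis, $[\mathbf C^{[\ell-1]}]$ is a sparse, based chain complex, isomorphic to $\bigotimes_{j=1}^{\ell-1}C_{(i_j)}$.

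\textbf{Sparseness and basedness.} Apply Corollary~\ref{corr:auxiliarycode} with $C=C_{(i_\ell)}$ and $D=[\mathbf C^{[\ell-1]}]$. The elementary-map basis of each $\Hom(C^i,D_{j-i})$ gives $[\mathbf C^{[\ell]}]$ a privileged finite basis. The boundary $\partial^{[\ell]}f=\partial^{D}f+f\exd^{C}$ is sparse because each term is a composition of an elementary map with a sparse matrix, so each row and column has $O(1)$ entries.

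The sparseness constants do grow with the level: the row weight of $\partial^{[\ell]}$ is bounded by the sum of the weights of $\partial^D$ and $\exd^C$. After $\ell$ steps the bound is $\sum_j w_j$, where $w_j$ is the weight for $C_{(i_j)}$. Since $\ell$ is fixed, this is $O(1)$, and this is exactly where the hypothesis ``$\ell$ fixed'' is used. Theorem~\ref{thm:CSScode} then yields the auxiliary qLDPC CSS code.

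\textbf{The isomorphism.} Theorem~\ref{thm:HomTensorIsom} gives
\[
[C_{(i_\ell)}^\bullet,[\mathbf C^{[\ell-1]}]_\bullet]\isom[\mathbf C^{[\ell-1]}]_\bullet\otimes C_{(i_\ell)\bullet}.
\]
We then need that tensoring with a fixed complex respects isomorphisms: if $\Phi:D\to D'$ is an isomorphism of chain complexes, then $\Phi\otimes\mathrm{id}$ is a chain isomorphism. This follows from the Leibniz formula~\eqref{eqn:TensorBoundary}, since $\Phi$ commutes with $\partial$. Substituting the inductive hypothesis and using associativity of the tensor product of complexes gives Eq.~\eqref{eqn:MultiLegHomTensIsom}. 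Associativity holds because the canonical map $(A\otimes B)\otimes C\to A\otimes(B\otimes C)$ matches the degree decompositions~\eqref{eqn:TensorComponents} and, over $\mathbb F_2$, needs no signs.

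\textbf{Main obstacle: bookkeeping.} The step needing care is tracking dualizations and grading conventions. For $\mathsf C^{\ell-1}\mathsf Z$-type complexes each control enters as a cochain complex $C^\bullet_{(i_j)}$ and appears in the tensor product as the chain complex $C_{(i_j)\bullet}$. For other operation types, some factors appear as $C^\bullet_{(i_j)}$ instead. The statement suppresses bullets, so I would prove it in the form ``each factor is $C_{(i_j)}$ or its dual, as dictated by the chosen orientation.'' Concretely, I would check at the level of components that the degree-$j$ piece of the iterated Hom, $\bigoplus\Hom(C^{i_\ell},\ldots)$, matches $\bigoplus C_{-i_1}\otimes\cdots$ with the convention stated after Definition~\ref{def:TensorProduct}. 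I would also confirm that the isomorphism of Theorem~\ref{thm:HomTensorIsom} sends elementary-map basis elements to elementary tensors, so that the isomorphism is one of based complexes. That last point ensures the auxiliary code defined by either side is literally the same code up to relabeling of qubits and checks.
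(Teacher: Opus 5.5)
Your proposal is correct and follows the paper's own argument. The paper proves this theorem by induction on $\ell$, using Theorem~\ref{thm:HomTensorIsom} for the isomorphism step and Corollary~\ref{corr:auxiliarycode} for sparseness and basedness. Your additional bookkeeping correctly fills in details the paper leaves implicit: the sparsity constant grows additively to $\sum_j w_j$, which is why $\ell$ must be fixed; $\Phi\otimes\mathrm{id}$ and associativity carry the isomorphism through the induction; and the canonical isomorphism sends elementary maps to elementary tensors.
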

In Eq.~\eqref{eqn:MultiLegHomTensIsom}, chain and cochain bullets are
suppressed; their orientations are inherited from the chosen iterated chain
map complex \( [\mathbf C^{[\ell]}]_\bullet \). The theorem follows inductively from
Theorem~\ref{thm:HomTensorIsom} and
Corollary~\ref{corr:auxiliarycode}. 

Thus each iteration of the chain map construction appends one new tensor
factor. Continuing the analogy of reshaping matrices into vectors, chains in an \(\ell\)-th-level
complex may equivalently be viewed as \(\ell\)-leg tensors. Because the constituent chain complexes are based, their tensor product carries
the corresponding product basis. 
More generally, once a chain in an iterated chain map complex is identified
with an \(\ell\)-leg tensor, the same object may be depicted either in
chain-map form or in tensor-product form.

Thus a degree-zero chain  \(f\in [\mathbf{C}^{[\ell]}]_0\) may be written as a tensor
\(
    f^{\alpha_1\alpha_2\cdots\alpha_\ell},
\)
where \(\alpha_j\) labels the leg associated with \(C_{(i_j)}\). Viewed as a
multilinear map, \(f\) acts on \(\ell-1\) cochains to produce a chain,
generalizing Eq.~\eqref{eq:chainmaptensor}:
\begin{equation}\label{eq:chainmap_multitensor}
      b^{\alpha_1}
    =
    f^{\alpha_1\alpha_2\cdots\alpha_\ell}
    \mathbf{a}_{\alpha_2}
    \cdots
    \mathbf{a}_{\alpha_\ell}.  
\end{equation}
Other contractions with combinations of cochains \(\mathbf{a}_{\alpha_j}\), or even other tensors \(g_{\alpha_{j} \alpha_k \cdots}\), produce chain or chain-tensor outputs from other combinations of \(C_{(i_l)}\) inputs.

This tensor-product viewpoint suggests a graphical calculus for the hierarchy.
Chains in ordinary chain complexes are one-leg tensors, chains in chain maps are two-leg tensors, and chains in higher level iterated chain map complex encoding 
higher-level operations are higher-leg tensors. In this sense, the chain map
hierarchy is a tensor-network-like language whose legs are chain complexes
rather than ordinary vector spaces.
The graphical calculus of these tensor networks could generalize the traditional quantum circuit formalism to naturally encode the separation between operations on physical qubits and logical qubits.

\subsubsection{Chains of \texorpdfstring{$[\mathbf{C}^{[\ell]}]_{\bullet}$}{G} and Physical Non-Clifford Gates}
\label{subsubsec:WhichNonCliffords}

We now explain the unitary-gate dictionary at level \(\ell\) of the chain map
hierarchy. The dictionary directly generalizes the Clifford case: for a given
chain map complex \([\mathbf C^{[\ell]}]\), \(0\)-chains specify physical gate
patterns, \(0\)-cycles specify code-space-preserving logical gates at the
\(\ell\)th level of the Clifford hierarchy, and \(0\)-boundaries specify
generalized stabilizer circuits with trivial logical action. Accordingly, gates corresponding to cycles in the same homology class of
\(H_0([\mathbf C^{[\ell]}])\) have the same logical action. 
In this subsection, we develop the first part of this dictionary
by associating \(0\)-chains with physical gates; the interpretation of cycles,
boundaries, and homology will be developed in the subsequent subsections. For
notational simplicity, we spell out the construction for the
\(\mathsf C^{\ell-1}\mathsf Z\)-type branch of the hierarchy. Other gate types
are obtained by changing the chain/cochain orientations, as in
Sec.~\ref{subsubsec:WhichCliffords}.

Let \(f_\bullet\in[\mathbf C^{[\ell]}]_0 = [C_{(i_\ell)}^\bullet,[\mathbf C^{[\ell-1]}]_\bullet]_0\) be a \(0\)-chain. By Definition~\ref{def:ChainMapComplex},  \(f_\bullet \) 
is a {collection} of maps, with components
\(
    f_k:C_{(i_\ell)}^k\to[\mathbf C^{[\ell-1]}]_{-k}
\).
Now, since \([\mathbf C^{[\ell-1]}]\) is itself a chain map complex, these components can be recursively unpacked into maps involving all the component codes \(C_{(i_1)},\ldots,C_{(i_\ell)}\). In the tensor-network interpretation, this gives an $\ell$ leg tensor with one leg for each component code. The physical qubits of each code \(C_{(i)}\) live in degree \(0\) so, as before, to associate $f$ with physical operators, we first
restrict to the degree-zero component
\(f_0:C_{(i_\ell)}^0\to[\mathbf C^{[\ell-1]}]_0\).  
For \(\ell>2\), however,
the target is itself a collection of maps,
\(
[\mathbf C^{[\ell-1]}]_0
=
\bigoplus_r
\operatorname{Hom}
(C_{(i_{\ell-1})}^r,[\mathbf C^{[\ell-2]}]_{-r}),
\)
and includes maps involving chains at nonzero degrees \(r\) and \(-r\), in
addition to the degree-zero map relevant to physical qubits. 
The physical gate is thus determined not by all of \(f_0\), but by the component
obtained by recursively selecting degree zero at every level of the nested
chain map complex. We denote this by the \(\ell\)-leg tensor whose indices all run over degree-\(0\) basis elements, i.e. whose \(j\)-th index $\alpha_j$ indexes $\mathsf B((C_{(i_j)})_0)$, the basis of qubits in code \(C_{(i_j)}\):
\[
    f^{\alpha_1\alpha_2\cdots \alpha_\ell}_{0_10_2\cdots0_\ell}
    .
\]
We henceforth suppress the all-degree-zero subscripts when clear from context. Each nonzero component labels an ordered \(\ell\)-tuple of physical qubits,
one from each constituent code. The remaining components of
\(f_\bullet\) enter the cycle and boundary conditions but do not directly
specify gates between physical qubits.
As before, repeated Greek indices will be summed over the corresponding physical-qubit bases in each code, with $\alpha_j$ indexing \(\mathsf B((C_{(i_j)})_0)\), whereas $e_j\in\mathsf B((C_{(i_j)})_0)$ will denote particular fixed basis elements.

We associate this physical tensor to the diagonal gate
\begin{equation}
\label{eq-ClZ}
    U_{\mathsf{C}^{\ell-1}\mathsf Z}^{f}
    =
    \exp\left(
        i\pi\,
        f^{\alpha_1 \alpha_2\cdots \alpha_\ell}
        n_{\alpha_1}
        n_{\alpha_2}
        \cdots
        n_{\alpha_\ell}
    \right),
\end{equation}
where 
\(
n_{\alpha_j}
\equiv
(1-Z_{e_{\alpha_j}})/2
\)
is the projector onto the \(Z=-1\) eigenspace of the qubit  
\(e_{\alpha_j}\) in  code \(C_{(i_j)}\).
Equivalently, this is a product of physical
\(\mathsf{C}^{\ell-1}\mathsf Z\) gates over all tuples
\((e_1,\ldots,e_\ell)\) for which
\(f^{e_1\cdots e_\ell}\neq 0\), which generalizes Eq.~\eqref{eq-productofCZ_f}:
\begin{equation}\label{eq:productofClZ}
\begin{aligned}
    U_{\mathsf C^{\ell-1}\mathsf Z}^{f}
&=
    \prod_{\substack{
        e_j\in\mathsf B((C_{(i_j)})_{0})\\
        j=1,\ldots,\ell
    }}
    \left(
        \mathsf C^{\ell-1}\mathsf Z_{
            e_\ell,\ldots,e_1
        }
    \right)^{f^{e_1e_2\cdots e_\ell}} 
    \\
    &=
    \prod_{\substack{
        e_j\in\mathsf B(C_{(i_j),0})\\
        j=2,\ldots,\ell
    }}
    \mathsf C^{\ell-1}\mathsf Z_{
        e_\ell,\ldots,e_2,\,
        f(\mathbf e_2,\ldots,\mathbf e_\ell)
    }.
\end{aligned}
\end{equation}
Here \(\mathsf C^{\ell-1}\mathsf Z_{e_\ell,\ldots,e_1}\) denotes the physical \(\ell\)-qubit controlled-\(Z\) gate acting on the qubits \(e_j\) of codes \(C_{(i_j)}\), for \(j=1,\ldots,\ell\). As in the Clifford case, the gate labels are written in control-first order, opposite to the tensor-leg order. In the second line we regard \(f\) as a multilinear map (cf. Eq~\eqref{eq:chainmap_multitensor}) whose value on the basis
cochains of the last \(\ell-1\) codes is a \(0\)-chain of the first code 
\begin{equation}\label{eq:fasmap}
    f(\mathbf e_2,\ldots,\mathbf e_\ell)
    =
    \sum_{e_1\in\mathsf B((C_{(i_1)})_{0})}
    f^{e_1e_2\cdots e_\ell}e_1
    \in (C_{(i_1)})_0    
\end{equation}
so that
\(\mathsf C^{\ell-1}\mathsf Z_{e_\ell,\ldots,e_2,
f(\mathbf e_2,\ldots,\mathbf e_\ell)}\)
denotes the product of multi-controlled-\(Z\) gates obtained by using the
fixed qubits \(e_2,\ldots,e_\ell\) as controls and applying \(Z\) on the
qubits in the support of \(f(\mathbf e_2,\ldots,\mathbf e_\ell)\).

The tensor representation also makes transparent when the corresponding physical gate can be implemented in constant depth. In Eq.~\eqref{eq:productofClZ}, each nonzero component \(f^{e_1\cdots e_\ell}\) labels one physical \(\mathsf C^{\ell-1}\mathsf Z\) gate acting on the corresponding \(\ell\)-tuple of physical qubits. To isolate the gates acting on a \emph{particular} qubit
\(e_j\) of the \(j\)th constituent code, we denote by \(f(\mathbf e_j)\) the \((\ell-1)\)-leg tensor obtained by contracting the \(j\)th leg of \(f\) with the basis cochain \(\mathbf e_j\in\mathsf B(C_{(i_j)}^0)\). In components,
\begin{equation}
\bigl(f(\mathbf e_j)\bigr)^{
\alpha_1\cdots\alpha_{j-1}\alpha_{j+1}\cdots\alpha_\ell}
=
f^{\alpha_1\cdots e_j\cdots\alpha_\ell}.
\end{equation}
Thus, the number of nonzero entries in $f(\mathbf e_j)$ is precisely the number of physical gates acting on qubit $e_j$. This motivates the following notion of sparsity.

\begin{defn}[Sparse Tensor]
\label{def:SparseTensor}
An \(\ell\)-leg tensor
$
    T^{\alpha_1\alpha_2\cdots\alpha_\ell}
$
is \emph{sparse} if, after fixing any one index \(\alpha_j\), only \(O(1)\)
choices of the remaining indices yield a nonzero component
\(T^{\alpha_1\alpha_2\cdots\alpha_\ell}\).
\end{defn}

For \(\ell=2\), this is the usual finite row- and column-weight condition which defines a sparse matrix. 

It is useful to quantitatively relate the sparsity of \(f\) to the circuit
depth of \(U_{\mathsf C^{\ell-1}\mathsf Z}^{f}\).
For a \(0\)-chain
\(f_\bullet\in[\mathbf C^{[\ell]}]_0\), we define its \emph{sparsity degree} of its all-zero component
as
\begin{equation}
\label{eqn:SparsityDegree}
\Delta(f)
=
\max_{1\leq j\leq \ell}\
\max_{e_j\in\mathsf B((C_{(i_j)})_0)}\
\bigl|f(\mathbf e_j)\bigr|,
\end{equation}
where \(|\cdot|\) denotes Hamming weight in the product basis. Thus, \(f\) is
sparse precisely when \(\Delta(f)=O(1)\). By definition, \(\Delta(f)\) is  the maximum
number of physical \(\mathsf C^{\ell-1}\mathsf Z\) gates in
\(U_{\mathsf C^{\ell-1}\mathsf Z}^{f}\) that act on any one physical qubit.
Consequently, if \({\rm depth}(f)\) denotes the minimum circuit depth required to
implement \(U_{\mathsf C^{\ell-1}\mathsf Z}^{f}\), then
\begin{equation}
    \Delta(f)\
    \leq \ {\rm depth}(f)\
    \leq  \ \ell\bigl(\Delta(f)-1\bigr)+1.
\end{equation}
The lower bound follows because a given qubit can participate in at most one
gate in each circuit layer. For the upper bound, each
\(\mathsf C^{\ell-1}\mathsf Z\) gate overlaps with at most
\(\ell(\Delta(f)-1)\) other gates, so the gates can be partitioned into 
\(\ell(\Delta(f)-1)+1\) sets of mutually disjoint gates. Thus, if \(f\) is
sparse, then \(\Delta(f)=O(1)\) and
\(U_{\mathsf C^{\ell-1}\mathsf Z}^{f}\) has a constant-depth, i.e.
transversal, implementation.

We conclude our discussion of physical gates by describing their action on
Pauli \(X\) operators. This action makes manifest the recursive structure of
the Clifford hierarchy: a  $\mathsf C^{\ell-1}\mathsf Z$ gate at level \(\ell\) maps a Pauli \(X\) operator in any of the constituent codes into a gate one level lower in the hierarchy, namely a $\mathsf C^{\ell-2}\mathsf Z$ gate on the remaining $(\ell-1)$ codes, generalizing the familiar Clifford relation for $U_{\mathsf C\mathsf Z}$ in Eq.~\eqref{eq:phiactsona}:
\begin{equation}\label{eq:CMHrecursiveaction}
\left[
U_{\mathsf C^{\ell-1}\mathsf Z}^{f},
X_{\mathbf a}^{C_{(i_j)}}
\right]_{\rm grp}
=
U_{\mathsf C^{\ell-2}\mathsf Z}^{f(\mathbf a)}.
\end{equation}
Here \(\mathbf a\in C_{(i_j)}^0\) is an arbitrary \(0\)-cochain of the
\(j\)th constituent code, specifying a Pauli \(X\) operator. Again viewing \(f\) as
a multilinear map with the \(j\)th tensor leg as an input, contraction with
\(\mathbf a\) produces an \((\ell-1)\)-leg tensor \(f(\mathbf a)\).
Then, the right-hand side of Eq.~\eqref{eq:CMHrecursiveaction} is the physical
\(\mathsf C^{\ell-2}\mathsf Z\) gate specified by \(f(\mathbf a)\) on the
remaining \(\ell-1\) codes via Eq.~\eqref{eq:productofClZ}.

This tensor contraction has a natural interpretation in the nested chain-map
description. To understand this, first consider the distinguished outer leg,
\(j=\ell\). Then, a \(0\)-chain \(f\in[\mathbf C^{[\ell]}]_0 = [C_{(i_\ell)}^\bullet,[\mathbf C^{[\ell-1]}]_\bullet]_0\) has a degree-zero component which is a map
\(
f_0:C_{(i_\ell)}^0\longrightarrow[\mathbf C^{[\ell-1]}]_0, 
\) so \(f(\mathbf a)\) is naturally a \(0\)-chain in the level-\((\ell-1)\)
auxiliary complex. More generally,  the tensor-product isomorphism of
Theorem~\ref{thm:IteratedChainMapComplex} allows the tensor legs to be reshaped so that any  code
\(C_{(i_j)}\) can be treated as the outer input:
\begin{equation}\label{eq-Clreshape}
[\mathbf C^{[\ell]}]
\isom
[C_{(i_j)}^\bullet,
[\mathbf C_{\setminus j}^{[\ell-1]}]_\bullet],
\end{equation}
where
\(
\mathbf C_{\setminus j}^{[\ell-1]}
=
\bigl(
C_{(i_1)},\ldots,C_{(i_{j-1})},
C_{(i_{j+1})},\ldots,C_{(i_\ell)}
\bigr)
\)
is the ordered length $(\ell-1)$ tuple obtained from \(\mathbf C^{[\ell]}\) by removing
\(C_{(i_j)}\). Accordingly,  the same map interpretation applies for arbitrary \(j\), 
so that \(
f(\mathbf a)\in[\mathbf C_{\setminus j}^{[\ell-1]}]_0,
\)
in agreement with its interpretation above as the \((\ell-1)\)-leg tensor
obtained by contracting the \(j\)th leg of \(f\) with \(\mathbf a\).

To derive Eq.~\eqref{eq:CMHrecursiveaction}, first take
\(\mathbf a=\mathbf e_j\) to be a basis cochain selecting a single physical
qubit \(e_j\). Every physical \(\mathsf C^{\ell-1}\mathsf Z\) gate in
Eq.~\eqref{eq:productofClZ} that does not act on \(e_j\) commutes with
\(X_{\mathbf e_j}\) and drops out of the group commutator, while each gate
that does act on \(e_j\) leaves behind a
\(\mathsf C^{\ell-2}\mathsf Z\) gate on its remaining \(\ell-1\) qubits. %
These surviving gates are precisely those encoded by
\(f(\mathbf e_j)\). Taking the product over the support of a general
\(0\)-cochain \(\mathbf a\) gives \(f(\mathbf a)\). 

Finally, we note that for a general \(0\)-chain \(f\), the lower-level operation
\(U_{\mathsf C^{\ell-2}\mathsf Z}^{f(\mathbf a)}\) need not have any special
relation to the code space. We now show that when \(f\) is a \(0\)-cycle, denoted $\phii$,
the chain-map condition ensures that Eq.~\eqref{eq:CMHrecursiveaction}
preserves stabilizers and descends to a well-defined action on logical
operators.

\subsubsection{Chain Maps, Homology, and Logical Non-Clifford Unitary Gates}\label{subsec:nonclifford_logic}

We have now understood how \(0\)-chains of \([\mathbf C^{[\ell]}]\) specify physical \(\mathsf C^{\ell-1}\mathsf Z\)-gate patterns. We next consider the role of $0$-cycles.
As in the Clifford case, the $0$-cycle condition promotes a physical gate pattern to a chain map: the resulting gate preserves the joint code space, descends to a well-defined action on homology, and hence defines a logical operation. We now develop this correspondence. The role of \(0\)-boundaries as generalized stabilizers and deformations of these logical gates is discussed  in the next subsection. 

Once again, we focus our discussion on \(\mathsf C^{\ell-1}\mathsf Z\) type gates, but generalizations to \(\mathsf C^{\ell-1}\mathsf {NOT}\) or to any gates related to Hadamard transformations on any of the affected qubits is straightforward, and proceeds by replacing cochain complexes \(C^\bullet_{(i_j)}\) with chain complexes \(C_\bullet^{(i_j)}\) on the relevant codes, as discussed for the Clifford case in \ref{subsubsec:WhichCliffords}.

We first establish the key algebraic property of a \(0\)-cycle
\(\phii\in\ker(\partial^{[\ell]}_0)\subseteq[\mathbf C^{[\ell]}]_0\):
under its action, \(0\)-cocycles of any constituent code are mapped to \(0\)-cycles one level lower in the hierarchy, while \(0\)-coboundaries are mapped to \(0\)-boundaries. 
This directly follows from iterating the chain-map property from the Clifford case.
In particular, it implies that changing the input cocycle by a coboundary changes its image
only by a boundary, so that  \(\phii\) induces a well-defined map from the
cohomology of a constituent code to the homology of the level-\((\ell-1)\)
auxiliary complex.

To make this precise, fix a constituent code \(C_{(i_j)}\) and use the
reshaped presentation of Eq.~\eqref{eq-Clreshape}. In this presentation,
\(\phii\) is a chain map from \(C_{(i_j)}^\bullet\) to
\([\mathbf C_{\setminus j}^{[\ell-1]}]_\bullet\), and therefore induces
well-defined maps on (co)homology:
\begin{align}
(\phii_*)_k:
H^k(C_{(i_j)})
&\longrightarrow
H_{-k}([\mathbf C_{\setminus j}^{[\ell-1]}]),
\nonumber\\
[\mathbf c]
&\longmapsto
[\phii_k(\mathbf c)],
\label{eq:CMHInducedMap}
\end{align}
where \(\mathbf c\in\ker(\exd)\) is a \(k\)-cocycle
representative of the cohomology class
\([\mathbf c]\in H^k(C_{(i_j)})\).

To establish that the maps in Eq.~\eqref{eq:CMHInducedMap} are well defined,  we simply note that Eq.~\eqref{eqn:CMCBoundaryIndices} implies that
the \(0\)-cycle condition \(\partial^{[\ell]}\phii=0\) is precisely the
chain-map condition from \(C_{(i_j)}^\bullet\) to
\([\mathbf C_{\setminus j}^{[\ell-1]}]_\bullet\): 
\begin{equation}
\label{eq:CMHchainmapcondition}
\partial^{[\ell-1]}_{k}\phii_{-k}
=
\phii_{-k+1}\exd^{C_{(i_j)}}_{-k+1}
\end{equation}
for all degrees $k$. Here 
\(\partial^{[\ell-1]}\) denotes the boundary map of the
level-\((\ell-1)\) complex, \([\mathbf C_{\setminus j}^{[\ell-1]}]_\bullet\), while
\(\exd^{C_{(i_j)}}\) is the coboundary map of the  code \(C_{(i_j)}^\bullet\).
Thus, \(\phii\) maps cycles (boundaries) to cycles (boundaries) and induces a well-defined map on (co)homology.

Just as the physical all-degree-zero component of the map \(f \in [\mathbf{C}^{[\ell]}]_0\) was represented by the tensor
\(f^{\alpha_1\cdots\alpha_\ell}\), with each index $\alpha_j$ running over the
physical-qubit basis, the induced all-degree-zero component of the map on homology admits an
analogous tensor representation, $\phii_*^{\alpha_1\cdots\alpha_\ell}$, with indices now running over logical-qubit
bases. We denote the logical basis of \(H_0(C_{(i_j)})\) by
\(\{L_{\alpha_j}\}\), with dual basis
\(\{\mathbf L_{\alpha_j}\}\) of \(H^0(C_{(i_j)})\). In logical-tensor expressions, repeated Greek indices are summed over the
chosen logical bases, with \(\alpha_j\) indexing
\(\mathsf B(H_0(C_{(i_j)}))\), whereas
\(L_j\in\mathsf B(H_0(C_{(i_j)}))\) denotes a particular fixed logical
basis element of code $C_{(i_j)}$. As in the case of $f$, the tensor-product structure allows us to unpack this induced
map into a multilinear map involving all \(\ell\) constituent codes. For example,
\begin{equation}
\label{eq:phiasmap}
\phii_*(\mathbf L_{2},\ldots,\mathbf L_{\ell})
=
\sum_{L_1\in \mathsf B(H_0(C_{(i_1)}))}
\phii_*^{L_1L_2\cdots L_\ell}
L_{1}
\in H_0(C_{(i_1)}),
\end{equation}
 treats  \(\phii_*\) as a multilinear map which, when evaluated on fixed dual basis classes of the last \(\ell-1\) codes, produces a homology class of the first code,  directly paralleling the corresponding physical multilinear map in
Eq.~\eqref{eq:fasmap}. 

We now translate this homological action into the corresponding logical
action on the constituent codes.

\begin{theo}[ Non-Clifford Unitary Logic and Homology in the Chain Map Hierarchy]
\label{thm:ChainMapHierarchyGates}
Let
\(
\mathbf C^{[\ell]}
=
\bigl(C_{(i_1)},\ldots,C_{(i_\ell)}\bigr)
\)
be an ordered length-\(\ell \) tuple of sparse, based chain complexes defining a chain map complex $[\mathbf{C}^{[\ell]}]$ at the $\ell$-th level of the chain map hierarchy, with 
\(\mathsf C^{\ell-1}\mathsf Z\)-type orientation. Then, the following statements are true:

\begin{enumerate}

\item[(1)] \textit{Homology and Logical Non-Clifford Gates:}
Each homology class
\(
[\phii]\in H_0([\mathbf C^{[\ell]}])
\)
determines a unitary logical \(\mathsf C^{\ell-1}\mathsf Z\)-type action on the
constituent codes in $\mathbf C^{[\ell]}$. Any \(0\)-cycle representative
\(\phii\) of the class specifies a unitary  
 circuit \(U_{\mathsf C^{\ell-1}\mathsf Z}^{\phii}\) of the form in
Eq.~\eqref{eq:productofClZ}, which implements the logical action
\begin{equation}
\label{eq:ChainMapHierarchyLogicalAction}
\overline U_{\mathsf C^{\ell-1}\mathsf Z}^{\phii}
=
\prod_{\substack{
L_j\in\mathsf B(H_0(C_{(i_j)}))\\
j=1,\ldots,\ell
}}
\left(
\overline{\mathsf C^{\ell-1}\mathsf Z}_{
L_\ell,\ldots,L_1}
\right)^{
\phii_*^{L_1L_2\cdots L_\ell}
}.
\end{equation}
where $\phii_*$ is the all-degree-zero component of the induced map on cohomology, Eq.~\eqref{eq:phiasmap}. 

As a consequence, if two physical  $\mathsf{C^{\ell-1}Z}$ circuits of the form in Eq.~\eqref{eq:productofClZ} act differently on logical qubits, they are labeled by different homology classes of $H_0([\mathbf{C}^{[\ell]}]_{\bullet})$.

If, additionally, \(\phii\) is a sparse tensor in the sense of
Definition~\ref{def:SparseTensor}, then
\(U_{\mathsf C^{\ell-1}\mathsf Z}^{\phii}\) is transversal, i.e. it can be
implemented in constant depth.

\item[(2)] \textit{Exhaustiveness:}
For any logical \(\mathsf C^{\ell-1}\mathsf Z\)-type action on the
constituent logical qubits, there is a homology class
\(
[\phii]\in H_0([\mathbf C^{[\ell]}])
\)
with a physical representative of the form
$U_{\mathsf{C}^{\ell-1}\mathsf{Z}}^{\phii} $  [Eq.~\eqref{eq:productofClZ}] that implements that action.

\end{enumerate}
\end{theo}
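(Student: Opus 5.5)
Our plan is to prove part (1) by induction on \(\ell\). The base case \(\ell=2\) is Theorem~\ref{thm:TransversalGatesFromChainMaps} together with Theorem~\ref{thm:Cliffordlogic}. For the inductive step we use the group-commutator identity Eq.~\eqref{eq:CMHrecursiveaction} and the reshaped chain-map condition Eq.~\eqref{eq:CMHchainmapcondition}.

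\textbf{Code-space preservation.} Fix a \(0\)-cycle \(\phii\in[\mathbf C^{[\ell]}]_0\). The diagonal gate \(U^{\phii}_{\mathsf C^{\ell-1}\mathsf Z}\) commutes with all \(Z\)-type stabilizers, so only \(X\)-type stabilizers need checking. Let \(X_{\exd\mathbf v}\) be an \(X\)-stabilizer of the code \(C_{(i_j)}\). By Eq.~\eqref{eq:CMHrecursiveaction}, conjugation multiplies it by \(U^{\phii(\exd\mathbf v)}_{\mathsf C^{\ell-2}\mathsf Z}\). Using the reshaping Eq.~\eqref{eq-Clreshape} and the chain-map condition, \(\phii(\exd\mathbf v)=\partial^{[\ell-1]}\phii(\mathbf v)\) is a \(0\)-boundary of \([\mathbf C^{[\ell-1]}_{\setminus j}]\). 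We then need the auxiliary lemma that a \(0\)-boundary at level \(\ell-1\) gives a gate acting as the identity on the joint code space. This is the generalized-stabilizer statement, and we prove it inside the same induction.

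The lemma is where we expect the main obstacle. At level \(\ell-1\), a \(0\)-boundary has two kinds of terms. The first is a term \(\partial_{\text{leg }k}\) composed with something, which makes one leg a \(\partial_1 p\). On code states, \(\prod n\) over that leg's support can be rewritten using \(Z_{\partial_1 p}=1\). We try first the identity \((-1)^{n_{\partial p}\,g}=(Z_{\partial p})^{g}\), valid where the remaining legs \(g\) form a product of projectors, and we expand multilinearly over the other legs. The second kind is the nondegree-zero components, namely the \(Q\circ\exd\) pieces. For these we need care, because only the all-zero component defines the gate. Our reduction is that the all-degree-zero component of \(\partial^{[\ell-1]}Q\) equals a sum over legs \(k\) of \(\partial_1\) applied on leg \(k\) to the component of \(Q\) that has degree \(1\) on leg \(k\) and degree \(0\) elsewhere. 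Each such term is a \(\mathsf C^{\ell-2}\mathsf Z\) product with one leg contracted against a \(Z\)-stabilizer. That product acts trivially on the code space. By induction we may also track that it maps \(X\)-stabilizers to trivial operators, though on the code space triviality suffices. This settles preservation.

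\textbf{Logical action.} Take \(X_{\mathbf c}\) with \(\mathbf c\) a cocycle representing \(\mathbf L_j\). Conjugation produces \(U^{\phii(\mathbf c)}_{\mathsf C^{\ell-2}\mathsf Z}\). Here \(\phii(\mathbf c)\) is a \(0\)-cycle whose class is \((\phii_*)(\mathbf L_j)\), and by the inductive hypothesis it acts logically as the \(\mathsf C^{\ell-2}\mathsf Z\) pattern given by the contracted tensor \(\phii_*\). Diagonal logical operators are determined by their group commutators with all logical \(X\)'s, up to a logical phase that is fixed by the action on \(\overline{\ket{0\cdots0}}\), which we expect to be trivial. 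Matching these commutators with those of the right side of Eq.~\eqref{eq:ChainMapHierarchyLogicalAction} then proves Eq.~\eqref{eq:ChainMapHierarchyLogicalAction}. The consequence follows because different logical actions require different \(\phii_*\), and \(\phii_*\) depends only on the homology class, by the Künneth decomposition applied iteratively through Theorem~\ref{thm:IteratedChainMapComplex}. Transversality follows directly from the depth bound involving \(\Delta(f)\).

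\textbf{Exhaustiveness (part 2).} Given a target logical tensor \(T^{L_1\cdots L_\ell}\), choose cycle representatives \(\ell_j\) of the basis classes \(L_j\). Set \(\phii=\sum T^{L_1\cdots L_\ell}\,\ell_1\otimes\cdots\otimes\ell_\ell\), viewed in the tensor product through Theorem~\ref{thm:IteratedChainMapComplex}. By the Leibniz rule Eq.~\eqref{eqn:TensorBoundary} this is a \(0\)-cycle. Its induced map is \(T\) because of the dual-basis pairing, and part (1) then gives the required action.
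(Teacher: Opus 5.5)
Your proposal is correct, and its skeleton matches the paper's proof:
\begin{itemize}
\item induction from the $\ell=2$ base case;
\item group commutators with $X$-stabilizers that yield level-$(\ell-1)$ boundaries;
\item commutators with logical $X$'s that yield level-$(\ell-1)$ cycles, whose logical tensor is a slice of $\phii_*$, matched against the commutators of the claimed logical gate;
\item exhaustiveness from product tensors of cycle representatives. The paper phrases this through the K\"unneth formula, and your explicit $\sum T^{L_1\cdots L_\ell}\,\ell_1\otimes\cdots\otimes\ell_\ell$ is the chain-level version of the same step.
\end{itemize}

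Where you genuinely differ is in showing that a level-$(\ell-1)$ boundary gives a trivial gate. The paper invokes the inductive hypothesis: $\phii(\exd\mathbf v)$ is a boundary, hence a cycle, hence logical with vanishing logical tensor. You instead prove it directly. Only the components of $Q$ with degree pattern $(0,\ldots,1_k,\ldots,0)$ reach the all-degree-zero part of $\partial Q$. Since $U^{f+g}=U^fU^g$, the gate factors into commuting terms $\bigl(Z_{\partial_1 p}\bigr)^{\prod_{r\neq k}n_{e_r}}$. Each of these is the identity on the joint code space, because $Z_{\partial_1 p}=1$ there and all factors are diagonal. This is the same computation the paper carries out afterwards in Sec.~\ref{subsubsec:NonCliffordStab}.

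Despite your phrasing, this step needs no induction at all. It also gives you exactly the identity on the code space, not merely a trivial logical action up to a phase. The identity is what code-space preservation actually requires, and the paper's shorter inductive argument leaves this point implicit.

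The global phase you leave open in the logical-action step is also easily closed. $U^{\phii}_{\mathsf C^{\ell-1}\mathsf Z}$ is diagonal and fixes $\ket{0\cdots0}$, which has nonzero overlap with $\overline{\ket{0\cdots0}}$, so its phase on the code space is $1$. The claimed logical gate likewise fixes $\overline{\ket{0\cdots0}}$.
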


\begin{proof}
The proof proceeds by induction on the level of the hierarchy.  The Clifford case
\(\ell=2\) was established by
Theorems~\ref{thm:TransversalGatesFromChainMaps}
and~\ref{thm:Cliffordlogic}, and forms the base of the induction.  In particular, note that for $\ell=2$, $[\mathbf{C}^{[2]}]$ is just the usual chain map complex and the gate in Eq.~\eqref{eq:productofClZ} is the usual $\mathsf{CZ}$ gate of Eq.~\eqref{eq-productofCZ_f}.

Suppose the theorem holds at level
\(\ell-1 \geq 1\), and consider the $0$-cycle
\(\phii\in\ker(\partial^{[\ell]}_0)\subseteq[\mathbf C^{[\ell]}]_0\).  We first verify that the circuit
\(U_{\mathsf C^{\ell-1}\mathsf Z}^{\phii}\) is a logical operator by demonstrating that it leaves the  the joint code space invariant, and then we show that it implements the logical action specified in Eq.~\eqref{eq:ChainMapHierarchyLogicalAction}.

To verify that code space is preserved, we evaluate the action of \(U_{\mathsf C^{\ell-1}\mathsf Z}^{\phii}\) on the stabilizers of each constituent code. 
Since the \(U_{\mathsf C^{\ell-1}\mathsf Z}^{\phii}\) circuit is composed of diagonal gates,  it trivially commutes with all \(Z\)-stabilizers. Next, consider
an \(X\)-stabilizer \(X_{\exd\mathbf v}^{C_{(i_j)}}\) of any constituent
code $C_{(i_j)}$, where $\mathbf{v} \in \mathsf B((C_{i_j})_{-1})$. By the recursive physical action of
Eq.~\eqref{eq:CMHrecursiveaction},
\begin{equation}
\left[
U_{\mathsf C^{\ell-1}\mathsf Z}^{\phii},
X_{\exd\mathbf v}^{C_{(i_j)}}
\right]_{\rm grp}
=
U_{\mathsf C^{\ell-2}\mathsf Z}^{\phii(\exd\mathbf v)}.
\end{equation}
Now, because \(\phii\) is a \(0\)-cycle and $\exd\mathbf v$ is a boundary,
\(\phii(\exd\mathbf v)\) is a \(0\)-boundary in the level-\((\ell-1)\)
complex \([\mathbf C_{\setminus j}^{[\ell-1]}]_\bullet\).  Since every boundary is a cycle, by the inductive hypothesis, the  operation $U^{\phii(\exd\mathbf v)}_{\mathsf{C^{\ell-2}Z}}$ implements a logical action, and hence preserves the code space.
Since $\phii(\exd\mathbf v)$ is a boundary, which represents the trivial class in
\(H_0([\mathbf C_{\setminus j}^{[\ell-1]}])\), this logical action is trivial.
Therefore, $U^{\phii}_{\mathsf{C^{\ell} Z}}$ preserves the code space because every $X$ stabilizer is mapped to itself times an an operator that fixes the code space, while every $Z$ stabilizer is left invariant---thus, the simultaneous \(+1\)-eigenspace of the stabilizers is unaltered by the gate. 

We now determine the logical action, proceeding by induction on \(\ell\).
Let \(L_j\in\mathsf B(H_0(C_{(i_j)}))\) be a fixed basis logical of the
\(j\)th constituent code, let
\(\mathbf L_j\in H^0(C_{(i_j)})\) be its dual cohomology class, and choose a cocycle representative
\(\mathbf c_j\in C_{(i_j)}^0\) such that
\(
[\mathbf c_j]=\mathbf L_j
\). By the recursive physical action
Eq.~\eqref{eq:CMHrecursiveaction},
\begin{equation}
\left[
U_{\mathsf C^{\ell-1}\mathsf Z}^{\phii},
X_{\mathbf c_j}^{C_{(i_j)}}
\right]_{\rm grp}
=
U_{\mathsf C^{\ell-2}\mathsf Z}^{\phii(\mathbf c_j)}.
\label{eq:CMHlogicalactioninduction}
\end{equation}
Since \(\phii\) is a \(0\)-cycle and $\mathbf{c}_j$ is a cocycle, \(\phii(\mathbf c_j)\) is a \(0\)-cycle
in the level-\((\ell-1)\) complex
\([\mathbf C_{\setminus j}^{[\ell-1]}]_\bullet\). Its tensor representation in the logical basis is obtained from \(\phii_*^{\alpha_1\cdots\alpha_\ell}\)
by fixing the \(j\)th leg to the chosen logical \(L_j\):
\begin{equation}
\bigl(\phii_*(\mathbf L_j)\bigr)^{
\alpha_1\cdots\alpha_{j-1}\alpha_{j+1}\cdots\alpha_\ell}
=
\phii_*^{
\alpha_1\cdots\alpha_{j-1}L_j\alpha_{j+1}\cdots\alpha_\ell}.
\label{eq:CMHlogicaltensorslice}
\end{equation}
By the inductive hypothesis and Eq.~\eqref{eq:ChainMapHierarchyLogicalAction}, the right-hand side of
Eq.~\eqref{eq:CMHlogicalactioninduction} therefore has logical action
\begin{equation}
\prod_{\substack{
L_r\in\mathsf B(H_0(C_{(i_r)}))\\
r\neq j
}}
\left(
\overline{\mathsf C^{\ell-2}\mathsf Z}_{
L_\ell,\ldots,L_{j+1},L_{j-1}\ldots,L_1}
\right)^{
\phii_*^{L_1\cdots L_j\cdots L_\ell}
}.
\label{eq:CMHlogicalactionslice}
\end{equation}

Now consider the claimed logical gate in
Eq.~\eqref{eq:ChainMapHierarchyLogicalAction}. Its group commutator with
the basis logical \(X_{\mathbf L_j}\) receives a contribution only from
those factors whose \(j\)th logical label is the fixed \(L_j\). Each such
\(\overline{\mathsf C^{\ell-1}\mathsf Z}\) factor leaves behind a
\(\overline{\mathsf C^{\ell-2}\mathsf Z}\) gate on the remaining
\(\ell-1\) logical qubits. Hence its group commutator is exactly
Eq.~\eqref{eq:CMHlogicalactionslice}.

Thus the physical gate and the logical gate claimed in
Eq.~\eqref{eq:ChainMapHierarchyLogicalAction} have the same action on every
basis logical \(X\) operator of every constituent code. Both commute with all
logical \(Z\) operators, so their actions agree on the logical Pauli
generators. This proves Eq.~\eqref{eq:ChainMapHierarchyLogicalAction}.

Finally, exhaustiveness follows directly from the tensor-product description,
in the same way as in the Clifford case discussed below
Theorem~\ref{thm:HomTensorIsom}. By
Theorem~\ref{thm:IteratedChainMapComplex} and the K\"unneth formula,
\begin{equation}\label{eqn:CMHKunneth}
H_0([\mathbf C^{[\ell]}])
\cong
\bigoplus_{k_1+\cdots+k_\ell=0}
\bigotimes_{j=1}^{\ell}
H_{k_j}(C_{(i_j)}).
\end{equation}
In particular, the all-degree-zero sector
\(\bigotimes_{j=1}^{\ell} H_0(C_{(i_j)})
\) contains a basis tensor
\(L_1\otimes\cdots\otimes L_\ell\) for every choice of basis logical in each constituent code,
\(L_j\in\mathsf B(H_0(C_{(i_j)}))\). By
Eq.~\eqref{eq:ChainMapHierarchyLogicalAction}, this basis tensor corresponds
to the logical
\(\mathsf C^{\ell-1}\mathsf Z_{L_\ell,\ldots,L_1}\) gate. Arbitrary elements
of the all-degree-zero sector therefore realize arbitrary products of such
logical gates, proving exhaustiveness.

As in the Clifford case, the full homology
\(H_0([\mathbf C^{[\ell]}])\) generally contains additional sectors with
some \(k_j\neq0\). These sectors do not affect the action on the encoded
qubits, which is determined entirely by the all-degree-zero logical tensor. Consequently, distinct homology classes may
implement the same logical gate.
\end{proof}

Theorem~\ref{thm:ChainMapHierarchyGates} guarantees that a \(0\)-cycle
\(\phii\) defines a well-defined logical gate, but it does not guarantee that
the corresponding physical gate is transversal: this additionally requires a
sparse representative of the homology class. In general, such representatives
need not exist. For example, the Bravyi--K\"onig theorem forbids transversal
realizations of gates arbitrarily high in the Clifford hierarchy for
Euclidean topological codes~\cite{Bravyi2013GatesforLocalStabilizer}, and no error-detecting stabilizer code admits a
transversal \(\mathsf{CCNOT}\) (Toffoli) gate between three identical code blocks~\cite{JochymOConnor2018}. Thus, a
central practical question is whether a given logical gate admits a sparse
representative within its homology class. The freedom to search for such
representatives comes from adding \(0\)-boundaries, which leave the logical
action unchanged while deforming the physical gate pattern. We turn to these
higher-level stabilizer deformations next.

\subsubsection{Non-Clifford Stabilizers and Boundary Deformations}\label{subsubsec:NonCliffordStab}

We now turn to the \(0\)-boundaries of the chain map hierarchy. These play
the same role for non-Clifford gates at higher levels of the Clifford hierarchy that Pauli stabilizers and Clifford
stabilizers played at the first two levels: they specify physical gate
patterns with trivial logical action, and hence generate deformations of a
logical gate within a fixed homology class.

In analogy to Clifford stabilizers, we consider null-homotopic chain maps $\theta$, i.e. \(0\)-boundaries of the level-\(\ell\) chain map complex:
\begin{equation}
\label{eq:CMHNullHomotopy}
\theta
=
\partial^{[\ell]}_1 Q \in [\mathbf C^{[\ell]}]_0,
\qquad
Q\in[\mathbf C^{[\ell]}]_1, 
\end{equation}
directly generalizing Definition~\ref{def:NullChainMapChainHomotopy}.
Since \(\theta\in[\mathbf C^{[\ell]}]_0\), its all-degree-zero component tensor
specifies a physical gate
\(U_{\mathsf C^{\ell-1}\mathsf Z}^{\theta}\) through
Eq.~\eqref{eq:productofClZ}. Moreover, every boundary is a cycle, so this
gate preserves the code space and has a well-defined logical action by
Theorem~\ref{thm:ChainMapHierarchyGates}. We will show that this logical
action is trivial.

To see this, we unpack the null-homotopy relation using the nested
form
\(
[\mathbf C^{[\ell]}]
=
[C_{(i_\ell)}^\bullet,[\mathbf C^{[\ell-1]}]_\bullet].
\)
In this presentation, \(Q\) is a collection of \(1\)-shifted maps, and its
boundary \(\theta=\partial^{[\ell]}Q\) is the corresponding collection of
unshifted maps. Using Eq.~\eqref{eq:nullTheta}, their components are
\begin{align}
Q_k &: C_{(i_\ell)}^k
\longrightarrow
[\mathbf C^{[\ell-1]}]_{1-k},
\label{eq:CMHQComponents}
\\
\theta_k
&=
\partial^{[\ell-1]}_{1-k}Q_k
+
Q_{k+1}\exd_{k+1},
\label{eq:CMHThetaComponents}
\end{align}
where \(\partial^{[\ell-1]}\) denotes the boundary map of the level-\((\ell-1)\)
auxiliary complex \([\mathbf C^{[\ell-1]}]_\bullet\), while \(\exd\) denotes
the coboundary map of the outer constituent code \(C_{(i_\ell)}^\bullet\).

Then, the fact that \(\theta\) is a null-homotopy implies that it sends any cocycle to a boundary (i.e. to the trivial homology class).
Indeed, evaluating
Eq.~\eqref{eq:CMHThetaComponents} on a 0-cocycle \(\mathbf{c} \in C^0_{(i_\ell)}\) gives
\begin{equation}
\label{eq:CMHThetaZero}
\theta_0(\mathbf{c})
=
\partial^{[\ell-1]}_1 Q_0(\mathbf{c})
+
Q_1(\exd_1\mathbf{c}) = \partial^{[\ell-1]}_1 Q_0(\mathbf{c}),
\end{equation}
which is a \(0\)-boundary in the level-\((\ell-1)\) auxiliary complex, just as in the Clifford \(\ell=2\) case.
Thus \(\theta\) induces the trivial map on degree-zero (co)homology.
Equivalently, its logical tensor vanishes, \(\theta_*=0\), so
Theorem~\ref{thm:ChainMapHierarchyGates} implies that
\(U_{\mathsf C^{\ell-1}\mathsf Z}^{\theta}\) acts trivially on the logical
qubits.  Consequently, if \(\phii\) is any
\(0\)-cycle, then \(\phii\) and
\(\phii+\theta\) lie in the same homology class and hence
implement the same logical action:
 We therefore call the gates \(U_{\mathsf C^{\ell-1}\mathsf Z}^{\theta}\)
\emph{non-Clifford stabilizers}, which are the higher-level analog of stabilizer
deformations.

We now derive the physical form of the elementary non-Clifford stabilizer
generators. Recall that the elementary Clifford stabilizers of
Eq.~\eqref{eq:elementaryCliffordStabilizers} include gates of the form
\(
\mathsf{CZ}^{CD}_{e_C,\partial_1 p_D}
\),
which couples a fixed physical qubit \(e_C\) of one code to
support of an elementary Pauli stabilizer \(\partial_1p_D\) of the other. At
higher levels, the same structure persists: an elementary generator fixes
one physical qubit in all but one constituent code, while in the remaining
code the gate ranges over the support of an elementary stabilizer
\(\partial_1p_j\). We now derive this form from the tensor-product structure
of \([\mathbf C^{[\ell]}]\).

Since
\(Q\in[\mathbf C^{[\ell]}]_1\), Theorem~\ref{thm:IteratedChainMapComplex}
identifies the space in which \(Q\) lives as
\begin{equation}
\label{eq:CMHdegreeone}
[\mathbf C^{[\ell]}]_1
\cong
\bigoplus_{k_1+k_2+\cdots+k_\ell=1}\quad
\bigotimes_{j=1}^{\ell}(C_{(i_j)})_{k_j}.
\end{equation}
We use the elementary product basis for this space, induced by
the bases of the constituent chain complexes. Its elements have the
form
\(
c_{k_1}\otimes\cdots\otimes c_{k_\ell},
\)
where \(c_{k_j}\in\mathsf B((C_{(i_j)})_{k_j})\) is a basis \(k_j\)-chain
of the \(j\)th  code.

To obtain the null homotopic chain map
\(\theta=\partial^{[\ell]}Q\), we  apply the boundary to these elementary
tensor components of \(Q\). Under the tensor-product isomorphism,
\(\partial^{[\ell]}\) is simply the usual tensor-product boundary of
Eq.~\eqref{eqn:TensorBoundary}:
\begin{equation}
\label{eq:CMHTensorBoundary}
\partial^{[\ell]}
\left(
c_{k_1}\otimes\cdots\otimes c_{k_\ell}
\right)
=
\sum_{j=1}^{\ell}
c_{k_1}\otimes\cdots\otimes
(\partial c_{k_j})
\otimes\cdots\otimes c_{k_\ell}.
\end{equation}
Thus each term applies the ordinary boundary map to one constituent chain,
lowering the degree of that tensor leg by one while leaving all other legs
unchanged.

The physical gate
\(U_{\mathsf C^{\ell-1}\mathsf Z}^{\theta}\) depends only on the
all-degree-zero component of
\(\theta=\partial^{[\ell]}Q\). Since each term in
Eq.~\eqref{eq:CMHTensorBoundary} lowers exactly one degree \(k_j\) by one,
a component of \(Q\) can contribute to this physical sector only when
\(k_j=1\) for exactly one constituent code $C_{i_j}$, and \(k_r=0\) for every \(r\neq j\).
Thus the relevant elementary tensor components of \(Q\) have exactly one
degree-\(1\) factor and \(\ell-1\) degree-\(0\) factors.

Using the elementary product basis, let \(e_j\in\mathsf B((C_{(i_j)})_0)\) denote a fixed physical qubit of the
\(j\)th constituent code, and let
\(p_j\in\mathsf B((C_{(i_j)})_1)\) label an elementary \(Z\)-stabilizer of the $j$th code with
support \(\partial p_j\). The relevant elementary components of \(Q\) therefore
have the form
\(e_1\otimes\cdots\otimes p_j\otimes\cdots\otimes e_\ell\).
Only the boundary acting on \(p_j\) contributes to the all-degree-zero
physical component, yielding the elementary non-Clifford stabilizer generators:
\label{eq:elementaryNonCliffordStabilizers}
\begin{equation}
\mathsf C^{\ell-1}\mathsf Z_{
e_\ell,\ldots,e_{j+1},\,\partial_1p_j,\,
e_{j-1},\ldots,e_1},
\qquad
j=1,\ldots,\ell .
\end{equation}
Thus an elementary non-Clifford stabilizer fixes one physical qubit in
\(\ell-1\) constituent codes, while the remaining qubit ranges over the
support of an elementary Pauli \(Z\)-stabilizer.   For \(\ell=2\), this reduces to
Eq.~\eqref{eq:elementaryCliffordStabilizers}. For sparse constituent
codes, \(\partial p_j\) has \(O(1)\) support, so these generators are
local in the LDPC sense.

Thus, just as Pauli and Clifford stabilizers allow one to locally deform physical representatives without changing their logical action, non-Clifford stabilizer circuits of the form Eq.~\eqref{eq:elementaryNonCliffordStabilizers} allow us to search within a fixed homology class for sparser representatives of a non-Clifford logical gate. This is the deformation freedom relevant for finding lower-depth, and when possible transversal, implementations.

\subsubsection{Non-Abelian Codes and Non-Abelian Surgery}
\label{subsubsec:NonAbelianSurgery}

We conclude our discussion of the chain map hierarchy by discussing how it can describe the logical measurement of Clifford and non-Clifford operators via the recently described method of non-Abelian code surgery~\cite{Christos2026nonabelian} in a manner directly analogous to the chain map complex.

In particular, let $\mathbf{C}^{[\ell]} = (C_{(i_\ell)}, \cdots, C_{(i_1)})$ be a tuple of chain complexes.
Similar to the case of the standard code surgery gadget, let us suppose that the qubits associated with $C_{(i_\ell)}^{\bullet}$ are  placed in the hypergraph cluster state and the rest of the qubits are in the CSS codes associated with their respective chain complexes.
Moreover, let $\phii_{\bullet}$ be a $-1$-cycle in $[\mathbf{C}^{[\ell]}]_\bullet$---i.e. it maps from $C^{\bullet}_{(i_\ell)}[-1]$ to $[\mathbf{C}^{[\ell-1]}_{\setminus \ell}]_\bullet$.
Then, the non-Abelian code surgery gadget will be associated with the following quantum circuit:
\begin{equation}\label{eq-nonAbelianmeasurementcircuit}
M_{\mathsf{C}^{\ell - 2} \mathsf{Z}}^{\phii}  = \left(\prod_{e \in \mathsf{B}(C^{(i_\ell)}_0)} M^{(i_\ell)}_{e, Z} \right) \times  \left( \prod_{\mathbf{v} \in \mathsf{B}(C^{-1}_{(i_\ell)}) } M^{(i_\ell)}_{\mathbf{v}, X} \right) U_{\mathsf{C^{\ell-1} Z}}^{\phii}
\end{equation}
We now will show that the quantum circuit above measures a (product of) $\mathsf{C}^{\ell-2} \mathsf{Z}$ gates performed between the codes in $\mathbf{C}^{[\ell-1]}_{\setminus \ell}$.
In particular, we prove the following theorem:
\begin{shaded*}
\begin{restatable}[Non-Abelian Surgery and the Chain Map Hierarchy]{theorem}{NonAbelianSurgery}\label{thm:NonAbelianSurgery}
    Let \(\mathbf{C}^{[\ell]} = (C_{(i_\ell)}, \cdots, C_{(i_1)})\) be a tuple of chain complexes defining a chain map complex $[\mathbf{C}]$ at the $\ell$-th level of the chain map hierarchy.
    Then, the following statements are true: 
    \begin{enumerate}
        \item[(1)]\textit{Homology and Logical Non-Abelian Surgery:} Associated with each homology class $[\phii] \in H_{-1}([\mathbf{C}^{[\ell]}])$ is a non-Abelian code surgery gadget that measures a logical $\mathsf{C}^{\ell-2} \mathsf{Z}$ gate between codes $\mathbf{C}^{[\ell-1]}_{\setminus \ell}$.
        The gate is implemented via the quantum circuit in Eq.~\eqref{eq-nonAbelianmeasurementcircuit} and  performs the following logical action:
        \begin{equation}\label{eqn:CMHMeasurementLogical}
            \hspace{7 mm} \prod_{\mathbf{L}_\ell} \overline{M}(\mathsf{C^{\ell-2}Z}_{ \phii_*(\mathbf{L}_\ell)}).
        \end{equation}
        Here, the product over $\mathbf{L}_{\ell} \in \mathsf{B}(H^{-1}(C_{(i_\ell)}))$ runs over a basis of $(-1)$-cohomology classes,
        and $\phii_*$ is the map induced on cohomology by \(\phii\), so that \(\phii_*(\mathbf{L}_\ell) \in H_0([\mathbf{C}^{[\ell-1]}_{\setminus \ell}])\) specifies a logical \(\mathsf{C}^{\ell-2}\mathsf{Z}\) gate.

        As a consequence, if the non-Abelian code surgery gadget associated with any two $(-1)$-cycles $\phii$ act differently on logical qubits, they are in different homology classes.

        \item[(2)]\textit{Exhaustiveness:} For any logical $\mathsf{C^{\ell-2}Z}$ measurement pattern, there is a quantum circuit of the form of Eq.~\eqref{eqn:CMHMeasurementLogical}
        labeled by a homology class of $H_{-1}([\mathbf{C}^{[\ell]}])$ that performs that action.
    \end{enumerate}
\end{restatable}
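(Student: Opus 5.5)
The plan is to mirror the proof of Theorem~\ref{thm:CodeSurgeryChainMaps}, with the Pauli logical $Z_{\phii(\mathbf c)}$ replaced by the lower-level logical gate $U^{\phii(\mathbf c)}_{\mathsf C^{\ell-2}\mathsf Z}$, and to use Theorem~\ref{thm:ChainMapHierarchyGates} at level $\ell-1$ as the inductive input. The argument proceeds in three steps.

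\textbf{Step 1: the reshaped chain map.} Via Eq.~\eqref{eq-Clreshape}, I regard a $(-1)$-cycle $\phii$ as a chain map $C_{(i_\ell)}^\bullet[-1]\to[\mathbf C^{[\ell-1]}_{\setminus\ell}]_\bullet$. By the argument around Eq.~\eqref{eq:CMHchainmapcondition}, it sends $(-1)$-cocycles $\mathbf c\in\ker(\exd_0)\subseteq C^{-1}_{(i_\ell)}$ to $0$-cycles of the level-$(\ell-1)$ complex. It also sends coboundaries to $0$-boundaries. Hence $\phii_*:H^{-1}(C_{(i_\ell)})\to H_0([\mathbf C^{[\ell-1]}_{\setminus\ell}])$ is well defined.

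\textbf{Step 2: the stabilizer update.} The physical qubits of the hypergraph cluster state that enter $U^{\phii}_{\mathsf C^{\ell-1}\mathsf Z}$ are those labeled by $\mathsf B(C_{(i_\ell)}^{-1})$. This matches how the shifted map in Eq.~\eqref{eq-diagchainmap} pairs $C^{-1}$ with $D_0$. Before the circuit, the cluster state satisfies $X_{\mathbf c}=+1$ for every $\mathbf c\in\ker(\exd_0)$.

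Applying Eq.~\eqref{eq:CMHrecursiveaction} in conjugation form, the circuit updates each such stabilizer to
\[
U\,X_{\mathbf c}\,U^\dagger = X_{\mathbf c}\,U^{\phii(\mathbf c)}_{\mathsf C^{\ell-2}\mathsf Z}.
\]
The second factor is diagonal and acts only on the other codes. By Theorem~\ref{thm:ChainMapHierarchyGates}, applied with the cycle $\phii(\mathbf c)$, this factor is a code-space-preserving logical $\mathsf C^{\ell-2}\mathsf Z$ with logical tensor $\phii_*([\mathbf c])$.

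This is the main obstacle. Unlike the Pauli case, $U^{\phii(\mathbf c)}$ is not Pauli, so the updated operators are not Pauli stabilizers. I would have to check the following points.
\begin{enumerate}
\item The updated operators still commute with one another and square to the identity. All the $U$ factors are diagonal, and $X_{\mathbf c}$ commutes with $U^{\phii(\mathbf c')}$ because the latter does not act on code $C_{(i_\ell)}$.
\item Measuring the ancilla qubits in the $X$ basis on $C^{-1}$ then projects onto definite eigenvalues of $X_{\mathbf c}$. The $Z$-basis measurements on the $C^0$ qubits disentangle those qubits.
\item As a result, the system is projected onto an eigenspace of $U^{\phii(\mathbf c)}$ with eigenvalue fixed by the outcomes. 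This is a logical measurement of $\overline{\mathsf C^{\ell-2}\mathsf Z}_{\phii_*([\mathbf c])}$.
\end{enumerate}
Correcting for random outcomes is the delicate part. An outcome $-1$ on a basis $X$ measurement must be undone by an operator that commutes correctly with the remaining measured observables. For Cliffords this correction is generally a lower-level gate rather than a Pauli, which matches the non-Abelian surgery literature~\cite{Christos2026nonabelian}. I would first try the product-of-$Z$ correction on the ancilla region, pushed through $U$ using Eq.~\eqref{eq:CMHrecursiveaction}, as in Appendix~\ref{subapp:PauliMeasurements}.

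\textbf{Step 3: distinctness and exhaustiveness.} Homologous $(-1)$-cycles differ by $\partial^{[\ell]}Q$. Evaluated on a $(-1)$-cocycle, this is a boundary by the analogue of Eq.~\eqref{eq:CMHThetaZero}, so $\phii_*$ is unchanged. Therefore gadgets with different logical action must lie in different classes.

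For exhaustiveness, I would use the K\"unneth decomposition of $H_{-1}([\mathbf C^{[\ell]}])$. It contains the sector $H^{-1}(C_{(i_\ell)})^{*}\otimes H_0([\mathbf C^{[\ell-1]}_{\setminus\ell}])\cong\Hom\bigl(H^{-1}(C_{(i_\ell)}),H_0([\mathbf C^{[\ell-1]}_{\setminus\ell}])\bigr)$. Any assignment of level-$(\ell-1)$ logical classes to a basis of $H^{-1}$ is therefore realized by some class. By exhaustiveness at level $\ell-1$ (Theorem~\ref{thm:ChainMapHierarchyGates}(2)), every logical $\mathsf C^{\ell-2}\mathsf Z$ pattern arises this way, provided the ancilla complex $C_{(i_\ell)}$ is chosen with enough $(-1)$-cohomology.
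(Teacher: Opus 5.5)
Your overall route is the paper's: track the stabilizer group through $U^{\phii}_{\mathsf{C^{\ell-1}Z}}$ and the two rounds of ancilla measurements as in the Pauli surgery proof. You then read off the revealed eigenvalues from $x_{\mathbf c}X_{\exd\mathbf c}U^{\phii(\mathbf c)}_{\mathsf{C^{\ell-2}Z}}=x_{\mathbf c}U^{\phii(\mathbf c)}_{\mathsf{C^{\ell-2}Z}}$ for $(-1)$-cocycles $\mathbf c$. Exhaustiveness comes from the K\"unneth sector $H_0([\mathbf C^{[\ell-1]}_{\setminus\ell}])\otimes H_{-1}(C_{(i_\ell)})$ together with exhaustiveness one level down, given enough $(-1)$-cohomology in the ancilla. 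Steps 1 and 3, and the eigenvalue-revealing part of Step 2, are fine. The gap is the step you flag and leave open: showing that the gadget returns the target codes to their code space. The fix you sketch targets the wrong outcomes with the wrong kind of operator.

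You never follow the target codes' own $X$-stabilizers through the entangling gate.
\begin{itemize}
\item Under $U^{\phii}_{\mathsf{C^{\ell-1}Z}}$, $X^{(j)}_{\exd\mathbf w}$ becomes $U^{\phii(\exd\mathbf w)}_{\mathsf{C^{\ell-2}Z}}X^{(j)}_{\exd\mathbf w}$, dressed by gates that touch the ancilla's $C^{-1}$ qubits.
\item The paper writes $\phii(\exd\mathbf w)=\partial^{[\ell-1]}\phii(\mathbf w)$ and discards the terms where $\partial$ lands on a target leg, since these are level-$(\ell-1)$ chain-map-hierarchy stabilizers.
\item It then uses the cluster stabilizers $Z_{\partial e}Z_e$ to move the remaining term onto the ancilla's $C^0$ qubits, giving $U^{\phii(\mathbf w)}_{\mathsf{C^{\ell-2}Z}}X^{(j)}_{\exd\mathbf w}$.
\end{itemize}
What must be corrected are the $Z$-outcomes $\mathbf r$ on those $C^0$ qubits. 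They leave every target $X$-stabilizer dressed by a residual circuit $U^{\phii(\mathbf r,\mathbf w)}_{\mathsf{C^{\ell-3}Z}}$. The $X$-outcomes on $C^{-1}$ need no correction: apart from the cocycle combinations that record the result, the later $Z$-measurements simply erase them. A product of $Z$'s on the ancilla cannot remove this residue, because by then the ancilla is in a product state and the residue lives entirely on the target codes.

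The paper shows $\phii(\mathbf r)=\partial^{[\ell-1]}\psi$ and conjugates by the non-Pauli circuit $U^{\psi}_{\mathsf{C^{\ell-2}Z}}$ on the target codes. This strips the dressing up to chain-map-hierarchy stabilizers. One clean way to obtain $\psi$: for every cycle $\mathcal M\in\ker\partial_0$ of the ancilla, $Z_{\mathcal M}$ is a product of cluster stabilizers that commutes with $U^{\phii}$ and with the $X$-measurements, so $Z_{\mathcal M}=+1$ still holds at the $Z$-measurement. Hence $\mathbf r\in\mathrm{im}\,\exd_0$, say $\mathbf r=\exd\mathbf a$, and the cycle condition on $\phii$ gives $\psi=\phii(\mathbf a)$. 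Without this step you have shown only that the outcomes reveal the eigenvalues of the $U^{\phii(\mathbf c)}_{\mathsf{C^{\ell-2}Z}}$, not that the circuit implements the stated logical measurement on the code space.
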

\end{shaded*}

The proof of the above theorem is relegated to Appendix~\ref{subapp:CMHMeasurements} and is a direct generalization of the logical measurement of Pauli's in the traditional code surgery gadget.

\subsection{Intra-block Gates in the Chain Map Hierarchy}
\label{sec:IntrablockGates}

Thus far, our discussion has primarily centered around logical gates that are performed between code blocks.
We conclude this section by discussing how the chain map hierarchy can further be used to describe a large class of logical gates that act \textit{within} a code block.

When generalizing to the intra-block case, new challenges naturally arise.
At a high level, note that our goal will be to associate chain maps $\phii_{\bullet}$ at some level of the chain map hierarchy to gates $U^{\phii}$ acting on physical qubits.
Our previous associations between chain maps and (non-)Clifford gates focused on multiply-controlled $Z$ or $X$ gates---e.g. a chain map $\phii_{\bullet}: C^{\bullet} \to D_{\bullet}$ described a pattern of  inter-block $\mathsf{CZ}$ gates between qubits $e$ and target $\phii(e)$. 
However, in the intra-block case, ``control'' and ``target'' qubits can be the same qubit.
Consequently, we need to revisit how we associate chain maps to physical gates. 

\subsubsection{Commuting Intra-block Clifford Gates} 

We begin our discussion of intra-block gates by first considering the case where the gates in a block all commute---specifically, are diagonal in some basis---and are Clifford, before returning to non-Clifford and non-commuting generalizations at the end.

Naively, to specify an intra-block gate on code $C$, we would simply need to think about chain maps $\phii_{\bullet}: C^{\bullet} \to C_{\bullet}$.
We will see now that this is, in general, not sufficient.
In particular, in this case, we will need to supplement our chain map description of gates with two extra ingredients.
First, we will need to work with a subcomplex of \textit{symmetric} chain maps (i.e. chain maps $\phii$ that are equal to their transpose $\phii^{\mathsf{T}}$).%
\footnote{Strictly speaking, we do not need that a chain map is symmetric on the level of chains but only require it to be induce a symmetric matrix on (co)homology. 
If this is the case, $\phii$ and $\phii^{\mathsf{T}}$ will be homotopic and our formulas for gates will pick up a correction arising from the null homotopy connecting them.
This is directly analogous to the role that the higher cup product plays in defining the Pontryagin square.
} 
Second, these chain maps will act on a $\mathbb{Z}_4$ ``lift'' of the chain complex describing the code we are interested in. 

The motivation for these two ingredients is straightforward.
In particular, chain maps $\phii_{\bullet}: C^{\bullet} \to C_{\bullet}$ should intuitively be associated to intra-block gates that map $X$ operators to $Z$ operators under the group commutator.
There are two saliant Clifford gates that do this: the $\mathsf{CZ}$ gate and the $\mathsf{S} = e^{i \frac{\pi}{2} n} $ gate, which act on $X$ operators as: 
\begin{equation}
    [\mathsf{CZ}_{e, e'}, X_e]_{\rm grp} = Z_{e'} \qquad [\mathsf{S}_e, X_e]_{\rm grp} = -iZ_{e}
\end{equation}
where $e \neq e'$.
The association of chain maps to $\mathsf{CZ}$ gates is more familiar from our discussion on inter-block gates.
There, an element \(\phii^{e_C,e'_D}=1\) implied that \(X_e^C\) maps under the gate to \(Z_{e'}^D\)---there is a \(\mathsf{CZ}\) between qubits at \(e_C\) and \(e'_D\).
However, if all gates connect between the same code block, then a \(\mathsf{CZ}_{e,e'}\) gate will map both \(X^C_e \mapsto Z^C_{e'}\) and \(X^C_{e'} \mapsto Z^C_{e}\).
For \(\phii\) to encode the correct action on \(X\) operators, it must thus be symmetric, \(\phii^{e,e'} = \phii^{e',e}\).
That is, \(\phii = \phii^\transpose\).

The chain map should also encode \(\mathsf{S}\) gates, corresponding to an \(X\) that maps to a \(Z\) supported on the same site.
However, note that the $\mathsf{S}$ gate does not square to $1$, as is the case with the $\mathsf{CZ}$ gate.
Instead, $\mathsf{S}^4 = 1$.
Consequently, a normal binary chain map $\phii$ mapping between two chain complexes over $\mathbb{F}_2$ would not be able to distinguish between say $\mathsf{S}$ and $\mathsf{S}^{\dagger}$.
This motivates a need to ``lift'' our chain complexes to be defined over $\mathbb{Z}_4$.
We now formally introduce these ingredients and show how they can be used to define intra-block logical Clifford gates.

We start by incorporating the symmetry constraint into our chain complex description.
To do so, let us first recognize that imposing that $\phii_0^{\alpha \beta} = \phii_{0}^{\beta \alpha}$ defines a \textit{linear constraint} on the space of chain maps that are associated with logical gates.
In particular, if we view $\phii_0$ as a matrix, we can rephrase this constraint as a linear equation of the form: 
\begin{equation} \label{eq-symmetricconstraint}
    \mathbb{S}\cdot \phii_0 = \mathbb{S}^{\rho \sigma}_{\alpha \beta} \phii_0^{\alpha \beta} \equiv (\delta^{\rho}_{\alpha} \delta^{\sigma}_{\beta} - \delta^{\rho}_{\beta} \delta^{\sigma}_{\alpha}) \phii^{\alpha \beta}_0 = 0
\end{equation}
where $\delta_{\alpha}^{\beta}$ is the Kronecker delta tensor---i.e. the identity matrix---and the equation above holds for all $\rho$ and $\sigma$.
The equation above is set up to check that $\phii^{\rho \sigma}_0 = \phii^{\sigma \rho}_0$. 

Our goal now is to build a chain complex description of these linearly constrained maps.
Doing so is straightforward: the space of such maps forms a \textit{constrained subcomplex} of the chain map complex.
We define the constrained subcomplex as follows:

\begin{defn}[Constrained Subcomplex] Consider a chain complex $C_{\bullet}$ and let $A = \{A_j: C_0 \to C_0\}$ be a set of  linear maps specifying linear constraints on $C_0$, e.g. $A_j f = 0$ for $f \in C_0$.
Then, the \textit{constrained subcomplex} of $C_{\bullet}$ subject to $A$ is a chain complex $C^A_{\bullet}$:
\begin{equation}
    \cdots \xrightarrow[]{\partial_1} C_1^A \xrightarrow[]{\partial_1} C_0^{A} \xrightarrow[]{\partial_0} C_{-1}^A \xrightarrow[]{} \cdots 
\end{equation}
whose vector spaces are given by: 
\begin{enumerate}
    \item[(1)] $C_0^A = \bigcap_{j} \mathsf{ker}(A_j)$---i.e. $0$-chains satisfing the constraints.
    \item[(2)] $C_{-j}^A = C_{-j}$ for all $j > 0$
    \item[(3)] $C_{j}^A = \partial^{-1}_{j} (C_{j - 1}^A)$ for all $j > 0$
\end{enumerate}
where $\partial^{-1}_j(C_{j - 1})$ specifies the pre-image of $C_{j-1}$ under $\partial_j$.
\end{defn}
Checking that the above forms a chain complex is straightforward.

For the purpose of diagonal Clifford gates, we will be interested in a constrained subcomplex of the chain map complex subject to the constraint $\mathbb{S}$ as defined in Eq.~\eqref{eq-symmetricconstraint}, $[C^{\bullet}, C_{\bullet}]^{\mathbb{S}}$.
As a remark, we note that this particular subcomplex has a a nice interpretation from the tensor product viewpoint.
In particular, $[C^{\bullet}, C_{\bullet}]^{\mathbb{S}}$ is isomorphic to the subcomplex of \emph{symmetric tensors}---i.e. tensors where $\phii^{\alpha \beta} = \phii^{\beta \alpha}$. 
Consequently, it is more descriptive to call the constrained subcomplex $[C^{\bullet}, C_{\bullet}]^{\mathbb{S}}$ the \textbf{symmetric subcomplex} of $[C^{\bullet}, C_{\bullet}]$.

The symmetric subcomplex handles the symmetry constraint alluded to earlier.
All that is left is to introduce the concept of a $\mathbb{Z}_4$-lift of the chain complex, which will allow us to handle the $\mathbb{Z}_4$ nature of the $\mathsf{S}$ gate. 
A $\mathbb{Z}_{2n}$-lift of a chain complex is defined as follows:
\begin{defn}[(Nice, Sparse) Chain Complex Lift] Let $C_{\bullet}$ be a based chain complex associated with a sequence of vector spaces $C_j$ over the field $\mathbb{F}_2$ and boundary maps $\partial_j$.
We say that a chain complex $\widetilde{C}_{\bullet}$ over the ring $\mathbb{Z}_{2n}$ is a $\mathbb{Z}_{2n}$-lift of $C_{\bullet}$ if:
\begin{enumerate}
    \item[(1)] The spaces $\widetilde{C}_j$\footnotemark\ associated with $\widetilde{C}_{\bullet}$ are related to the vector spaces of $C_{\bullet}$ via $C_j = \widetilde{C}_j/2\widetilde{C}_j$. 

    \item[(2)] As matrices, the boundary maps of $\widetilde{C}_j$ satisfy $\partial_j = \widetilde{\partial}_j \bmod 2$.
\end{enumerate}
In other words, there exists a quotient chain map $\pi_{\bullet}: \widetilde{C}_{\bullet} \to C_{\bullet}$ that is surjective and $\mathsf{ker}(\pi_j)  = 2\widetilde{C}_j$. 
We further require that there is a quotient on cochains \(\hat{\pi}^\bullet: \widetilde{C}^\bullet \to C^\bullet\), with the same properties and \(\tilde{\mathbf{a}}(\tilde{b}) = \hat{\pi}(\tilde{\mathbf{a}})(\pi(\tilde{b}))\) for all \(\tilde{\mathbf{a}} \in \widetilde{C}^\bullet\), \(\tilde{b} \in \widetilde{C}_\bullet\).

We say that \(\widetilde{C}_\bullet\) is \emph{nice} if its induced maps on (co)homology are surjective---i.e.\ every (co)cycle in \(C_\bullet\) is the image of a (co)cycle in \(\widetilde{C}_\bullet\).
Moreover, we say that $\widetilde{C}_{\bullet}$ is a \textit{sparse lift} of $C_{\bullet}$ if $\widetilde{C}_{\bullet}$ is a sparse based chain complex.   
\end{defn}\footnotetext[\value{footnote}]{Formally, the spaces $\widetilde{C}_j$ are $\mathbb{Z}_{2n}$-modules---analogs of vector spaces but where the underlying field is replaced by a ring (see Appendix~\ref{app:CMCDetails})}

A lift of a chain complex provides us a formal way of extending a binary chain complex to one defined over, say, $\mathbb{Z}_4$.
While one might imagine that a lift is as simple as mapping the elements $0, 1 \in \mathbb{F}_2$ to the elements $0, 1 \in \mathbb{Z}_4$, this ``naive lift'' doesn't ensure that the characteristic properties of chain complexes hold over $\mathbb{Z}_4$, e.g. $\widetilde{\partial}_j \circ \widetilde{\partial}_{j + 1}$ may not equal to zero modulo $4$ if we use the naive lift.
Further, even if a lift can be found, it is not guaranteed to be either sparse or nice: the matrices \(\tilde{\partial}_j\) may be densely filled with even elements, and some cycles in \(C_\bullet\) need not be quotients of cycles in \(\widetilde{C}_\bullet\).
Nevertheless, it was proven in Ref.~\cite{Freedman2021CodeToManifold} that a nice lift of a binary chain complex to $\mathbb{Z}_{2n}$ (for arbitrary $n$) always exists---though such a lift is still not guaranteed to be sparse.
Sparse lifts are guaranteed to exist for large classes of qLDPC codes, include all hypergraph product codes of classical codes, certain codes achieved through other product constructions (e.g. certain balanced product and fiber-bundle product codes), and even good qLDPC codes~\cite{Freedman2021CodeToManifold}. 
These are not guaranteed to be nice but nice and sparse lifts are conjectured to exist for some of these in Ref.~\cite{Freedman2021CodeToManifold}.

Equipped with a $\mathbb{Z}_4$ lift of a chain complex $C_{\bullet}$, $\widetilde{C}_{\bullet}$, and a symmetric subcomplex $[\widetilde{C}^{\bullet}, \widetilde{C}_{\bullet}]^{\mathbb{S}}$, we can associate its $0$-chains and $0$-cycles with intra-block Clifford gates.
In particular, the way of doing so is 
given a $0$-cycle $\widetilde{\phii}_{\bullet}$ of $[\widetilde{C}^{\bullet}, \widetilde{C}_{\bullet}]^{\mathbb{S}}$, we can define the $\widetilde{\phii}_0$ as the matrix associated with the map $\widetilde{\phii}_0$ and use this to define the following unitary gate: 
\begin{equation} \label{eq-intrablockdiaggate}
    V^{\widetilde{\phii}} = \exp\left(i \frac{\pi}{2} n_{\alpha}\,  \widetilde{\phii}^{\alpha \beta}_0\,   n_{\beta} \right)
\end{equation}
Notice the factor of $1/2$ in the exponent relative to the gate in the inter-block gate [Eq.~\eqref{eq:CZopassoc}]
The above gate is simply a product of $\mathsf{S}$ and $\mathsf{CZ}$ gates and can be re-written in a form that is identical to the ``fold-transversal'' gates of Refs.~\cite{Moussa2016FoldedSurface,Breuckmann2024foldtransversal}.
In particular, the gate can be written as: 
\begin{align}
    V^{\widetilde{\phii}} &= \prod_{e \in \mathsf{B}(C_0)} e^{i \frac{\pi}{2} \widetilde{\phii}^{e, e} n_e} \times \prod_{i < j} e^{i \pi \, n_{e_i}\,  \widetilde{\phii}^{e_i, e_j}\,   n_{e_j} } \\ 
    &= \prod_{e \in \mathsf{B}(C_0)} \mathsf{S}^{\widetilde{\phii}_{e, e}}_e  \times \prod_{i < j} \mathsf{CZ}_{e_i, e_j}^{\widetilde{\phii}_{e_i, e_j}} \label{eqn:TildePhiSCZForm}
\end{align}
We will soon show that the above defines a logical gate on the code $C$.
However, prior to doing so, an important remark is in order.
Let us note that because $\mathsf{CZ}^2 = 1$, only the data of $\widetilde{\phii}$ modulo two is relevant for these gates.
In contrast, the full $\mathbb{Z}_4$ data of $\widetilde{\phii}$ is necessary for defining the $\mathsf{S}$ gate piece.
However, since $\mathsf{S}$ gates are single qubit gates and only the $\mathsf{CZ}$ gates contribute to the depth of the gate, the depth of $V^{\widetilde{\phii}}$ is only determined by the sparsity of $\widetilde{\phii}$ modulo two!
Note that $\widetilde{\phii}$ modulo two defines a chain map $\phii_{\bullet} \in [C^{\bullet}, C_{\bullet}]^{\mathbb{S}}$.
Consequently, a viable strategy for finding constant depth implementations of intra-block diagonal Clifford gates is simply to find a sparse chain map in $[C^{\bullet}, C_{\bullet}]^{\mathbb{S}}$ and then use the non-sparse but nice lift of Ref.~\cite{Freedman2021CodeToManifold} to ``correct'' the diagonal contribution $\widetilde{\phii}^{e, e}$ to an appropriate logical intra-block gate.

We will now prove that when $\widetilde{\phii}$ is a $0$-cycle of the symmetric and $\mathbb{Z}_4$ lifted chain map complex, it  implements a logical Clifford gate on the CSS code associated with $C_{\bullet}$.
In particular, we prove:

\begin{theo}[Intra-Block Diagonal Logical Clifford Gates] Suppose that $\widetilde{C}_{\bullet}$ is a nice $\mathbb{Z}_4$ lift of the binary chain complex $C_{\bullet}$ and let $\widetilde{\phii}_{\bullet}$ be a $0$-cycle of the symmetric subcomplex $[\widetilde{C}^{\bullet}, \widetilde{C}_{\bullet}]^{\mathbb{S}}$ of the chain map complex where $\mathbb{S}$ is defined in Eq.~\eqref{eq-symmetricconstraint}.
Then, the action of $V^{\widetilde{\phii}}$ on operators $X_{\mathbf{a}}$, for $\mathbf{a} \in C^0$, is given by:
\begin{equation}
    [V^{\widetilde{\phii}}, X_{\mathbf{a}}]_{\rm grp} = (- i)^{\tilde{\mathbf{a}}(\widetilde{\phii}(\tilde{\mathbf{a}}))} Z_{\pi(\widetilde{\phii}(\widetilde{\mathbf{a}}))},
\end{equation}
where \(\tilde{\mathbf{a}}\) is any $0$-cocycle of $\widetilde{C}^{\bullet}$ such that \(\hat{\pi} (\widetilde{\mathbf{a}}) =  \mathbf{a}\) (which exists by niceness). 
Consequently, $V^{\widetilde{\phii}}$ is a logical gate acting on code $C$ whose logical action only depends on the homology class of $\widetilde{\phii}_{\bullet}$.
\end{theo}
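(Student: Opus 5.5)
The plan is to compute the group commutator directly from the exponential form of Eq.~\eqref{eq-intrablockdiaggate}, exactly as in the proof of Eq.~\eqref{eq:phiactsona}, and then use the cycle and symmetry conditions to show that the result is logical and class-dependent only. Write $\tilde{\mathbf a}$ for a lift of $\mathbf a$. Conjugating by $X_{\mathbf a}$ shifts each occupation number $n_\alpha \mapsto n_\alpha + a_\alpha - 2a_\alpha n_\alpha$, with $a_\alpha\in\{0,1\}$ taken as integers. Equivalently, on computational basis states $X_{\mathbf a}$ sends the bit string $n$ to $n\oplus a$.

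The first step is to compute $V^\dagger X^\dagger V X$ on a basis state $\ket{n}$. This yields the phase
\[
\tfrac{\pi}{2}\bigl[Q(n\oplus a)-Q(n)\bigr], \qquad Q(n)=n_\alpha\widetilde{\phii}_0^{\alpha\beta}n_\beta \bmod 4 .
\]
Over $\mathbb{Z}_4$, we expand $n\oplus a = n + a - 2(a\ast n)$, where $\ast$ is the entrywise product. Using the symmetry $\widetilde{\phii}_0^{\alpha\beta}=\widetilde{\phii}_0^{\beta\alpha}$, the cross terms combine into $2\,a_\alpha\widetilde{\phii}^{\alpha\beta}n_\beta$. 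This factor of $2$ is the crucial point: it makes the $n$-dependent part well defined mod $2$ after the $\pi/2$ prefactor, so it produces $Z_{\pi(\widetilde{\phii}(\tilde{\mathbf a}))}$ up to sign bookkeeping from the $-2(a\ast n)$ corrections. The pure-$a$ term gives the constant phase $(\pm i)^{\tilde{\mathbf a}(\widetilde\phii(\tilde{\mathbf a}))}$. The corrections quadratic in $a\ast n$ need care: for terms like $4(a\ast n)^\transpose\widetilde\phii(a\ast n)$, note that $4\equiv 0$, and the terms $-2(a\ast n)\widetilde\phii a$ pair up with the terms $2 a\widetilde\phii n$. I expect this $\mathbb{Z}_4$ bookkeeping, in particular the diagonal $\mathsf{S}$-type entries $\widetilde\phii^{ee}$ where $a_e^2=a_e$ but $(a\ast n)$ interacts with $n_e^2=n_e$, to be the main obstacle. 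I would handle it by first checking single-qubit $\mathsf{S}$ and two-qubit $\mathsf{CZ}$ via Eq.~\eqref{eqn:TildePhiSCZForm}, and then using that diagonal gates compose. Since the commutator is multiplicative only up to commutation phases, a cleaner route is to verify the formula for $\mathbf a$ a single basis cochain and then induct on the support of $\mathbf a$. At each inductive step the extra phase produced is a $\pm1$ from anticommuting $Z$'s and $X$'s, and this is exactly what the $\mathbb{Z}_4$ quadratic form $\tilde{\mathbf a}(\widetilde\phii(\tilde{\mathbf a}))$ absorbs. The computation should also show that the result depends on $\tilde{\mathbf a}$ only through $\mathbf a$ mod $2$ in the $Z$-part, while the choice of lift affects only the phase consistently.

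For the logical statement, take $\tilde{\mathbf a}$ to be a cocycle, which is possible by niceness. Then the cycle condition gives $\widetilde\partial\,\widetilde\phii(\tilde{\mathbf a})=\widetilde\phii(\tilde{\mathbf d}\tilde{\mathbf a})=0$. Reducing mod $2$ via the chain map $\pi$, $\pi(\widetilde\phii(\tilde{\mathbf a}))$ is a $0$-cycle of $C_\bullet$, so the commutator is a phase times a logical $Z$. For an $X$-stabilizer, $\mathbf a=\exd\mathbf v$, we lift to $\tilde{\mathbf d}\tilde{\mathbf v}$. Its image is $\widetilde\partial\,\widetilde\phii(\tilde{\mathbf v})$, a boundary, and the phase is $\tilde{\mathbf v}(\widetilde{\mathbf d}^\transpose\cdots)$. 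By symmetry and the cycle condition this equals $\tilde{\mathbf v}(\widetilde\phii(\tilde{\mathbf d}^2\tilde{\mathbf v}))=0$. Hence stabilizers map to stabilizers with trivial phase, and $V$, being diagonal, commutes with $Z$-stabilizers, so it preserves the code space. Finally, changing $\widetilde\phii$ by a boundary $\widetilde\partial Q+Q\tilde{\mathbf d}$ within the symmetric subcomplex changes $\widetilde\phii(\tilde{\mathbf a})$ by a boundary and changes the phase $\tilde{\mathbf a}(\widetilde\phii\tilde{\mathbf a})$ by $\tilde{\mathbf a}(\widetilde\partial Q\tilde{\mathbf a})=(\tilde{\mathbf d}\tilde{\mathbf a})(Q\tilde{\mathbf a})=0$. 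Changing $\tilde{\mathbf a}$ within its class similarly leaves the logical image fixed. So the action on logical $X$'s, together with the trivial action on logical $Z$'s, depends only on $[\widetilde\phii]$, which determines the logical Clifford.
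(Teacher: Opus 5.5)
Your plan is the paper's proof. Your shift $n\mapsto n\oplus a=n+a-2(a\ast n)$ is the paper's identity $X_{\mathbf a}n_\alpha X_{\mathbf a}=a_\alpha+(-1)^{a_\alpha}n_\alpha$, you expand the symmetric $\mathbb Z_4$ quadratic form the same way, and the cycle and stabilizer steps (chain-map condition and $\widetilde{\exd}^{\,2}=0$) coincide. Your closing check is also correct: a symmetric boundary $\widetilde\partial Q_0+Q_1\widetilde{\exd}$ changes neither the $Z$-class nor the phase on cocycle lifts. That check actually supplies the homology-class claim, which the paper only asserts.

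The $\mathbb Z_4$ step, however, is left open and partly misdescribed. The $(a\ast n)$ bookkeeping you flag as the main obstacle closes at once. Note that $n\oplus a=n+a-2m$ with $m=a\ast n$ is an exact integer identity, so diagonal entries need no special care. Symmetry then gives
\[
(n\oplus a)^\transpose\widetilde{\phii}_0(n\oplus a)=n^\transpose\widetilde{\phii}_0 n+a^\transpose\widetilde{\phii}_0 a+2\,a^\transpose\widetilde{\phii}_0 n-4\,m^\transpose\widetilde{\phii}_0(n+a)+4\,m^\transpose\widetilde{\phii}_0 m .
\]
Every $m$-term therefore vanishes mod $4$. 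In particular, $-2m^\transpose\widetilde{\phii}_0a$ cancels against its own transpose, not against $2a^\transpose\widetilde{\phii}_0n$, and no induction on the support of $\mathbf a$ is needed.

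The genuinely missing step is lift-independence of the phase. Your expansion produces $\mathbf a(\widetilde{\phii}(\mathbf a))$ for the naive $0/1$ lift of $\mathbf a$. The theorem, and your own stabilizer argument, instead evaluate it on a cocycle lift such as $\widetilde{\exd}\tilde{\mathbf v}$. That lift generally differs from the naive lift of $\exd\mathbf v$ by $2\mathbf y$; in the $\mathbb Z_4$ Ising model, for instance, one has $\tilde{\mathbf e}_{i-1}-\tilde{\mathbf e}_i$ versus $\tilde{\mathbf e}_{i-1}+\tilde{\mathbf e}_i$. The lift does not change the phase at all (not merely ``consistently''), but this must be shown. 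For $\tilde{\mathbf a}=\mathbf a+2\mathbf y$,
\[
\tilde{\mathbf a}(\widetilde{\phii}\tilde{\mathbf a})=\mathbf a(\widetilde{\phii}\mathbf a)+2\,\mathbf y(\widetilde{\phii}\mathbf a)+2\,\mathbf a(\widetilde{\phii}\mathbf y)+4\,\mathbf y(\widetilde{\phii}\mathbf y)\equiv\mathbf a(\widetilde{\phii}\mathbf a)\pmod 4,
\]
since by symmetry of $\widetilde{\phii}_0$ (used here a second time) the two middle terms sum to $4\,\mathbf a(\widetilde{\phii}\mathbf y)$. Without this, neither the stated formula nor your conclusion that $X$-stabilizers pick up no phase follows.

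Finally, $V^\dagger X^\dagger VX$ is $[X_{\mathbf a},V^{\widetilde{\phii}}]_{\rm grp}$, which is the inverse of the paper's $[V^{\widetilde{\phii}},X_{\mathbf a}]_{\rm grp}=X^\dagger V^\dagger XV$. Your ordering therefore yields $i^{\tilde{\mathbf a}(\widetilde{\phii}(\tilde{\mathbf a}))}$; reverse it to obtain the stated $(-i)^{\tilde{\mathbf a}(\widetilde{\phii}(\tilde{\mathbf a}))}$.
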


\begin{proof} The proof follows from direct calculation.
In particular, let us note that: 
\begin{equation} \label{eq:XVX}
    X_{\mathbf{a}} (V^{\widetilde{\phii}})^{\dagger} X_{\mathbf{a}} = \exp\left(- i \frac{\pi}{2} X_{\mathbf{a}}  n_{\alpha}\,  \widetilde{\phii}^{\alpha \beta}\,   n_{\beta} X_{\mathbf{a}} \right)
\end{equation}
To unravel the above equation, it is useful to note that  $X_e n_e X_e = 1 - n_e$ and consequently:
\begin{equation}
    X_{\mathbf{a}} n_{\alpha} X_{\mathbf{a}} \overset{4}{=} \mathbf{a}_{\alpha} + (-1)^{\mathbf{a}_{\alpha}} n_{\alpha}
\end{equation}
where $\overset{4}{=}$ is understood to mean equality mod $4$ and by \(\mathbf{a}_{\alpha} \bmod 4\) we mean the naive lift of \(\mathbf{a}_{\alpha} \in \FF_2\) to \(\ZZ_4\)---directly taking \(1 \in \FF_2\) to \(1 \in \ZZ_4\).
Hence, the exponent of Eq.~\eqref{eq:XVX} can be expanded to: 
\begin{align} \label{eq-AHHHHHHHH}
    X_{\mathbf{a}}  n_{\alpha}\,  \widetilde{\phii}^{\alpha \beta}\,   &n_{\beta} X_{\mathbf{a}} \overset{4}{=} a_{\alpha} \widetilde{\phii}^{\alpha \beta} a_{\beta} + (-1)^{a_{\alpha} + a_{\beta}} n_{\alpha} \widetilde{\phii}^{\alpha \beta} n_{\beta} \nonumber \\
    &+ (-1)^{a_{\alpha}} n_{\alpha} \widetilde{\phii}^{\alpha \beta} a_{\beta} + (-1)^{a_{\beta}} a_{\alpha} \widetilde{\phii}^{\alpha \beta} n_{\beta}.
\end{align}
We treat the first two terms on the right above individually and the last two terms together.
Note that the first term is simply $\mathbf{a}(\widetilde{\phii}(\mathbf{a}))$.
Now, $\mathbf{a}$, taken modulo $4$, need not be a cocycle.
However, we know by niceness that it is related to a $0$-cocycle $\widetilde{\mathbf{a}}$ of $\widetilde{C}^{\bullet}$ as:
\begin{equation}
    \widetilde{\mathbf{a}} = \mathbf{a} + 2\mathbf{y} 
\end{equation}
where $\mathbf{y}$ is some $0$-cochain.
Consequently: 
\begin{align}
    \widetilde{\mathbf{a}}(\widetilde{\phii}(\widetilde{\mathbf{a}})) &\overset{4}{=} \mathbf{a} (\widetilde{\phii}(\mathbf{a})) + 2 \mathbf{y}(\widetilde{\phii}(\mathbf{a})) + 2 \mathbf{a}(\widetilde{\phii}(\mathbf{y}))\\
    &= \mathbf{a} (\widetilde{\phii}(\mathbf{a})) +  4 \mathbf{a}(\widetilde{\phii}(\mathbf{y})) \overset{4}{=}\mathbf{a}( \widetilde{\phii}(\mathbf{a})).
\end{align}

Moreover, the last two terms in Eq.~\eqref{eq-AHHHHHHHH} are equal to one another since $\widetilde{\phii}^{\alpha \beta} = \widetilde{\phii}^{\beta \alpha}$ and hence these terms together are equal to $2(-1)^{\alpha} n_{\alpha} \widetilde{\phii}^{\alpha \beta} a_{\beta}$.
Finally, note that the second term in Eq.~\eqref{eq-AHHHHHHHH} above can be re-written as: 
\begin{align}
    (-1)^{a_{\alpha} + a_{\beta}} n_{\alpha} &\widetilde{\phii}^{\alpha \beta} n_{\beta}  = \sum_{\alpha} \widetilde{\phii}^{\alpha \alpha} n_{\alpha}\\
    &+ 2\sum_{\alpha < \beta} (-1)^{a_{\alpha} + a_{\beta}} n_{\alpha} \widetilde{\phii}^{\alpha \beta} n_{\beta} \nonumber
\end{align}
Taking this expression modulo \(4\), the signs in the second term drop out, \(2(-1)^{a_\alpha + a_\beta} \overset{4}{=} 2\).
Consequently:
\begin{equation} \label{eq:XVX}
    X_{\mathbf{a}} (V^{\widetilde{\phii}})^{\dagger} X_{\mathbf{a}} = (-i)^{\widetilde{\mathbf{a}}(\widetilde{\phii}(\widetilde{\mathbf{a}}))} (-1)^{n_{\alpha} \widetilde{\phii}^{\alpha \beta} a_{\beta}} (V^{\widetilde{\phii}})^{\dagger}
\end{equation}
Now, we note that \((-1)^{n_{\alpha} \widetilde{\phii}^{\alpha \beta} a_{\beta}} = (-1)^{n_{\alpha} \pi(\widetilde{\phii}^{\alpha \beta} \widetilde{a}_{\beta})}\), where \(\pi\) is the modulo \(2\) quotient, and that \(\pi(\widetilde{\phii}(\widetilde{\mathbf{a}}))\) does not depend on the choice of lift \(\widetilde{\mathbf{a}}\).
Thus, the group commutator: 
\begin{equation}
[V^{\widetilde{\phii}}, X_{\mathbf{a}}]_{\rm grp} = X_{\mathbf{a}} (V^{\widetilde{\phii}})^{\dagger} X_{\mathbf{a}} V^{\widetilde{\phii}} = (-i)^{\widetilde{\mathbf{a}}(\widetilde{\phii}(\widetilde{\mathbf{a}}))} Z_{\pi(\widetilde{\phii}(\widetilde{\mathbf{a}}))}    
\end{equation}
Note that on stabilizers $\mathbf{a} = \exd \mathbf{v}$ the phase $(-i)^{\widetilde{\mathbf{a}}(\widetilde{\phii}(\widetilde{\mathbf{a}}))} = 1$ because $\widetilde{\exd \mathbf{v}} (\widetilde{\phii}(\widetilde{\exd \mathbf{v}}) \overset{4}{=} 0$.
As a consequence of this group commutator formula, stabilizers map to stabilizers under $V^{\phii}$ and hence it is a logical action on the CSS code associated with $C_{\bullet}$. 
\end{proof}

We conclude by remarking that $n_{\alpha} \widetilde{\phii}_0^{\alpha \beta} n_{\beta}$ specifies a quadratic polynomial in the number operators $n_{\alpha}$.
While we used a $\mathbb{Z}_4$ lift and a linear chain map (corresponding to a cycle in the tensor product) to specify this quadratic polynomial, a more general understanding of intra-block Clifford gates will likely require one to consider the space of quadratic homological maps between chain complexes~\cite{baues1996homotopy}.

\subsubsection{Inter-Block Gates as Intra-Block Gates}

Two independent blocks of CSS codes $C$ and $D$ can just as well be thought of as one block of the code $C \oplus D$.
As such, we should be able to understand the inter-block $\mathsf{CZ}$ gates in the intra-block formalism developed above.

To do so, let us consider a chain map $\phii_{\bullet}: C^{\bullet} \to D_{\bullet}$, corresponding to a $0$-cycle of the chain map complex $[C^{\bullet}, D_{\bullet}]$.
Our goal is to show that this $\phii_{\bullet}$ can be used to define a $0$-cycle of the symmetric subcomplex $[(C \oplus D)^{\bullet}, (C \oplus D)_{\bullet}]^\mathbb{S}$.
[As we are trying to find a \(\mathsf{CZ}\) gate, we can use an arbitrary nice lift of this cycle to \(\ZZ_4\) without worrying about sparsity, per the remarks below Eq.~\eqref{eqn:TildePhiSCZForm}.]
Doing so is straightforward.
In particular, we define the following map on $C^{\bullet} \oplus D^{\bullet}$: 
\begin{equation}
    \psi_{\bullet}^{\alpha \beta} =  \begin{pmatrix}
        0 & \phii_{\bullet} \\ \phii_{\bullet}^{\mathsf{T}} & 0
    \end{pmatrix}^{\alpha \beta}
\end{equation}
The map above is manifestly symmetric and one can verify that it forms a chain map between $C^{\bullet} \oplus D^{\bullet} \to C_{\bullet} \oplus D_{\bullet}$.
Hence, the above is a $0$-cycle in $[(C \oplus D)^{\bullet}, (C \oplus D)_{\bullet}]^{\mathbb{S}}$.
Finally, one can show that:
\begin{equation}
    V^{\psi} = e^{i \frac{\pi}{2} n_{\alpha}^{C \oplus D}\,  \psi^{\alpha \beta}\, n_{\beta}^{C \oplus D}} = e^{i \pi n_{\alpha}^D \phii^{\alpha \beta} n_{\beta}^C} = U_{\mathsf{CZ}}^{\phii}
\end{equation}
Establishing the inter-block $\mathsf{CZ}$ gates that we have as the intra-block diagonal gates above.

\subsubsection{A Few Remarks on Non-Clifford and Non-Commutative Intra-Block Gates}
\label{subsubsec:IntraBlockClZandCNOT}

We conclude with some brief remarks on non-Clifford and non-commutative intra-block logical gates (i.e. intra-block gates where individual gates don't mutually commute).
While we leave a full treatment of these gates to forthcoming work, we describe some simple ways to define these gates that we will utilize in Section~\ref{sec:DiscoveringGates}.

We saw for intra-block \(\mathsf{CZ}\) gates that only the \(\ZZ_2\) chain map \(\phii_\bullet : C^\bullet \to C_\bullet\) was needed to find a logical gate, and not the more general \(\ZZ_4\) lift.
This simplification extends to higher level of the chain map hierarchy, and indeed, we can also avoid the need to ``correct'' the diagonal part of the gate.
For any \(0\)-cycle \(\phii_\bullet \in \CMH{\ell}_0 = [C^\bullet, [\cdots, C_{\bullet}]]_0\), we define a logical \(\mathsf{C^{\ell-1}Z}\) gate.
In fact, the naive application of Eq.~\eqref{eq:CZopassoc} defines this gate,
\begin{equation}\label{eq-intrablockClZ}
    V^\phii_\ell = \exp\left(i \pi \phii^{\alpha \cdots \gamma} n_\alpha \cdots n_\gamma \right).
\end{equation}

However, a few remarks are in order.
In general, Eq.~\eqref{eq-intrablockClZ} is a circuit both of \(\mathsf{C^{\ell-1}Z}\) gates and also \(\mathsf{C^{\ell-2}Z}\)s, \(\mathsf{C^{\ell-3}Z}\)s, all the way down to \(Z\)s.
Whenever some of the indices in the sum over \(n_\alpha \cdots n_\gamma\) coincide, the number of \(n\) operators in the sum reduces (\(n_e n_e = n_e\)) resulting in a gate from lower in the Clifford hierarchy.
Similarly, \(V^\phii_\ell\)'s logical action includes both \(\mathsf{C^{\ell-1}Z}\) pieces and corrections from lower in the hierarchy.

Further, note that we do not need \(\phii_\bullet\) to be symmetric.
Nonetheless, our previous remarks about the action of any gate on operators \(X_{\mathbf{a}}\) still holds, and one can see that \(V^\phii_\ell\) has such a symmetric action, regardless of asymmetry in \(\phii_\bullet\).
In fact, the logical action of \(V^\phii_\bullet\) is not simply encoded in \(\phii_*\), the induced action of \(\phii_\bullet\) on (co)homology, but rather in its \emph{symmetrization}.
For any \(f_\bullet \in \CMH{\ell}\), we define the symmetrization
\begin{equation}
    f^\mathbb{S}_\bullet = \sum_{\sigma \in S_\ell} \sigma_\bullet \circ f_\bullet \circ \sigma_\bullet^{-1},
\end{equation}
where \(S_\ell\) is the group of permutations on \(\ell\) objects and \(\sigma_\bullet : \CMH{\ell} \xrightarrow{\sim} \CMH{\ell}\) is a chain map corresponding to permutation of the factors in \(\CMH{\ell} \isom C_\bullet^{\otimes\ell}\).
Then the \(\mathsf{C^{\ell-1}Z}\) part of the logical action of \(V^\phii_\ell\) is encoded in the following repeated commutator,
\begin{equation}
    [[[V^\phii_\ell, X_{\mathbf{a}}], \cdots], X_\mathbf{c}]_{\rm grp} = \exp\left[i\pi \phii^{\mathbb{S}}_*(\mathbf{L}_{\mathbf{a}}, \cdots , \mathbf{L}_{\mathbf{c}})\right],
\end{equation}
viewing \(\phii_*^\mathbb{S}\) as a multilinear map from \(\ell\)-tuples of logical classes \(\mathbf{L}_\mathbf{a} \ni \mathbf{a}\) to \(\FF_2\).

These results are summarized by the following theorem.
\begin{shaded*}
\begin{restatable}[Intra-Block Logical Non-Clifford Gates]{theorem}{CCZIntraBlock}
    \label{thm:CCZIntraBlock} 
    Suppose that $\phii_{\bullet} \in \CMH{\ell}_0$ is a $0$-cycle.
    Then, Eq.~\eqref{eq-intrablockClZ} defines an intra-block logical \(\mathsf{C^{\ell-1}Z}\) gate whose action is given by \(\phii^{\mathbb{S}}_*\), the symmetrization of the logical action \(\phii_*\), up to logical gates from lower levels of the Clifford hierarchy.
\end{restatable}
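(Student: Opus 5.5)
We reduce the statement to the inter-block results already established by viewing $V^\phii_\ell$ as a diagonal phase $(-1)^{P(n)}$, where $P(n)=\phii^{\alpha\cdots\gamma}n_\alpha\cdots n_\gamma$ is a multilinear polynomial over $\FF_2$ after using $n_e^2=n_e$. The key tool is the group commutator with $X_{\mathbf a}$. Since $X_{\mathbf a}$ acts on each $n_\alpha$ by $n_\alpha\mapsto n_\alpha+\mathbf a_\alpha$ (mod 2), we get
\begin{equation}
[V^\phii_\ell,X_{\mathbf a}]_{\rm grp}=(-1)^{P(n+\mathbf a)-P(n)}.
\end{equation}
Expanding $P(n+\mathbf a)$ by multilinearity, the result is a sum over all nonempty subsets of the $\ell$ legs receiving $\mathbf a$, with the remaining legs contracted with $n$. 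This is a polynomial of degree at most $\ell-1$. Its top-degree part is $\sum_{j}\phii(\mathbf a\text{ in leg }j)$ contracted with $n$'s. Here the sum over the $\ell$ choices of leg is exactly the partial symmetrization.

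The plan proceeds in three steps. First, we show that $V^\phii_\ell$ preserves the code space. It commutes with $Z$ stabilizers. For an $X$ stabilizer $\mathbf a=\exd\mathbf v$, each single-leg contraction $\phii(\exd\mathbf v)$ in leg $j$ is a boundary in the reshaped complex $[\mathbf C^{[\ell-1]}_{\setminus j}]$, by the chain-map condition \eqref{eq:CMHchainmapcondition} applied via the reshaping \eqref{eq-Clreshape}. We then need the lower-order terms, where $\mathbf a$ occupies two or more legs, to also be logically trivial. Contracting a cycle with $\exd\mathbf v$ in one leg and a cochain in another gives a boundary contracted further, and boundaries contracted with cocycles remain boundaries. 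However, the extra legs receive $\mathbf a$, which is a coboundary and hence a cocycle, so the iterated contractions remain boundaries at every lower level. We then induct on $\ell$ using the intra-block version of the statement at level $\ell-1$, applied to each resulting lower-degree polynomial. The base case $\ell=1$ is a Pauli $Z_b$ with $b$ a boundary, and we may also check $\ell=2$ directly. Each commutator is therefore a product of lower-level intra-block gates labeled by boundaries, which act trivially on the code space by the inductive hypothesis.

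Second, we identify the logical action. Taking $\ell$ nested group commutators with $X_{\mathbf a_1},\dots,X_{\mathbf a_\ell}$ for cocycle representatives, only the top multilinear term survives at full order. This gives $(-1)^{\sum_{\sigma}\phii(\mathbf a_{\sigma(1)},\dots,\mathbf a_{\sigma(\ell)})}=(-1)^{\phii^{\mathbb S}(\mathbf a_1,\dots,\mathbf a_\ell)}$, a scalar. Because $\phii^{\mathbb S}$ is a sum of cycles composed with permutation chain isomorphisms, it is itself a $0$-cycle. Its value therefore depends only on the classes $\mathbf L_{\mathbf a_i}$, by the same argument as the well-definedness of \eqref{eq:CMHInducedMap}, and equals $\phii^{\mathbb S}_*(\mathbf L_{\mathbf a_1},\dots)$. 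A diagonal logical gate on $k$ qubits is a phase $(-1)^{Q(\bar n)}$ with $Q$ a polynomial. The $\ell$-fold discrete derivative of $Q$ recovers precisely its degree-$\ell$ coefficients, namely the $\mathsf C^{\ell-1}\mathsf Z$ content. The remaining lower-degree part of $Q$ is unconstrained, which is the ``up to lower levels'' clause of the statement.

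The main obstacle is the first step. Unlike the inter-block proof of Theorem~\ref{thm:ChainMapHierarchyGates}, a single $X$ stabilizer hits several legs at once, so the commutator is not a single lower-level chain. We must check that every mixed contraction, with $\exd\mathbf v$ in several legs, is a boundary in the appropriate reshaped complex. For this we will first try repeated application of the chain-map identity: contracting a cycle with a cocycle yields a cycle, and contracting it with a coboundary yields a boundary.
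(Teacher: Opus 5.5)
Your proposal is correct. Its first step is essentially the paper's argument. The paper also expands $[V^\phii_\ell,X_{\mathbf a}]_{\rm grp}$ as a product over nonempty $A\subseteq[\ell]$ of $V^{\phii(\mathbf a^{\otimes A})}_{\ell-|A|}$. It then notes that for $\mathbf a=\exd\mathbf b$ every such contraction is a boundary, $\phii(\exd\mathbf b^{\otimes A})=\partial\,\phii(\mathbf b_i\otimes\exd\mathbf b^{\otimes A\setminus\{i\}})$, and inducts on $\ell$.

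You depart from the paper in identifying the logical action. The paper takes a \emph{single} commutator and keeps the $|A|=1$ terms. It applies the level-$(\ell-1)$ statement to each $V^{\phii(\mathbf a_i)}_{\ell-1}$ and sums the symmetrized slices, $\sum_i[\phii_*(\mathbf L_i)]^{\mathbb S}=\phii^{\mathbb S}_*(\mathbf L_1)$. You instead take $\ell$ nested commutators, i.e.\ the $\ell$-th finite difference of the phase polynomial. By polarization this is the constant $\sum_\sigma\phii(\mathbf a_{\sigma(1)},\dots,\mathbf a_{\sigma(\ell)})$. Your route needs no induction. It gives at once the degree bound on the logical phase polynomial (its $\ell$-th differences are constant) and its top coefficients, without sorting the single commutator into level-$(\ell-1)$ and lower pieces. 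In effect it proves the nested-commutator identity the paper displays just before the theorem. The paper's route, in exchange, mirrors the inter-block proof of Theorem~\ref{thm:ChainMapHierarchyGates} and reuses the lower-level statement directly.

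One step needs tightening; the paper's proof makes the same move. Code-space preservation requires each factor $V^{\phii(\exd\mathbf v^{\otimes A})}_{\ell-|A|}$ to act as the \emph{identity} on the code space. The statement at level $\ell-|A|$, used as an inductive hypothesis, only says that the top-level logical action vanishes for a boundary. It leaves lower-level logical corrections unconstrained, so it does not give what you need.

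Prove triviality of boundary-labeled gates directly instead. Take an elementary $Q=e_1\otimes\cdots\otimes p_j\otimes\cdots\otimes e_k$ with one leg in degree one. The all-degree-zero part of $\partial Q$ yields the gate $(-1)^{m\,n_{\partial p_j}}=(1-m)+m\,Z_{\partial p_j}$, where $m=\prod_{i\neq j}n_{e_i}$ and $n_{\partial p_j}=\sum_{\alpha\in\partial p_j}n_\alpha$. This identity still holds intra-block when indices coincide, since all operators are diagonal. The gate therefore fixes every code state. Moreover, $\psi\mapsto V^{\psi}_k$ is multiplicative because the exponent is linear in $\psi$ modulo $2$.

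The $A=[\ell]$ term is the scalar $(-1)^{\phii(\exd\mathbf v,\dots,\exd\mathbf v)}=+1$. Indeed, contracting the cycle with cocycles on all but one leg gives a $0$-cycle $c$, and $\exd\mathbf v(c)=\mathbf v(\partial c)=0$. With this lemma your first step needs no induction either, so the whole argument becomes induction-free.
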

\end{shaded*}

The proof follows from direct calculation and is deferred to Appendix~\ref{subapp:IntraBlockCCZ}.

To conclude, we observe a sufficient condition for the existence of an intra-block $\mathsf{CNOT}$ gate.
Crucially, unlike $\mathsf{C^{\ell} Z}$ gates, different $\mathsf{CNOT}$ gates do not commute with one another and hence a complete theory of these gates would need to account for this non-commutativity by tracking the order of the gates.
Nevertheless, in situations where these $\mathsf{CNOT}$ gates commute, chain maps can still be used to describe intra-block CNOT gates.
In particular, suppose that $\phii_{\bullet} \in [C^{\bullet}, C^{\bullet}]_0$ is a $0$-cycle of the chain map complex such that $\phii_0$ is \textit{block off-diagonal} as a matrix.
In other words, there is a set of controls $\mathsf{B}_{\mathsf{C}}(C_0) \subset \mathsf{B}(C_0)$ and a set of targets $\mathsf{B}_{\mathsf{T}}(C_0) \subset \mathsf{B}(C_0)$ that are mutually disjoint $\mathsf{B}_{\mathsf{C}}(C_0) \cap \mathsf{B}_{\mathsf{T}}(C_0) = \varnothing$ such that $(\phii_0)_{e'}^{\ e} \neq 0$ implies that $e \in \mathsf{B}_C(C_0)$ and $e' \in \mathsf{B}_T(C_0)$.
Put simply, this means that if \(e\) is used as a control for a \(\mathsf{CNOT}\) gate, it is never used as a target, and vice versa.
Then, the following gate: 
\begin{align} \label{eq-intrablockCNOT}
    V^{\phii}_{\mathsf{CNOT}} &= \prod_{e \in \mathsf{B}_C(C_0)} \mathsf{CNOT}_{e, \phii_0(e)}\\
    &= \exp\left(i \pi m^{\alpha} \phii_{\alpha}^{\ \beta} n_{\beta} \right), 
\end{align}
 [which reproduces the naive application of the $\mathsf{CNOT}$ gate of Eq.~\eqref{eq:productofCZ}] yields a logical $\mathsf{CNOT}$ gate.
In going from the first to the second line of Eq.~\eqref{eq-intrablockCNOT}, we used the fact that controls and targets (by construction) don't overlap.
Consequently, we circumvented the need to keep track of the order of these gates and were able to write them simply using a chain map.

With this in mind, we prove the following theorem:
\begin{shaded*}
\begin{restatable}[Intra-Block Logical CNOT Gates]{theorem}{CNOTIntraBlock}
    \label{thm:CNOTIntraBlock}
    Suppose that $\phii_{\bullet}$ is a $0$-cycle of $[C^{\bullet}, C^{\bullet}]$ such that as a matrix $\phii_0$ is block off-diagonal, then the action of Eq.~\eqref{eq-intrablockCNOT} on $X$ and $Z$ operators is given by:
    \begin{equation}
        [V^{\phii}_{\mathsf{CNOT}}, X_{\mathbf{a}}] = X_{\phii(\mathbf{a})} \qquad [V^{\phii}_{\mathsf{CNOT}}, Z_{b}] = Z_{\phii^{\mathsf{T}}(b)}  
    \end{equation}
    for all $\mathbf{a} \in C^0$ and all $b \in C_0$.
    Consequently, Eq.~\eqref{eq-intrablockCNOT} defines a logical gate on $C$, whose action is a product of $\mathsf{CNOT}$s on logical qubits.
\end{restatable}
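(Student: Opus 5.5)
We will prove Theorem~\ref{thm:CNOTIntraBlock} by direct conjugation, in the same spirit as the proof of Eq.~\eqref{eq:phiactsona}. The plan is to first show that the gates in $V^{\phii}_{\mathsf{CNOT}}$ mutually commute, then compute the action of a single physical $\mathsf{CNOT}$ on $X$ and $Z$ operators, and finally assemble these local rules using the chain map condition.

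\textbf{Step 1: commutativity and exponential form.} Every factor in Eq.~\eqref{eq-intrablockCNOT} is a $\mathsf{CNOT}_{e,e'}$ with $e\in\mathsf{B}_{\mathsf{C}}(C_0)$ and $e'\in\mathsf{B}_{\mathsf{T}}(C_0)$. Two such gates can fail to commute only if the control of one is the target of the other. This is excluded because $\mathsf{B}_{\mathsf{C}}(C_0)\cap\mathsf{B}_{\mathsf{T}}(C_0)=\varnothing$. We therefore write each gate as $\mathsf{CNOT}_{e,e'}=(-1)^{n_e m^{e'}}$. Since $n_e$ and $m^{e'}$ act on distinct qubits, these commute, which justifies the exponential form in the second line of Eq.~\eqref{eq-intrablockCNOT}.

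\textbf{Step 2: single-gate action.} For one gate $\mathsf{CNOT}_{e,e'}$ we use the standard rules
\begin{equation}
X_e\mapsto X_eX_{e'},\qquad Z_{e'}\mapsto Z_eZ_{e'},
\end{equation}
with $X_{e'}$ and $Z_e$ left unchanged. Equivalently, the group commutator of the gate with $X_e$ is $X_{e'}$, and its group commutator with $Z_{e'}$ is $Z_e$. The key observation is that, because controls are never targets, conjugating by the full product never produces cascaded operators. Concretely, the $X_{e'}$ produced on a target qubit commutes with every other factor, since $e'$ is never a control. Likewise, the $Z_e$ produced on a control qubit commutes with every other factor, since $e$ is never a target.

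\textbf{Step 3: assembling the group commutators.} Taking the product over the support of $\mathbf{a}$, the group commutator with $X_{\mathbf a}$ is
\begin{equation}
\prod_e X_{\phii_0(\mathbf e)}^{\mathbf a(e)}=X_{\phii(\mathbf a)}.
\end{equation}
For $Z_b$, each target qubit $e'$ in the support of $b$ collects $Z_e$ from every control $e$ with $(\phii_0)_{e'}^{\ e}=1$. This yields $Z_{\phii^{\mathsf T}(b)}$, which is exactly the transpose action of Definition~\ref{def:SparseChainMap}.

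\textbf{Step 4: logical action.} Since $\phii_\bullet$ is a $0$-cycle of $[C^\bullet,C^\bullet]$, it is a chain map. Hence $\phii$ sends $X$-stabilizers $X_{\exd\mathbf v}$ to $X_{\phii(\exd\mathbf v)}=X_{\exd\phii(\mathbf v)}$, which is again a stabilizer. Similarly, $\phii^{\mathsf T}$ is a chain map sending $\partial p$ to $\partial\phii^{\mathsf T}(p)$, so $Z$-stabilizers map to $Z$-stabilizers. The stabilizer group is therefore preserved, and the gate is logical.

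On logicals, the induced maps of Eq.~\eqref{eqn:PhiiLogicalAction} act on $H^0$ and $H_0$ as $[\mathbf a]\mapsto[\phii(\mathbf a)]$ and $[b]\mapsto[\phii^{\mathsf T}(b)]$. These two maps are mutually transpose with respect to the pairing $H^0\times H_0\to\mathbb F_2$. The logical unitary is thus the product of logical $\mathsf{CNOT}$s encoded by $\phii_*$, as in Eq.~\eqref{eqn:UCXLogicalAction} of Theorem~\ref{thm:TransversalGatesFromChainMaps}.

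The main obstacle is Step 4's final identification. Unlike the inter-block case, a logical Clifford of the form $X\mapsto X\,X_{\phii_*}$, $Z\mapsto Z\,Z_{\phii_*^{\mathsf T}}$ within one block must be shown to be a product of logical $\mathsf{CNOT}$s. In particular, the logical map must be nilpotent-compatible with a control/target split, and a priori it need not be. I would handle this by noting that the action is the symplectic map $\mathbf L\mapsto(1+\phii_*)\mathbf L$. This map is realized by some CNOT circuit whenever $1+\phii_*$ is invertible. Invertibility holds because the block off-diagonal structure forces $\phii_0^2=0$, so $(1+\phii_*)^2=1$. Any invertible linear map on logical $X$-classes, with the compatible transpose action on $Z$, is a logical CNOT circuit by standard symplectic arguments.
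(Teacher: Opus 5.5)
Your proposal is correct and takes essentially the same route as the paper. The paper also computes the group commutators directly, writing $V^{\phii}_{\mathsf{CNOT}}=\exp(i\pi m^{\alpha}\phii_{\alpha}^{\ \beta}n_{\beta})$ (justified by the disjointness of controls and targets) and using $X_{\mathbf a}n_\beta X_{\mathbf a}=a_\beta+n_\beta \bmod 2$; this is equivalent to your gate-by-gate Heisenberg bookkeeping. Your Step 4 additionally spells out what the paper leaves as ``consequently'': stabilizer preservation via the chain-map condition, and the identification of the logical action as a CNOT circuit using $\phii_0^2=0$. The invertibility of $1+\phii_*$ is in any case automatic, since a unitary preserving the code space restricts to a unitary on it.
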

\end{shaded*}

Once again, the proof follows from direct calculation and is deferred to Appendix~\ref{subapp:IntraBlockCCZ}.

\section{Discovering Transversal Gates Made Simple}
\label{sec:DiscoveringGates}

The chain map hierarchy provides a structured method to search for transversal gates in general qLDPC codes.
In this section, we derive several examples of previously-unknown gates uncovered by the chain map hierarchy.
Most of the codes we consider, including the usual two- and three-dimensional toric code, have been studied extensively in the literature, but nonetheless have transversal gates which had not been found.

While logical gates can be systematically enumerated with the chain map hierarchy, the gates so obtained are not automatically transversal.
For the codes we consider here,
the isomorphism between the chain map complex and tensor product gives an intuitive geometric way to \emph{sparsify} chain maps, when possible, and thus obtain transversal gates.
(The general conditions under which a candidate chain map can be sparsified will be explored in upcoming work~\cite{Sahay2026GoNoGo}.)

While the gates in this section are all novel, our emphasis is primarily on the \emph{method} of employing tensor product structure and sparsification to discover gates (outlined in Sec.~\ref{subsec:GateRecipe}) rather than the particular utility unlocked by the new gates.

In Sec~\ref{subsec:2DToricGate} we derive transversal logical gates for the entire Clifford group in the two-dimensional toric code.
In Sec.~\ref{subsec:3DToricGate} we move to the next level of the chain map hierarchy, and obtain  addressable \(\mathsf{CCZ}\)
gates in the three-dimensional toric code.
Further moving beyond manifold codes, in Sec.~\ref{subsec:ALPGate} we derive many addressable Clifford gates in a fracton code with many encoded qubits---the anisotropic linon-planon (ALP) code.
Finally, in Sec.~\ref{subsec:ALPIsiGate}, we extend this non-manifold example to addressable \(\mathsf{CCZ}\)
gates in the tensor product code of ALP and the repetition code.

As a remark, our results refute some folklore that addressable \(\mathsf{CZ}\) gates, say, require the existence of a global transversal \(\mathsf{CCZ}\) gate, and similar for higher levels of the Clifford hierarchy~\cite{lin2024transversalnoncliffordgatesquantum,JochymOConnor20214dtoric,he2025quantumcodesaddressabletransversal,zhu_non-abelian_2026}.
The toric code, for instance, has no transversal \(\mathsf{CCZ}\), but does have an exhaustive suite of addressable \(\mathsf{CZ}\)s.

\subsection{Recipe for Discovering Gates}
\label{subsec:GateRecipe}

All of the gates we find in this section are constructed with the same basic recipe.
Fixing a collection of codes 
\(\mathbf{C}^{[\ell]} = (..., \textcolor{orange}{C},\textcolor{dodgerblue}{D})\), the recipe is as follows (Fig.~\ref{fig:Recipe}).
\begin{enumerate}
    \item[(1)] \textbf{Specify} a logical action of the gate as an element \((\phii_{*})_0 \in H_0(\textcolor{dodgerblue}{D}) \otimes H_0(\textcolor{orange}{C}) \otimes \cdots\).

    \item[(2)] \textbf{Sparsify} a representative cycle of the logical action using chain map hierarchy stabilizer deformations.

    \item[(3)] \textbf{Synthesize} the physical implementation of the gate and enjoy computing with your qLDPC code. 
    
    \vspace{0.8ex}\emph{Bon appetit}.
\end{enumerate}
\begin{figure}
    \centering
    \includegraphics[width=\linewidth]{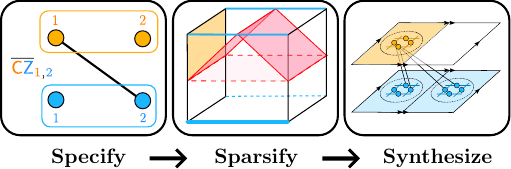}
    \caption{\textbf{Recipe for discovering gates:} To find a transversal logical gate using the chain map complex, first \emph{specify} the desired logical action: a \(\CZLogicalColor{1}{2}\) between one pair of the four encoded qubits in two blocks of the toric code, say.
    Then, \emph{sparsify} a representative of this logical action using the chain map complex stabilizer deformations.
    Finally, \emph{synthesize} the gate by converting the chain map to a physical quantum circuit.}
    \label{fig:Recipe}
\end{figure}

This recipe reiterates the procedure outlined in Secs.~\ref{sec:Intuition} and \ref{sec:ChainMapHierarchy}.
Specifying any desired logical action from a finite level in the chain map hierarchy, that action can be encoded as a logical tensor \((\phii_{*})_0\) (the all-degree-zero component of \(\phii_*\)).
This is the exhaustiveness of Theorem~\ref{thm:ChainMapHierarchyGates}.
Then, picking any representative cycle \(\phii_\bullet\) of this homology class, we get a chain tensor that can be turned into a gate.
However, this tensor will usually be dense, so one needs to sparsify the tensor by deforming with boundaries.
With that done, a physical implementation of the gate can be synthesized with Eq.~\eqref{eq-ClZ}.

The key step in the recipe is (2), sparsifying the gate by deforming with stabilizers.
In full generality, this is algorithmically hard---similar to finding the minimum weight logical of a CSS code.
Nonetheless, step (2) may be easy if all the codes in \(\mathbf{C}^{[\ell]}\) are well-understood, such that the full space of deformations is simple.
In this section, we will focus on the case where each of the codes in \(\mathbf{C}^{[\ell]}\) is a tensor product of well-understood classical codes, \(\textcolor{dodgerblue}{D_\bullet} = \textcolor{dodgerblue}{d^{(1)}_\bullet \otimes d_\bullet^{(2)} \otimes \cdots}\). 

In this case of tensor product codes, the recipe can be unpacked into more explicit steps. 
Gates from the \(\ell\)th level of the chain map hierarchy are constructed as follows.

\paragraph{Specify}
First, choose a target logical action of the gate.

\begin{enumerate}[label=(\alph*),leftmargin=*]
\item \emph{Logical action:} The logical action is specified as a homology class 
\begin{align}\label{eqn:RecipeLogicalAction}
    \hspace{14 mm} (\phii_{*})_0 &= \textcolor{dodgerblue}{L_D} \otimes \textcolor{orange}{L_C} \otimes \cdots \nonumber \\
    &\in 
    H_0(\textcolor{dodgerblue}{D}) \otimes 
    H_0(\textcolor{orange}{C}) \otimes \cdots
\end{align}
as in Eq.~\eqref{eq:phiasmap} and Theorem~\ref{thm:ChainMapHierarchyGates}.
In Eq.~\eqref{eqn:RecipeLogicalAction}, we specify an \emph{addressable} action---\((\phii_{*})_0\) is rank \(1\) as a tensor, and affects the minimum possible combination of logicals from each code. 
More general \((\phii_{*})_0\) are constructed as sums of terms like this.

\item \emph{Extend to homological action:}
To completely specify a chain map, we need not only the all-degree-zero homological component
\(H_0(\textcolor{dodgerblue}{D}) \otimes 
H_0(\textcolor{orange}{C})  \otimes \cdots\),
but also  every other component of \(H_0([\mathbf{C}^{[\ell]}])\) [Eq.~\eqref{eqn:CMHKunneth}].
Arbitrarily choose the other components of
\begin{align}
    \hspace{14 mm} \phii_* &\in \bigoplus_{\textcolor{dodgerblue}{j_D} + \textcolor{orange}{j_C} + \cdots = 0} H_{\textcolor{dodgerblue}{j_D}}(\textcolor{dodgerblue}{D}) \otimes 
    H_{\textcolor{orange}{j_C}}(\textcolor{orange}{C}) \otimes \cdots \nonumber \\
    &\isom H_0([\mathbf{C}^{[\ell]}]),
\end{align}
such that the \(\textcolor{dodgerblue}{j_D} = \textcolor{orange}{j_C}  = \cdots = 0\) component is \((\phii_{*})_0\).
This choice does not affect the logical action, but will affect the actual implementation of the gate, possibly including its sparsity.
For simplicity in the rest of the recipe, we will take all the other components to be \(0\), so that \(\phii_* = (\phii_{*})_0\).

\end{enumerate}

\paragraph{Sparsify}\label{par:sparsify} The sparsification process can be greatly simplified for tensor product codes.
In particular, it is frequently possible to find sparsifications of the full \(\phii_*\) from sparsifications of gates for the constituent tensor factors of each code.
\begin{enumerate}[label=(\alph*),leftmargin=*]
\item \emph{Reshape logical tensor:}
Splitting the legs of \(\phii_*\) into their tensor factors, the legs can be arranged into groups which correspond to gates between the simple codes \(\textcolor{dodgerblue}{d^{(i)}_\bullet}\), \(\textcolor{orange}{c^{(j)}_\bullet}\), and so on.
Indeed, there is a basis of logicals \(\textcolor{dodgerblue}{l_{j}^{di}} \in H_j(\textcolor{dodgerblue}{d^{(i)}})\) such that \(\textcolor{dodgerblue}{L_D}\) is a sum of terms like
\begin{equation}
    \textcolor{dodgerblue}{L_D} = (\textcolor{dodgerblue}{l_{j}^{d1} \otimes l_{j'}^{d2} \otimes \cdots}) + \cdots.
\end{equation}
For simplicity, we assume that only the first term is present---we are addressing a pure tensor logical.
Then the next step of the recipe is to reshape the homological action \(\phii_*\),
\begin{align}
    \phii_* &= 
    (\textcolor{dodgerblue}{l_{j}^{d1} \otimes  \cdots}) \otimes
    (\textcolor{orange}{l_{j'}^{c1} \otimes  \cdots}) \otimes \cdots \nonumber\\
    &\isom 
    (\textcolor{dodgerblue}{l_{j}^{d1}} \otimes
    \textcolor{orange}{l_{j'}^{c1}} \otimes \cdots) \otimes \cdots, \label{eqn:RecipeLogicalReshaping}
\end{align}
such that the regrouped factors
\begin{equation}
    \hspace{7mm}\textcolor{dodgerblue}{l_{j}^{d1}} \otimes
    \textcolor{orange}{l_{j'}^{c1}} \otimes \cdots \in 
    H_j(\textcolor{dodgerblue}{d^{(1)}}) \otimes 
    H_{j'}(\textcolor{orange}{c^{(1)}}) \otimes 
    \cdots
\end{equation}
correspond to chain map hierarchy gates for the factor codes.

\item\label{itm:SparsifyFactors} \emph{Sparsify factors:} 
As the factor codes are assumed to be well-understood, the factors 
\(\textcolor{dodgerblue}{l_{j}^{d1}} \otimes
\textcolor{orange}{l_{j'}^{c1}} \otimes \cdots\)
should be easy to sparsify---it is possible to identify representative cycles 
\begin{equation}
    \phi^{(1)}_\bullet \in \textcolor{dodgerblue}{l_{j}^{d1}} \otimes
    \textcolor{orange}{l_{j'}^{c1}} \otimes \cdots
\end{equation}
of each logical class such that each \(\phi^{(i)}_\bullet\) is sparse.
This can be done by first finding a dense logical representative (which is always possible by picking representative cycles \(\textcolor{dodgerblue}{a} \in \textcolor{dodgerblue}{l^{d1}_j}\), \(\textcolor{orange}{b} \in \textcolor{orange}{l^{c1}_{j'}}\), i.e.\ \([\textcolor{dodgerblue}{a}] = \textcolor{dodgerblue}{l^{d1}_j}\), and forming their tensor product \(\textcolor{dodgerblue}{a} \otimes \textcolor{orange}{b} \otimes \cdots\))  and deforming by boundaries from  \(\textcolor{dodgerblue}{d^{(1)}_\bullet} \otimes \textcolor{orange}{c^{(1)}_\bullet }\otimes \cdots\).
Then, the full chain tensor
\begin{equation}
    \phii_\bullet = \phi_\bullet^{{(1)}} \otimes \phi_\bullet^{{(2)}} \otimes \cdots
\end{equation}
has a sparsity degree [Eq.~\eqref{eqn:SparsityDegree}] which is bounded by the product of degrees from each \(\phi^{(i)}_\bullet\). 
\end{enumerate}

\paragraph{Synthesize physical gate}
Apply Eq.~\eqref{eq-ClZ} (or its \(\mathsf{CNOT}\) equivalents) to define a transversal logical gate from the sparse chain tensor \(\phii_\bullet\).

\paragraph{(Optional) Intra-block gates}
To find intra-block gates, one needs to find a chain tensor \(\tilde{\phii}_\bullet\) with additional properties (Section~\ref{sec:IntrablockGates}).
In this section, we will find transversal intra-block implementations of \(\mathsf{C^{\ell-1}Z}\) gates (which can be produced from any inter-block \(\phii_\bullet\) by symmetrization), \(\mathsf{S}\) gates (which can be found with a \(\ZZ_4\) lift of the chain complex), and \(\mathsf{CNOT}\) gates (which can be found from inter-block gates with disjoint controls and targets).

\vspace{1ex}
In the sparsification step (2\ref{itm:SparsifyFactors}), it is not guaranteed that every candidate logical action \(\phii_*\) will admit some regrouping in Eq.~\eqref{eqn:RecipeLogicalReshaping} that results in easily sparsified factors.
Indeed, if this were always possible, we could find a transversal implementation of a universal gate set for any code by looking high enough in the chain map hierarchy, violating various no-go theorems~\cite{Eastin2009TransversalRestrictions,Bravyi2013GatesforLocalStabilizer,JochymOConnor2018}.
In this section, ``easy to sparsify'' will be a synonym for ``line-like logicals,'' in which case geometric intuition will allow the identification of sparse representatives.
Then, the logical regroupings that work will be those such that each factor has at most one line-like logical \(\textcolor{dodgerblue}{l_{j}^{d1}}\) (say), and every other \(\textcolor{orange}{l_{j'}^{c1}}\), and so on are point-like.
(Because the factor codes \(\textcolor{orange}{c^{(i)}_\bullet}\) are classical, they have point-like \(Z\) logicals.)

\subsection{Full Clifford Group in the 2D Toric Code}
\label{subsec:2DToricGate}

In this section, we apply the recipe of Sec.~\ref{subsec:GateRecipe} to discover new transversal gates in the toric code~\cite{Kitaev_2003} which generate the entire Clifford group.
It is quite remarkable that, despite almost three decades of study, there are several transversal gates in the toric code which were previously unknown, such as an addressable Hadamard affecting just one of the two encoded qubits.
While very recent work~\cite{Albert2026transversalitystructurecliffordcircuits} used a numerical search to find transversal implementations of the Clifford group in small instances of the rotated surface code, it has not been clear until now how to find such gates in the full asymptotic toric code family of increasing code size.

\subsubsection{New Gates and Exhaustiveness}
\label{subsubsec:2TCExhaustiveness}

In the following subsections, we find transversal implementations of all of the following addressable gates.
Logical action is denoted with a bar, and the physical gates in the constant depth circuit realization of gates are also listed.
\begin{itemize}
    \item \emph{Phase gates:} \(\bar{\mathsf{S}}\), as a circuit of \(\mathsf{CZ}\) and \(\mathsf{S} ^{(\dagger)}\);
    \item \emph{Controlled \(Z\):} \(\overline{\mathsf{CZ}}\), as a circuit of \(\mathsf{CZ}\); and
    \item \emph{Controlled not:} \(\overline{\mathsf{CNOT}}\), as a circuit of \(\mathsf{CNOT}\).
\end{itemize}
The single qubit logical gates can act on any of the \(2N\) encoded qubits in \(N\geq 1\) blocks of the toric code, and the two-qubit gates act on any pair, including those within the same code block. 
Further, we prove that, between the gates above and the known
\begin{itemize}
    \item \emph{Global Hadamard} (known): \(\bar{\mathsf{H}}\otimes \bar{\mathsf{H}}\) within a single code block, as a physical \(\mathsf{H}\) on every qubit followed by a reflection (swap) around a diagonal~\cite{Kubica2015Unfolding,Moussa2016FoldedSurface,Breuckmann2024foldtransversal,Moylett2026logicalgatesfloquetcodes},
\end{itemize}
all possible transversal logical gates in \(N\) blocks of the toric code are now exhausted. 
That is, any logical action that can be implemented by transversal gates in the toric code can be implemented by an \(O(1)\) depth circuit composed of the above logical gates.
(The number of blocks \(N\) is treated as a constant for scaling statements.)
This includes, in particular, the addressable Hadamard mentioned in Sec.~\ref{sec:Intro}.
\begin{itemize}
    \item \emph{Addressable Hadamard:} \(\bar{\mathsf{H}}\), as a circuit of \(\mathsf{CZ}\), \(\mathsf{S}^{(\dagger)}\), and \(\mathsf{H}\).
\end{itemize}

The exhaustiveness of this gate set is a consequence of the observation that the above gates generate the entire Clifford group in any number of toric code blocks.
The Bravyi-K\"onig theorem implies that all logical gates formed from geometrically local circuits in the toric code are Cliffords.
We conjecture that Bravyi-K\"onig continues to hold for all finite depth circuits of few-body but possibly non-local gates (we present an informal proof for manifold codes in Section~\ref{sec:discussion}), and in the two-dimensional toric code this statement has already been proved in Ref.~\cite{JochymOConnor2018}.

\begin{lem}\label{lem:CliffordGeneration}
    The transversal logical gates listed above generate the entire Clifford group on any number \(N\) of toric code blocks.
\end{lem}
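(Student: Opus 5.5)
The plan is to reduce to a standard generating set for the Clifford group on $2N$ qubits. A convenient choice is $\{\mathsf{H}_i, \mathsf{S}_i, \mathsf{CNOT}_{ij}\}$, with every single-qubit gate available on every logical qubit and $\mathsf{CNOT}$ available on every ordered pair. Phase gates $\bar{\mathsf S}$ on any logical qubit and $\overline{\mathsf{CNOT}}$ on any ordered pair, both intra- and inter-block, are already in the list. It therefore suffices to generate an addressable $\bar{\mathsf H}$ on a single logical qubit of a single block. The global Hadamard only gives $\bar{\mathsf H}\otimes\bar{\mathsf H}$ on the two qubits of one block, so the work is to strip off one factor.

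The key step is a circuit identity. Label the qubits of one block $1,2$. A logical $\mathsf{SWAP}_{12}$ is available as $\overline{\mathsf{CNOT}}_{12}\overline{\mathsf{CNOT}}_{21}\overline{\mathsf{CNOT}}_{12}$. Using only the addressable gates, I would build the single-qubit Clifford $\mathsf{H}_1$ in one of two ways.

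\begin{enumerate}
\item Use the decomposition $\mathsf{H}=\mathsf{S}\,(\mathsf{H}\mathsf{S}\mathsf{H})\,\mathsf{S}$ up to phase. Here $\mathsf{H}_1\mathsf{S}_1\mathsf{H}_1=\sqrt{X}_1$. This is obtained by conjugating $\bar{\mathsf S}_1$ with $\bar{\mathsf H}\otimes\bar{\mathsf H}$, which produces $\sqrt{X}_1$ on qubit $1$ and acts trivially on qubit $2$, since $\mathsf{H}_2\mathsf{H}_2=1$. So $(\bar{\mathsf H}\otimes\bar{\mathsf H})\,\bar{\mathsf S}_1\,(\bar{\mathsf H}\otimes\bar{\mathsf H})=\overline{\sqrt{X}}_1$. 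Then $\bar{\mathsf S}_1\overline{\sqrt X}_1\bar{\mathsf S}_1$ equals $\bar{\mathsf H}_1$ up to global phase, because $\mathsf S$ and $\sqrt X$ generate the single-qubit Clifford group with $\mathsf{H}\propto \mathsf S\sqrt X\mathsf S$. I would verify this last identity by checking the conjugation action on $X$ and $Z$.
\item Alternatively, combine $\bar{\mathsf H}\otimes\bar{\mathsf H}$ with a logical $\mathsf{SWAP}$ and the $\mathsf{CZ}$/$\mathsf{CNOT}$ relations.
\end{enumerate}

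The first route is cleaner, and I would present it.

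This yields an addressable $\bar{\mathsf H}$ on either logical qubit of any block, since the same argument applies with qubit $2$. Together with addressable $\bar{\mathsf S}$ and $\overline{\mathsf{CNOT}}$ on all pairs, we have the standard generating set. I would then cite the classical fact that $\{\mathsf H,\mathsf S,\mathsf{CNOT}\}$ on all qubits and pairs generates the $n$-qubit Clifford group modulo phases. If a self-contained argument were wanted, it would go by Gaussian elimination on the symplectic tableau.

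Finally, I would note the depth bookkeeping. Each generator is constant depth, and the identities above use $O(1)$ generators, so everything stays transversal. The main obstacle is only verifying the single-qubit identity and the phase conventions, namely that $\bar{\mathsf S}$ and not $\bar{\mathsf S}^\dagger$ is realized. Both $\bar{\mathsf S}$ and $\bar{\mathsf S}^\dagger=\bar{\mathsf S}^3$ are available, so sign choices can always be repaired.
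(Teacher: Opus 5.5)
Your proof is correct and matches the paper's: it also reduces to the $\{\mathsf{H},\mathsf{S},\mathsf{CNOT}\}$ generating set and obtains the missing addressable Hadamard as $\mathsf{H}_0=e^{-i\pi/4}\,\mathsf{S}_0(\mathsf{H}_0\mathsf{H}_1)\mathsf{S}_0(\mathsf{H}_0\mathsf{H}_1)\mathsf{S}_0$, via $(\mathsf{HS})^3=e^{i\pi/4}$. That is exactly your $\bar{\mathsf S}_1\overline{\sqrt X}_1\bar{\mathsf S}_1$ construction, so your SWAP-based alternative is not needed.
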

\begin{proof}
    We compile the \(\bar{\mathsf{H}}\otimes \bar{\mathsf{H}}\), \(\bar{\mathsf{S}}\), and \(\overline{\mathsf{CNOT}}\) gates into a known generating set for the Clifford group: consisting of all \(\bar{\mathsf{H}}\), \(\bar{\mathsf{S}}\), and \(\overline{\mathsf{CNOT}}\) gates. 
    Henceforth, we drop bars on logical gates.
    
    We are only missing single-qubit Hadamards.
    Label two qubits from the same block \(0,1\).
    Then the gate \(\mathsf{H}_0\) is given by
    \begin{equation}
        \mathsf{H}_{0} = e^{-i\pi/4} \mathsf{S}_{0} (\mathsf{H}_{0} \mathsf{H}_{1}) \mathsf{S}_{0} (\mathsf{H}_{0} \mathsf{H}_{1}) \mathsf{S}_{0},
    \end{equation}
    and similarly for \(\mathsf{H}_{1}\).
    This follows from \((\mathsf{H} \mathsf{S})^3 = e^{i\pi/4}\) and \(\mathsf{H}^2 = \unit\).    
\end{proof}

\begin{theo}
    The transversal logical gates above  are exhaustive: they generate all possible logical actions realizable by transversal gates in  \(N = O(1)\) blocks of the two-dimensional toric code.
\end{theo}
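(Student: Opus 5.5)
The argument has two directions: an upper bound on what any transversal circuit can do, and a matching lower bound from the gates already listed. For the upper bound, I would use the fact quoted just above: for the two-dimensional toric code, any logical operation implemented by a constant-depth circuit of few-body, possibly non-local, gates must be a logical Clifford. For geometrically local circuits this is the Bravyi--K\"onig theorem; for non-local constant-depth circuits it is the result of Ref.~\cite{JochymOConnor2018}. The block count \(N=O(1)\) is fixed, so \(N\) blocks of the toric code form one stabilizer code on \(2N\) logical qubits. A constant-depth circuit on all \(N\) blocks is still a constant-depth circuit of few-body gates on that code, so these no-go results apply. Hence every transversally realizable logical action lies in the logical Clifford group \(\mathcal{C}_{2N}\), up to a global phase.

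For the lower bound, Lemma~\ref{lem:CliffordGeneration} shows that the listed gates generate all of \(\mathcal{C}_{2N}\): the addressable \(\bar{\mathsf S}\) and \(\overline{\mathsf{CNOT}}\) gates (intra- and inter-block), the global \(\bar{\mathsf H}\otimes\bar{\mathsf H}\), and the addressable Hadamards compiled from these. It then remains to make the ``\(O(1)\) depth'' claim precise. The Clifford group on \(2N\) qubits is a finite group whose size depends only on \(N\). Each element can therefore be written as a word of bounded length, at most some \(c(N)\) independent of the code distance, in the finite generating set. Each generator is realized by a circuit of depth \(O(1)\) in the code size, as established in the following subsections. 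Concatenating at most \(c(N)\) such circuits gives depth \(O(1)\) for every element of \(\mathcal{C}_{2N}\). So every logical action realizable transversally is also realizable by an \(O(1)\)-depth product of the listed gates.

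I expect the main obstacle to be the upper bound, and specifically its scope. The theorem quantifies over \emph{all} transversal circuits, including non-local ones that couple different blocks. One has to check that Ref.~\cite{JochymOConnor2018} genuinely covers non-local few-body constant-depth circuits acting on several copies. This should follow because \(N\) copies of the toric code still form a two-dimensional topological code with macroscopic distance, and because its argument uses only the small light-cone of constant-depth circuits together with the cleanability of logical operators. I would first invoke that result directly on the \(N\)-block code. If it does not apply verbatim, I would fall back on the informal non-local Bravyi--K\"onig argument of Section~\ref{sec:discussion}, in which case the statement becomes conditional on that conjecture.

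A secondary point concerns the phases in the compilation of Lemma~\ref{lem:CliffordGeneration}, such as \(e^{-i\pi/4}\). These are global and do not affect the logical action, so logical actions should be compared only up to phase.
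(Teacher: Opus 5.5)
Your proposal is correct and follows the paper's proof: Lemma~\ref{lem:CliffordGeneration} supplies the full logical Clifford group as products of the listed $O(1)$-depth gates, and \cite[Corollary~10, Example~8]{JochymOConnor2018} rules out non-Clifford logical actions from constant-depth, possibly non-local circuits on the toric code. Your bounded-word-length argument for the finite group $\mathcal{C}_{2N}$ spells out what the paper states in the remark immediately after the theorem.
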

\begin{proof}
    Lemma~\ref{lem:CliffordGeneration} shows that all Clifford gates are possible to achieve in \(O(1)\) depth (in code distance) as products of the gates we have identified.
    By \cite[Corollary~10, Example~8]{JochymOConnor2018}, no family of constant locality (that is, \(O(1)\) sparseness), constant depth circuits on the family of two-dimensional toric codes can implement a non-Clifford logical action on sufficiently large distance codes.
    Thus, the available gates generate all possible logical actions of transversal gates.
\end{proof}

The exhaustiveness of our generated gates does not exclude a potentially more efficient (lower depth) realization of these gates.
However, we already have the optimal \(O(1)\) scaling of circuit depth with code distance: all (finitely many at fixed \(N\)) elements of the Clifford group are realized as \(O(1)\) combinations of finite depth circuits.
Our scaling with the number of code blocks \(N\) may not be optimal.

Finally, we comment on the relation of our results to a recent no-go theorem in Ref.~\cite{Chakraborty2026nogo}, which states that it is impossible to realize the full Clifford group with ``fold transversal'' Clifford gates in a stabilizer code with more than one qubit.
We use a less restrictive notion of transversality than Ref.~\cite{Chakraborty2026nogo}, which demands that the entire Clifford group be possible to implement with a depth-1 circuit, while we only demand \(O(1)\) depth in code distance.
As such, the no-go of Ref.~\cite{Chakraborty2026nogo} is not relevant to our setting.

Before proceeding to derive the new gates, we review the chain complex presentation of the toric code, and its classical factor code the Ising model.

\subsubsection{Chain Complex Formalism for the Ising Model and Toric Code}

While the toric code needs little introduction, our calculations in this section are couched in the chain complex perspective on the toric code and its tensor product structure, and in the particular conventions developed in Section~\ref{sec:ChainMapHierarchy}.
In particular, while the toric code is usually presented in terms of a square cellulation of a torus, with chain complex components living in degrees \(2\) (plaquettes), \(1\) (edges with qubits) and \(0\) (vertices), we insist that qubits are always at degree \(0\).
As such, the toric code chain complex should have components at degrees \(1\), \(0\), and \(-1\), so that qubits are still placed in the middle component.
This convention also affects how one interprets the toric code as a tensor product of repetition codes---one of the codes should have components at degrees \(1\) and \(0\), while the other should be at \(0\) and \(-1\).
We use this subsection to review the toric code using these conventions.

The toric code chain complex, \(\mathsf{TC}_\bullet\), is the tensor product of an Ising model (repetition code) complex \(\mathsf{I}_{x\bullet}\) and its dual \(\mathsf{I}_y^\bullet\),
\begin{equation}
    \mathsf{TC}_\bullet \isom \mathsf{I}_{x\bullet} \otimes  \mathsf{I}_y^\bullet.
\end{equation}
The codes \(\mathsf{I}_{x\bullet}\) and \(\mathsf{I}_y^\bullet\) play the same role as \(d^{(1)}_\bullet\) and \(d^{(2)}_\bullet\) from the recipe.
The numerical labels have been replaced with \(x\) and \(y\), as these factors geometrically correspond to spatial axes of the toric code.

To unpack the product construction of the toric code in our convention, we first define the Ising model chain complex,
\begin{equation}
    \mathsf{I}_\bullet = \left( \mathsf{I}_1 \xrightarrow{\partial} \mathsf{I}_0 \rightarrow 0 \right),
\end{equation}
so that the nontrivial components are at degree \(1\) and \(0\).
This is the \(\mathbb{F}_2\) chain complex corresponding to a cyclic graph.
Both of the nonzero components \(\mathsf{I}_1\) and \(\mathsf{I}_0\) are \(d\)-dimensional vector spaces (\(d\) denoting the distance of both the Ising model and the toric code), with bases corresponding to edges \(e_i \in \mathsf{B}(\mathsf{I}_1)\) and vertices \(v_i \in \mathsf{B}(\mathsf{I}_0)\) of the graph, respectively, as illustrated.
\begin{equation}
    \begin{tikzpicture}[scale = .8, baseline={([yshift=-.5ex]current bounding box.center)}]
    \foreach \i in {0, ..., 4}{
        \filldraw[color = black] (\i, 0) circle (2 pt);
        \draw[color = black] (\i, 0) -- (\i + 1, 0) ;
        \node at(\i, -0.3) {$v_{\i}$};
        \node at(\i + 0.5, 0.3) {$e_{\i}$};
    }
    \draw[color = black] (-1, 0) -- (0,0);
    \node at(-1 + 0.5, 0.3) {$e_{d -1}$};
    \node at (5.5, 0) {$\cdots$};
    \node at (-1.5, 0) {$\cdots$};
    \end{tikzpicture}
\end{equation}
The boundary map sends an edge to the sum of two vertices on either side of it,
\begin{equation}
    \partial e_i = v_i + v_{i+1},
\end{equation}
where \(i+1\) is regarded modulo \(d\). 

The corresponding cochain complex for the Ising model, \(\mathsf{I}^{\bullet}\), has a coboundary map that sends vertices to edges,
\begin{equation}
    \exd \mathbf{v}_i = \mathbf{e}_{i-1} + \mathbf{e}_i.
\end{equation}
To match Definition~\ref{def:SparseBasedChainComplex}, we organize the dual cochain complex
\(\mathsf{I}^{\bullet}\) such that coboundary maps still go to the right,
\begin{equation}
    \mathsf{I}^\bullet = \left( 0 \to \mathsf{I}^0 \xrightarrow{\exd} \mathsf{I}^1\right).
\end{equation}
That is, \(\mathsf{I}^1\) is regarded as being in the \((-1)\)-component.

Geometrically, \(\mathsf{I}^\bullet\) is the same as \(\mathsf{I}_\bullet\).
Indeed, there is a degree-\(1\) self-duality (Kramers-Wannier) isomorphism \(\kappa_\bullet: \mathsf{I}^\bullet \xrightarrow{\sim} \mathsf{I}_\bullet\),
\begin{equation} \label{eqn:KWDiagram}
\kappa_\bullet = \left(
\begin{tikzpicture}[scale=0.5, baseline={([yshift=-.5ex]current bounding box.center)}]
  \node (C1)     at (2, 0)    {\small $0$};
  \node (C0)     at (4, 0)    {\small $\mathsf{I}^0$};
  \node (Cm1)    at (6, 0)    {\small $\mathsf{I}^{1}$};

  \node (D1)     at (2, -2.2) {\small $\mathsf{I}_1$};
  \node (D0)     at (4, -2.2) {\small $\mathsf{I}_0$};
  \node (Dm1)    at (6, -2.2) {\small $0$};

  \draw[-stealth] (C1)     --  (C0);
  \draw[-stealth] (C0)     -- node[above]{\small $\exd$} (Cm1);

  \draw[-stealth] (D1)     -- node[below]{\small $\partial$} (D0);
  \draw[-stealth] (D0)     -- (Dm1);

  \draw[-stealth] (C0)     -- node[pos=0.4, left]{\small $\kappa_0$}    (D1);
  \draw[-stealth] (Cm1)    -- node[pos=0.4, left]{\small $\kappa_{1}$} (D0);
\end{tikzpicture}
\right)
\end{equation}
given by
\begin{equation}\label{eqn:KWDuality}
    \kappa_0(\mathbf{v}_i) = e_i,\quad
    \kappa_1(\mathbf{e}_i) = v_{i+1}.
\end{equation}
That is, \(\kappa_\bullet\) sends every dual vertex to the edge to its left, and similarly for dual edges and the vertices to their left.
Via this isomorphism, we say that dual vertices are edge-like, in that applying \(\exd\) to a dual vertex gives a sum of two basis elements from the component to the right.
Likewise, dual edges are vertex-like, as implied by Eq.~\eqref{eqn:KWDuality}.

The Ising model has only one non-trivial homology (logical) class in each of \(H_1(\mathsf{I})\) and \(H_0(\mathsf{I})\), and similarly in \(H^1(\mathsf{I})\) and \(H^0(\mathsf{I})\).
The cycles (logicals) of \(\mathsf{I}_\bullet\) are line-like in the \(H_1(\mathsf{I})\) component, and point-like and mobile in \(H_0(\mathsf{I})\).
We denote the Ising logical class in \(H_i(\mathsf{I})\) by \(l_i\) and that in \(H^i(\mathsf{I})\) by \(\mathbf{{l}}^i\).
Explicitly, we have
\begin{alignat}{3}
    l_{1} &\ni \sum_{e \in \mathsf{B}(\mathsf{I}_1)} e, \qquad & l_0 & \ni w,\\
    \mathbf{l}^1 &\ni \mathbf{f}, \qquad & \mathbf{l}^0 &\ni \sum_{\mathbf{v} \in \mathsf{B}(\mathsf{I}^0)} \mathbf{v},
\end{alignat}
where \(w \in \mathsf{B}(\mathsf{I}_0)\) and \(\mathbf{f} \in \mathsf{B}(\mathsf{I}^1)\) are an arbitrary point and edge in the Ising model.
The equivalence class contains all such choices.
Geometrically, these (co)cycles are represented by lines of edges (vertices), or single isolated vertices (edges).
\begin{subequations}\label{eqn:IsingCyclesTikz}
\begin{align}
    l_1 &\ni
    \begin{tikzpicture}[scale = .8, baseline={([yshift=-.5ex]current bounding box.center)}]
    \foreach \i in {-1, ..., 3}{
        \draw[color = red, line width=1pt] (\i, 0) -- (\i + 1, 0) ;
    }
    \foreach \i in {0, ..., 3}{
        \filldraw[color = black] (\i, 0) circle (2 pt);
    }
    \node at (4.5, 0) {$\cdots$};
    \node at (-1.5, 0) {$\cdots$};
    \end{tikzpicture}
    \\
    l_0 &\ni
    \begin{tikzpicture}[scale = .8, baseline={([yshift=-.5ex]current bounding box.center)}]
    \foreach \i in {-1, ..., 3}{
        \draw[color = black] (\i, 0) -- (\i + 1, 0) ;
    }
    \foreach \i in {1, ..., 3}{
        \filldraw[color = black] (\i, 0) circle (2 pt);
    }
    \filldraw[color = red] (0, 0) circle (2.5 pt);
    \node at (4.5, 0) {$\cdots$};
    \node at (-1.5, 0) {$\cdots$};
    \end{tikzpicture}
    \\
    \mathbf{l}^1 &\ni
    \begin{tikzpicture}[scale = .8, baseline={([yshift=-.5ex]current bounding box.center)}]
    \draw[color = red, line width=1pt] (0, 0) -- (1, 0) ;
    \draw[color = black] (-1, 0) -- (0, 0) ;
    \foreach \i in {1, ..., 3}{
        \draw[color = black] (\i, 0) -- (\i + 1, 0) ;
    }
    \foreach \i in {0, ..., 3}{
        \filldraw[color = black] (\i, 0) circle (2 pt);
    }
    \node at (4.5, 0) {$\cdots$};
    \node at (-1.5, 0) {$\cdots$};
    \end{tikzpicture}
    \\
    \mathbf{l}^0 &\ni
    \begin{tikzpicture}[scale = .8, baseline={([yshift=-.5ex]current bounding box.center)}]
    \foreach \i in {-1, ..., 3}{
        \draw[color = black] (\i, 0) -- (\i + 1, 0) ;
    }
    \foreach \i in {0, ..., 3}{
        \filldraw[color = red] (\i, 0) circle (2.5 pt);
    }
    \node at (4.5, 0) {$\cdots$};
    \node at (-1.5, 0) {$\cdots$};
    \end{tikzpicture}
\end{align}
\end{subequations}
In keeping with our observation that dual vertices are edge-like, the 0-cocycle of \(\mathbf{l}^0\) is an extended line-like object.

With these conventions, the tensor product defining the toric code has nonzero components \(\mathsf{TC}_{1,0,-1}\),
\begin{equation}
    \mathsf{TC}_\bullet = \left( \mathsf{I}_{x1} \otimes \mathsf{I}_y^0 \rightarrow \mathsf{I}_{x0} \otimes \mathsf{I}_y^0 \oplus \mathsf{I}_{x1} \otimes \mathsf{I}_y^1 \rightarrow \mathsf{I}_{x0} \otimes \mathsf{I}_y^1 \right).
\end{equation}
Geometrically, this complex corresponds to a two-dimensional square lattice, as expected.
Algebraically, the only difference with the usual presentation of the toric code is that all components are shifted down by one.
Indeed, basis elements of \(\mathsf{TC}_1 = \mathsf{I}_{x1} \otimes \mathsf{I}_y^0\) are a product of two edge-like basis elements \(e \otimes \mathbf{v}\).
As such, these can be identified as plaquettes, and their boundaries consist of four \(\mathsf{B}(\mathsf{TC}_0)\) elements.
\begin{align} 
    \begin{tikzpicture}[scale = .8, baseline={([yshift=-.5ex]current bounding box.center)}]
        \filldraw[color = red, opacity=0.3] (0, 0) rectangle (1, 1);
        \foreach \i in {0, 1}{
        \draw[color = black] (-0.5, \i) -- (1.5, \i) ;
        }
        \foreach \i in {0, 1}{
        \draw[color = black] (\i,-0.5) -- (\i, 1.5) ;
        }
    \end{tikzpicture}
    &\to
    \begin{tikzpicture}[scale = .8, baseline={([yshift=-.5ex]current bounding box.center)}]
        \foreach \i in {0, 1}{
        \draw[color = black] (-0.5, \i) -- (1.5, \i) ;
        }
        \foreach \i in {0, 1}{
        \draw[color = black] (\i,-0.5) -- (\i, 1.5) ;
        }
        \foreach \i in {0, 1}{
        \draw[color = red, line width=1 pt] (0, \i) -- (1, \i) ;
        }
        \foreach \i in {0, 1}{
        \draw[color = red, line width=1 pt] (\i, 0) -- (\i, 1) ;
        }
    \end{tikzpicture} 
\end{align}
As an equation,
\begin{equation}
    e_i \otimes \mathbf{v}_j \to v_i \otimes \mathbf{v}_j + v_{i+1} \otimes \mathbf{v}_j + e_i \otimes \mathbf{e}_{j-1} + e_i \otimes \mathbf{e}_j.
\end{equation}
Similarly, the boundary of an edge-like chain \(v \otimes \mathbf{v}'\) or \(e \otimes \mathbf{e}'\) in \(\mathsf{B}(\mathsf{TC}_0)\) consits of two vertex-like points in \(\mathsf{B}(\mathsf{TC}_{-1})\).
\begin{equation}
    \begin{tikzpicture}[scale = .8, baseline={([yshift=-.5ex]current bounding box.center)}]
        \foreach \i in {0, 1}{
        \draw[color = black] (-0.5, \i) -- (1.5, \i) ;
        }
        \foreach \i in {0, 1}{
        \draw[color = black] (\i,-0.5) -- (\i, 1.5) ;
        }
        \draw[color = red, line width=1 pt] (0, 0) -- (1, 0) ;
    \end{tikzpicture}
    \to
    \begin{tikzpicture}[scale = .8, baseline={([yshift=-.5ex]current bounding box.center)}]
        \foreach \i in {0, 1}{
        \draw[color = black] (-0.5, \i) -- (1.5, \i) ;
        }
        \foreach \i in {0, 1}{
        \draw[color = black] (\i,-0.5) -- (\i, 1.5) ;
        }
        \foreach \i in {0, 1}{
        \filldraw[color = red] (\i, 0) circle (2.5 pt);
        }
    \end{tikzpicture}
\end{equation}
Again as an equation,
\begin{equation}
    e_i \otimes \mathbf{e}_j \to v_i \otimes \mathbf{e}_j + v_{i+1} \otimes \mathbf{e}_j.
\end{equation}
By organizing the components of the Ising chain complex and its dual as we have, the qubits of the CSS realization of the toric code naturally land at component \(0\), conforming with our convention.

The logical content of the toric code is specified by the homology \(H_0(\mathsf{TC})\).
A basis for \(H_0(\mathsf{TC})\)---encoding \(\bar{Z}\) logicals of the toric code---is obtained from products of classes from the Ising factors:
\begin{equation}
    L_x = l_{x1} \otimes \mathbf{l}^1_y,
    \quad
    L_y = l_{x0} \otimes \mathbf{l}^0_y.
\end{equation}
Here, \(x\) and \(y\) subscripts indicate the \(\mathsf{I}_{x\bullet} \otimes  \mathsf{I}_y^\bullet\) Ising factors, which geometrically correspond to the \(x\) and \(y\) axes of the torus. 
The cycles representing \(l_{1}\) and \(\mathbf{l}^{0}\) are extended line-like objects, while \(\mathbf{l}^{1}\) and \(l_{0}\) have pointline representatives.
Thus, \(L_x\) represents a line-like \(\bar{Z}_x\) logical that is extended in the \(x\) direction, and similarly for \(L_y\) in the \(y\) direction.
We label qubits by these \(\bar{Z}\) operators, so that
the conjugate \(\bar{X}_x\) logical, represented by the dual cohomology class \(\mathbf{L}_x = \mathbf{l}_x^1 \otimes l_{y1}\), is extended in the \(y\) direction.

\subsubsection{Interblock Gates}

\begin{figure}
    \centering
    \includegraphics[width=\linewidth]{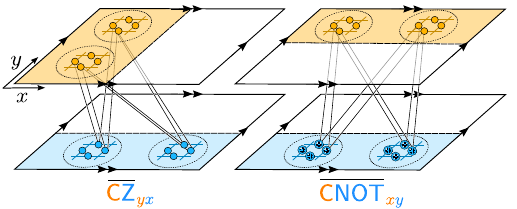}
    \caption{\textbf{Logical \(\CZLogicalColor{y}{x}\) and \(\CNOTLogicalColor{x}{y}\) gates in the toric code.} By acting with \(\mathsf{CZ}\) or \(\mathsf{CNOT}\) gates on only one half of the toric code, it is possible to perform an addressable \(\CZLogical{}{}\) (left) or \(\CNOTLogical{}{}\) (right) gate between only one pair of logical qubits.}
    \label{fig:2TC_CZyx}
\end{figure}

Continuing to label qubits by \(x\) and \(y\) and two code blocks by orange and blue, the four possible addressable \(\CZLogical{}{}\) gates are:
\begin{equation}
    \CZLogicalColor{x}{x}, \quad
    \CZLogicalColor{x}{y}, \quad
    \CZLogicalColor{y}{x}, \quad
    \CZLogicalColor{y}{y}.
\end{equation}
Our formalism finds transversal implementations of all of these, and similarly for \(\CNOTLogicalColor{}{}\).
We describe a representative set of these below.

\paragraph{\texorpdfstring{\(\CZLogicalColor{y}{x}\)}{CZyx} gate}
The first application of the recipe is to an addressable logical \(\CZLogicalColor{y}{x}\) between two code blocks of the toric code, represented by orange and blue colors.
One realization of this gate has already been discussed in Sec.~\ref{subsec:Intuition_Application}, and is shown in Fig.~\ref{fig:Addressable_CZ}(a).
We find an inequivalent implementation of the same logical action, shown in Fig.~\ref{fig:2TC_CZyx}.

Following the steps of the recipe in order, we first specify the logical action. 
The logical action of \(\CZLogicalColor{y}{x}\) on orange \(\textcolor{orange}{\bar{X}}\) operators is
\begin{equation}
    [\CZLogicalColor{y}{x}, \textcolor{orange}{\bar{X}_x}]_{\mathrm{grp}}= 1,
    \quad
    [\CZLogicalColor{y}{x}, \textcolor{orange}{\bar{X}_y}]_{\mathrm{grp}} = \textcolor{dodgerblue}{\bar{Z}_x},
\end{equation}
which, as an action on orange cohomology classes, is equivalent to
\begin{equation}\label{eqn:TCCZyxLogicalAction}
    \textcolor{orange}{\mathbf{L}_x} \mapsto 0, \quad
    \textcolor{orange}{\mathbf{L}_y} \mapsto \textcolor{dodgerblue}{L_x}.
\end{equation}
Under the isomorphism \(\Hom[H^0(\textcolor{orange}{\mathsf{TC}}), H_0(\textcolor{dodgerblue}{\mathsf{TC}})] \isom H_0(\textcolor{dodgerblue}{\mathsf{TC}}) \otimes H_0(\textcolor{orange}{\mathsf{TC}})\), 
Eq.~\eqref{eqn:TCCZyxLogicalAction} corresponds to the logical tensor
\begin{equation}\label{eqn:TCCZyxLogicalCLass}
    (\phii_{*})_0 = \textcolor{dodgerblue}{L_x} \otimes \textcolor{orange}{L_y}
    = \textcolor{dodgerblue}{l_{x1} \otimes \mathbf{l}^1_y} \otimes \textcolor{orange}{l_{x0} \otimes \mathbf{l}_y^0}.
\end{equation}
The next step of the recipe is to specify the rest of the homological action.
It is valid to have the rest of the homological action be trivial, so we set \((\phii_{*})_0 = \phii_*\).

The second, and most important, stage of the recipe is to find a sparse representative of the homological action.
As the toric code is itself a tensor product, one can attempt to find sparse gates between the factor codes and simply multiply them.
To this end, we reshape the Ising logical factors in Eq.~\eqref{eqn:TCCZyxLogicalCLass} into pairs of orange and blue Ising logicals that are highly deformable, and so easy to sparsify.
In particular, there is the following pairing of orange with blue such that each pair represents a deformable line-like logical,
\begin{equation}
    \phii_*
    = (\textcolor{dodgerblue}{l_{x1}} \otimes \textcolor{orange}{l_{x0}}) \otimes ( \textcolor{dodgerblue}{\mathbf{l}^1_y} \otimes \textcolor{orange}{\mathbf{l}_y^0} ) \in [\textcolor{orange}{\mathsf{I}_{x}^\bullet}, \textcolor{dodgerblue}{\mathsf{I}_{x\bullet}}] \otimes [\textcolor{orange}{\mathsf{I}_{y\bullet}}, \textcolor{dodgerblue}{\mathsf{I}_{y}^\bullet}].
\end{equation}

Now it is possible to identify representatives \(\phi^{(x)}_\bullet \in \textcolor{dodgerblue}{l_{x1}} \otimes \textcolor{orange}{l_{x0}}\) and \(\phi^{(y)}_\bullet \in \textcolor{dodgerblue}{\mathbf{l}^1_y} \otimes \textcolor{orange}{\mathbf{l}_y^0} \) which are sparse.
For simple chain complexes like the Ising model, we can rely on geometric intuition to find these representatives.
Indeed, the chain map complexes \([\textcolor{orange}{\mathsf{I}_{x}^\bullet}, \textcolor{dodgerblue}{\mathsf{I}_{x\bullet}}] \isom \mathsf{TC}_\bullet[-1]\) and \([\textcolor{orange}{\mathsf{I}_{y\bullet}}, \textcolor{dodgerblue}{\mathsf{I}_{y}^\bullet}] \isom \mathsf{TC}_\bullet[1]\) are just (degree-shifted) toric codes,
and the desired \(\phi^{(x)}_\bullet\) and \(\phi^{(y)}_\bullet\) are line-like cycles therein.
A poor candidate cycle would be the flat toric code logicals obtained as tensor products of the Ising model cycles from Eq.~\eqref{eqn:IsingCyclesTikz}, which we denote \(\td\phi^{(x)}_\bullet\) and \(\td\phi^{(y)}_\bullet\).
Geometrically, these cycles can be represented as follows:
\begin{equation}\label{eqn:2TCSGateCyclesFlat}
    \includegraphics[valign=c]{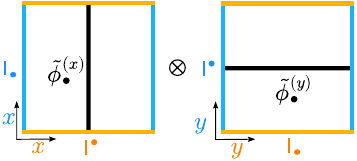}.
\end{equation}

Each of \(\td\phi^{(x)}_\bullet\) and \(\td\phi^{(y)}_\bullet\) are a function from an orange Ising model to a blue Ising model, and the cycles in 
Eq.~\eqref{eqn:2TCSGateCyclesFlat} can be interpreted as a plot of these functions. 
Indeed, \(\td\phi^{(x)}_\bullet\) corresponds to the tensor
\begin{equation}
    \td\phi^{(x)}_\bullet = \sum_{\textcolor{dodgerblue}{e} \in \mathsf{B}(\textcolor{dodgerblue}{\mathsf{I}_1})} \textcolor{dodgerblue}{e} \otimes \textcolor{orange}{w},
\end{equation}
where \(\textcolor{orange}{w} \in \mathsf{B}(\textcolor{orange}{\mathsf{I}_0})\) is arbitrary and fixed.
To evaluate \(\td\phi^{(x)}_\bullet\) as a function we contract this tensor with some orange chain, say a basis element,
\begin{equation}
    \td\phi^{(x)}_\bullet(\mathbf{\textcolor{orange}{v}}) = \delta_{\textcolor{orange}{w v}} \sum_{\textcolor{dodgerblue}{e} \in \mathsf{B}(\textcolor{dodgerblue}{\mathsf{I}_1})} \textcolor{dodgerblue}{e}.
\end{equation}
We observe that the image of \(\mathbf{\textcolor{orange}{v}}\) under \(\td\phi^{(x)}_\bullet\) is exactly the plot in Eq.~\eqref{eqn:2TCSGateCyclesFlat}.
The same holds for \(\td\phi^{(y)}_\bullet\).
We can also see that neither \(\td\phi^{(x)}_\bullet\) nor \(\td\phi^{(y)}_\bullet\) are sparse: in \(\td\phi^{(x)}_\bullet\), there are single points that map to entire lines, and \(\td\phi^{(y)}_\bullet\) maps every edge to the same point.

Sparsifying \(\td\phi^{(x)}_\bullet\) requires deforming the cycles in Eq.~\eqref{eqn:2TCSGateCyclesFlat} such that any horizontal or vertical line intersects the plots at only finitely many points.
The following \emph{tent-shaped} deformations accomplish this, completing step \hyperref[par:sparsify]{(2)} of the recipe. 
\begin{equation}\label{eqn:2TCSGateCyclesTent}
    \includegraphics[valign=c]{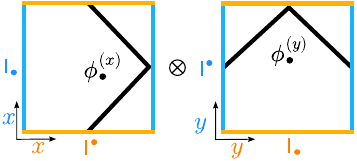}
\end{equation}
While we have drawn these cycles as smooth, they are actually composed of discrete steps in the square lattice.
For completeness, we state the explicit formula corresponding to these chain maps.
Define a \emph{tent map}
\(\tau_\bullet : \mathsf{I}^\bullet \to \mathsf{I}_\bullet\) for odd distances \(d\),
\begin{subequations}
\begin{alignat}{2}
    \tau_1(\mathbf{e}_i) &= \tau_1(\mathbf{e}_{d-1-i}) = v_i \quad& (0\leq i \leq \tfrac{d-1}{2}), \\
    \tau_0(\mathbf{v}_i) &= \tau_0(\mathbf{v}_{d-i}) = e_{i-1} \quad&(1\leq i \leq \tfrac{d-1}{2}),
\end{alignat}
where \(\tau_0(\mathbf{v}_0) = 0\).
For even distance,
\begin{alignat}{2}
    \tau_1(\mathbf{e}_i) &= \tau_1(\mathbf{e}_{d-i}) = v_i \quad& (0\leq i \leq d/2), \\
    \tau_0(\mathbf{v}_i) &= \tau_0(\mathbf{v}_{d+1-i}) = e_{i-1} \quad& (1\leq i \leq d/2),
\end{alignat}
\end{subequations}
where subscripts are modulo \(d\), so that \(\tau_0(\mathbf{v}_0) = \tau_0(\mathbf{v}_{1}) = e_{0}\).
Also using the self-duality \(\kappa_\bullet\), we have
\begin{equation}
    \phi^{(x)}_\bullet =  \tau^{\mathsf{T}}_\bullet, \quad \text{and}\quad
    \phi^{(y)}_\bullet = \kappa_\bullet^{-1} \tau_\bullet \kappa_\bullet^{-1}.
\end{equation}

Third and finally, a circuit of physical \(\mathsf{\textcolor{orange}{C}\textcolor{dodgerblue}{Z}}\) gates is synthesized from \(\phii_\bullet\) through Eq.~\eqref{eq:productofCZ}.
The resulting \(\overline{\mathsf{\textcolor{orange}{C}\textcolor{dodgerblue}{Z}}}_{\textcolor{orange}{y},\textcolor{dodgerblue}{x}}\) circuit is depicted in Fig.~\ref{fig:2TC_CZyx}.

With the method established and illustrated in this example, further toric code gates are presented with less detail.

\paragraph{Another \texorpdfstring{\(\CZLogicalColor{y}{x}\)}{CZyx} gate}
Note that Fig.~\ref{fig:2TC_CZyx} is not the same circuit as in Fig.~\ref{fig:Addressable_CZ}.
The circuit in Fig.~\ref{fig:Addressable_CZ} is obtained through the same recipe, but making a different choice in the extension of the logical action to a homological action.
If one chooses
\begin{equation}\label{eqn:TCCZyxLogicalCLass2}
    \phii_* = \textcolor{dodgerblue}{l_{x1} \otimes \mathbf{l}^1_y} \otimes \textcolor{orange}{l_{x0} \otimes \mathbf{l}_y^0} + \textcolor{dodgerblue}{l_{x1} \otimes \mathbf{l}^0_y} \otimes \textcolor{orange}{l_{x0} \otimes \mathbf{l}_y^1},
\end{equation}
then there is a reshaping and sparsification
\begin{equation}\label{eqn:TCCZyxLogicalCLass2}
    \phii_* = (\textcolor{dodgerblue}{l_{x1}} \otimes \textcolor{orange}{l_{x0}}) \otimes (\textcolor{dodgerblue}{\mathbf{l}^0_y}\otimes \textcolor{orange}{\mathbf{l}_y^1} + \textcolor{dodgerblue}{\mathbf{l}^1_y} \otimes \textcolor{orange}{\mathbf{l}_y^0}) \ni \tau^{\transpose}_\bullet \otimes \kappa^{-1}_\bullet.
\end{equation}
This map has the same logical action as Fig.~\ref{fig:2TC_CZyx}, but has a different action on the homology group \(H_{-1}(\textcolor{orange}{\mathsf{TC}})\).

\paragraph{\texorpdfstring{\(\CZLogicalColor{x}{x}\)}{CZxx} gate}
Starting with a logical action \(\textcolor{dodgerblue}{L_x} \otimes \textcolor{orange}{L_x} \) (or \( \textcolor{dodgerblue}{L_y} \otimes \textcolor{orange}{L_y} \)) yields a \(\CZLogicalColor{x}{x}\) gate (a \(\CZLogicalColor{y}{y}\) gate).
A reshaping into sparsifiable factors is
\begin{multline}
    \textcolor{dodgerblue}{L_x} \otimes \textcolor{orange}{L_x} 
    = (\textcolor{dodgerblue}{\mathbf{l}^1_y} \otimes \textcolor{orange}{l_{x1}}) \otimes (\textcolor{dodgerblue}{l_{x1}} \otimes \textcolor{orange}{\mathbf{l}_y^1} ) \in [\textcolor{orange}{\mathsf{I}_{x}^\bullet}, \textcolor{dodgerblue}{\mathsf{I}_{y}^\bullet}] \otimes [\textcolor{orange}{\mathsf{I}_{y\bullet}},  \textcolor{dodgerblue}{\mathsf{I}_{x\bullet}}],
\end{multline}
with a particular sparsification again being a tent shape.
\begin{equation}\label{eqn:2TCCZCyclesSymmetric}
    \includegraphics[valign=c]{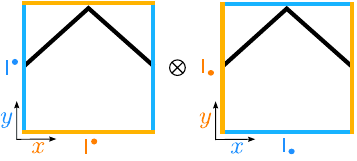}
\end{equation}
The gate is explicitly (suppressing bullets on the right hand side)
\begin{equation}\label{eqn:2TCCZxxFormula}
    \phii_\bullet = \kappa^{-1} \tau  \otimes  \tau^\mathsf{T} \kappa^{-\transpose},
\end{equation}
where the first factor maps \(\textcolor{orange}{x} \to \textcolor{dodgerblue}{y}\) and the second maps \(\textcolor{orange}{y} \to \textcolor{dodgerblue}{x}\).

\paragraph{\texorpdfstring{\(\CNOTLogicalColor{x}{y}\)}{CNOTxy} and \texorpdfstring{\(\CNOTLogicalColor{x}{x}\)}{CNOTxx} gate}
We could obtain a logical \(\CNOTLogical{}{}\) by applying the recipe again, starting from a logical like \(\textcolor{orange}{L_x} \otimes \textcolor{dodgerblue}{\mathbf{L}_y}\) for a \(\CNOTLogicalColor{x}{y}\) gate.
However, it is easier to use the self-duality isomorphism \(\kappa^{-1} \otimes \kappa :\textcolor{dodgerblue}{\mathsf{TC}_\bullet} \xrightarrow{\sim} \textcolor{dodgerblue}{\mathsf{TC}^\bullet}\) to map the target of the chain maps we already found to cochains:
\begin{equation}
    \textcolor{orange}{\mathsf{TC}^\bullet} \to \textcolor{dodgerblue}{\mathsf{TC}_\bullet}\xrightarrow{\sim} \textcolor{dodgerblue}{\mathsf{TC}^\bullet}.
\end{equation}
This composition then maps  \(\textcolor{orange}{\mathsf{TC}^\bullet} \to \textcolor{dodgerblue}{\mathsf{TC}^\bullet}\), and so defines a \(\mathsf{CNOT}\) gate.
Physically, this procedure corresponds to conjugating the \(\textcolor{dodgerblue}{\mathsf{TC}_\bullet}\) side of the \(\mathsf{CZ}\) circuit by Hadamards and making a half-unit-cell translation of the target of all gates.
Such a gate is shown in Fig.~\ref{fig:2TC_CZyx}.

Explicitly, a chain map corresponding to a \(\CNOTLogicalColor{x}{y}\) gate is
\begin{equation}
    \phii_\bullet =  \tau  \otimes  \kappa^{-1}\tau^\mathsf{T} \kappa^{-1},
\end{equation}
where the first factor maps \(\textcolor{orange}{x} \to \textcolor{dodgerblue}{y}\) and the second maps \(\textcolor{orange}{y} \to \textcolor{dodgerblue}{x}\).

\subsubsection{Intra-block Gates}

As discussed in Section~\ref{sec:IntrablockGates}, chain maps can also specify commuting intra-block gates, acting on only one copy of the toric code.
However, in our construction, these chain maps must obey some additional constraints, depending on whether they describe \(\CZLogical{}{}\), \(\bar{\mathsf{S}}\), or \(\CNOTLogical{}{}\) gates.
Here, we describe intra-block gates of these kinds, and also the addressable Hadamard from Section~\ref{subsubsec:2TCExhaustiveness}.

\paragraph{\texorpdfstring{\(\CZLogical{x}{y}\)}{CZxy} gate}
Intra-block diagonal Clifford gates (\(\CZLogical{}{}\) and \(\bar{\mathsf{S}}\)) are obtained from sparse, symmetric \(\ZZ_4\)-valued chain maps \(\widetilde{\phii}_\bullet\) acting on a sparse, nice lift of the chain complex \(\mathsf{TC}_\bullet\).
However, for \(\CZLogical{}{}\) gates, we can essentially ignore the \(\ZZ_4\) lift and instead just directly use the chain map \(\phii_\bullet\) for \(\CZLogicalColor{y}{x}\) that we already found, via Eq.~\eqref{eq-intrablockClZ}.
The resulting logical action becomes symmetrized (up to a \(\bar{Z}\) logical).
For the case of our \(\phii_\bullet\), we get a symmetrized logical action
\begin{equation}
    \mathbf{L}_x \mapsto L_y, \quad \mathbf{L}_y \mapsto L_x.
\end{equation}
This corresponds to an intra-block \(\CZLogical{x}{y}\) gate, as expected.

The procedure to construct the corresponding physical gate is to move the controls of the physical \(\mathsf{\textcolor{orange}{C}\textcolor{dodgerblue}{Z}}\) gates in the \(\CZLogicalColor{y}{x}\) circuit down to the blue target block, and replace gates by \(Z\) operators when controls and targets collide.
Such an intra-block logical \(\overline{\mathsf{CZ}}_{x,y}\) gate is shown in Fig.~\ref{fig:Addressable_CZ}(b).

\paragraph{\texorpdfstring{\(\bar{\mathsf{S}}_x\)}{Sx} gate}
Lifting the \(\CZLogicalColor{x}{x}\) gate to an \(\bar{\mathsf{S}}_x\) requires paying attention to the \(\ZZ_4\) lift.
However, as we have constructed the chain map corresponding to \(\CZLogicalColor{x}{x}\) from a tensor product of chain maps on classical codes that have obvious lifts, finding the \(\ZZ_4\) version of the chain map reduces to repeating the calculations we already made, but now using the \(\ZZ_4\) Ising model and \(\ZZ_4\) toric code.
Additionally, we must ensure that the chain map we find is symmetric.

The \(\ZZ_4\) Ising model \(\tilde{\mathsf{I}}_\bullet\) has the same basis of edges \(\tilde{e}_i \in \mathsf{B}(\tilde{\mathsf{I}}_1)\) and vertices \(\tilde{v}_i \in \mathsf{B}(\tilde{\mathsf{I}}_0)\), and a boundary map \(\partial\tilde{e}_i = \tilde{v}_{i+1} - \tilde{v}_i\).
The corresponding coboundary map is \(\exd \tilde{\mathbf{v}}_i = \tilde{\mathbf{e}}_{i-1} - \tilde{\mathbf{e}}_{i}\).
We have a degree-1 self-duality \(\tilde{\kappa}_\bullet : \tilde{\mathsf{I}}^\bullet \xrightarrow{\sim} \tilde{\mathsf{I}}_\bullet\),
\begin{equation}
    \tilde{\kappa}_0(\tilde{\mathbf{v}}_i) = \tilde{e}_i,
    \quad
    \tilde{\kappa}_1(\tilde{\mathbf{e}}_i) = -\tilde{v}_{i+1},
\end{equation}
and tent map \(\tilde{\tau}_\bullet:\tilde{\mathsf{I}}^\bullet \to \tilde{\mathsf{I}}_\bullet\)
for odd distances \(d\),
\begin{subequations}
\begin{alignat}{2}
    \td\tau_1(\td{\mathbf{e}}_i) &= \td\tau_1(\td{\mathbf{e}}_{d-1-i}) = \td{v}_i \quad& (0\leq i \leq \tfrac{d-1}{2}), \\
    \td\tau_0(\td{\mathbf{v}}_i) &= -\td\tau_0(\td{\mathbf{v}}_{d-i}) = -\td{e}_{i-1} \quad& (1\leq i \leq \tfrac{d-1}{2}),
\end{alignat}
and even distance
\begin{alignat}{2}
    \td{\tau}_1(\td{\mathbf{e}}_i) &= \td{\tau}_1(\td{\mathbf{e}}_{d-i}) = \td{v}_i \quad& (0\leq i \leq d/2), \\
    \td{\tau}_0(\td{\mathbf{v}}_i) &= -\td{\tau}_0(\td{\mathbf{v}}_{d+1-i}) = -\td{e}_{i-1} \quad& (1\leq i \leq d/2).
\end{alignat}
\end{subequations}
As before, \(\tilde{\tau}\) only acts nontrivially in the \(H^1(\tilde{\mathsf{I}}) \to H_0(\tilde{\mathsf{I}})\) component of (co)homology.

The tensor product \(\widetilde{\mathsf{TC}}_\bullet = \tilde{\mathsf{I}}_\bullet \otimes \tilde{\mathsf{I}}^\bullet\) is the \(\ZZ_4\) toric code.
All the chain maps we previously built from \(\kappa\) and \(\tau\) for the \(\ZZ_2\) toric code lift to \(\ZZ_4\) just by replacing factors by \(\tilde{\kappa}\) and \(\tilde{\tau}\), and using the \(\ZZ_4\) tensor product (Appendix~\ref{app:CMCDetails}).

In particular, to arrive at an \(\bar{\mathsf{S}}_x\) gate, we should consider the map \(\phii_\bullet\) from Eq.~\eqref{eqn:2TCCZCyclesSymmetric}, which encoded a \(\CZLogicalColor{x}{x}\).
Note that, in Eq.~\eqref{eqn:2TCCZxxFormula}, the \(\kappa^{-1}\tau\) factor maps \(\mathsf{I}_x^{\bullet} \to \mathsf{I}_y^{\bullet}\) while the \(\tau^\transpose\kappa^{-\transpose}\) factor maps \(\mathsf{I}_{y\bullet} \to \mathsf{I}_{x\bullet}\).
When acting on a single code it is helpful to write the lift \(\widetilde{\phii}_\bullet\) as
\begin{equation}\label{eqn:2TCSGateChainMap}
    \widetilde{\phii}_\bullet = \sigma\circ(\td{\kappa}^{-1} \td{\tau}  \otimes  \td{\tau}^\mathsf{T} \td{\kappa}^{-\transpose}),
\end{equation}
where \(\sigma\) swaps the \(\ZZ_4\)-Ising \(x\) and \(y\) factors. This way, the Ising factors continue to be ordered \((x,y)\) in the output.

This \(\widetilde{\phii}_\bullet\) has the correct action on homology to be an \(\bar{\mathsf{S}}_x\) gate. by the same calculation that showed that \(\phii_\bullet\) was a \(\CZLogicalColor{x}{x}\) gate.
It remains to show that \(\widetilde{\phii}_\bullet = \widetilde{\phii}_\bullet^\transpose\) is symmetric.
Indeed, this is straightforward:
\begin{equation}
    \widetilde{\phii}_\bullet = \sigma \circ(\td{\kappa}^{-1} \td{\tau}  \otimes  \td{\tau}^\mathsf{T} \td{\kappa}^{-\transpose}) = (\td{\tau}^\mathsf{T} \td{\kappa}^{-\transpose} \otimes  \td{\kappa}^{-1} \td{\tau}  ) \circ \sigma = \widetilde{\phii}_\bullet^\transpose.
\end{equation}
We used that \(\sigma\circ (\alpha \otimes \beta) = (\beta \otimes \alpha) \circ \sigma\).
Graphically, this is represented as follows, where \(\bar{\kappa} \equiv \td{\kappa}^{-1}\).
\begin{equation}
    \includegraphics[valign=c]{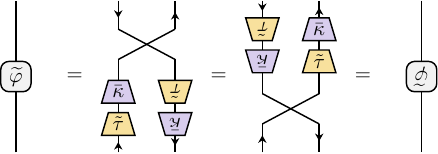}
\end{equation}

Thus, an \(\bar{\mathsf{S}}_x\) gate is immediately obtained as a circuit of \(\mathsf{CZ}\) and \(\mathsf{S}^{(\dagger)}\) gates via Eq.~\eqref{eq-intrablockdiaggate}, illustrated as follows.
\begin{equation}\label{eqn:2TCSIllustration}
    \includegraphics[valign=c]{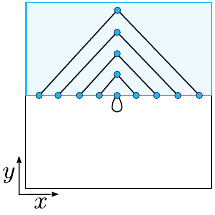}
\end{equation}
(Many gates have not been illustrated for clarity.)
\(\mathsf{S}^{(\dagger)}\) gates occur along an inverted-tent shaped path within the blue region, with only the \(\mathsf{S}\) at tip of the tent being drawn as a self-targetting \(\mathsf{CZ}\).
Gates on \(x\)-aligned segments of the path are \(\mathsf{S}\), while gates on \(y\)-aligned segments are \(\mathsf{S}^\dagger\).
This pattern of \(\mathsf{S}^{(\dagger)}\) is specified from the pattern of \(\pm1\) in \(\td\tau_\bullet\) and \(\td\kappa_\bullet\), and ensures that a stabilizer \(X_{\exd \mathbf{v}}\) is mapped to a stabilizer \(+Z_{\partial b}\), rather than potentially \(-Z_{\partial b}\).

An explicit \(3\times 3\) instance of the \(\bar{\mathsf{S}}_x\) gate is shown below.
\begin{equation}
    \bar{\mathsf{S}}_x^{(3\times 3)}=\includegraphics[valign=c]{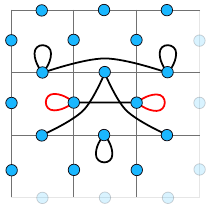}
\end{equation}
\(\mathsf{S}\) gates are again drawn as self-targeting \(\mathsf{CZ}\)s, and \(\mathsf{S}^\dagger\) is drawn in red.

Either with this explicit example, or with the general gate, it can be checked that a vertical \(\bar{X}_x\) logical triggers a line of \(\mathsf{CZ}\) gates and one \(\mathsf{S}\) gate in Eq.~\eqref{eqn:2TCSIllustration}, which together conjugate \(\bar{X}_x\) to itself multiplied by a horizontal \(\bar{Z}_x\),
\begin{equation}
    \bar{X}_x\bar{\mathsf{S}}^\dagger_x \bar{X}_x \bar{\mathsf{S}}_x = -i \bar{Z}_x.
\end{equation}
A horizontal \(\bar{X}_y\) logical can be moved outside the support of the gates, so the gate affects it trivially.

\paragraph{\texorpdfstring{\(\CNOTLogical{x}{y}\)}{CNOTxy} gate}
A sufficient condition for being able to realize a previously-interblock logical \(\CNOTLogical{x}{y}\) as an intra-block gate is that, when controls and targets are moved to the same block, no control acts on the same qubit as any target.
This is a fairly harsh requirement, but we see from inspecting Fig.~\ref{fig:2TC_CZyx} that it is almost satisfied by the \(\CNOTLogicalColor{x}{y}\) gate of the toric code.
Or more correctly, this disjointness condition is almost satisfied by a translated chain map:
\begin{equation}
    \phii_\bullet =  \sigma \circ (T_{d/2} \tau  \otimes  \kappa^{-1}\tau^\mathsf{T} \kappa^{-1}),
\end{equation}
where \(T_\delta : \mathsf{I}_{\bullet} \xrightarrow{\sim} \mathsf{I}_{\bullet}\) is a translation by \(\delta\).
Both controls and targets only act on half the system, so translating by half the distance results in nearly non-overlapping gates.
Any intersections at the boundaries can be resolved by not using a complete tent for \(\tau_\bullet\), but instead a tent with a flat top of \(O(1)\) width, as illustrated.
\begin{equation}\label{eqn:2TCCNOTTruncated}
    \includegraphics[valign=c]{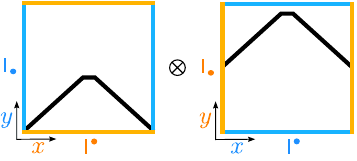}
\end{equation}
This achieves the non-overlapping condition, and allows an intra-block \(\CNOTLogical{x}{y}\) gate.

The physical circuit for the \(\CNOTLogical{x}{y}\) gate is schematically illustrated as follows (with many gates omitted).
\begin{equation}
    \includegraphics[valign=c]{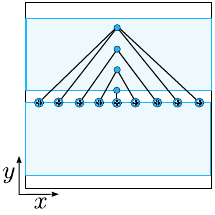}
\end{equation}
As in the \(\bar{\mathsf{S}}_x\) gate, a vertcal \(\bar{X}_x\) logical triggers a line of \(\mathsf{CNOT}\) gates, producing a perpendicular \(\bar{X}_y\) logical.
An \(\bar{X}_y\) logical can be moved into the target region, and is affected trivially.
Due to the compression of the support of the circuit through the flat top tent, some qubits control more gates than they did in the interblock \(\CNOTLogicalColor{x}{y}\) gate.

\paragraph{\texorpdfstring{\(\bar{\mathsf{H}}_x\)}{Hx} gates}
While we have not constructed a concise description of Hadamard gates as chain maps, the proof of Lemma~\ref{lem:CliffordGeneration} in Sec.~\ref{subsubsec:2TCExhaustiveness} shows how an addressable Hadamard can be constructed from an addressable \(\bar{\mathsf{S}}\) and a Hadamard-conjugated \(\bar{\mathsf{S}}\) (a gate of the signature \((-1)^{m/2}\)).
Either of these are handled by the chain map formalism, and so we also have an addressable Hadamard as a product of the synthesized physical circuits.
In particular, an addressable \(\bar{\mathsf{H}}_x\) gate is realized as follows.
\begin{equation}\label{eqn:2TC_HGate}
    \includegraphics[valign=c]{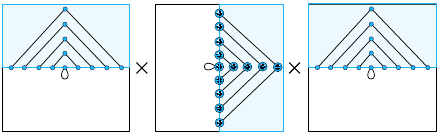}
\end{equation}
The outer gates of the product are \(\bar{\mathsf{S}}_x\) and the inner gate is \(\mathsf{\bar{S}}_x\) conjugated by the global logical Hadamard of the toric code, \(\mathsf{\bar{H}\otimes\bar{H}}\).

\subsection{Diagonal Non-Cliffords and Hadamard-free Cliffords in the 3D Toric Code}
\label{subsec:3DToricGate}

The recipe of Sec.~\ref{subsec:GateRecipe} also applies to higher levels of the chain map hierarchy.
This section applies the recipe to the three-dimensional toric code~\cite{Hamma20053dToric}, finding an addressable \(\mathsf{CCZ}\) gate, as well as Hadamard-free Cliffords (generated by \(\mathsf{S}\), \(\mathsf{CNOT}\)).

The three dimensional toric code can be expressed as a tensor product of three Ising models,
\begin{equation}
    \mathsf{3TC}_\bullet \isom \mathsf{I}_{x\bullet} \otimes \mathsf{I}_{y\bullet} \otimes \mathsf{I}_z^\bullet.
\end{equation}
When placing qubits at the degree-0 component, this pattern of chain and cochain complexes corresponds to a three-dimensional toric code with one-dimensional \(Z\) logicals.
We label logical qubits by \(\{x,y,z\}\), corresponding to the extended direction of their \(\bar{Z}\) operators, so that \(\bar{X}_\alpha\) logicals span planes orthogonal to \(\alpha\).

Note that, unlike the two-dimensional toric code, the three-dimensional toric code is not self-dual, \(\mathsf{3TC}_\bullet \not\isom \mathsf{3TC}^\bullet\).
As such, the three-dimensional toric code cannot have a transversal Hadamard.
Indeed, the \(Z\)-distance of \(\mathsf{3TC}_\bullet\) is \(d\) (the linear dimension of the three-dimensional torus), while the \(X\)-distance is \(d^2\).
Any circuit which conjugates a line-like \(\bar{Z}\) to a plane-like \(\bar{X}\) must, by the pigeonhole principle, act on a single qubit with at least \(d\) two-body gates.

\subsubsection{Interblock Gates}

\newcommand{\CCZLogicalColor}[3]{\overline{\mathsf{\textcolor{red}{C}\textcolor{orange}{C}\textcolor{dodgerblue}{Z}}}_{\textcolor{red}{#1}\textcolor{orange}{#2}\textcolor{dodgerblue}{#3}}}
\newcommand{\CCZLogical}[3]{\overline{\mathsf{CCZ}}_{#1#2#3}}

\newcommand{\CCNOTLogicalColor}[3]{\overline{\mathsf{\textcolor{red}{C}\textcolor{orange}{C}\textcolor{dodgerblue}{NOT}}}_{\textcolor{red}{#1}\textcolor{orange}{#2}\textcolor{dodgerblue}{#3}}}
\newcommand{\CCNOTLogical}[3]{\overline{\mathsf{CCNOT}}_{#1#2#3}}

\paragraph{\texorpdfstring{\(\CNOTLogicalColor{x}{z}\)}{CNOTxz} gate}
The lack of a transversal Hadamard means that we cannot derive an addressable \(\CNOTLogicalColor{x}{z}\) by conjugating a \(\CZLogicalColor{x}{z}\).
Instead, we have to find the gate directly.
Following the recipe, we start with the tensor product representation of the desired logical action,
\begin{equation}
    \textcolor{dodgerblue}{\mathbf{L}^z} \otimes \textcolor{orange}{L_x}
    = \textcolor{dodgerblue}{\mathbf{l}_{x}^0 \otimes \mathbf{l}^0_{y} \otimes l_{z0}} \otimes \textcolor{orange}{l_{x1} \otimes l_{y0} \otimes \mathbf{l}_z^1},
\end{equation}
which we regroup into deformable factors,
\begin{equation}
    \textcolor{dodgerblue}{\mathbf{L}^z} \otimes \textcolor{orange}{L_x}
    = (\textcolor{dodgerblue}{l_{z0}} \otimes \textcolor{orange}{l_{x1}} ) \otimes (\textcolor{dodgerblue}{\mathbf{l}_{y}^0} \otimes \textcolor{orange}{l_{y0}} ) \otimes (\textcolor{dodgerblue}{\mathbf{l}^0_x} \otimes \textcolor{orange}{\mathbf{l}_z^1}),
\end{equation}
and sparsify as tent-shaped deformations of these cycles.
\begin{equation}
    \includegraphics[valign=c]{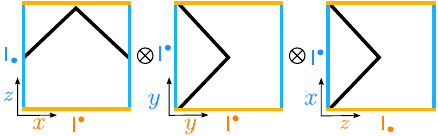}
\end{equation}
Explicitly, this map is
\begin{equation}
    \phii_\bullet = (\tau)_{\textcolor{orange}{x}\to \textcolor{dodgerblue}{z}} 
    \otimes (\kappa^{-1} \tau^\transpose)_{\textcolor{orange}{y}\to \textcolor{dodgerblue}{y}} 
    \otimes (\kappa^{-1} \tau^\transpose \kappa^{-1})_{\textcolor{orange}{z}\to \textcolor{dodgerblue}{x}}.
\end{equation}

This tensor product of Ising chain maps encodes a transversal logical \(\overline{\mathsf{\textcolor{orange}{C}\textcolor{dodgerblue}{NOT}}}_{\textcolor{orange}{x}, \textcolor{dodgerblue}{z}}\) gate between two copies of the three-dimensional toric code, as shown.
\begin{equation}\label{eqn:3TCCNOT}
    \includegraphics[valign=c]{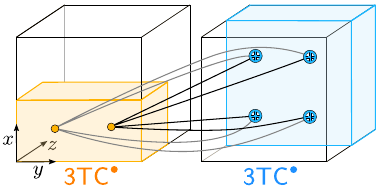}
\end{equation}
One quadrant of a plane-like \(\textcolor{orange}{\bar{X}_x}\) operator is mapped under this gate to an entire plane-like \(\textcolor{dodgerblue}{\bar{X}_z}\) operator.
We can similarly find other addressable \(\CNOTLogicalColor{\alpha}{\beta}\) gates.

\begin{figure}
    \centering
    \includegraphics[width=\linewidth]{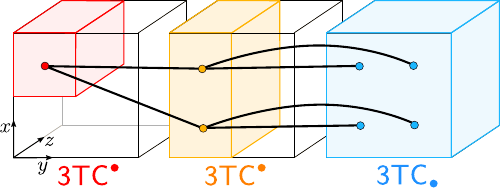}
    \caption{\textbf{Addressible \(\CCZLogicalColor{z}{x}{y}\) gate in the 3D toric code.}
    Physical \(\mathsf{CCZ}\) gates act on one \(xy\) quadrant in the red code, one \(y\) half of the orange code, and the entire blue code.
    The three qubits on which each \(\mathsf{CCZ}\) gate acts are (i)~a red qubit, (ii)~either the orange qubit in the same position as the red, or at a position reflected through the \(yz\) plane, and (iii) either a blue qubit at a position shifted by a half-unit-cell translation (in all axes) from the orange qubit, or at a position reflected through the \(zx\) plane.}
    \label{fig:3TCGate}
\end{figure}

\paragraph{\texorpdfstring{\(\CCZLogicalColor{z}{x}{y}\)}{CCZzxy} gate}
Proceeding to the second level of the chain map hierarchy, we derive the \(\overline{\mathsf{\textcolor{red}{C}\textcolor{orange}{C}\textcolor{dodgerblue}{Z}}}_{\textcolor{red}{z},\textcolor{orange}{x},\textcolor{dodgerblue}{y}}\) gate in Fig.~\ref{fig:3TCGate}.

Following the recipe, we first specify the logical action as a logical tensor. To save space, we will omit most ``\(\otimes\)'' symbols.
\begin{equation}
    (\phii_{*})_0 = \textcolor{dodgerblue}{L_x} \otimes \textcolor{orange}{L_y} \otimes \textcolor{red}{L_z} 
    = 
    \textcolor{dodgerblue}{l_{x0} \, l_{y1} \, \mathbf{l}_z^1} \otimes 
    \textcolor{orange}{l_{x1} \, l_{y0} \, \mathbf{l}_z^1} \otimes \textcolor{red}{l_{x0}\, l_{y0}\, \mathbf{l}_z^0} .
\end{equation}
While one can choose a trivial extension of \((\phii_{*})_0\) to \(\phii_{*}\), a slightly lower circuit depth can be achieved by including additional nontrivial homological actions.
Simulataneously making this extension and reshaping the result gives
\begin{align}
    \phii_* = \,&(\textcolor{dodgerblue}{l_{x0}} \, \textcolor{orange}{l_{x1}}\, \textcolor{red}{l_{x0}}+ 
    \textcolor{dodgerblue}{l_{x1}}\, \textcolor{orange}{l_{x0}}\,\textcolor{red}{l_{x0}}) \nnb \\
    &\otimes (\textcolor{dodgerblue}{l_{y1}}\,\textcolor{orange}{l_{y0}}\,\textcolor{red}{l_{y0}}) \nnb \\
    &\otimes (\textcolor{dodgerblue}{\mathbf{l}_{z}^1}\,\textcolor{orange}{\mathbf{l}_{z}^1}\,\textcolor{red}{\mathbf{l}_{z}^0}
    + \textcolor{dodgerblue}{\mathbf{l}_{z}^1}\,\textcolor{orange}{\mathbf{l}_{z}^0}\, \textcolor{red}{\mathbf{l}_{z}^1}
    + \textcolor{dodgerblue}{\mathbf{l}_{z}^0}\,\textcolor{orange}{\mathbf{l}_{z}^1}\,\textcolor{red}{\mathbf{l}_{z}^1}).
\end{align}%
These homology classes represent line-like logicals in a three-dimensional code---the three-dimensional toric code again.
They can be sparsified as
\begin{equation}\label{eqn:3TCGateFactors}
    \includegraphics[valign=c]{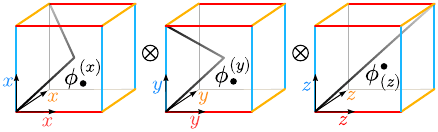}
\end{equation}
which describes the gate of Fig.~\ref{fig:3TCGate}.

\begin{figure}
    \centering
    \includegraphics[width=\linewidth]{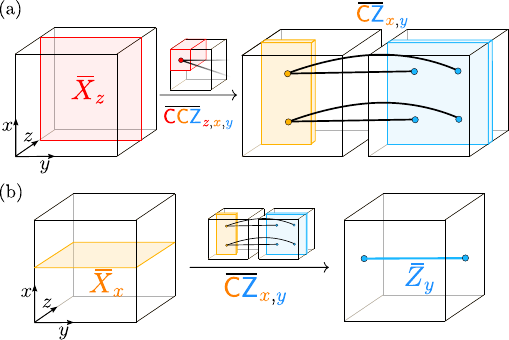}
    \caption{\textbf{Action of the \(\CCZLogicalColor{z}{x}{y}\) gate on logicals.}
    (a)~The \(\CCZLogicalColor{z}{x}{y}\) gate from Fig.~\ref{fig:3TCGate} conjugates a red \textcolor{red}{\(\bar{X}_z\)} logical (oriented in the \(xy\) plane) to itself multiplied by a \(\CZLogicalColor{x}{y}\) gate acting on the orange and blue codes.
    (b)~Indeed, this gate itself conjugates an orange \textcolor{orange}{\(\bar{X}_x\)} logical (oriented in the \(yz\) plane) to itself multiplied by a blue \textcolor{dodgerblue}{\(\bar{Z}_y\)} logical (oriented along \(y\)).}
    \label{fig:3TCGateAction}
\end{figure}

To digest the structure of this gate, we examine its action on plane-like \(\bar{X}\) logicals of three stacks of the three-dimensional toric code (Fig.~\ref{fig:3TCGateAction}).
Considering red \(\textcolor{red}{\bar{X}}\) logicals first, we observe that only \(\textcolor{red}{\bar{X}_z}\) logicals oriented in the \(xy\)-plane cannot be deformed to completely avoid the gate, so only this qubit is affected by the gate in the red code.
Conjugating such an \(\textcolor{red}{\bar{X}_z}\) logical by the gate produces a circuit of \(\mathsf{\textcolor{orange}{C}\textcolor{dodgerblue}{Z}}\) gates supported in the same \(xy\)-plane as the \(\textcolor{red}{\bar{X}_z}\) logical [Fig.~\ref{fig:3TCGateAction}(a)].
The controls are supported only on the left hand side of the orange code, but the targets act on the entire plane in the blue code.
This is simply the generalization of the addressable \(\overline{\mathsf{\textcolor{orange}{C}\textcolor{dodgerblue}{Z}}}\) of the two-dimensional toric code (Fig.~\ref{fig:Addressable_CZ}) to the three-dimensional case.
The only orange \(\textcolor{orange}{\bar{X}}\) logical which cannot avoid the gate is that in the \(yz\)-plane, \(\textcolor{orange}{\bar{X}_x}\), which intersects the half-plane in a line.
The microscopic \(\textcolor{orange}{X}\) operators along this line are each acted upon by two \(\mathsf{\textcolor{orange}{C}\textcolor{dodgerblue}{Z}}\) gates, with targets in the blue code related by a reflection of \(y\).
The result of conjugating the \(\textcolor{orange}{\bar{X}_x}\) logical is thus a blue \(\textcolor{dodgerblue}{\bar{Z}_y}\) logical running along the \(y\) direction, perpendicular to the \(\textcolor{dodgerblue}{\bar{X}_y}\) logical in the \(zx\)-plane [Fig.~\ref{fig:3TCGateAction}(b)].
The logical action of the gate is thus
\begin{equation}
    \CCZLogicalColor{z}{x}{y},
\end{equation}
consistent with the guarantee of the recipe.

\paragraph{No Toffoli gates}
It is likely impossible to find a transversal logical \(\CCNOTLogicalColor{\alpha}{\beta}{\gamma}\) gate (Toffoli gate) in the three-dimensional toric code.
Indeed, if it were possible to implement any logical \(\CCNOTLogicalColor{\alpha}{\beta}{\gamma}\), these could be used to compile any \(\overline{\mathsf{C}^\ell\mathsf{NOT}}\) gate from higher in the Clifford hierarchy, violating a (still hypothetical, though see Section~\ref{sec:discussion}) extension of Bravyi-K\"onig to nonlocal circuits.

In terms of the recipe, while one can easily find dense logical circuits which implement \(\CCNOTLogicalColor{\alpha}{\beta}{\gamma}\), there is an obstruction to sparsifying their associated chain tensors.
Such tensors correspond geometrically to four-dimensional logicals in a nine-dimensional toric code.
In order to be transversal, only \(O(1)\) gates can affect a fixed qubit \(\textcolor{red}{\mathbf{e}}\) in the red code.
This means that the intersection of the four-dimensional cycle with any of the six-dimensional cardinal planes formed by fixing three of the coordinates to correspond to the qubit \(\textcolor{red}{\mathbf{e}}\) must be \(O(1)\).
This is not possible: the intersection of these four- and six-dimensional surfaces in an ambient nine-dimensional space will generically be one-dimensional, and the number of gates acting on \(\textcolor{red}{\mathbf{e}}\) will grow with the linear dimension of the codes.
This holds even for nonlocal gates.
Compare this to the \(\CCZLogicalColor{\alpha}{\beta}{\gamma}\) case: the logical corresponds to a three-dimensional surface [the product of curves in Eq.~\eqref{eqn:3TCGateFactors}], which can be made to only have point-like intersection with any cardinal six-dimensional plane in nine ambient dimensions.
Indeed, each of the factors in Eq.~\eqref{eqn:3TCGateFactors} has at most two intersections with any cardinal plane.
This argument is given more generally in Section~\ref{sec:discussion}.

\subsubsection{Intra-block gates}

\paragraph{\texorpdfstring{\(\CCZLogical{z}{x}{y}\)}{CCZzyx} gate}
As discussed in Section~\ref{subsubsec:IntraBlockClZandCNOT}, to realize the multi-block logical \(\CCZLogicalColor{z}{x}{y}\) within a single block, we can just apply Eq.~\eqref{eq-intrablockClZ}.
The result is logical and, up to corrections by \(\CZLogical{}{}\) and \(\bar{Z}\) operators (all of which we know how to perform in \(\mathsf{3TC}\)), acts on \(\bar{X}_\alpha\) operators with the symmetrized logical action of \(\phii_\bullet\) from Eq.~\eqref{eqn:3TCGateFactors}.

We claim this action is simply
\begin{equation}
    \CCZLogical{x}{y}{z}.
\end{equation}
Indeed, the unsymmetrized \(\phii_*\) is rank-1: it takes \emph{a particular} logical from each code (red, orange, or blue) to the \(\CZLogicalColor{}{}\) between the other two logicals in the other two codes.
The symmetrization takes \emph{any} logical to the symmetrized \(\CZLogical{}{}\) between the other two logical qubits---precisely the logical action of \(\CCZLogical{x}{y}{z}\).

It is instructive to compare this result to what we would get with a non-addressable \(\CCZLogicalColor{}{}{}\) gate.
There, the symmetrization of the logical action would include an even number of \(\CCZLogical{x}{y}{z}\) actions, arising from different combinations of the interblock gates.
The resulting gate would thus be trivial logically.

\paragraph{\texorpdfstring{\(\bar{\mathsf{S}}_{x}\)}{Sx} gate}
Single-logical-qubit gates, such as a Clifford \(\bar{\mathsf{S}}_{x}\) gate, cannot be constructed by symmetrizing some non-symmetric \(\phii_\bullet\).
Rather, we must find a symmetric, sparse chain map acting on a sparse, nice \(\ZZ_4\) lift.
To find an \(\bar{\mathsf{S}}_{x}\) gate in the 3D toric code, we can extend the corresponding lifted chain map for the 2D toric code.

Indeed, the \(\ZZ_2\)-logical tensor corresponding to a \(\CZLogicalColor{x}{x}\) action is
\begin{align}
    \textcolor{dodgerblue}{L_x} \otimes \textcolor{orange}{L_x}
    &=  \textcolor{dodgerblue}{l_{x1} \otimes l_{y0} \otimes \mathbf{l}_z^1} \otimes \textcolor{orange}{l_{x1} \otimes l_{y0} \otimes \mathbf{l}_z^1}, \\
    &= ( \textcolor{dodgerblue}{l_{x1} \otimes \mathbf{l}_z^1} \otimes \textcolor{orange}{l_{x1} \otimes \mathbf{l}_z^1}) \otimes ( \textcolor{dodgerblue}{l_{y0}} \otimes \textcolor{orange}{l_{y0}}).
\end{align}
The first factor is the logical tensor corresponding to a \(\CZLogicalColor{x}{x}\) in the 2D toric code, and the second factor is pointlike.
Thus, letting \(\widetilde{\phii}_\bullet\) be the symmetric \(\ZZ_4\)-chain map from Eq.~\eqref{eqn:2TCSGateChainMap}, and choosing any vertex representative \(\widetilde{w}_y \in \tilde{\mathsf{I}}_0\), we get another symmetric \(\ZZ_4\)-chain map
\begin{equation}
    \td{e}_x \otimes \td{v}_y \otimes \td{\mathbf{e}}_z \mapsto \widetilde{\phii}_\bullet(\td{e}_x \otimes \td{\mathbf{e}}_z) \otimes \widetilde{w}_y \widetilde{\mathbf{w}}_y(\td{v}_y).
\end{equation}
Synthesizing this as a gate, we see that this is simply an embedding of the 2D toric code \(\bar{\mathsf{S}}_{x}\) gate in the 3D toric code.

\paragraph{\texorpdfstring{\(\CNOTLogical{x}{z}\)}{CNOTxz} gate}
As in the 2D toric code, intrablock \(\mathsf{CNOT}\) gates can be constructed from interblock circuits where control qubits and target qubits are disjoint.
The \(\CNOTLogicalColor{\alpha}{\beta}\) gates for \(\alpha \neq \beta\) illustrated in Eq.~\eqref{eqn:3TCCNOT} can satisfy this constraint.
Indeed, we illustrated the \(\CNOTLogicalColor{x}{z}\) in Eq.~\eqref{eqn:3TCCNOT} with controls on \(z \leq d/2\) (say) but all targets are on \(z > d/2\), corresponding to a chain map
\begin{equation}
    \phii'_\bullet = (T_{d/2}\tau)_{\textcolor{orange}{x}\to \textcolor{dodgerblue}{z}} 
    \otimes (\kappa^{-1} \tau^\transpose)_{\textcolor{orange}{y}\to \textcolor{dodgerblue}{y}} 
    \otimes (\kappa^{-1} \tau^\transpose \kappa^{-1})_{\textcolor{orange}{z}\to \textcolor{dodgerblue}{x}}.
\end{equation}
where \(T_\delta : \textcolor{dodgerblue}{\mathsf{I}_{\bullet}} \xrightarrow{\sim} \textcolor{dodgerblue}{\mathsf{I}_{\bullet}}\) is a translation by \(\delta\).
As in the 2D toric code, precisely avoiding collisions of controls and targets may require using a tent map with a flat top, Eq.~\eqref{eqn:2TCCNOTTruncated}.

\subsection{Addressable Cliffords in a Non-Manifold Fracton Code}
\label{subsec:ALPGate}

The chain map hierarchy formalism, and the general recipe of Sec.~\ref{subsec:GateRecipe}, extend beyond manifold codes such as the two- and three-dimensional toric codes.
In this section, we demonstrate the construction of addressable Clifford gates in the anisotropic-linon planon (ALP) code: a geometrically three-dimensional code with fractonic excitations confined to either planes or lines~\cite{xu2004strong,Vijay_2016,Shirley_2019}.

\subsubsection{The Anisotropic Linon-Planeon code}
\label{subsubsec:ALPDefinition}

The ALP code is also a tensor product of classical codes, namely the \emph{plaquette Ising model}~\cite{xu2004strong}, \(\mathsf{PI}_\bullet\), and the (dual) Ising model, \(\mathsf{I}^\bullet\),
\begin{equation}\label{eqn:ALPDef}
    \mathsf{ALP}_\bullet \isom \mathsf{PI}_{xy\bullet} \otimes \mathsf{I}_z^\bullet.
\end{equation}
Subscripts \(x,y,z\) again denote axes labels.

The code properties of the ALP code descend from its classical factor codes, of which the plaquette Ising code is the less familiar.
The plaquette Ising model is described by a two-term chain complex with basis vectors \(\mathsf{B}(\mathsf{PI}_0)\) placed on the vertices of a two-dimensional \(d_x \times d_y\) square lattice, labeled by \(v_{ij}\).
The \(\mathsf{B}(\mathsf{PI}_1)\) basis vectors are labeled by plaquettes, and the boundary map sends a plaquette to the sum of four vertices bordering it,
\begin{equation}
    \mathsf{PI}_\bullet = \left(\includegraphics[valign=c]{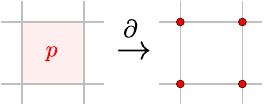}\right).
\end{equation}
As an equation,
\begin{equation}
    \partial p_{ij} = v_{ij} + v_{(i+1)j} + v_{i(j+1)} + v_{(i+1)(j+1)},
\end{equation}
with \(i \in \ZZ_{d_x}\) and \(j \in \ZZ_{d_y}\) regarded with periodic boundary conditions.

Logicals in the plaquette Ising model---cycles in \(\mathsf{PI}^0\)---are given by lines of flipped bits running horizontally or vertically.
Unlike in the toric code, these lines cannot be deformed: they must remain straight.
Similarly, cycles in \(\mathsf{PI}_1\) correspond to sums of plaquettes along horizonal or vertical lines.
\begin{equation}
    \includegraphics[valign=c]{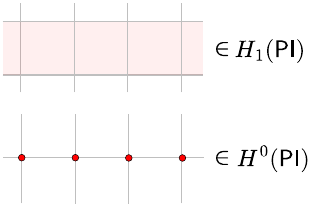}
\end{equation}
As the structural similarity of these cycles suggests, the plaquette Ising model is self-dual with periodic boundary conditions, \(\mathsf{PI}_\bullet \isom \mathsf{PI}^\bullet[-1]\).

As a code, the plaquette Ising model has many encoded bits, corresponding to the many horizontal and vertical lines in the plane.
However, an important feature of the plaquette Ising model is that the natural line-like collections of cycles in \(H^0(\mathsf{PI})\) and \(H_1(\mathsf{PI})\) actually do not form a basis.
The full collection of lines is overcomplete: flipping all horizontal rows of bits is the same as flipping all vertical rows of bits, so the sum of all lines is zero.
The total number of encoded bits (the dimension of \(H^0(\mathsf{PI})\)) is \(d_x + d_y - 1\), the number of lines less one linear dependency.

We fix a basis for the (co)homology classes as follows.
Let \(i \in \ZZ_{d_x} \cup \ZZ_{d_y}\) be a label for a column (\(i \in \ZZ_{d_x}\) with fixed \(x\)) or row (\(i \in \ZZ_{d_y}\) with fixed \(y\)) of vertices in the lattice.
For \(i \neq 0_x \in \ZZ_{d_x}\), denote the cohomology class spanned by the row (column) of vertices \(i\) by \(\mathbf{a}_{i}^0 \in H^0(\mathsf{PI})\), 
\begin{equation}
    \includegraphics[valign=c]{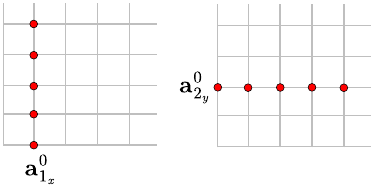}
\end{equation}
and similarly for the plaquettes \(a_{1i} \in H_1(\mathsf{PI})\), where a row (column) of plaquettes \(i\) is bordered by the row (column) of vertices \(i\) and \(i+1\).
These form a basis for \(H^0(\mathsf{PI})\) and \(H_1(\mathsf{PI})\), respectively.
Via the linear dependency, the cycles along the excluded column can be expanded as
\begin{equation}
    \mathbf{a}^0_{0_x} = \sum_{i \neq 0_x} \mathbf{a}^0_i,
    \quad
    a_{10_x} = \sum_{i \neq 0_x} a_{1i}.
\end{equation}
A conjugate basis for \(H_0(\mathsf{PI})\) is \(a_{0i} = [v_{0_x i}]\) for \(i \in \ZZ_{d_y}\)---that is, the homology class of a single vertex in the excluded column \(0_x\)---and \(a_{0i} = [v_{0_x 0_y} + v_{i 0_y}]\) for \(i \in \ZZ_{d_x}\)---a sum of vertices from the row \(0_y\), with one vertex in the excluded column \(0_x\) and the other at the specified column \(i\).
\begin{equation}
    \includegraphics[valign=c]{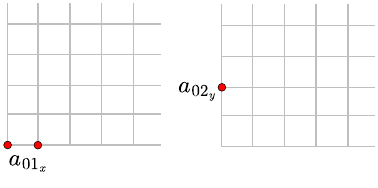}
\end{equation}
Similarly, a conjugate basis for \(H^1(\mathsf{PI})\) is \(\mathbf{a}_{i}^1 = [\mathbf{p}_{0_x i}]\) for \(i \in \ZZ_{d_y}\), and \(\mathbf{a}_{0i} = [\mathbf{p}_{0_x 0_y} + \mathbf{p}_{i 0_y}]\) for \(i \in \ZZ_{d_x}\).

The rigidity of the line-like logicals is what makes the plaquette Ising code, and its descendant the ALP code, a fracton model.
Indeed, consider implementing a partial logical from \(H^0(\mathsf{PI})\) by flipping bits along a straight line segment.
This produces four violated stabilizers, two at each endpoint of the segment.
Such pairs of excitations can be moved by partial logicals, but only along straight lines, and only if those lines are perpendicular to the displacement of the excitations.
These line-mobile pairs are called linons.
Single excitations are immobile, in that they cannot be moved without increasing their energy.%
\footnote{Single excitations can actually be moved arbitrarily far away with \(O(1)\) energy cost, but moving them a distance \(r\) requires flipping roughly \(r^2\) bits.}
In condensed matter language: the line-like symmetries are not deformable 1-form symmetries, but rather subsystem symmetries.

The ALP code, resulting from the tensor product Eq.~\eqref{eqn:ALPDef}, is geometrically three dimensional, with qubits arranged on the vertical edges and horizontal plaquettes (that is, in the \(xy\) plane) of a cubic lattice.
The \(Z\) and \(X\) stabilizers are
\begin{equation}\label{eqn:ALPStabilizers}
    \includegraphics[valign=c]{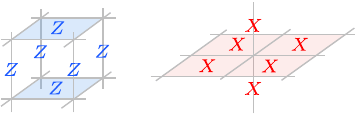}.
\end{equation}
ALP logical operators are horizontal (in the \(x\) or \(y\) directions) or vertical (in the \(z\) direction) lines of \(X\) or \(Z\),
\begin{equation}
    \includegraphics[valign=c]{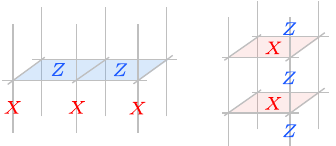}.
\end{equation}

As in plaquette Ising, the line-like logicals of ALP are overcomplete.
The product of all horizontal \(Z\) logicals is trivial, and similarly for \(X\) logicals.
As such, each of the vertical \(X\) and \(Z\) logicals anticommutes with two of the horizontal logicals.
Similarly, the product of four vertical logicals at the corners of a square is a product of stabilizers.
Accounting for all these redundancies---or using the K\"unneth formula---the total number of encoded logical qubits in the ALP code is \(2(L_x + L_y -1)\).
A basis for the homology classes in \(H_0(\mathsf{ALP})\) is as follows,
\begin{equation}
    A^{xy}_{i} = a_{1i} \otimes \mathbf{l}_{z}^1,
    \quad
    A^z_{i} = a_{0i} \otimes \mathbf{l}_{z}^0,
\end{equation}
where \(\mathbf{l}_z^{0,1}\) label Ising cohomology classes, with \(\mathbf{l}_z^{0}\) being extended in the \(z\) direction, and \(i \in (\ZZ_{d_x} \setminus \{0_x\}) \cup \ZZ_{d_y}\).
Similarly label cohomology classes
\begin{equation}
    \mathbf{A}^{xy}_{i} = \mathbf{a}_{i}^1 \otimes l_{z1},
    \quad
    \mathbf{A}^z_{i} = \mathbf{a}_{i}^0 \otimes l_{0z}.
\end{equation}

Horizontally aligned logicals \(\bar{Z}^{xy}_i\) and \(\bar{X}^{xy}_i\) can be deformed in the Ising direction (\(z\) direction), but not within the plaquette Ising plane (\(x\) and \(y\) directions), a fact which descends from the non-deformability of the plaquette Ising model logicals.
Truncating a horizontal logical produces two excited stabilizers at each boundary, and these exitations can move in the plane spanned by the logical direction (one of \(x\) or \(y\)) and the Ising direction \(z\). These are called planons.
On the other hand, vertical logicals \(\bar{Z}^{z}_i\) and \(\bar{X}^{z}_i\) cannot be deformed (much) transversally, and truncating them produces single-stabilizer excitations which can only move vertically without growing in energy.
These are the linons.

\subsubsection{Gates in the ALP code}

\paragraph{Toric code embedding}
Following the recipe verbatim for the ALP code would require us to work with cycles in \(\mathsf{PI}_\bullet \otimes \mathsf{PI}_\bullet\), which is naturally embedded in four dimensions.
This is already hard to visualize, so it is helpful to introduce a reduction from \(\mathsf{PI}_\bullet\) to \(\mathsf{I}_\bullet\).
Essentially, rows and columns in \(\mathsf{PI}_\bullet\) are isomorphic to \(\mathsf{I}_\bullet\), which in turn gives that \(xz\)- and \(yz\)-planes in \(\mathsf{ALP}_\bullet\) are isomorphic to \(\mathsf{TC}_\bullet\).
This observation allows gates in ALP to be immediately obtained from the gates in the two-dimensional toric code.

For any row or column \(i \in \ZZ_{d_x} \cup \ZZ_{d_y}\), there is an embedding chain map from the Ising model to the plaquette Ising model,
\begin{subequations}\label{eqn:IotaDef}
\begin{equation}
    \iota^{(i)}_\bullet : \mathsf{I}_\bullet \to \mathsf{PI}_\bullet,
\end{equation}
defined as
\begin{align}
    \iota_1^{(i)}(e_j) &= \left\{
    \begin{array}{ll}
        p_{ij} \qquad\qquad\quad& \text{for }i \in \ZZ_{d_x}, \\
        p_{ji} \qquad\qquad\quad& \text{for }i \in \ZZ_{d_y},
    \end{array}
    \right. \\
    \iota_0^{(i)}(v_j) &= 
    \left\{
    \begin{array}{ll}
        v_{ij} + v_{(i+1)j} \quad& \text{for }i \in \ZZ_{d_x}, \\
        v_{ji} + v_{j(i+1)} \quad& \text{for }i \in \ZZ_{d_y}.
    \end{array}
    \right.
\end{align}
\end{subequations}
Visualized as a cycle in the chain map complex \(\mathsf{PI}_\bullet \otimes \mathsf{I}^\bullet\)---which is just \(\mathsf{ALP}_\bullet\) again---\(\iota^{(i)}_\bullet\) corresponds to a diagonal ramp along row (column) \(i\), spanning a single line of plaquettes and the two lines of vertices bordering it.
\begin{equation}\label{eqn:IotaCycle}
    \includegraphics[valign=c]{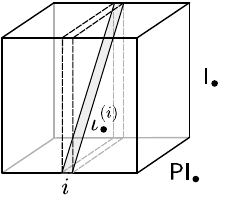}
\end{equation}
The action of \(\iota^{(i)}_*\) on homology is
\begin{equation}\label{eqn:IotaHomologyAction}
    \iota^{(i)}_{*}(l_1) = a_{1i}, \quad
    \iota^{(i)}_{*}(l_0) = a_{0i} + a_{0(i+1)}.
\end{equation}
As \(i\) ranges over \(\ZZ_{d_x} \cup \ZZ_{d_y}\), the only logical which is not in the image of some linear combination of \(\iota^{(i)}_{*}\) is \(\sum_{j \in \ZZ_{d_y}} a_{0 j}\).
The ALP gates we find with this embedding will thus be addressable on all but the two qubits descending from this cycle.

ALP gates are constructed from toric code gates and the embedding \(\iota^{(i)}_\bullet\) as follows.
If \(\phii_\bullet : \textcolor{orange}{\mathsf{TC}^\bullet} \to \textcolor{dodgerblue}{\mathsf{TC}_\bullet}\) is a \(\mathsf{CZ}\)-type (say) chain map for the toric code, then
\begin{equation}
    \phii^{(ij)}_\bullet: \textcolor{orange}{\mathsf{ALP}^\bullet}
    \xrightarrow{\iota_{(i)}^{\transpose} \otimes \id}
    \textcolor{orange}{\mathsf{TC}^\bullet}
    \xrightarrow{\phii}
    \textcolor{dodgerblue}{\mathsf{TC}_\bullet}
    \xrightarrow{\iota^{(j)}\otimes \id}
    \textcolor{dodgerblue}{\mathsf{ALP}_\bullet}
\end{equation}
is an addressable \(\mathsf{CZ}\)-type chain map for ALP.
By addressable, we mean that the gate affects an \(O(1)\) number of qubits of ALP.
Indeed, from Eq.~\eqref{eqn:IotaHomologyAction}, one reads off the homology action of the \(\mathsf{TC}_\bullet \to \mathsf{ALP}_\bullet\) inclusion,
\begin{subequations}\label{eqn:EmbeddingLogicalAction}
\begin{align}
    ({\iota^{(i)\transpose}_{*}} \otimes \id)(\textcolor{orange}{\mathbf{A}^{xy}_{k}}) &= \delta_{ik} \textcolor{orange}{\mathbf{L}_{x}},\\
    ({\iota^{(i)\transpose}_{*}} \otimes \id)(\textcolor{orange}{\mathbf{A}^z_{k}}) &= (\delta_{ik} + \delta_{(i+1)k})\textcolor{orange}{\mathbf{L}_{y}}, \\
    (\iota_*^{(j)} \otimes \id)(\textcolor{dodgerblue}{L_x}) &= \textcolor{dodgerblue}{A^{xy}_j},\\
    (\iota_*^{(j)} \otimes \id)(\textcolor{dodgerblue}{L_y}) &= \textcolor{dodgerblue}{A^z_{j}} + \textcolor{dodgerblue}{A^z_{j+1}}.
\end{align}
\end{subequations}
Even if \(\phii_*\) is full rank, \(\phii_*^{(ij)}\) has rank at most \(2\).

\begin{figure}
    \centering
    \includegraphics[width=\linewidth]{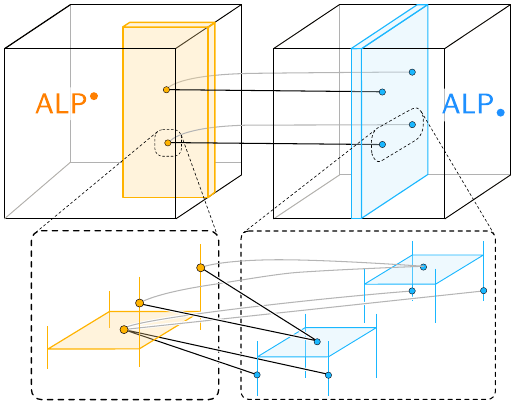}
    \caption{\textbf{Addressable \(\overline{\mathsf{CZ}}\) in the ALP model.}
    Transversal gates for the anisotropic linon-planon (ALP) model, a fracton model, can also be identified using the chain map complex.
    The gate consists of physical \(\mathsf{\textcolor{orange}{C}\textcolor{dodgerblue}{Z}}\) gates between a half-slab of qubits in the orange \(\textcolor{orange}{\mathsf{ALP^\bullet}}\) and a full slab in the blue  \(\textcolor{dodgerblue}{\mathsf{ALP_\bullet}}\).
    The logical action of the gate is determined by which slabs are chosen.}
    \label{fig:ALPGate}
\end{figure}

\paragraph{Addressable \texorpdfstring{\(\CZLogicalColor{}{}\)}{CZ} gates}
Applying the toric code embedding to \(\phii_\bullet = \tau^{\transpose} \otimes \kappa^{-1}\) from Eq.~\eqref{eqn:TCCZyxLogicalCLass2}, which gave a \(\CZLogicalColor{y}{x}\) in the toric code, we get a chain map \(\phii^{(ij)}_\bullet\) in ALP.
The corresponding circuit is shown in Fig.~\ref{fig:ALPGate}.

To find the logical action of the circuit in Fig.~\ref{fig:ALPGate}, we compute the action of \(\phii^{(ij)}_*\) on homology.
Recalling that \(\phii_*\) acts on homology as \(\textcolor{orange}{\mathbf{L}_x} \mapsto 0\), \(\textcolor{orange}{\mathbf{L}_y} \mapsto \textcolor{dodgerblue}{L_x}\) and using Eq.~\eqref{eqn:EmbeddingLogicalAction}, the nontrivial action of \(\phii^{(ij)}_*\) is
\begin{equation}
    \phii^{(ij)}_*(\textcolor{orange}{\mathbf{A}^z_{i}}) = \phii^{(ij)}_*(\textcolor{orange}{\mathbf{A}^z_{i+1}}) = \textcolor{dodgerblue}{A_j^{xy}}
\end{equation}
only.
Thus, this is a product of two logical gates,
\begin{equation}
    \CZLogicalColor{iz}{jxy} \CZLogicalColor{(i+1)z}{jxy}.
\end{equation}
When \(i = 0_x\) or \(i+1 = 0_x\), one of these gates becomes controlled on the parity of all the logicals in our chosen basis.
However, a different choice of basis would reduce this to two gates again.

Alternatively, we can take \(\phii_\bullet\) to be the symmetric gate from Eq.~\eqref{eqn:2TCCZxxFormula}, implementing \(\CZLogicalColor{x}{x}\).
The homology action of this \(\phii_\bullet\) is \(\textcolor{orange}{\mathbf{L}_x} \mapsto \textcolor{dodgerblue}{L_x}\), \(\textcolor{orange}{\mathbf{L}_y} \mapsto 0\), which gives a nontrivial action of
\begin{equation}
    \phii_*^{(ij)}(\textcolor{orange}{\mathbf{A}^{xy}_{i}}) = \textcolor{dodgerblue}{A_j^{xy}}
\end{equation}
only.
Thus, we obtain a singly-addressable 
\begin{equation}
    \CZLogicalColor{ixy}{jxy}
\end{equation}
gate.

Similarly, using toric code \(\CZLogicalColor{y}{y}\) and \(\CZLogicalColor{x}{y}\) gates, we obtain
\begin{equation}
    \CZLogicalColor{iz}{jz} \CZLogicalColor{iz}{(j+1)z} \CZLogicalColor{(i+1)z}{jz} \CZLogicalColor{(i+1)z}{(j+1)z}
\end{equation}
and
\begin{equation}
    \CZLogicalColor{ixy}{jz} \CZLogicalColor{ixy}{(j+1)z}
\end{equation}
gates, respectively.

\paragraph{Intra-block \texorpdfstring{\(\CZLogical{}{}\)}{CZ} and \texorpdfstring{\(\bar{\mathsf{S}}\)}{S} gates}
Any of the above \(\CZLogicalColor{}{}\) gates can be realized in a single code block by applying Eq.~\eqref{eq-intrablockClZ} to the chain maps for the inter-block gates.
From this, we obtain gates like 
\begin{equation}
    \CZLogical{iz,}{jxy} \CZLogical{(i+1)z,}{jxy}
\end{equation}
for any \(i,j\), and
\begin{equation}
    \CZLogical{ixy,}{jxy}
\end{equation}
for \(i \neq j\).

To get an \(\bar{\mathsf{S}}\) gate, we need to make a \(\ZZ_4\) lift.
A sparse, nice \(\ZZ_4\) lift of the plaquette Ising model has a basis of plaquettes \(\td{p}_{ij} \in \widetilde{\mathsf{PI}}_1\) and vertices \(\td{v}_{ij} \in \widetilde{\mathsf{PI}}_1\) as before, with boundary map,
\begin{equation}
    \partial p_{ij} = v_{ij} - v_{(i+1)j} - v_{i(j+1)} + v_{(i+1)(j+1)}.
\end{equation}
A lift of the embedding map \(\td{\iota}^{(i)}_\bullet : \td{\mathsf{I}}_\bullet \to \widetilde{\mathsf{PI}}_\bullet\) is given by
\begin{subequations}\label{eqn:TildeIotaDef}
\begin{align}
    \td{\iota}_1^{(i)}(\td{e}_j) &= \left\{
    \begin{array}{ll}
        \td{p}_{ij} \qquad\qquad\quad& \text{for }i \in \ZZ_{d_x}, \\
        \td{p}_{ji} \qquad\qquad\quad& \text{for }i \in \ZZ_{d_y},
    \end{array}
    \right. \\
    \td{\iota}_0^{(i)}(\td{v}_j) &= 
    \left\{
    \begin{array}{ll}
        \td{v}_{(i+1)j} - \td{v}_{ij} \quad& \text{for }i \in \ZZ_{d_x}, \\
        \td{v}_{j(i+1)} - \td{v}_{ji} \quad& \text{for }i \in \ZZ_{d_y}.
    \end{array}
    \right.
\end{align}
\end{subequations}
Then, embedding the \(\widetilde{\phii}_\bullet\) responsible for the toric code \(\bar{\mathsf{S}}_x\) gate in the \(\ZZ_4\) ALP code gives,
\begin{equation}
    \widetilde{\phii}^{(ii)}_\bullet = (\td{\iota}^{(i)}\otimes\id) \circ \widetilde{\phii} \circ (\td{\iota}^{(i)\transpose}\otimes\id).
\end{equation}
Further, this chain map is clearly symmetric and sparse if \(\widetilde{\phii}_\bullet\) is.
Thus, using the symmetric \(\widetilde{\phii}_\bullet\) from Eq.~\eqref{eqn:2TCSGateChainMap}, we obtain a
\begin{equation}
    \bar{\mathsf{S}}_{ixy}
\end{equation}
gate in ALP.
Similarly, using the chain map for \(\bar{\mathsf{S}}_{y}\), we obtain a logical
\begin{equation}
    \bar{\mathsf{S}}_{iz} \bar{\mathsf{S}}_{(i+1)z} \CZLogical{iz,}{(i+1)z}.
\end{equation}

\paragraph{Addressable \texorpdfstring{\(\CNOTLogicalColor{}{}\)}{CNOT} gates}
As in the toric code, the ALP code has a self-duality \(\textcolor{dodgerblue}{\mathsf{ALP}_\bullet} \xrightarrow{\sim} \textcolor{dodgerblue}{\mathsf{ALP}^\bullet}\).
This allows any \(\CZLogicalColor{}{}\) gate to be converted to a \(\CNOTLogicalColor{}{}\) by composing its encoding chain map with the self-duality.

\paragraph{Intra-block off-diagonal gates}
The ALP code has a transversal Hadamard, implemented by a Hadamard on every qubit followed by a swap of qubits across a diagonal \(x=y\).
Using this, we can compile addressable Hadamard gates using addressable \(\bar{\mathsf{S}}\) gates, as in Lemma~\ref{lem:CliffordGeneration} and Eq.~\eqref{eqn:2TC_HGate}.
For instance, this gives us an addressable \(\bar{\mathsf{H}}_{ixy}\) gate.
In turn, these Hadamards can be used to construct \(\CNOTLogical{}{}\) gates.

Alternatively, we can observe that the targets and controls of the \(\CNOTLogicalColor{}{}\) gates frequently do not intersect.
This certainly happens when the planes where the gates act are parallel and separated by at least one unit cell.
It also occurs when they are orthogonal: in the \(\CNOTLogicalColor{}{}\) analogue of Fig.~\ref{fig:ALPGate}, the orange controls can be translated parallel to their plane without changing the action of the gate.
Thus, the controls can be made to avoid the plane of the targets.
With disjoint control and target qubits, \(\CNOTLogicalColor{}{}\) gates can be realized as intra-block \(\CNOTLogical{}{}\) gates, simply by moving all controls and targets to a single code block.

\subsection{Addressable Magic Gates in a Non-Manifold Code}
\label{subsec:ALPIsiGate}

The chain map hierarchy also delivers magic (non-Clifford) gates in non-manifold codes.
In this final section of examples, we consider a code \(\mathsf{3ALP}_\bullet\), which is defined as of any of the following equivalent tensor products,
\begin{subequations}
\begin{align}
    \mathsf{3ALP}_\bullet &:= \mathsf{ALP}_{x\td x z\bullet} \otimes \mathsf{I}_{y\bullet} \\
    &\isom \mathsf{PI}_{x \td x\bullet} \otimes \mathsf{I}_{y\bullet} \otimes \mathsf{I}_z^\bullet \\
    &\isom \mathsf{PI}_{x \td x\bullet} \otimes \mathsf{TC}_{yz\bullet}.
\end{align}
\end{subequations}
Geometrically, this is naturally a four-dimensional code, and we label the dimensions by \((x,\td x,y,z)\).
However, \(\mathsf{3ALP}_\bullet\) is only a three-dimensional (four-term) chain complex, with correspondingly restricted mobility of excitations.

A single code block of \(\mathsf{3ALP}_\bullet\) encodes \(3(d_x + d_{\td x}-1)\) qubits, where \(d_x\) and \(d_{\td x}\) are the linear distances of the \(x\) and \(\td x\) dimensions.
A basis for the homology classes of \(H_0(\mathsf{3ALP})\) is
\begin{subequations}
\begin{align}
    B_i^{x} &= a_{1i} \otimes l_{0y} \otimes \mathbf{l}^{1}_z,\\
    B_i^{y}&= a_{0i} \otimes l_{1y} \otimes \mathbf{l}^{1}_z, \\
    B_i^{z} &= a_{0i} \otimes l_{0y} \otimes \mathbf{l}^{0}_z,
\end{align}
\end{subequations}
where \(i \in (\ZZ_{d_x} \setminus\{0_x\}) \cup \ZZ_{d_{\td x}}\).
We similarly label the line-like logicals \(\bar{Z}_i^{x}\) (and so on), and the plane-like \(\bar{X}_i^{x}\) logicals.

As in the ALP code, rather than directly follow the recipe of Sec.~\ref{subsec:GateRecipe}, we can use the map \(\iota^{(i)}_\bullet: \mathsf{I}_\bullet \to \mathsf{PI}_\bullet\) to embed the three-dimensional toric code into \(\mathsf{3ALP}_\bullet\), and thus vastly simplify the construction of gates.

We will not discuss the realization of most Clifford gates in this code, but they can also be obtained from the three-dimensional toric code Cliffords by embedding.
As in the three-dimensional toric code, it is impossible for \(\mathsf{3ALP}_\bullet\) to have a transversal Hadamard.

\paragraph{Addressable \texorpdfstring{\(\CCZLogicalColor{}{}{}\)}{CCZ} gates}
The \(\CCZLogicalColor{z}{x}{y}\) of the three-dimensional toric code can be embedded into \(\mathsf{3ALP}_\bullet\) to find another \(\CCZLogicalColor{}{}{}\) gate which only affects \(O(1)\) logical qubits.

As we are now dealing with chain tensors with many indices, it is convenient to adopt a graphical notation.
We represent the chain map \(\iota^{(i)}_\bullet\) from Eq.~\eqref{eqn:IotaDef} as
\begin{equation}
    \iota^{(i)}_\bullet = 
    \includegraphics[valign=c]{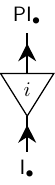}.
\end{equation}
The arrow indicates the direction that chains go through the map.
Then we can realize an addressable \(\CCZLogicalColor{}{}{}\) in \(\mathsf{3ALP}_\bullet\) as the following chain-tensor network.
\begin{equation}\label{eqn:3ALPTensorNetwork}
    \phii_\bullet = \includegraphics[valign=c]{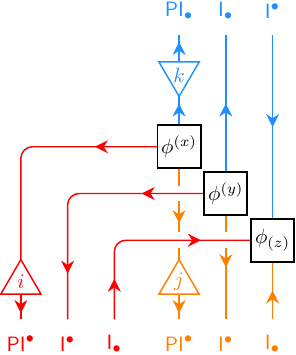}
\end{equation}
The three-leg tensors \(\phi^{(\alpha)} \in [\textcolor{red}{\mathsf{I}^\bullet}, [\textcolor{orange}{\mathsf{I}^\bullet},  \textcolor{dodgerblue}{\mathsf{I}_\bullet}]]\) may be, for instance, those from Eq.~\eqref{eqn:3TCGateFactors}.

Translating Eq.~\eqref{eqn:3ALPTensorNetwork} to a circuit gives the \(\overline{\mathsf{\textcolor{red}{C}\textcolor{orange}{C}\textcolor{dodgerblue}{Z}}}\) gate.
The logical action of this gate can be determined by applying \(\iota_*^{(\cdot)}\) to the \(\phii_{0*}\) action originally used to specify the \(\mathsf{3TC}_\bullet\) gate.
\begin{align}
    (\phii_{*})_0 &=  
    \textcolor{dodgerblue}{l_{x0} \, l_{y1} \, \mathbf{l}_z^1} \otimes 
    \textcolor{orange}{l_{x1} \, l_{y0} \, \mathbf{l}_z^1} \otimes
    \textcolor{red}{l_{x0}\, l_{y0}\, \mathbf{l}_z^0}  \nnb\\
    &\mapsto 
    \textcolor{dodgerblue}{\iota_*^{(k)}(l_{x0}) \, l_{y1} \, \mathbf{l}_z^1} \otimes 
    \textcolor{orange}{\iota_*^{(j)}(l_{x1}) \, l_{y0} \, \mathbf{l}_z^1} \otimes
    \textcolor{red}{\iota_*^{(i)}(l_{x0})\, l_{y0}\, \mathbf{l}_z^0} \nnb \\
    &= \textcolor{dodgerblue}{(a_{0k} + a_{0(k+1)}) \, l_{y1} \, \mathbf{l}_z^1} \otimes
    \textcolor{orange}{a_{1j} \, l_{y0} \, \mathbf{l}_z^1}  \nnb \\
    &\hspace{0.3\linewidth} \otimes\textcolor{red}{(a_{0i} + a_{0(i+1)})\, l_{y0}\, \mathbf{l}_z^0} 
\end{align}
That is, Eq.~\eqref{eqn:3ALPTensorNetwork} implements a
\begin{equation}
    \CCZLogicalColor{i}{j}{k}^{\textcolor{red}{z}\textcolor{orange}{x}\textcolor{dodgerblue}{y}}
    \CCZLogicalColor{(i+1)}{j}{k}^{\textcolor{red}{z}\textcolor{orange}{x}\textcolor{dodgerblue}{y}}
    \CCZLogicalColor{i}{j}{(k+1)}^{\textcolor{red}{z}\textcolor{orange}{x}\textcolor{dodgerblue}{y}}
    \CCZLogicalColor{(i+1)}{j}{(k+1)}^{\textcolor{red}{z}\textcolor{orange}{x}\textcolor{dodgerblue}{y}}
\end{equation}
gate.

Applying the same embedding to a 3D toric code \(\CCZLogicalColor{x}{x}{x}\) gate
analogously gives a 
\begin{equation}\label{eqn:3ALTP3xGate}
    \CCZLogicalColor{i}{j}{k}^{\textcolor{red}{x}\textcolor{orange}{x}\textcolor{dodgerblue}{x}}
\end{equation}
gate.

\paragraph{Intra-block diagonal gates}
Constructing intra-block realizations of \(\CCZLogicalColor{}{}{}\) gates that target distinct logicals is, as before, straightforward by appying Eq.~\eqref{eq-intrablockClZ}, yielding a symmetrized logical action.
Focusing instead on \(\mathsf{S}\) gates, we observe that applying the \(\ZZ_4\)-embedding map \(\td{\iota}^{(i)} \otimes \id \otimes \id\) to each leg of a symmetric tensor \(\widetilde{\phii}_\bullet \in \mathsf{3TC}_\bullet^{\otimes 2}\) 
again produces a symmetric \(\ZZ_4\)-tensor.
Thus, for instance, embedding the symmetric chain tensor for the three-dimensional toric code's \(\bar{\mathsf{S}}_x\) gate gives a \(\bar{\mathsf{S}}_{ix}\) gate on \(\mathsf{3ALP}_\bullet\).

\section{A Universal Form of the Cup Product from the Chain Map Hierarchy}
\label{sec:UniversalCup}

Thus far, we have shown that the chain map hierarchy provides a conceptually clear language for both understanding computation with qLDPC codes, and finding logical operations therein.
We now discuss the relationship between the chain map hierarchy and other approaches to understanding logical gates.

From Section~\ref{sec:ChainMapHierarchy}, it is manifestly clear how several approaches to Clifford computation naturally fit into the chain map hierarchy language.
In particular, homomorphic $\mathsf{CNOT}$ gadgets~\cite{huang_2023_homomorphicCNOT,xu2024constant}, homomorphic measurement gadgets~\cite{horsman2012surface,litinski2019game,Vuillot_2019,cohen2022lowoverhead,cowtan2024ssip,Cowtan2024SurgeryUniversal,swaroop2024universal,williamson2024lowoverhead,cross2024improved,ide2025faulttolerant,zhang2025timeefficient,zhang2025accelerating,zheng2025highrate,baspin2025fast,cowtan_fast_2025,cowtan2025parallel,he2025extractors}, automorphism gadgets~\cite{Calderbank1997gf4, Grassl_2013, Sayginel2025codeautomorphism,hsin2025automorphismgaugetheorieshigher, berthusen2025automorphismgadgetshomologicalproduct}, and fold-transversal gates~\cite{Kubica2015Unfolding,Moussa2016FoldedSurface,Breuckmann2024foldtransversal} on qLDPC codes are regularly phrased in terms of chain maps---which are simply labeled by cycles of chain complexes at the second level of the chain map hierarchy. 
However, known non-Clifford gadgets are not typically understood via chain maps and how they relate to the chain map hierarchy is not as manifest.

A popular approach to understanding these gadgets is through the use of \emph{cup products}---bilinear maps on cochains that induce a product on the cohomology of a given chain complex.
It has been shown that if a chain complex is equipped with such a product, it is possible define sparse cohomology invariants which in turn enables defining certain non-Clifford gates~\cite{zhu2025, breuckmann2025cupsgatesicohomology, lin2024transversalnoncliffordgatesquantum, zhu2025topological}.
Consequently, a flurry of recent work has been devoted to developing constructions of cup products and their cohomology invariants on chain complexes associated with qLDPC codes.

These works take a variety of different approaches.
Some generalize the cup product associated with manifolds to other chain complexes by either exploiting the sheaf structure naturally associated with particular qLDPC codes, or by mapping a qLDPC code of interest to a high dimensional manifold \cite{zhu2025topological, lin2024transversalnoncliffordgatesquantum, li_poincare_2025, li2026theorycohomologicalinvariantsquantum}.
Others bypass building a true product structure on cohomology and instead attempt to define the cohomology invariants directly on codes with certain algebraic requirements on their (co)boundary maps~\cite{breuckmann2025cupsgatesicohomology, menon2025magic, tiew2026copycupgatestensorproducts}.

In this section, we show that these approaches can be understood and extended within the chain map hierarchy.
In particular, after reviewing the notion of a cup product and their associated cohomology invariants in Subsection~\ref{subsec:cupproductreview}, we show how chain maps at the third level of the chain map hierarchy completely parameterize cup products on qLDPC codes---a fact that follows from the \emph{universal property} of the tensor product (Subsection~\ref{subsec:universalcupproduct}).
Moreover, each cohomology invariant defined from these specifies an element of the chain map hierarchy.
Consequently, our formalism can uncover cup products and associated gates that are not accessible through prior methods.

\subsection{Review: Cup Products, Cohomology Invariants, and Logical Gates}
\label{subsec:cupproductreview}

We start by defining precisely what we mean by a cup product on chain complexes, before discussing how it can be used to define gates.
While much of this subsection is review, we remark that we will present these ideas in a way that will naturally lead into our subsequent results.

Broadly speaking, a cup product associated with a chain complex is defined as follows:%
\footnote{We note that some definitions of the cup product \cite{hatcher2002algebraic} require the product to be associative---as it is in its original setting of simplicial complexes---in which case it induces a ring structure on cohomology.
Since we are interested in defining gates from the cup product---for which associativity is not necessary---we omit associativity from the definition.
}
\begin{defn}[(Sparse) Cup Product]\label{def:SparseCup}
Suppose that $C^{\bullet}$ is a cochain complex over \(\FF_2\).
We say that a function
\begin{equation}
\cupp: \bigoplus_{p, q} C^p \times C^q \to \bigoplus_{p+q} C^{p + q}    
\end{equation}
(where \(C^p \cupp C^q \subseteq C^{p+q}\))
defines a cup product on $C^{\bullet}$ if $\cupp$ is bilinear and satisfies the \textit{Leibniz rule}, i.e.
\begin{equation}
        \exd ( \mathbf{a} \cupp \mathbf{b}) = \exd \mathbf{a} \cupp \mathbf{b}  + \mathbf{a} \cupp \exd \mathbf{b}
    \label{eq-Leibniz}
\end{equation}
where $\mathbf{a} \in C^p$ and $\mathbf{b} \in C^q$ and $p$ and $q$ are arbitrary integers.

Further suppose that $C^{\bullet}$ is sparse and based.
We say that $\cupp$ is \textit{input sparse} in degree $p$ and $q$ if for all $\mathbf{x}_{p} \in \mathsf{B}(C^p)$, there are $O(1)$ elements $\mathbf{x}_q \in \mathsf{B}(C^q)$ such that  $\mathbf{x}_p \cupp \mathbf{x}_q \neq 0$ and vice versa.

We say that $\cupp$ is \textit{output sparse} in degree $p$ and $q$, if two conditions are met.
First, for all $\mathbf{x}_{p} \in \mathsf{B}(C^p)$ and $\mathbf{x}_{q} \in \mathsf{B}(C^q)$, $\mathbf{x}_{p} \cupp \mathbf{x}_{q}$ has $O(1)$ support on basis elements in $\mathsf{B}(C^{p + q})$.
Second, for all $\mathbf{x}_{p + q} \in \mathsf{B}(C^{p + q})$, there are $O(1)$ pairs $\mathbf{x}_p \in \mathsf{B}(C^p)$ and $\mathbf{x}_q \in \mathsf{B}(C^q)$ such that $\mathbf{x}_{p + q} \in \text{support}(\mathbf{x}_p \cupp \mathbf{x}_q)$.
Here, \(\text{support}(\mathbf{a}) \subseteq \mathsf{B}(C^{p + q})\) is the set of basis elements \(\mathbf{x}\) for which \(\mathbf{a}(x) \neq 0\).

Any cup product that is both input and output sparse in all degree pairs is simply called sparse.
\end{defn}

The Leibniz rule of Eq.~\eqref{eq-Leibniz} ensures that the cup product induces a bilinear product on the cohomology classes.
This can be seen by observing two facts.
First, the cup product of two cocycles is also a cocycle.
Second, the cohomology class of the cup product of two cocycles depends only on their cohomology class.

We can see the first by noting that if $\mathbf{a}$ and $\mathbf{b}$ are co-cycles, then $\mathbf{a} \cupp \mathbf{b}$ satisfies:
\begin{equation}
    \exd(\mathbf{a} \cupp \mathbf{b}) = \exd \mathbf{a} \cupp \mathbf{b} + \mathbf{a} \cupp \exd \mathbf{b} = 0 
\end{equation}
where the first equality follows from the Leibniz rule and the second equality follows from the fact that (1)~$\mathbf{a}$ and $\mathbf{b}$ are cocycles and so their coboundary must be zero and (2)~bilinearity ensures that $0 \cupp \mathbf{b} = \mathbf{a} \cupp 0 = 0$.
Hence,  the cup product of two cocycles is also a cocycle.

We can see the second by noting that if $\mathbf{a}$ and $\mathbf{b}$ are cocycles and $\exd \mathbf{c}$ is an arbitrary coboundary, then: 
\begin{align}
    \mathbf{a} \cupp (\mathbf{b} + \exd \mathbf{c}) &= \mathbf{a} \cupp \mathbf{b} + \mathbf{a} \cupp \exd \mathbf{c}\\
    &= \mathbf{a} \cupp \mathbf{b} + (\mathbf{a} \cupp \exd \mathbf{c} + \exd \mathbf{a} \cupp \mathbf{c}) \\
    &= \mathbf{a} \cupp \mathbf{b} + \exd (\mathbf{a} \cupp \mathbf{c})
\end{align}
and a similar calculation can be done for $(\mathbf{a} + \exd \mathbf{c}) \cupp \mathbf{b}$.
Consequently, $\mathbf{a} \cupp (\mathbf{b} + \exd \mathbf{c})$ and $(\mathbf{a} + \exd \mathbf{c}) \cupp \mathbf{b}$ are related to $\mathbf{a} \cupp \mathbf{b}$ by a co-boundary.
Therefore, the cohomology class of $[\mathbf{a} \cupp \mathbf{b}]$ depends only on the cohomology classes $[\mathbf{a}]$ and $[\mathbf{b}]$ and we can define on induced bilinear product on cohomology by $[\mathbf{a} \cupp \mathbf{b}] = [\mathbf{a}] \cupp [\mathbf{b}]$.

\subsubsection{Concrete Example: The Ising Model and the Toric Code}

To provide a concrete example of the cup product, we introduce the cup product for the 1D Ising model and show how this can be used with the tensor product to define the cup product for the toric code in 2D and higher dimensions.

Let us recall that the Ising cochain  complex $\mathsf{I}^{\bullet}$ is a sparse based chain complex where $\mathsf{B}(\mathsf{I}^0)$ and $\mathsf{B}(\mathsf{I}^1)$ correspond to the vertices and edges of a one-dimensional line graph with the global topology of a circle:
\begin{equation}
    \begin{tikzpicture}[scale = .8, baseline={([yshift=-.5ex]current bounding box.center)}]
    \foreach \i in {0, ..., 4}{
        \filldraw[color = black] (\i, 0) circle (2 pt);
        \draw[color = black] (\i, 0) -- (\i + 1, 0) ;
        \node at(\i, -0.3) {$v_{\i}$};
        \node at(\i + 0.5, 0.3) {$e_{\i}$};
    }
    \draw[color = black] (-1, 0) -- (0,0);
    \node at(-1 + 0.5, 0.3) {$e_{N -1}$};
    \node at (5.5, 0) {$\cdots$};
    \node at (-1.5, 0) {$\cdots$};
    \end{tikzpicture}
\end{equation}
The cup product on the 1D Ising model may be defined via the following rules on the basis cochains
\begin{align}
    \mathbf{v}_i \cupp \mathbf{v}_j = &\, \delta_{i, j} \mathbf{v}_i,\\
    \mathbf{v}_i \cupp \mathbf{e}_j = \delta_{i, j} \mathbf{e}_j, \quad & \quad \mathbf{e}_{i} \cupp \mathbf{v}_j = \delta_{i, j-1} \mathbf{e}_i 
\end{align}
where $\mathbf{v}_i \in \mathsf{B}(\mathsf{I}^0)$ and $\mathbf{e}_i \in \mathsf{B}(\mathsf{I}^1)$.
Outside of these cases, the cup product is zero.
Intuitively, the rules above communicate that the cup product of two vertices is non-zero if they are the same, the cup product of a vertex with an edge is only non-zero if the edge is directly to the right of it, and the cup product of an edge with a vertex is only non-zero if the edge is directly to the left of the vertex.
This effectively picks an orientation on edges.
The cup product on general cochains is defined by applying the rules above bilinearly.
It is furthermore easy to verify that the cup product above satisfies the Leibniz rule. 

We can use the cup product for the Ising model to define a cup product for the 2D toric code and its higher dimensional variants.
In particular, recall from Section~\ref{sec:DiscoveringGates}, that the 2D toric code is given by
\(
    \mathsf{TC}_{\bullet} \isom \mathsf{I}_{x\bullet} \otimes  \mathsf{I}_y^\bullet
\).
In this section, to make contact with conventions for the cup product, we always use cochain complexes concentrated in non-negative degrees.
As such, we will shift the toric code up a degree relative to Section~\ref{sec:DiscoveringGates}, so that it has non-zero cochains in degrees \(0,1,2\).
This shifted toric is isomorphic to a tensor product of two Ising cochain complexes:
\begin{equation}
    C^\bullet = \mathsf{I}_{x}^\bullet \otimes  \mathsf{I}_y^\bullet \isom \mathsf{TC}[1]^\bullet.
\end{equation}
Basis elements of $C^{0}$ are of the form $\mathbf{v}_i \otimes \mathbf{v}_j$, the basis elements of $C^{1}$ are of the form $\mathbf{v}_i \otimes \mathbf{e}_j$ and $\mathbf{e}_i \otimes \mathbf{v}_j$, and basis elements of $C^{2}$ are given by $\mathbf{e}_i \otimes \mathbf{e}_j$.
The cup product on $C^{\bullet}$ is inherited from the cup product on the consituent Ising factors.
In particular, given two basis cochains of $\mathsf{TC}^{\bullet}$, $(\mathbf{x}_1 \otimes \mathbf{y}_1)$ and $(\mathbf{x}_2 \otimes \mathbf{y}_2)$, the cup products between these elements is given by: 
\begin{equation} \label{eq-tensorproductcupp}
    (\mathbf{x}_1 \otimes \mathbf{y}_1) \cupp (\mathbf{x}_2 \otimes \mathbf{y}_2) = (\mathbf{x}_1 \cupp \mathbf{x}_2 \otimes \mathbf{y}_1 \cupp \mathbf{y}_2).
\end{equation}
Moreover, one can readily check that the Liebniz rule for this cup product is obeyed given that the cup product on the Ising factors obeys the Leibniz rule.

To get an intuitive understanding of the cup product for the toric code, we can draw out the rules implied by Eq.~\eqref{eq-tensorproductcupp} visually (where \(\mathbf{v}, \mathbf{e}, \mathbf{p}\) are product basis elements in the toric code): 
\begin{equation}
    \vspace{-3 mm}
    \begin{tikzpicture}[scale = 0.85, baseline = {([yshift=-.5ex]current bounding box.center)}]
    \foreach \i in {1.5}{
    \filldraw[fill = gray, draw = none, opacity = 0.1] (0 + \i,0) -- (0.75 + \i,0) -- (0.75 + \i, -0.75) -- (0 + \i, -0.75) -- cycle;
    \draw[color = black, line width = 0.8pt](0 + \i, -0.75) -- (0 + \i,0);
    \draw[color = black, line width = 0.8pt](0 + \i, 0) -- (0.75 + \i,0);
    \node at (-0.15 + \i, -0.375) {${\mathbf{e}}$};
    \node at (0.375 + \i, 0.18) {${\mathbf{e}'}$};
    \node at (0.375 + \i, -0.75 - 0.19) {$\textcolor{white}{\mathbf{e}'}$};
    \node at (0.375 + \i, -0.375) {\textcolor{red}{$\mathbf{p}$}};
    }
    \foreach \i in {-0.4}{
    \filldraw[fill = gray, draw = none, opacity = 0.1] (0 + \i,0) -- (0.75 + \i,0) -- (0.75 + \i, -0.75) -- (0 + \i, -0.75) -- cycle;
    \draw[color= black, line width = 0.8pt](0.75 + \i, -0.75) -- (0.75 + \i,0);
    \draw[color = black, line width = 0.8pt](0 + \i ,-0.75) -- (0.75 + \i,-0.75);
    \node at (0.94 + \i, -0.375) {${\mathbf{e}'}$};
    \node at (0.375 + \i, -0.75 - 0.19) {${\mathbf{e}}$};
    \node at (0.375 + \i, -0.375) {\textcolor{red}{$\mathbf{p}$}};
    \node at (0.375 + \i, -0.375) {\textcolor{red}{$\mathbf{p}$}};
    \node at (0.375 + \i, 0.19) {$\textcolor{white}{\mathbf{e}'}$};
    };

    \node at (1, 0.65) {\small $\underline{{\mathbf{e}} \cupp {\mathbf{e}'} = \textcolor{red}{\mathbf{p}}}$};
    \foreach \i in {3.5}{
    \filldraw[fill = gray, draw = none, opacity = 0.1] (0 + \i,0) -- (0.75 + \i,0) -- (0.75 + \i, -0.75) -- (0 + \i, -0.75) -- cycle;
    \filldraw[fill = black] (0 + \i, -0.75) circle (1.5pt);
    \node at (-0.2 + \i, -0.65) {$\mathbf{v}$};
    \node at (0.375 + \i, -0.375) {\textcolor{red}{$\mathbf{p}$}};
    \node at (0.325 + \i, 0.64) {\small $\underline{\mathbf{v} \cupp \textcolor{red}{\mathbf{p}} = \textcolor{red}{\mathbf{p}}}$};
    };

    \foreach \i in {6}{
    \filldraw[fill = gray, draw = none, opacity = 0.1] (0 + \i,0) -- (0.75 + \i,0) -- (0.75 + \i, -0.75) -- (0 + \i, -0.75) -- cycle;
    \filldraw[fill = black] (0.75 + \i,0) circle (1.5pt);
    \node at (1.0 + \i, 0) {$\mathbf{v}$};
    \node at (0.375 + \i, -0.375) {\textcolor{red}{$\mathbf{p}$}};
    \node at (0.325 + \i, 0.64) {\small $\underline{\textcolor{red}{\mathbf{p}} \cupp \mathbf{v} = \textcolor{red}{\mathbf{p}}}$};
    };
    \end{tikzpicture}
    \label{eq-cupproduct}
\end{equation}
Moreover,  for two vertices $\mathbf{v} $ and $\mathbf{w},$ $\mathbf{v} \cupp \mathbf{w} = \mathbf{v} \delta_{\mathbf{v}, \mathbf{w}}$.
Finally, for a vertex and an edge: 
\begin{equation}
    \begin{tikzpicture}[scale = 0.85, baseline = {([yshift=-.5ex]current bounding box.center)}]
        \draw[color = gray] (0,0) -- (0, -0.75);
        \draw[color = gray] (0,0) -- (-0.75, 0);
        \draw[color = gray] (0,0) -- (0.75, 0);
        \filldraw[fill = black] (0,0) circle (1.5pt);
        \node at (0.25, 0.15) {$\mathbf{v}$};
        \node at (-0.25, 0.375) {$\mathbf{e}$};
        \draw[color = black, line width = 0.8pt] (0, 0) -- (0, 0.75);
        \draw[color = gray] (0 + 2,0) -- (0 + 2, -0.75);
        \draw[color = gray] (0 + 2,0) -- (-0.75 + 2, 0);
        \draw[color = gray] (0 + 2,0) -- (0.75 + 2, 0);
        \draw[color = gray] (0 + 2,0) -- (2, 0.75);
        \filldraw[fill = black] (0 + 2,0) circle (1.5pt);
        \node at (0.25 + 2, 0.15) {$\mathbf{v}$};
        \node at (0.375 + 2, -0.25) {$\mathbf{e}$};
        \draw[color = black, line width = 0.8pt] (0 + 2, 0) -- (0.75 + 2, 0.0);
        \draw[color = gray] (0 + 5,0) -- (0 + 5, -0.75);
        \draw[color = gray] (0 + 5,0) -- (-0.75 + 5, 0);
        \draw[color = gray] (0 + 5,0) -- (0.75 + 5, 0);
        \draw[color = gray] (0 + 5,0) -- (5, 0.75);
        \filldraw[fill = black] (0 + 5,0) circle (1.5pt);
        \node at (0.25 + 5, 0.15) {$\mathbf{v}$};
        \node at (-0.375 + 5, -0.25) {$\mathbf{e}$};
        \draw[color = black, line width = 0.8pt] (0 + 5, 0) -- (-0.75 + 5, 0.0);
        \draw[color = gray] (0 + 7,0) -- (0 + 7, -0.75);
        \draw[color = gray] (0 + 7,0) -- (-0.75 + 7, 0);
        \draw[color = gray] (0 + 7,0) -- (0.75 + 7, 0);
        \draw[color = gray] (0 + 7,0) -- (7, 0.75);
        \filldraw[fill = black] (0 + 7,0) circle (1.5pt);
        \node at (0.25 + 7, 0.15) {$\mathbf{v}$};
        \node at (-0.25 + 7, -0.375) {$\mathbf{e}$};
        \draw[color = black, line width = 0.8pt] (0 + 7, 0) -- (7, -0.75);
        \node at (1, 1.4) {$\underline{\mathbf{v} \cupp \mathbf{e} = \mathbf{e}}$};
        \node at (6, 1.4) {$\underline{\mathbf{e} \cupp \mathbf{v} = \mathbf{e}}$};
    \end{tikzpicture}
\end{equation}
Outside of these cases, the cup product is zero.
A bilinear map on \(\mathsf{TC}^\bullet\) is then inherited from \(C^\bullet\) by shifting degrees in the cup product above.
This is not a cup product on the nose, as it maps \(\mathsf{TC}^0 \times \mathsf{TC}^0 \to \mathsf{TC}^1\), but it nonetheless defines a cohomology action.

We remark that the formula in Eq.~\eqref{eq-tensorproductcupp} enables us to further define the cup product for any higher dimensional product of Ising models, and hence higher dimensional toric codes, as well.

\subsubsection{Cohomology Invariants from the Cup Product}

Equipped with the cup product, we now discuss how these products can be used to define \textbf{cohomology invariants}.
These invariants are exactly how the cup product will be used to define logical gates.

To define these invariants, let $\mathcal{M}_{(\ell)} \in C_{\ell}$ be an $\ell$-cycle of the chain complex $C_{\bullet}$.
Then, our cohomology invariant will be defined via a function $\mathcal{I}_{\mathcal{M}_{\ell}}$ that takes in a set of $\ell$ cochains $\{\mathbf{a}_{p_j} \in C^{p_j} \}_{j = 1}^{\ell}$ whose degrees $p_j$ sum to $\ell$, and returns an element of $\mathbb{F}_2$.
In particular, it is concretely defined as: 
\begin{equation} \label{eq-cupproductinvariant}
    \mathcal{I}_{\mathcal{M}_{(\ell)}}(\{\mathbf{a}_{p_j}\}) = \int_{\mathcal{M}_{(\ell)}} \mathbf{a}_{p_1} \cupp (\mathbf{a}_{p_2} \cupp (\cdots \cupp \mathbf{a}_{p_\ell}))) 
\end{equation}
where the integral symbol indicates evaluation, i.e.
\begin{equation}
    \int_{\mathcal{M}_{(\ell)}} \mathbf{c} \equiv \mathbf{c}(\mathcal{M}_{(\ell)}) \qquad \mathbf{c} \in C^{\ell},
\end{equation}
and is used to make contact with notation developed in the context of calculus on manifolds where these invariants originate.
Crucially, when $\mathcal{I}_{\mathcal{M}_{(\ell)}}$ is evaluated on co-cycles, it only depends on their cohomology classes and the homology class of the cycle $\mathcal{M}_{(\ell)}$.
This is established with the following Lemma: 
\begin{lem}[Cohomology Invariants]
    Let $C^{\bullet}$ be a co-chain complex, $\{\boldsymbol{\gamma}_{p_j} \in C^{p_j}\}_{j = 1}^{\ell}$ be a collection of $\ell$ cocycles with degrees $p_j$ that sum to $\ell$, and let $\mathcal{M}_{(\ell)} \in C_{\ell}$ be an $\ell$-cycle in the dual chain complex.
    Then, the invariant: 
    \begin{equation}
        \mathcal{I}_{\mathcal{M}_{(\ell)}}(\{\boldsymbol{\gamma}_{p_j}\})
    \end{equation}
    only depends on the cohomology classes of $\boldsymbol{\gamma}_{p_j}$ and the homology class of the cycle $\mathcal{M}_{(\ell)}$.
\end{lem}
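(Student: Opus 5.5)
The plan is to prove the two invariances separately, using only bilinearity of $\cupp$, the Leibniz rule, and the duality $\mathbf{x}(\partial y) = (\exd \mathbf{x})(y)$ from Definition~\ref{def:SparseBasedChainComplex}. Throughout, write $\boldsymbol{\Gamma} = \boldsymbol{\gamma}_{p_1} \cupp (\boldsymbol{\gamma}_{p_2} \cupp (\cdots \cupp \boldsymbol{\gamma}_{p_\ell}))\in C^\ell$. As a preliminary step, I will show by induction on the nesting depth that every partial product $\boldsymbol{\gamma}_{p_j} \cupp(\cdots\cupp\boldsymbol{\gamma}_{p_\ell})$ is a cocycle. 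The argument is the same Leibniz computation given above for two cocycles, applied to the innermost product and then peeled outward.

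First I treat dependence on the cycle. Suppose $\mathcal{M}_{(\ell)}$ is replaced by $\mathcal{M}_{(\ell)} + \partial N$ with $N \in C_{\ell+1}$. Then
\begin{equation}
\boldsymbol{\Gamma}(\partial N) = (\exd\boldsymbol{\Gamma})(N) = 0,
\end{equation}
because $\boldsymbol{\Gamma}$ is a cocycle. So $\mathcal{I}$ depends only on the homology class of $\mathcal{M}_{(\ell)}$.

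Next I treat dependence on each cocycle. Replace $\boldsymbol{\gamma}_{p_j}$ by $\boldsymbol{\gamma}_{p_j} + \exd \mathbf{c}$ for a single index $j$. By multilinearity, the change in $\boldsymbol{\Gamma}$ is the nested product with $\exd\mathbf{c}$ in the $j$th slot and the original cocycles elsewhere. The key claim is that this nested product is exactly a coboundary, namely the coboundary of the same nested product with $\mathbf{c}$ in the $j$th slot.

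To prove the claim I will argue from the inside out. Let $\boldsymbol{\beta} = \boldsymbol{\gamma}_{p_{j+1}}\cupp(\cdots)$, which is a cocycle by the preliminary step (or is absent if $j=\ell$). Leibniz gives
\begin{equation}
\exd\mathbf{c}\cupp\boldsymbol{\beta} = \exd(\mathbf{c}\cupp\boldsymbol{\beta}) + \mathbf{c}\cupp\exd\boldsymbol{\beta} = \exd(\mathbf{c}\cupp\boldsymbol{\beta}).
\end{equation}
Then I move outward one cocycle factor at a time. Each step uses $\boldsymbol{\gamma}\cupp\exd\mathbf{y} = \exd(\boldsymbol{\gamma}\cupp\mathbf{y}) + \exd\boldsymbol{\gamma}\cupp\mathbf{y} = \exd(\boldsymbol{\gamma}\cupp\mathbf{y})$ for a cocycle $\boldsymbol{\gamma}$. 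Since we work over $\mathbb{F}_2$, there are no signs to track.

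The total change in $\mathcal{I}$ is therefore $(\exd\mathbf{y})(\mathcal{M}_{(\ell)}) = \mathbf{y}(\partial\mathcal{M}_{(\ell)}) = 0$, because $\mathcal{M}_{(\ell)}$ is a cycle. Changing the cocycles one at a time then handles arbitrary simultaneous changes of representatives.

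The main obstacle is keeping the induction clean when the nesting is right-associated and non-associative, so that one never silently rearranges parentheses. The one-factor-at-a-time outward propagation sketched above avoids this, since it uses only the Leibniz rule applied to two-factor products.
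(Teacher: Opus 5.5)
Your proof is correct and follows essentially the paper's route. Both arguments show that the invariant vanishes when one input is a coboundary or when $\mathcal{M}_{(\ell)}$ is a boundary, and both dispatch these cases with the Leibniz rule plus the duality $\mathbf{x}(\partial y) = (\exd\mathbf{x})(y)$.

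The only difference is organizational. The paper expands $\exd$ of the entire nested product in one step, which makes visible that cocycle-invariance needs only the integrated Leibniz identity~\eqref{eq-integratedLeibnizrule}; the paper uses this fact immediately afterwards for pseudo-cup products. Your slot-by-slot outward propagation instead uses the full Leibniz rule at every nesting level. In exchange, it yields the slightly stronger conclusion that the perturbed nested product differs from the original by an exact coboundary.
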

\begin{proof}
    We want to show that: 
    \begin{equation}
        \mathcal{I}_{\mathcal{M}'_{(\ell)}} (\{\boldsymbol{\gamma}'_{p_j}\}) = \mathcal{I}_{\mathcal{M}_{(\ell)}} (\{\boldsymbol{\gamma}_{p_j}\})
    \end{equation}
    where $\boldsymbol{\gamma}_{p_j}' = \boldsymbol{\gamma}_{p_j} + \exd \mathbf{a}_{p_{j} - 1}$ for some $\mathbf{a}_{p_{j} - 1} \in C^{p_j - 1}$, and \(\mathcal{M}'_{(\ell)} = \mathcal{M}_{(\ell)} + \partial b_{(\ell+1)}\) for some \(b_{(\ell+1)} \in C_{\ell+1}\). 
    To do so, it suffices to show that the invariant $\mathcal{I}_{\mathcal{M}_{(\ell)}} (\boldsymbol{\gamma}_{p_j})$ vanishes if any $\boldsymbol{\gamma}_{p_j}$ is a coboundary or if $\mathcal{M}_{(\ell)}$ is a boundary.
    We can check each of these with an explicit calculation.
    Suppose that $\boldsymbol{\gamma}_{p_j} = \exd \mathbf{a}_{p_{j}-1}$ for an arbitrary $j$.
    Then: 
    \begin{align}
        &\mathcal{I}_{\mathcal{M}_{(\ell)}}(\{\boldsymbol{\gamma}_{p_j}\}) = \int_{\mathcal{M}_{(\ell)}} \boldsymbol{\gamma}_{p_1} \cupp (\cdots  \cupp (\exd \mathbf{a}_{p_j-1} \cupp (\cdots \cupp \gamma_{p_{\ell}})))\nonumber \\   
        &= \sum_{k \neq j = 1}^{\ell} \int_{\mathcal{M}_{(\ell)}} \boldsymbol{\gamma}_{p_1} \cupp (\cdots  \cupp (\exd \boldsymbol{\gamma}_{p_k} \cdots \cupp (\mathbf{a}_{p_j-1} \cupp \cdots )))  \nonumber\\
        &+ \int_{\partial \mathcal{M}_{(\ell)}}\boldsymbol{\gamma}_{p_1} \cupp (\cdots  \cupp (\mathbf{a}_{p_j-1} \cupp (\cdots \cupp \gamma_{p_{\ell}}))) \nonumber
    \end{align}
    where the second and third line followed from the fact that the cup product satisfies the Leibniz rule.
    Note that each term in the second and third line vanishes because $\exd \boldsymbol{\gamma}_k = 0$ for all $k$ and $\partial \mathcal{M}_{(\ell)} = 0$.
    This confirms that $\mathcal{I}_{\mathcal{M}_{(\ell)}} (\boldsymbol{\gamma}_{p_j})$ vanishes if any $\boldsymbol{\gamma}_{p_j}$ is a co-boundary.
    Hence, by linearity, $\mathcal{I}_{\mathcal{M}_{(\ell)}}(\{\boldsymbol{\gamma}_{p_j}\})$ depends only on the cohomology classes of the cycles $\boldsymbol{\gamma}_{p_j}$.

    Now suppose that $\mathcal{M}_{(\ell)} = \partial b_{(\ell + 1)}$ for some chain $b_{(\ell + 1)} \in C_{\ell+1}$, then: 
    \begin{align}
        &\mathcal{I}_{\mathcal{M}_{(\ell)}}(\{\boldsymbol{\gamma}_{p_j}\}) = \int_{\partial b_{(\ell + 1)}} \boldsymbol{\gamma}_{p_1} \cupp (\boldsymbol{\gamma}_{p_2} \cupp (\cdots \cupp \boldsymbol{\gamma}_{p_\ell}))) \nonumber \\
        &=\sum_{j  = 1}^{\ell} \int_{ b_{(\ell+1)}} \boldsymbol{\gamma}_{p_1} \cupp (\cdots  \cupp (\exd \boldsymbol{\gamma}_{p_j} \cupp (\cdots \cupp \gamma_{p_{\ell}}))) = 0\nonumber 
    \end{align}
    where the second line follows from the fact that $\int_{\partial b} c = \int_{b} \exd c$ (Stokes' theorem) and the Leibniz rule.
    Thus, $\mathcal{I}_{\mathcal{M}_{(\ell)}}(\{\boldsymbol{\gamma}_{p_j}\})$ vanishes if $\mathcal{M}_{(\ell)}$ is a boundary and by linearity, it only depends only on the homology class of $\mathcal{M}_{(\ell)}$.
\end{proof}

At this point two remarks are in order.
First, we remark that in our proof above, to show that $\mathcal{I}_{\mathcal{M}_{(\ell)}}(\{\boldsymbol{\gamma}_{p_j}\})$ depended only on the cohomology class $\boldsymbol{\gamma}_{p_j}$, we simply had to use the fact that for any cochains $\mathbf{a}_{p_j} \in C^{p_j}$ (where the $p_j$'s sum to $\ell -1$): 
\begin{equation} \label{eq-integratedLeibnizrule}
    \sum_{k = 1}^{\ell} \int_{\mathcal{M}_{(\ell)}} \mathbf{a}_{p_1} \cupp (\cdots \cupp \exd \mathbf{a}_{p_{k}} \cupp (\cdots \cupp \mathbf{a}_{p_{\ell}}))) = 0 
\end{equation}
This fact follows from the Leibniz rule but does not require it.
As such, a recent work~\cite{breuckmann2025cupsgatesicohomology} introduced a \emph{pseudo-cup product} that obeys Eq.~\eqref{eq-integratedLeibnizrule}, dubbed the $\ell$-fold \textbf{integrated Leibniz rule}, enabling them to find certain cohomology invariants without finding a product structure on cohomology.
In the next subsection, we will connect the invariants found from the integrated Leibniz rule to the chain map hierarchy.

Second, the invariants above are connected to the geometric concept of intersection, which is most naturally seen by introducing the concept of the \textit{cap product}.
While we do not discuss this interpretation in the main text, a discussion of this is provided in Appendix~\ref{app:CupsAndPoincare}.

\subsubsection{Logical Gates from Cohomology Invariants}

We now connect the invariants of Eq.~\eqref{eq-cupproductinvariant} to logical gates. 
In particular, we first prove a theorem, a version of which appears in  Refs.~\cite{zhu2025, breuckmann2025cupsgatesicohomology, lin2024transversalnoncliffordgatesquantum, zhu2025topological}, showing how to define a logical $\mathsf{C^{\ell-1}Z}$ gate with the invariants above.
Subsequently, we will prove a Lemma connecting the depth of these gates to the input/output sparsity of the cup product, which will be important for our subsequent results.

The cup product defines a logical gate via the following theorem:
\begin{theo}[Copy-Cup Gates]
Let $C^{\bullet}$ be a cochain complex, and consider a qLDPC code $C[-1]$ (equivalently, place qubits on \(\mathsf{B}(C^1)\)). Label $\ell$ codeblocks of $C^\bullet$ as $C^{(1)}, \ldots, C^{(\ell)}$.
Then, the following unitary performs a logical gate on these $\ell$ codeblocks:
\begin{equation} \label{eq-copycup}
    U^{\cupp}_{\mathsf{C^{\ell-1}Z}}[\mathcal{M}_{(\ell)}] = \prod_{\mathbf{e}_1, \ldots, \mathbf{e}_{\ell} \in \mathsf{B}^{1}(C)} (-1)^{n_{e_1}^{(1)} \cdots n_{e_{\ell}}^{(\ell)} \mathcal{I}_{\mathcal{M}_{(\ell)}}(\{\mathbf{e}_i\})}
\end{equation}
where $\mathcal{M}_{(\ell)}$ is an $\ell$-cycle of $C_{\bullet}$ and  $\mathcal{I}_{\mathcal{M}_{(\ell)}}(\{\mathbf{e}_i\})$ is defined in Eq.~\eqref{eq-cupproductinvariant}.
\end{theo}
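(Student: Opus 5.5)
The plan is to reduce the copy-cup gate to the level-$\ell$ chain map hierarchy and then invoke Theorem~\ref{thm:ChainMapHierarchyGates}, or to repeat its inductive argument directly. Qubits sit on $\mathsf{B}(C^1)$, so in the shifted complex $C[-1]$ they live in degree zero. First I would define the $\ell$-leg tensor
\begin{equation*}
f^{e_1\cdots e_\ell} = \mathcal{I}_{\mathcal{M}_{(\ell)}}(\{\mathbf{e}_i\}),
\end{equation*}
viewed as a multilinear form on degree-one cochains. With this choice, Eq.~\eqref{eq-copycup} is exactly the all-degree-zero gate of Eq.~\eqref{eq-ClZ}: the exponent $n^{(1)}_{e_1}\cdots n^{(\ell)}_{e_\ell}\,\mathcal{I}$ is the tensor $f$ contracted with number operators.

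The gate is diagonal, so it commutes with every $Z$-stabilizer. The substantive step is to compute the group commutator with an $X$-operator on block $j$. By the recursive action, Eq.~\eqref{eq:CMHrecursiveaction},
\begin{equation*}
\bigl[U^{\cupp}, X^{(j)}_{\mathbf a}\bigr]_{\rm grp} = U_{\mathsf{C}^{\ell-2}\mathsf{Z}}^{f(\mathbf a)},
\end{equation*}
where $f(\mathbf a)$ is the invariant with the $j$th slot filled by $\mathbf a$. I would then run the same induction on $\ell$ as in Theorem~\ref{thm:ChainMapHierarchyGates}, using the following two facts.

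The first fact concerns an $X$-stabilizer, where $\mathbf a=\exd\mathbf v$. Here I would use the integrated Leibniz rule, Eq.~\eqref{eq-integratedLeibnizrule}, together with $\partial\mathcal{M}_{(\ell)}=0$. These rewrite $f(\exd\mathbf v)$ as a sum of terms, each with a coboundary $\exd$ landing on one of the other slots. Thus $f(\exd\mathbf v)=\sum_k g_k\circ \exd_{(k)}$, which is a boundary $\partial^{[\ell-1]}Q$ of the level-$(\ell-1)$ complex with $Q$ built from $\mathbf v$. By induction, that lower gate fixes the code space and acts trivially on it. Hence the stabilizer group is preserved, and $U^{\cupp}$ is logical.

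The second fact concerns a logical $X$, where $\mathbf a$ is a cocycle. By the same manipulation, $f(\mathbf a)$ annihilates coboundaries in the remaining slots modulo the terms just discussed. So $f(\mathbf a)$ is a $0$-cycle, and by induction it implements a logical $\mathsf{C}^{\ell-2}\mathsf{Z}$. Its coefficients are the cohomology invariant evaluated on classes, and the Cohomology Invariants Lemma shows this depends only on $[\mathbf a]$ and $[\mathcal{M}]$.

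I expect the main obstacle to be the bookkeeping in the first fact: making precise that "the invariant with one argument a coboundary" is a $0$-boundary in the iterated complex. This is subtle because $f$ has been defined only in degree zero, so I have to supply the higher-degree components myself. To do this I would let $f$ extend to all degrees by the same formula, $\mathcal{I}_{\mathcal{M}}$ on cochains of degrees summing to $\ell$ after the shift. The integrated Leibniz rule is then exactly the statement $\partial^{[\ell]}f=0$, so $f$ is a genuine $0$-cycle of $[\mathbf{C}^{[\ell]}]$. Once that identification is in place, Theorem~\ref{thm:ChainMapHierarchyGates}(1) applies directly.

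For the base case $\ell=1$, $U^{\cupp}$ is $Z$ on a $1$-cycle of $C_\bullet$, which is a logical Pauli. This gives the induction its starting point.
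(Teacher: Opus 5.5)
Your argument is correct, but it takes a different route from the paper.

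\textbf{The paper's route.} The paper works directly in the logical computational basis. The code space is spanned by coset states $\sum_{\mathbf a_j}\prod_j X^{(j)}_{\boldsymbol\gamma_j+\exd\mathbf a_j}\ket{\mathbf 0}$, where each $\boldsymbol\gamma_j$ is a degree-one cocycle. By multilinearity of $\mathcal I_{\mathcal M_{(\ell)}}$, the diagonal gate multiplies each term by $(-1)^{\mathcal I_{\mathcal M_{(\ell)}}(\{\boldsymbol\gamma_j+\exd\mathbf a_j\})}$. The Cohomology Invariants lemma makes this phase constant over the coset. So every logical basis state is an eigenvector with eigenvalue $(-1)^{\mathcal I_{\mathcal M_{(\ell)}}(\{\boldsymbol\gamma_j\})}$. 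This argument is elementary and self-contained. It needs no components of the tensor in other degrees and no induction on $\ell$, and it gives the logical action directly as an explicit diagonal phase.

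\textbf{Your route.} You identify the gate with the all-degree-zero component of the invariant tensor $\hat{\mathcal I}_{\mathcal M_{(\ell)}}$. You then note that $\partial^{[\ell]}\hat{\mathcal I}_{\mathcal M_{(\ell)}}=0$ is exactly the integrated Leibniz rule; this is the content of the paper's later lemma on cohomology invariants in the chain map hierarchy. Logicality and the logical tensor then follow from Theorem~\ref{thm:ChainMapHierarchyGates}.

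\textbf{Comparison.} Your route is heavier, because it rests on that theorem's inductive proof and on constructing the higher-degree components. In exchange, it exhibits the copy-cup gate as one representative of a class in $H_0([\mathbf C^{[\ell]}])$. That makes it immediately subject to deformation by non-Clifford stabilizers without changing its logical action.

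\textbf{Presentation.} Your ``first fact'' and ``second fact'' paragraphs only make sense once $f$ has been extended to all degrees, as you describe in your obstacle paragraph. For example, $f(\exd\mathbf v)=\partial^{[\ell-1]}f(\mathbf v)$ is just the chain-map property of the extended $f$, with $f(\mathbf v)$ a $1$-chain. Once you invoke Theorem~\ref{thm:ChainMapHierarchyGates}, those paragraphs are redundant. The cleanest write-up is: define $\hat{\mathcal I}_{\mathcal M_{(\ell)}}$ in all degrees, verify that it is a $0$-cycle, and apply the theorem.
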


One can define similar gates on \(C[-k]\) (qubits placed on \(\mathsf{B}(C^k)\)) for any \(k\), where \(\mathcal{M}_{(\ell)}\) must be replaced with an \(\ell k\)-cycle \(\mathcal{M}_{(\ell k)}\).

\begin{proof}
    A simple proof follows by considering the action on any wavefunction in the logical computational basis.
    In particular, note that the code space of the qLDPC codes associated to (shifts of) $C^{(1)}, \ldots, C^{(\ell)}$ is spanned by a basis of states of the form: 
    \begin{equation}
    \ket{\Psi_{\boldsymbol{\gamma}_1, \ldots, \boldsymbol{\gamma}_j}} =  \sum_{\{\mathbf{a}_j \in C^{0}_{(j)}\}}\prod_{j = 1}^{\ell} X^{(j)}_{\boldsymbol{\gamma_j} + \exd \mathbf{a}_j} \ket{\mathbf{0}}^{\otimes \ell}
    \end{equation}
    where $\boldsymbol{\gamma}_j$ is a (potentially non-trivial) $1$-cocycle in $C^1$ and hence $X^{(j)}_{\boldsymbol{\gamma}_j + \exd \mathbf{a}_{j}}$ labels a logical $X$ operator in code $C^{(j)}$.
    Moreover, $\ket{\mathbf{0}}^{\otimes (\ell + 1)}$ is the state where all qubits in codes $C^{(1)}, \cdots, C^{(\ell + 1)}$ are in the $\ket{0}$ state. 
    
    Now, if we act $U_{\mathsf{C}^{\ell-1} \mathsf{Z}}^{\cupp}[\mathcal{M}_{(\ell)}]$ on any term in the sum above, it is easy to verify that we get:
    \begin{align} &U^{\cupp}_{\mathsf{C^{\ell-1}Z}}[\mathcal{M}_{(\ell)}]\prod_{j = 1}^{\ell} X_{\boldsymbol{\gamma}_j + \exd \mathbf{a}_j} \ket{\mathbf{0}}^{\otimes \ell} \\
    &= (-1)^{\mathcal{I}_{\mathcal{M}_{(\ell)}} (\{\boldsymbol{\gamma}'_i\})}\prod_{j = 1}^{\ell} X_{\boldsymbol{\gamma}_j + \exd \mathbf{a}_j} \ket{\mathbf{0}}^{\otimes \ell} \nonumber
    \end{align}
    where $\boldsymbol{\gamma}_j' = \boldsymbol{\gamma}_j + \exd \mathbf{a}_j$.
    Now, because  $\mathcal{I}_{\mathcal{M}_{(\ell)}}(\{\boldsymbol{\gamma}_j'\})$ is  a cohomology invariant, it is insensitive to shifting $\boldsymbol{\gamma}_j'$ by a coboundary and hence
    $\mathcal{I}_{\mathcal{M}_{(\ell)}}(\{\boldsymbol{\gamma}_j'\}) = \mathcal{I}_{\mathcal{M}_{(\ell)}}(\{\boldsymbol{\gamma}_j\})$.
    Consequently, 
    \begin{equation}
        U^{\cupp}_{\mathsf{C^{\ell-1}Z}}[\mathcal{M}_{(\ell)}] \ket{\Psi_{\boldsymbol{\gamma}_1, \ldots, \boldsymbol{\gamma}_j}} = (-1)^{\mathcal{I}_{\mathcal{M}_{(\ell)}}(\{\boldsymbol{\gamma}_j\})} \ket{\Psi_{\boldsymbol{\gamma}_1, \ldots, \boldsymbol{\gamma}_j}}
    \end{equation}
    Thus, $U^{\cupp}_{\mathsf{C^{\ell-1}Z}}[\mathcal{M}_{(\ell)}]$ defines a logical, possibly trivial, gate of \(\mathsf{C^{\ell-1}Z}\) type.
\end{proof}

We can connect the depth of the gate of Eq.~\eqref{eq-copycup} to the sparsity of the cup product.
Crucially for the gate to be constant depth we do not require the cup product to be sparse in all degrees! 
In particular:
\begin{lem}
    If the cup product is input sparse in degree $1$ and $q$ for integers $q$ between $1$ and $\ell-1$ and output sparse in degree $1$ and $q'$ for integers $q'$ between $1$ and $\ell - 2$, then the gate of Eq.~\eqref{eq-copycup} can be implemented in constant depth.
\end{lem}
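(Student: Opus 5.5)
We reduce the depth bound to a bound on the sparsity degree, as in the discussion around Eq.~\eqref{eqn:SparsityDegree}. The gate in Eq.~\eqref{eq-copycup} is a product of physical $\mathsf{C^{\ell-1}Z}$ gates (together with lower-order gates when indices coincide) indexed by tuples $(\mathbf e_1,\ldots,\mathbf e_\ell)$ of basis $1$-cochains with $\mathcal I_{\mathcal M_{(\ell)}}(\{\mathbf e_i\})\neq0$. Define the $\ell$-leg tensor $T^{e_1\cdots e_\ell}=\mathcal I_{\mathcal M_{(\ell)}}(\{\mathbf e_i\})$. By the counting argument $\Delta\le\mathrm{depth}\le\ell(\Delta-1)+1$, it then suffices to show that $T$ is sparse in the sense of Definition~\ref{def:SparseTensor}. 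That is, fixing any single $\mathbf e_j$, only $O(1)$ completions of the remaining indices give a nonzero entry.

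We bound the number of nonzero completions by expanding the nested product $\mathbf e_1\cupp(\mathbf e_2\cupp(\cdots\cupp\mathbf e_\ell))$ from the inside out. Write $\mathbf b_k=\mathbf e_k\cupp(\cdots\cupp\mathbf e_\ell)\in C^{\ell-k+1}$. Each intermediate product is a cup of a degree-$1$ basis cochain with a cochain of degree $q=\ell-k$, where $1\le q\le\ell-1$. The outermost product is only evaluated on $\mathcal M$, so output sparsity is needed only for $q\le\ell-2$. This matches the hypotheses of the lemma.

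The counting splits into two cases according to where the fixed index $\mathbf e_j$ sits.
\begin{enumerate}
\item[(i)] \emph{Propagating outward from $\mathbf e_j$.} Start from the innermost fixed data. If $j=\ell$, output sparsity in degrees $(1,1)$ together with input sparsity gives $O(1)$ choices of $\mathbf e_{\ell-1}$ and $O(1)$ basis elements in the support of $\mathbf b_{\ell-1}$. Inductively, each basis element $\mathbf x$ in the support of $\mathbf b_{k+1}$ (degree $q$) pairs with only $O(1)$ elements $\mathbf e_k$ by input sparsity in $(1,q)$. Each resulting product $\mathbf e_k\cupp\mathbf x$ has $O(1)$ support by output sparsity, as long as $q\le\ell-2$. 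After $\ell$ steps, with $\ell$ constant, the total count is $O(1)$.
\item[(ii)] \emph{Fixed index in the middle or outside.} For a general $j$, the inner indices $\mathbf e_{j+1},\ldots,\mathbf e_\ell$ are not yet determined. Here we use the second clause of output sparsity in the reverse direction: a given basis element of the output appears in only $O(1)$ pairs of inputs. Starting from $\mathbf e_j$, input sparsity in $(1,q)$ bounds the basis elements $\mathbf x\in C^{\ell-j}$ with $\mathbf e_j\cupp\mathbf x\neq0$. Reverse output sparsity then bounds the pairs $(\mathbf e_{j+1},\mathbf y)$ whose product contains $\mathbf x$. Iterating inward determines all inner indices up to $O(1)$ choices, and the outer indices $\mathbf e_1,\ldots,\mathbf e_{j-1}$ are handled as in case (i).
\end{enumerate}

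The main obstacle is case (ii), in particular the case $j=1$, where the outermost factor is fixed. The outermost cup lands in degree $\ell$, where no output sparsity is assumed, so the outward argument gives no control there. We must instead rely on input sparsity in $(1,\ell-1)$, which bounds the inner basis elements that $\mathbf e_1$ can pair with. We then descend using reverse output sparsity in degrees up to $\ell-2$. The delicate point is that the nested inner cochain $\mathbf b_{j+1}$ is a sum of basis elements, and nonvanishing of $\mathbf e_j\cupp\mathbf b_{j+1}$ only requires some term in its support to pair nontrivially. Cancellations can only remove terms, so an upper bound that ignores cancellation remains valid.
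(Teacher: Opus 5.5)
Your proposal is correct and follows the paper's proof essentially step for step. Both fix $\mathbf e_j$ and reduce to counting nonzero tuples through it. Both then descend inward via input sparsity in $(1,\ell-j)$ and the reverse clause of output sparsity, and propagate outward via input sparsity and the first clause of output sparsity, never needing sparsity of the outermost degree-$\ell$ product. The only differences are cosmetic. You treat $j=\ell$ as a separate case, which the paper absorbs into the general argument. Also, because the copy-cup gate acts across $\ell$ distinct blocks, indices never coincide, so your remark about lower-order gates is unnecessary.
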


\begin{proof}
We start by remarking that the gate of Eq.~\eqref{eq-copycup},
$U^{\cupp}_{\mathsf{C^{\ell-1} Z}}[\mathcal{M}_{(\ell)}]$, is a product of
commuting $\mathsf{C^{\ell-1} Z}$ gates.
As a consequence, it suffices to show that the number of gates acting on any
particular qubit is $O(1)$.
Pick an arbitrary qubit at location $\mathbf{e}_j$ in code $C^{(j)}$.
For a $\mathsf{C^{\ell-1} Z}$ gate containing $\mathbf{e}_j$ to appear in
Eq.~\eqref{eq-copycup}, we must have that
\begin{equation}
\mathcal{I}_{\mathcal{M}_{(\ell)}}\left(\{\mathbf{e}_i\}\right)
=
\int_{\mathcal{M}_{(\ell)}}
\mathbf{e}_1 \cupp
\left(
\mathbf{e}_2 \cupp
\left(
\cdots
\left(
\mathbf{e}_{\ell-1} \cupp \mathbf{e}_{\ell}
\right)
\right)
\right)
\end{equation}
is non-zero and our goal is to show that the number of tuples
$(\mathbf{e}_i)$ containing a fixed $\mathbf{e}_j$ and satisfying this
property is $O(1)$.

The proof proceeds by considering partial products.
For $j \leq k \leq \ell-1$, define
$\mathbf{r}_k
=
\mathbf{e}_k \cupp
\left(
\mathbf{e}_{k+1} \cupp \left(\cdots \mathbf{e}_\ell \right)
\right)
\in C^{\ell+1-k}$ and $\mathbf{r}_{\ell}=\mathbf{e}_{\ell}$.
By output sparsity, each $\mathbf{r}_k$ has $O(1)$ support (unless \(k =j =1\)).
Indeed, this holds for $\mathbf{r}_{\ell}$ trivially, while if
$\mathbf{r}_{k+1}$ has $O(1)$ support, then
$\mathbf{r}_k=\mathbf{e}_k\cupp\mathbf{r}_{k+1}$ is a sum of $O(1)$ cup
products of basis elements, each of which has $O(1)$ support.

First, we show that there are only \(O(1)\) choices of $\mathbf{e}_{j+1},\ldots,\mathbf{e}_{\ell}$ that make \(\mathcal{I}_{\mathcal{M}_{(\ell)}}\left(\{\mathbf{e}_i\}\right) \neq 0\) with \(\mathbf{e}_j\) fixed.
As \(\mathcal{I}_{\mathcal{M}_{(\ell)}}\left(\{\mathbf{e}_i\}\right) \neq 0\), we have that $\mathbf{r}_j\neq 0$, and thus there is some
$\mathbf{x}\in\operatorname{support}\left(\mathbf{r}_{j+1}\right)$ such
that $\mathbf{e}_j\cupp\mathbf{x}\neq 0$.
By input sparsity, there are only $O(1)$ possible choices for $\mathbf{x}$.
For any fixed
$\mathbf{x}\in\operatorname{support}\left(\mathbf{r}_{j+1}\right)$, there
must be some
$\mathbf{y}\in\operatorname{support}\left(\mathbf{r}_{j+2}\right)$ such
that
\begin{equation}
\mathbf{x}
\in
\operatorname{support}\left(
\mathbf{e}_{j+1}\cupp\mathbf{y}
\right).
\end{equation}
The second condition of output sparsity gives only $O(1)$ possible pairs
$(\mathbf{e}_{j+1},\mathbf{y})$.
Applying the same argument to each such $\mathbf{y}$, and continuing up to
$\mathbf{r}_{\ell}=\mathbf{e}_{\ell}$, gives only $O(1)$ choices of
$\mathbf{e}_{j+1},\ldots,\mathbf{e}_{\ell}$.

Second, we similarly show that there are only \(O(1)\) choices of $\mathbf{e}_{j-1},\ldots,\mathbf{e}_1$ that make \(\mathcal{I}_{\mathcal{M}_{(\ell)}}\left(\{\mathbf{e}_i\}\right) \neq 0\) with \(\mathbf{e}_j\) fixed.
As $\mathbf{r}_j$ has $O(1)$ support, if $\mathbf{e}_{j-1}\cupp\mathbf{r}_j\neq 0$, then there is some
$\mathbf{x}\in\operatorname{support}\left(\mathbf{r}_j\right)$ such that
$\mathbf{e}_{j-1}\cupp\mathbf{x}\neq 0$.
Input sparsity therefore gives only $O(1)$ choices for
$\mathbf{e}_{j-1}$.
The resulting partial product again has $O(1)$ support by output sparsity,
so repeating this argument gives only $O(1)$ choices of
$\mathbf{e}_{j-2},\ldots,\mathbf{e}_1$.
Thus, any qubit participates in only $O(1)$ of the $\mathsf{C^{\ell-1} Z}$ gates
appearing in Eq.~\eqref{eq-copycup} and hence the gate can be implemented in constant depth.

We note in particular that the total product \(\mathbf{e}_1 \cupp(... \cupp \mathbf{e}_\ell)\) does \emph{not} have to be low weight, which is why output sparsity is not required between degrees \(1\) and \(\ell-1\).
The only data from the total product that is used is binary: whether or not it pairs with the cycle \(\mathcal{M}_{(\ell)}\).
Thus, a large total product does not result in a circuit with many gates.
\end{proof}

\subsection{Cup Products and Cohomology Invariants from the Chain Map Hierarchy} \label{subsec:universalcupproduct}

Thus far, we have discussed cup products, cohomology invariants derived from them, and the role of these invariants in defining logical gates.
Given these connections, numerous approaches have been developed to define cup products or their cohomology invariants on the chain complexes associated with qLDPC codes.
These approaches take one of two forms:
\begin{enumerate}
    \item \textit{Full Leibniz Rule Cup Products}: In some works \cite{lin2024transversalnoncliffordgatesquantum, zhu2025topological, li_poincare_2025, li2026theorycohomologicalinvariantsquantum}, cup products---i.e. bilinear co-chain products that satisfy the Leibniz Rule---are defined on chain complexes associated with certain qLDPC codes.
    These products are typically defined by either exploiting the sheaf structure naturally associated with particular qLDPC codes or by mapping qLDPC codes to high-dimensional manifolds.

    \item \textit{Integrated Leibniz Rule Cup Products}: In other works \cite{breuckmann2025cupsgatesicohomology, menon2025magic, tiew2026copycupgatestensorproducts}, a pseudo-cup product is developed that doesn't satisfy the Leibniz rule.
    These pseudocup products satisfy a relaxation of the Leibniz rule called the integrated Leibniz rule [c.f. discussion around Eq.~\eqref{eq-integratedLeibnizrule}], which enables them to define certain cohomology invariants without providing a product structure on cohomology.

\end{enumerate}

We now show how these approaches connect to the chain map hierarchy language built up in our work.
In particular, we point out that cup products that satisfy the Leibniz rule are fully parameterized by chain maps (i.e.\ cycles) at the third level of the chain map hierarchy and hence product structures on cohomology are encompassed within our formalism.
Moreover, we further point out that the cohomology invariants developed from cup products or pseudo-cup products further correspond to various cycles in the chain map hierarchy.
Both observations follow from well-known facts in homological algebra.

These observations highlight the role of the chain map hierarchy formalism in relation to prior constructions.
In particular, prior work on cup products and cohomology invariants for qLDPC codes provide routes for explicitly finding sparse constructions of gates on qLDPC codes that preserve their code space.
Such approaches do not explicitly specify a target logical action of these cup products and do not necessarily exhaust the logical non-Clifford gates that can be present in a code.
These explicit constructions also come at the cost of a more restrictive structure.
As a concrete example, the construction of Ref.~\cite{breuckmann2025cupsgatesicohomology} has the following features:
(1)~the \(\ell\)-cycle \(\mathcal{M}_{(\ell)}\) is always a global sum of all elements of the basis \(\mathsf{B}(C_\ell)\),
(2)~\(\mathbf{a\cupp b} \neq 0\) only when \(\mathbf{a}\) and \(\mathbf{b}\) are neighbors on the qLDPC code's hypergraph (defined by the checks), and
(3)~if \(\mathbf{a\cupp b} = \mathbf{c}\), then \(\mathbf{c}\) is close on the hypergraph to \(\mathbf{a}\) and \(\mathbf{b}\).
The chain map hierarchy language complements these constructions with greater generality, at the cost of sparsity no longer being automatic.
It provides an exhaustive search space in which to look for cup products and cohomology invariants, and target logical actions can be specified as part of the search.
It also relaxes all the constraints on locality above.
As an explicit example for classical codes, given a basis of cohomology classes \([\mathbf{c}^{(i)}] \in \mathsf{B}(H^1(C))\), we can pick a collection of chain maps \(\phii^{(i)}_\bullet : C^\bullet \to C_\bullet\) indexed by \(\mathsf{B}(H^1(C))\) to define the following valid cup product on \(C^\bullet\).
\begin{equation}
    \mathbf{a}_p \cupp \mathbf{b}_q = 
    \left\{
    \begin{array}{ll}
        \sum_{i} \mathbf{c}^{(i)} \times  \mathbf{a}_p(\phii^{(i)}(\mathbf{b}_q))\quad & p+q=1, \\
        0 \quad & \text{otherwise.}
    \end{array}
    \right.
\end{equation}
This escapes all the conditions mentioned in the previous paragraph:
(1)~integrating against distinct elements of a dual basis of cycles \(\mathcal{M}_{(1)}^{(i)}\) yields distinct invariants and gates,
(2)~\(\mathbf{a} \cupp \mathbf{b}\) can be highly non-local if \(\phii^{(i)}\) is, and
(3)~the image contains a fixed term \(\mathbf{c}_i\), which can be arbitrarily distant from the location of \(\mathbf{a}\) and \(\mathbf{b}\) in the hypergraph.
Despite these relaxations, this cup product can still yield a transversal gate.
Indeed, these relaxations are necessary to find both the \(\mathsf{CCZ}\) gate from Subsection~\ref{subsec:ALPIsiGate}, and the gates found in a companion work~\cite{Christos2026nonabelian}.

The above construction is natural in the chain map hierarchy formalism for cup products, which we now describe.

\subsubsection{Cup Products and the Chain Map Hierarchy}

To establish a connection between the cup product and the chain map hierarchy, we start by showing that elements of the chain map hierarchy of $C^{\bullet}$ can be used to provide a universal parameterization of the space of cup products, i.e. bilinear products that satisfy the Leibniz rule.

In particular, let us start by defining the following family of cup products before proving that this family is universal.
This family of cup products is defined for a chain complex $C^{\bullet}$ and is parameterized by the $0$-cycles of the chain map complex $[C^{\bullet} \otimes C^{\bullet}, C^{\bullet}] \simeq C^{\bullet} \otimes C_{\bullet} \otimes C_{\bullet}$,  which crucially lives at the third level of the chain map hierarchy of $C^{\bullet}$.
Specifically, for a given $0$-cycle  $\Delta_{\bullet} \in [C^{\bullet} \otimes C^{\bullet}, C^{\bullet}]$ with components \(\Delta_r: (C^\bullet \otimes C^\bullet)^r \to C^r\), we can define the cochain product associated with $\Delta$, denoted $\cupp_{\Delta}$, as:
\begin{equation}\label{eq-universal_cup}
    \mathbf{a}_p \cupp_{\Delta} \mathbf{b}_q \equiv \Delta_{p + q}(\mathbf{a} \otimes \mathbf{b})
\end{equation}
where $\mathbf{a}_p \in C^p$ and $\mathbf{b}_q \in C^q$.

We now prove that this cochain product defines a cup product by proving it is bilinear and satisfies the Leibniz rule.
In particular,

\begin{lem}[Cup Products from the Chain Map Hierarchy] Given a chain complex $C^{\bullet}$ and a $0$-cycle $\Delta_{\bullet} \in [C^{\bullet} \otimes C^{\bullet}, C^{\bullet}]$, the cochain product of Eq.~\eqref{eq-universal_cup} defines a cup product.

Moreover, the sparsity of this cup product is related to the sparsity of $\Delta_{\bullet}$ when decomposed into a direct sum of maps: 
\begin{equation}
    \Delta_{\bullet} = \bigoplus_{r} \underbrace{\left(\bigoplus_{p + q = r} \Delta_{p, q}\right)}_{\Delta_r}
\end{equation}
where $\Delta_{p, q}: C^p \otimes C^q \to C^{p + q}$.
In particular, $\cupp_{\Delta}$ is output sparse in degree $p$ and $q$ if $\Delta_{p, q}$ is sparse as a matrix [with rows and columns labeled by \(\mathsf{B}(C^{p+q})\) and \(\mathsf{B}(C^{p} \otimes C^{q})\)]. 
Moreover, viewing $\Delta_{p, q}$ as a three-component tensor $(\Delta_{p, q})^{\alpha \beta}_{\gamma}$  [with $\alpha$, \(\beta\), $\gamma$ labeling $\mathsf{B}(C^p)$, $\mathsf{B}(C^q)$, and $\mathsf{B}(C^{p + q})$, respectively], $\cupp_{\Delta}$ is input sparse if this tensor has an $O(1)$ number of non-zero rows for fixed $\alpha$ and $O(1)$ number of non-zero rows for fixed $\beta$.

\end{lem}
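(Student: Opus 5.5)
The plan is to unpack the $0$-cycle condition on $\Delta_\bullet$ in the chain map complex $[C^\bullet\otimes C^\bullet, C^\bullet]$ and observe that it is precisely the Leibniz rule. Bilinearity is immediate: each $\Delta_{p+q}$ is linear on $(C^\bullet\otimes C^\bullet)^{p+q}$, and $(\mathbf a,\mathbf b)\mapsto \mathbf a\otimes\mathbf b$ is bilinear, so $\cupp_\Delta$ is bilinear.

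For the Leibniz rule, I will use the (cochain-oriented) analog of Definition~\ref{def:ChainMapComplex}. There the boundary is $\partial^{[2]}f = \exd\circ f + f\circ \exd_\otimes$ (over $\FF_2$ the signs are irrelevant). Its kernel in degree $0$ consists of exactly the chain maps commuting with the differentials, as in the discussion following Definition~\ref{def:ChainMapComplex}. Here the source differential is the tensor-product coboundary of Definition~\ref{def:TensorProduct}, $\exd_\otimes(\mathbf a\otimes\mathbf b)=\exd\mathbf a\otimes\mathbf b+\mathbf a\otimes\exd\mathbf b$. So the $0$-cycle condition reads $\exd\,\Delta_{p+q}(\mathbf a\otimes\mathbf b)=\Delta_{p+q+1}(\exd_\otimes(\mathbf a\otimes\mathbf b))$. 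Expanding the right side by linearity gives $\exd\mathbf a\cupp_\Delta\mathbf b+\mathbf a\cupp_\Delta\exd\mathbf b$, which is Eq.~\eqref{eq-Leibniz}.

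The main bookkeeping obstacle is keeping degrees consistent. Each homogeneous pure tensor $\mathbf a\otimes\mathbf b\in C^p\otimes C^q$ sits in total degree $p+q$. The component $\Delta_{p+q}$ restricts to $\Delta_{p,q}$ on that summand, and both terms of $\exd_\otimes$ land in degree $p+q+1$, where $\Delta_{p+q+1}$ restricts to $\Delta_{p+1,q}$ and $\Delta_{p,q+1}$ respectively. I will also note that the isomorphism $[C^\bullet\otimes C^\bullet,C^\bullet]\simeq C^\bullet\otimes C_\bullet\otimes C_\bullet$ from Theorem~\ref{thm:IteratedChainMapComplex} is not needed for this argument. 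It only identifies the parameterizing space as a level-three member of the hierarchy.

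For sparsity, I will read off the conditions of Definition~\ref{def:SparseCup} directly in terms of the matrix $\Delta_{p,q}$, whose columns are indexed by product basis elements $\mathbf x_p\otimes\mathbf x_q$ and whose rows are indexed by $\mathsf B(C^{p+q})$. The first output-sparsity condition says $\mathbf x_p\cupp\mathbf x_q$ has $O(1)$ support, which is bounded column weight. The second says each $\mathbf x_{p+q}$ appears in $O(1)$ products, which is bounded row weight. Together these are exactly sparsity of $\Delta_{p,q}$ as a matrix.

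Input sparsity asks that, for fixed $\alpha$, only $O(1)$ indices $\beta$ give a nonzero slice $(\Delta_{p,q})^{\alpha\beta}_{\bullet}$, and symmetrically for fixed $\beta$. This matches the stated tensor condition once "nonzero rows" is interpreted as nonzero $\gamma$-fibres over the free index. The only care needed is to match the paper's phrasing to this slice count.
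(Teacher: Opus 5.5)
Your proposal is correct and follows the paper's proof essentially verbatim. Bilinearity comes from linearity of $\Delta$ together with bilinearity of $\otimes$. The Leibniz rule is the chain-map condition $\exd\circ\Delta_{p+q}=\Delta_{p+q+1}\circ\exd$ expanded using the tensor-product coboundary. Both sparsity claims are read directly off Definition~\ref{def:SparseCup}, which the paper likewise treats as immediate. Your explicit reading of ``non-zero rows'' as nonzero $\gamma$-fibres over the free input index is a worthwhile clarification, since under that reading the input-sparsity claim is exactly the definition restated.
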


\begin{proof}
    Note that Eq.~\eqref{eq-universal_cup} is manifestly bilinear (due to linearity of \(\Delta\) and bilinearity of \(\otimes\)) so it suffices to show that it satisfies the Leibniz rule.
    To do so, let us note that: 
    \begin{align}
        \exd (\mathbf{a}_p \cupp_{\Delta} \mathbf{b}_q) &= \exd \Delta_{p + q} (\mathbf{a} \otimes \mathbf{b}) = \Delta_{p + q + 1}\left( \exd (\mathbf{a} \otimes \mathbf{b}\right)) \nonumber \\
        &=\Delta_{p + q + 1}( \mathbf{a} \otimes \exd \mathbf{b} + \exd \mathbf{a} \otimes \mathbf{b}) \nonumber \\
        &= \mathbf{a} \cupp_{\Delta} \exd \mathbf{b} + \exd \mathbf{a} \cupp_{\Delta} \mathbf{b}
    \end{align}
where the second equality on the first line follows from the fact that $\Delta_{\bullet}$ is a chain map and hence $\exd \circ \Delta_{p + q} = \Delta_{p + q + 1} \circ \exd$, the second line follows from the definition of the co-boundary map on the tensor product, and the last line follows by linearity and the definition of $\cupp_{\Delta}$. 
Consequently, $\cupp_{\Delta}$ satisfies the Leibniz rule and defines a cup product on $C^{\bullet}$.

Moreover, showing that the sparsity of $\cupp_{\Delta}$ is a consequence of the sparsity of $\Delta$ immediately follows from the definition of input and output sparsity.
\end{proof}

The parameterization of Eq.~\eqref{eq-universal_cup} nicely establishes a way to define a cup product given a certain element of the chain map hierarchy.
However, we now show that this parameterization is, in fact, \textit{universal}---i.e. we show that any bilinear co-chain product that satisfies the Leibniz rule must take the form of Eq.~\eqref{eq-universal_cup}.

\begin{theo}[Universality of Cup Product Parameterization]
Let $\cupp$ be a cup product defined for a chain complex $C^{\bullet}$.
Then, there exists a $\Delta \in [C^{\bullet} \otimes C^{\bullet}, C^{\bullet}]$ such that $\cupp = \cupp_{\Delta}$.
\end{theo}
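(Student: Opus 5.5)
The plan is to use the universal property of the tensor product: a bilinear map out of $C^p \times C^q$ factors uniquely through a linear map out of $C^p \otimes C^q$. Given the cup product $\cupp$, for each pair $(p,q)$ I define $\Delta_{p,q}: C^p \otimes C^q \to C^{p+q}$ by linear extension of $\Delta_{p,q}(\mathbf{a}\otimes\mathbf{b}) = \mathbf{a}\cupp\mathbf{b}$. Bilinearity of $\cupp$ makes this well defined. Concretely, I would choose it on product basis elements $\mathbf{x}_p\otimes\mathbf{x}_q$ with $\mathbf{x}_p\in\mathsf{B}(C^p)$, $\mathbf{x}_q\in\mathsf{B}(C^q)$, which form a basis of $C^p\otimes C^q$ since everything is over a field. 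Then I set $\Delta_r = \bigoplus_{p+q=r}\Delta_{p,q}$, which is a degree-preserving map $(C^\bullet\otimes C^\bullet)^r \to C^r$, and hence a $0$-chain of $[C^\bullet\otimes C^\bullet, C^\bullet]$. By construction, $\cupp_\Delta = \cupp$ on all pairs of homogeneous cochains, and therefore everywhere.

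It remains to show that $\Delta_\bullet$ is a $0$-cycle, that is, a chain map. As in the discussion around Eq.~\eqref{eqn:CMCBoundaryIndices}, this amounts to the commutation condition $\exd\circ\Delta_r = \Delta_{r+1}\circ\exd_\otimes$, since over $\mathbb{F}_2$ a vanishing sum is the same as equality. Both sides are linear, so it suffices to check the condition on pure tensors $\mathbf{a}\otimes\mathbf{b}$ with $\mathbf{a}\in C^p$, $\mathbf{b}\in C^q$, which span $(C^\bullet\otimes C^\bullet)^r$.

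On such a tensor the tensor-product coboundary gives $\exd_\otimes(\mathbf{a}\otimes\mathbf{b}) = \exd\mathbf{a}\otimes\mathbf{b} + \mathbf{a}\otimes\exd\mathbf{b}$, by Definition~\ref{def:TensorProduct} dualized. No sign is needed over $\mathbb{F}_2$. Applying $\Delta_{r+1}$ to this yields $\exd\mathbf{a}\cupp\mathbf{b} + \mathbf{a}\cupp\exd\mathbf{b}$, which equals $\exd(\mathbf{a}\cupp\mathbf{b}) = \exd\Delta_r(\mathbf{a}\otimes\mathbf{b})$ by the Leibniz rule, Eq.~\eqref{eq-Leibniz}. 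This is the preceding lemma's computation run backwards.

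The main obstacle I anticipate is bookkeeping rather than substance. The degree conventions for $C^\bullet\otimes C^\bullet$ and for the internal hom $[C^\bullet\otimes C^\bullet, C^\bullet]$ must be set up so that degree-preserving maps are exactly the $0$-chains. The cochain tensor product must also reproduce the coboundary $\exd\mathbf{a}\otimes\mathbf{b}+\mathbf{a}\otimes\exd\mathbf{b}$ under the reindexing described after Definition~\ref{def:TensorProduct} and in Appendix~\ref{app:CMCDetails}. Once that identification is fixed, the universality is immediate.
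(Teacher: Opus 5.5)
Your proposal is correct and follows the paper's proof essentially step for step: the paper also decomposes $\cupp$ into bilinear pieces $\cupp_{p,q}$, invokes the universal property of the vector-space tensor product to obtain linear maps $\Delta_{p,q}: C^p\otimes C^q\to C^{p+q}$, assembles these into a $0$-chain of $[C^{\bullet}\otimes C^{\bullet}, C^{\bullet}]$, and uses the Leibniz rule to get $\exd\circ\Delta_{p+q} = \Delta_{p+q+1}\circ\exd$. Your remark that checking this on pure tensors suffices by linearity makes explicit a step the paper leaves implicit.
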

\begin{proof}
The proof follows by applying the so-called \emph{universal property} of the tensor product of chain complexes: any bilinear map that satisfies the Leibniz rule factors through a chain map \(\Delta : C^\bullet \otimes C^\bullet \to C^\bullet\).
This property itself follows immediately from the usual universal property of the tensor product for vector spaces.

More explicitly, let us start by noting that we can decompose the $\cupp$ product into a direct sum of bilinear maps: 
\begin{equation}
    \cupp\,  = \bigoplus_{r}  \underbrace{\left(\bigoplus_{p + q = r} \cupp_{p, q} \right)}_{\cupp_{r}}
\end{equation}
where $\cupp_{p, q}: C^p \times C^q \to C^{p + q}$.
Now, note that since $\cupp_{p, q}$ is a bilinear map, the universal property of the tensor product for vector spaces is that any bilinear map on vector spaces must factor through the tensor product.
\begin{equation}
    \begin{tikzpicture}[scale = 0.85, baseline = {([yshift=-.5ex]current bounding box.center)}]
        \node at (0, 0) {$C^p \times C^q$};
        \draw[-stealth] (0.8, 0) --node[pos=0.4, above]{$\otimes$} (1.65, 0);
        \draw[-stealth] (0.25, -0.4) -- node[pos=0.55, left] {$\cupp\ \ $} (1.9, -1.7);
        \draw[-stealth, dashed] (2.5, -0.4) -- node[pos=0.5, right] {$\Delta_{p, q}$} (2.5, -1.7);
        \node at (2.5, 0) {$C^p \otimes C^q$};
        \node at (2.5, -2) {$C^{p+q}$};
    \end{tikzpicture}
\end{equation}
In other words, any bilinear map must satisfy $\mathbf{a}_p\cupp_{p, q} \mathbf{b}_q = \Delta_{p, q}(\mathbf{a}_p\otimes \mathbf{b}_q)$ for some $\Delta_{p, q}$ and any choice of inputs $\mathbf{a}_p \in C^p$ and $\mathbf{b}_q \in C^q$.
Now, note that the collection of maps $\Delta_{\bullet} = \{\Delta_{p, q}\}$ forms a $0$-chain of the chain map complex of $[C^{\bullet} \otimes C^{\bullet}, C^{\bullet}]$ by definition.
Consequently, it suffices to show that $\Delta_{\bullet}$ is a $0$-cycle of $[C^{\bullet} \otimes C^{\bullet}, C^{\bullet}]$.
To do so, let us recognize that since $\cupp$ satisfies the Leibniz rule, it is the case that: 
\begin{equation}
    \mathbf{a}_p \cupp_{p, q+1} \exd \mathbf{b}_{q} + \exd \mathbf{a}_p \cupp_{p +1, q}  \mathbf{b}_{q} = \exd (\mathbf{a}_p \cupp_{p, q} \mathbf{b}_q)  
\end{equation}
As a consequence, we have that: 
\begin{equation}
    \exd \Delta_{p + q} (\mathbf{a} \otimes \mathbf{b}) = \Delta_{p + q + 1} (\exd (\mathbf{a} \otimes \mathbf{b}))
\end{equation}
Since this holds for all $\mathbf{a} \otimes \mathbf{b}$, we find that  $\exd \circ \Delta_{p + q} = \Delta_{p + q + 1} \circ \exd$, which cements the collection of maps $\Delta_{p, q}$ as a $0$-cycle of $[C^{\bullet} \otimes C^{\bullet}, C^{\bullet}]$.
\end{proof}

\subsubsection{Cohomology Invariants and the Chain Map Hierarchy}

The above shows that the chain map hierarchy provides a natural language for characterizing the space of cup products.
We now show that it further encompasses cohomology invariants derived from the cup product and even those described from pseudo-cup products that satisfy the integrated Leibniz rule of Eq.~\eqref{eq-integratedLeibnizrule}.
In particular, given the fact that the cohomology invariant $\mathcal{I}_{\mathcal{M}_{(\ell)}}$ of Eq.~\eqref{eq-cupproductinvariant} defines a logical $\mathsf{C}^{\ell-1} \mathsf{Z}$ gate via Eq.~\eqref{eq-copycup}, we may anticipate that $\mathcal{I}_{\mathcal{M}_{(\ell)}}$ can be used to define a cycle in the chain map complex $[C^{\bullet}, [C^{\bullet}, [\cdots, C_{\bullet}]]]$ at the $\ell$-th level of the chain map hierarchy.
We show that this is indeed the case.

In particular, the cohomology invariant $\mathcal{I}_{\mathcal{M}_{(\ell)}}$ defines a tensor: 
\begin{equation} \label{eq-invarianttensor}    \hat{\mathcal{I}}_{\mathcal{M}_{(\ell)}} = \sum_{p_1 + \cdots+ p_{\ell} = \ell} \sum_{\{\mathbf{x}_{p_j} \in \mathsf{B}(C^{p_j})\}} \mathcal{I}_{\mathcal{M}_{(\ell)}}(\{\mathbf{x}_{p_j}\})\, x_{p_1} \otimes \cdots \otimes x_{p_\ell}
\end{equation}
which we show is an $\ell$-cycle of the $\ell$-fold tensor product chain complex $C_{\bullet}^{\otimes \ell} \isom [C^{\bullet}, [C^{\bullet}, [\cdots, C_{\bullet}]]]$.
Note that in the above equation, $x_{p_j} = \mathbf{x}_{p_j}^{\mathsf{T}}$.

We show this via the following Lemma: 
\begin{lem}[Cohomology Invariants in the Chain Map Hierarchy] Suppose that $\cupp$ is a bilinear cochain product on the chain complex $C_{\bullet}$ that satisfies the $\ell$-fold integrated Leibniz rule of Eq.~\eqref{eq-integratedLeibnizrule} with respect to an $\ell$-cycle $\mathcal{M}_{(\ell)}$.
Then, the invariant tensor of Eq.~\eqref{eq-invarianttensor} defines a $\ell$-cycle of $C_{\bullet}^{\otimes \ell} \isom [C^{\bullet}, [C^{\bullet}, [\cdots, C_{\bullet}]]]$.
    
\end{lem}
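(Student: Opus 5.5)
The plan is to compute the tensor-product boundary $\partial_\otimes$ of Eq.~\eqref{eqn:TensorBoundary} applied to $\hat{\mathcal{I}}_{\mathcal{M}_{(\ell)}}$, and to show that each coefficient of the result is an instance of the integrated Leibniz rule, Eq.~\eqref{eq-integratedLeibnizrule}. Since $C_\bullet^{\otimes \ell}$ is based with the product basis, a chain vanishes exactly when every coefficient against a dual product basis element vanishes. So I will fix an arbitrary dual basis tensor $\mathbf{y}_{q_1}\otimes\cdots\otimes\mathbf{y}_{q_\ell}$ with $\sum_j q_j = \ell-1$, and show that pairing it with $\partial_\otimes \hat{\mathcal{I}}_{\mathcal{M}_{(\ell)}}$ gives zero. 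I then identify this with the cycle condition in $[C^\bullet,[C^\bullet,[\cdots,C_\bullet]]]$ via Theorem~\ref{thm:IteratedChainMapComplex}. The degree of $\hat{\mathcal{I}}$ in the tensor-product grading is $\sum p_j=\ell$, which is what makes it an "$\ell$-cycle" as stated.

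\textbf{Step 1 (dualize the boundary).} Evaluating cochains on chains, $\mathbf{y}(\partial x)=(\exd\mathbf{y})(x)$. With the Leibniz form of $\partial_\otimes$, the coefficient of $y_{q_1}\otimes\cdots\otimes y_{q_\ell}$ in $\partial_\otimes\hat{\mathcal{I}}$ therefore becomes
\begin{equation*}
\sum_{k=1}^{\ell}\ \sum_{\{\mathbf{x}_{p_j}\}} \mathcal{I}_{\mathcal{M}_{(\ell)}}(\{\mathbf{x}_{p_j}\})\,\prod_{j\neq k}\mathbf{y}_{q_j}(x_{p_j})\,(\exd\mathbf{y}_{q_k})(x_{p_k}).
\end{equation*}
Signs are irrelevant over $\mathbb{F}_2$.

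\textbf{Step 2 (resum using multilinearity).} For fixed $k$, the inner sum over basis elements expands the cochain $\exd\mathbf{y}_{q_k}$ in the basis $\mathsf{B}(C^{q_k+1})$. It also sets $\mathbf{x}_{p_j}=\mathbf{y}_{q_j}$ for $j\ne k$, using $\mathbf{y}(x)=\delta$ on basis elements. Since $\mathcal{I}_{\mathcal{M}_{(\ell)}}$ is multilinear in its cochain arguments (bilinearity of $\cupp$ and linearity of evaluation), the $k$-th term equals $\int_{\mathcal{M}_{(\ell)}}\mathbf{y}_{q_1}\cupp(\cdots\cupp(\exd\mathbf{y}_{q_k}\cupp(\cdots\cupp\mathbf{y}_{q_\ell})))$. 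Summing over $k$ gives exactly the left-hand side of Eq.~\eqref{eq-integratedLeibnizrule}, which vanishes by hypothesis. Hence $\partial_\otimes\hat{\mathcal{I}}_{\mathcal{M}_{(\ell)}}=0$.

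\textbf{Step 3 (transport to the iterated chain map complex).} By Theorem~\ref{thm:IteratedChainMapComplex}, the isomorphism $C_\bullet^{\otimes\ell}\simeq[C^\bullet,[\cdots,C_\bullet]]$ is a chain isomorphism, so cycles map to cycles. In Step 3 I will also match the degree conventions: the paper's shifts between $C^{\bullet}$ and $C_{\bullet}$ gradings must be checked so that "$\ell$-cycle" lands in the intended degree. The main obstacle I expect is the bookkeeping in Step 2. The integrated Leibniz rule is stated for arbitrary cochains, but only with the nested right-bracketing, and $\hat{\mathcal{I}}$ includes all degree tuples $(p_1,\dots,p_\ell)$ summing to $\ell$. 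I must make sure that every tuple $(q_j)$ with total $\ell-1$, including those with negative or out-of-range degrees, contributes a term that is either covered by Eq.~\eqref{eq-integratedLeibnizrule} or is trivially zero because $\exd\mathbf{y}_{q_k}$ or $\mathcal{M}_{(\ell)}$ pairs to nothing there. The first thing I would try is to restrict $C^\bullet$ to its nonzero degrees and verify those boundary cases directly.
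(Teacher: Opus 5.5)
Your proposal is correct and follows essentially the same route as the paper. Both pair $\partial\hat{\mathcal{I}}_{\mathcal{M}_{(\ell)}}$ against product basis cochains of total degree $\ell-1$, dualize the boundary into $\exd$ acting on one factor at a time, and recognize the sum in Eq.~\eqref{eq-integratedLeibnizrule}. The degree bookkeeping you flag is harmless: that rule is quantified over all degree tuples summing to $\ell-1$, and degrees outside the nonzero range have empty bases and so contribute nothing.
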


\begin{proof}
    To show this is the case, it suffices to show that any cochain $\mathbf{y}_{q_1} \otimes \cdots \otimes \mathbf{y}_{q_\ell}$, where $\mathbf{y}_{q_j} \in C^{q_j}$ and the $q_j$'s sum to $\ell-1$, evaluates to zero on $\partial \hat{\mathcal{I}}_{\mathcal{M}_{(\ell)}}$.
    To show this, note that it is easy to verify that: 
    \begin{align}
        &\mathbf{y}_{q_1} \otimes \cdots \otimes \mathbf{y}_{q_\ell} (\partial \hat{\mathcal{I}}_{\mathcal{M}_{(\ell)}}) = \exd (\mathbf{y}_{q_1} \otimes \cdots \otimes \mathbf{y}_{q_\ell} ) (\hat{\mathcal{I}}_{\mathcal{M}_{(\ell)}}) \nonumber \\
        &= \sum_{k = 1}^{\ell} \int_{\mathcal{M}_{(\ell)}} \mathbf{y}_{q_1} \cupp (\cdots \cupp \exd \mathbf{y}_{q_{k}} \cupp (\cdots \cupp \mathbf{y}_{q_{\ell}}))
    \end{align}
    The above definitionally equals zero for any bilinear co-chain product that satisfies the integrated Leibniz rule.
    Since the above holds for any basis $\ell-1$ chain of $C_{\bullet}^{\otimes \ell}$, the above shows that $\hat{I}_{\mathcal{M}_{(\ell)}}$ is a cycle.
\end{proof}

We conclude by remarking that the cohomology invariant tensors derived from the cup product can be nicely understood via the diagrammatic language built up earlier in our paper.
In particular, note that the cup product can be represented diagrammetically as:
\begin{equation}
     \mathbf{a} \cupp_{\Delta} \mathbf{b} = \Delta(\mathbf{a} \otimes \mathbf{b}) = \begin{tikzpicture}[scale = 1, baseline = {([yshift=-.5ex]current bounding box.center)}] 
        \draw[color = black] (-0.565, 0.565) -- (0, 0) -- (0.565, 0.565);
        \draw[color = black] (0, 0) -- (0, -0.8);
        \draw[fill = lightdodgerblue] (0,0) circle (0.3);
        \node at (-0.0,0) {\small $\Delta$};
        \draw[fill = lightgray] (-0.565,0.565) circle (0.2);
        \draw[fill = lightgray] (0.565,0.565) circle (0.2);
        \node at (-0.565,0.565) {$\mathbf{a}$};
        \node at (0.565,0.565) {$\mathbf{b}$};
    \end{tikzpicture}
\end{equation}
Consequently, a cohomology invariant of the form Eq.~\eqref{eq-cupproductinvariant} can be represented diagrammatically as: 
\begin{equation}
\mathcal{I}_{\mathcal{M}_{(\ell)}}(\{\mathbf{a}_{i}\}) = \hspace{-7 mm} \begin{tikzpicture}[scale = 0.8, baseline = {([yshift=-.5ex]current bounding box.center)}] 
        \draw[color = black] (-0.565, 0.565) -- (0, 0) -- (0.565, 0.565);
        \draw[color = black] (0, 0) -- (-0.565, -0.565);
        \draw[fill = lightdodgerblue] (0,0) circle (0.3);
        \node at (-0.0,0) {\scriptsize $\Delta$};
        \draw[fill = lightgray] (-0.565,0.565) circle (0.25);
        \draw[fill = lightgray] (0.565,0.565) circle (0.25);
        \node at (-0.565,0.965) {\scriptsize $\ell-1$};
        \node at (0.565,0.965) {\scriptsize $\ell$};

        \draw[color = black] (-0.6, -0.6) -- (-1.695, 0.565);
        \draw[color = black] (-0.6, -0.6) -- (-1, -1);
        \node at (-1.1, -1.1) {$\cdot$};
        \node at (-1.2, -1.2) {$\cdot$};
        \node at (-1.3, -1.3) {$\cdot$};
        \draw[fill = lightdodgerblue] (-0.6,-0.6) circle (0.3);
        \node at (-0.6,-0.6) {\scriptsize $\Delta$};
        \draw[fill = lightgray] (-1.695,0.565) circle (0.25);
        \node at (-1.695,0.965) {\scriptsize $\ell-2$};
        \draw[color = black] (-1.4, -1.4) -- (-1.8, -1.8);
        \draw[color = black] (-1.8, -1.8) -- (-3.955, 0.565);
        \draw[color = black] (-1.3, -1.1) -- (-2.8, 0.465);
        \draw[fill = lightgray] (-3.955,0.565) circle (0.25);
        \node at (-3.955,0.965) {\scriptsize $1$};
        \node at ( -2.9, 0.565) {$\cdots$};
        \draw[color = black] (-1.8, -1.8) -- (-1.8, -2.6);
        \draw[fill = lightdodgerblue] (-1.8,-1.8) circle (0.3);
        \node at (-1.8, -1.8) {\scriptsize $\Delta$};
        \draw[fill = white] (-1.8, -2.6) circle (0.3); 
        \node at (-1.8, -2.6) {\small $\mathcal{M}$};
    \end{tikzpicture} = \begin{tikzpicture}[scale = 0.8, baseline = {([yshift=-.5ex]current bounding box.center)}]
    \draw[color = black] (-0.2,0) -- (-1.2, 1.2);
    \draw[color = black] (0,0) -- (-0.0, 1.2);
    \draw[color = black] (0.1,0) -- (0.6, 1.2);
    \draw[color = black] (0.2,0) -- (1.2, 1.2);
    \node at (-0.45, 1.) {$\cdots$};
    \draw[fill = lightgray] (-1.2,1.2) circle (0.2);
    \draw[fill = lightgray] (0,1.2) circle (0.2);
    \draw[fill = lightgray] (0.6,1.2) circle (0.2);
    \draw[fill = lightgray] (1.2,1.2) circle (0.2);
    \draw[fill = lightdodgerblue] (0,0) circle (0.4);
    \node at (0, 0) {$\mathcal{I}$};
    \end{tikzpicture}
\end{equation}
which makes manifest that $\mathcal{I}_{\mathcal{M}_{(\ell)}}(\{\mathbf{e}_i\})$ defines a chain map at the $\ell$-th level of the chain map hierarchy.

\section{Discussion}
\label{sec:discussion}

\paragraph*{Summary}
The understanding of fault tolerant logical operations becomes increasingly limited as one climbs up the Clifford hierarchy.
Pauli logical gates are very well understood~\cite{Kitaev_2003,Freedman2001,Bombin_2007}: the chain complex formalism for CSS codes has provided a framework for working with Paulis that ultimately resulted in the construction of good qLDPC codes~\cite{Breuckmann_2021_rev, MacKay_2004, Kovalev_2013, Tillich_2014, Leverrier_2015, Hastings_2021, Breuckmann_2021,panteleev_asymptotically_2022, leverrier2022quantumtannercodes, dinur_good_2023}.
Moving to Cliffords, let alone higher levels of the hierarchy, even extremely basic questions regarding the existence of transversal logical gates apparently become very difficult to address.

The chain map hierarchy puts transversal gates from any level of the Clifford hierarchy  on the same footing as Pauli logicals.
Using concepts from homological algebra, we show that logical gates between two codes are themselves logicals of an auxiliary code---the chain map complex---which is isomorphic to the tensor product of the two codes.
Iteratively forming chain map complexes produces gates that climb the Clifford hierarchy.
Thus, the discovery of both Clifford and non-Clifford logical operations is reduced to a question about Pauli logicals of simple product codes.
Indeed, deformations of the Pauli logicals can be used to find transversal implementations of candidate logical actions on the codes of interest.

The chain map hierarchy is both a useful method for gate discovery, and a unifying framework capturing most prior methods of constructing logical operations in qLDPC codes.
Indeed, we have demonstrated the utility and breadth of this framework by discovering several previously unknown transversal gates.
For instance, we demonstrate that the entire Clifford group can be implemented in the toric code in finite depth.
To our knowledge, this is the first constant-depth implementation of the full Clifford group on an asymptotic family of connected\footnote{That is, not disentanglable in finite depth into a direct sum of two smaller codes.} qLDPC codes with multiple logical qubits.
We also find constant-depth addressable magic gates in the three-dimensional toric code and large families of addressable gates in non-manifold codes with many encoded qubits.
Unifying other perspectives on qLDPC logic, we have also demonstrated how code surgery~\cite{horsman2012surface,litinski2019game,cohen2022lowoverhead} and the construction of gates through cup products~\cite{zhu2025, breuckmann2025cupsgatesicohomology, lin2024transversalnoncliffordgatesquantum, zhu2025topological} can be understood and extended with the chain map hierarchy. 

The remainder of this section is dedicated to outlook.

\paragraph*{Discovery of gates}
Our work opens immediate opportunities for gate discovery in a plethora of qLDPC codes.
If even the toric code had many undiscovered gates prior to this work, one can anticipate that many other relevant code families similarly have transversal gates waiting to be found.
As qLDPC codes currently form the core of several proposed routes towards fault-tolerant quantum computing~\cite{yoder2025tourgross,webster2026pinnacle, cain2026shor, tham2026breakeven,menssen2026strategicplanneutralatom}, there is a pressing need for theoretical shortcuts that can circumvent experimentally difficult operations, and avoiding code surgery with transversal gates is a promising avenue in this direction.
The chain map hierarchy provides a generally applicable formalism for transversal gate discovery, and its application to modern, experimentally relevant~\cite{Periwal_2021, bluvstein_logical_2024,yoder2025tourgross,chiu_continuous_2025, bluvstein_fault-tolerant_2026,moses_race-track_2023,ransford_helios_2025,tham2026breakeven,zhao2026ultrahighratequantumerrorcorrection,menssen2026strategicplanneutralatom} qLDPC codes is an important avenue for future work.

Finding a transversal implementation of some logical action requires sparsifying a logical representative in the chain map complex, a process which in general is not easy.
However, in small code instances it may be possible to perform a numerical optimization to find sparse gates.
Our framework provides a clear search space: any transversal gate can be decomposed into a fixed (dense) representative, and a deformation by a boundary in the chain map complex.
By minimizing the sparsity degree over the choice of boundary, it is possible to find transversal gates, if they exist.
Of course, this minimization is a computationally hard problem,  analogous to finding the lowest weight logical of a code to identify the distance.
However, heuristics are available to approximate this solution.
Further, while the existence of a low-weight logical can be a disaster for fault tolerance, the existence of a lower weight transversal gate is only a benefit.
If the heuristically identified gate is already good enough, the existence of a better gate is not a danger to its fault tolerance.

Beyond small codes, an important code family in which it may be possible to find general criteria for the existence of transversal gates are balanced/lifted product codes, which include some good qLDPC codes~\cite{Hastings_2021,Breuckmann_2021,Panteleev_2021, panteleev_asymptotically_2022,leverrier2022quantumtannercodes, dinur_good_2023,Bravyi_2024, rakovszky2024physicsgoodldpccodes,yoder2025tourgross,cain2026shor,bhardwaj2026high, hong2026quantum, Lee2026logicalspectroscopyliftedproductcodes,Zheng2026logicalcomputationcanonicallifted}.
Indeed, a balanced product is simply a tensor product followed by a quotient by some symmetry of the code.
The chain map hierarchy interacts naturally with the tensor product, and any identified gate which is invariant under the symmetry has a well-defined action on the quotient (though it may be trivial).
These observations may enable a sufficiently concrete recipe to find transversal gates of balanced products of simple codes, as we did for tensor products.

The construction of the chain map hierarchy and its interpretation as a tensor product extends to chain complexes over other rings (Appendix~\ref{app:CMCDetails}), which encode non-qubit codes.
There is thus an immediate opportunity for gate discovery in many simple non-qubit codes.

Finally, the chain map hierarchy can also provide insight into the design of new code families with many transversal gates.
The structure of gates on a code is expressed by the structure of logicals in repeated tensor products of the code.
Thus, the chain map hierarchy shows that, in order for a code to have transversal gates, the logicals in its tensor products must be easily deformed and sparsifiable.
This is a design criterion which guarantees that a code has transversal gates.

\paragraph*{(Im)possibility of sparsification}
As is already implicit above, the ability or inability to sparsify gates is the core link between the specification of a logical action and the design of a fault-tolerant transversal gate.
In addition to finding code families with many gates, it is also important to ask which code families can \emph{never} have many transversal gates~\cite{Eastin2009TransversalRestrictions,Bravyi2013GatesforLocalStabilizer,Beverland2016TQFTGates}.
The chain map complex also provides a handle on this question, which will be the topic of upcoming work~\cite{Sahay2026GoNoGo}.
Indeed, finding no-go theorems of this kind is something that our general framework can enable that, for instance, other methods of numerical search cannot.

For instance, the chain map complex provides a simple dimension-counting argument for a generalization of the Bravyi-K\"onig theorem---that a geometrically local circuit on a \(D\)-dimensional code can never implement a gate beyond the \(D\)th level of the Clifford hierarchy---to nonlocal transversal circuits in manifold CSS codes.
Consider a \(D\)-dimensional manifold CSS code \(C_\bullet\) with \(D_Z\)-dimensional \(Z\) logicals. 
We seek a logical \(\mathsf{C}^{\ell-1}\mathsf{Z}\) gate, and ask what constraints are placed on \(D\) and \(D_Z\) by the gate's existence.
This gate corresponds to a \(D_Z \ell\)-dimensional cycle in \(C_\bullet^{\otimes \ell}\), which itself describes an ambient \(D\ell\)-dimensional manifold.
The cycle must be deformed to mostly avoid intersection with the \(D(\ell-1)\)-dimensional planes formed by fixing a coordinate in one code, as the number of such intersections is precisely the number of gates that act on a qubit at the fixed coordinate.
The dimension of a transverse intersection%
\footnote{``Transverse intersection'' here refers to the differential geometry notion of transversality: at any point of intersection, the tangent spaces of the two submanifolds generate the tangent space of the ambient manifold. 
This is not to be confused with transversal gates.}
of a \(D_Z \ell\)- and a \(D(\ell-1)\)-dimensional surface in an ambient \(D\ell\)-dimensional space is
\begin{equation}
    D_Z \ell + D(\ell-1) - D\ell = D_Z \ell - D.
\end{equation}
For an \(O(1)\) intersection we need \(D_Z \ell-D \leq 0\), which is satisfied when
\begin{equation}
    \ell \leq D/D_Z.
\end{equation}
That is, only diagonal gates from the \(\lfloor D/D_Z \rfloor\)th level of the chain map hierarchy can be sparsified.
The best case scenario is to consider \(D_Z  = 1\) dimensional \(Z\) logicals (the code has a \(Z\)-distance of \(1\) with \(D_Z=0\)), in which case only logical actions from the \(D\)th level of the hierarchy are available: the Bravyi-K\"onig theorem.

A similar argument rules out a sparsification of a transversal Toffoli gate in any manifold CSS code~\cite{JochymOConnor2018}: the cycle corresponding to a logical \(\mathsf{CCNOT}\) in \(C_\bullet \otimes C_\bullet \otimes C^\bullet\) is a tensor product of cycles encoding two \(\bar{Z}\) and one \(\bar{X}\), which has dimension \(2D_Z + (D-D_Z)\). 
An \(O(1)\) intersection with fixed planes of dimension \(2D\) in \(3D\) ambient dimensions requires
\begin{equation}
    2D_Z  + (D-D_Z ) + 2D - 3D = D_Z  \leq 0,
\end{equation}
which is only satisfied by classical codes with \(D_Z=0\) and a \(Z\)-distance of \(1\).

These arguments can be generalized to gates between distinct codes, and likely can be formalized into a rigorous theorem.
Extending these arguments to non-manifold codes requires identifying a meaningful notion of both the dimension of chain complex and cycle, and also of transverse intersections of cycles, such that the dimension and intersection notions are compatible. 
In particular, the dimension of the transverse intersection of cycles of dimension \(p\) and \(q\) in an ambient \(D\)-dimensional complex should be \(p+q-D\).
[A cup product would provide such a notion of intersection (Appendix~\ref{app:CupsAndPoincare}), but one expects the theorem to hold without additional algebraic constraints on the code.]

\paragraph*{Generalizations}
There are several enticing generalizations of the chain map hierarchy formalism that are likely feasible.
One of the most pressing is understanding non-commuting intra-block gates.
We have provided a comprehensive description of interblock gates between distinct codes, and subsets of intra-block gates on a single code that commute with one another, but have not constructed a framework that can simultaneously describe non-commuting gates, and their composition rules.
As we have seen in the toric code examples, our setting is frequently already enough to identify a generating set of all possible gates, but to sparsify products of several generators into a low-depth implementation requires understanding the whole group of gates simultaneously.

The extension of the chain map hierarchy to non-CSS codes is also likely to be possible.
The codes that are most concisely described by a chain complex are CSS codes (either qubit or non-qubit, Appendix~\ref{app:CMCDetails}), but non-CSS codes can also be described by a chain complex with an additional symplectic structure~\cite{Kitaev1997quantumcomputation,Calderbank1997gf4,Bombin_2007,Haah2013latticequantumcodesexotic}.
Chain maps that preserve the symplectic structure---symplectomorphisms---correspond to logical gates.
The full set of symplectomorphisms encodes many non-commuting gates, and it is not clear how to assemble it into a chain complex analogous to the chain map complex.
However, making progress on describing non-commuting gates is also likely to unlock a chain map hierarchy description of gates in non-CSS codes.

While code surgery gadgets are encoded in the chain map hierarchy, the criteria for their fault-tolerance is far less concrete.
Indeed, while sufficient criteria for the fault-tolerance of code surgery are known~\cite{williamson2024lowoverhead,ide2025faulttolerant,baspin2025fast}, the necessary and sufficient criteria for fault tolerance are generally unclear.
Thus, while the chain map complex provides deformations of code surgery gadgets, there is not a clear criteria like sparsity to distinguish when these deformations have arrived at a fault-tolerant implementation.
Clarifying such a criterion would allow a fuller application of the chain map hierarchy to code surgery.
One route to this could be to formulate an extension of our methods for subsystem codes, and then treat the code surgery ancillas and the code of interest as both belonging to a parent subsystem code.

Moving slightly beyond quantum error correction,
the graphical chain tensor formalism for working with (non)-Clifford logical operations may be a useful tool for the co-design of quantum error correcting codes and quantum algorithms.
The legs of these chain tensors are chain complexes, which contain both data about logical qubits as well as how these logical qubits unravel into physical qubits.
As such, the formalism provides a kind of physical-qubit-aware logical circuit: the structure of the tensor network encodes a logical circuit, while the legs of the tensor carry microscopic information.
It is plausible that this dual description allows the interaction between algorithmic primitives and code primitives to be examined transparently.

\paragraph*{Connections to Condensed Matter/High-Energy Physics}
Moving even further from quantum error correction, our formalism can be useful for condensed matter and high-energy physics applications.
The first application is to the understanding of emergent symmetries.
In the language of QFT, the transversal gates we find are emergent 0-form symmetries~\cite{McGreevy2023GenSymm}.
They are emergent because they fix the low-energy effective theory (the code space), without fixing the microscopic Hamiltonian (the stabilizers).
Being 0-form is sometimes described as being non-deformable, but the deformations of the chain map complex show that this description is imprecise.
As the symmetry is emergent, it actually can be deformed while preserving the low-energy action, despite  the fact that the action on the microscopic degrees of freedom changes.
These deformations are reminiscent of the space-time picture of 0-form symmetries---they are deformable codimension-1 sheets in \(D+1\) dimensional spacetime.
The spacetime deformations, realized microscopically in the lattice, are exactly the chain map complex boundaries.

The same experimental advances that have bolstered QEC research in qLDPC codes with non-local connectivity have also prompted interest in \emph{non-local matter}---the emergent properties of systems with non-local, but sparse, connectivity~\cite{rakovszky2023physicsgoodldpccodes,rakovszky2024physicsgoodldpccodes,placke_topological_2024,placke_expansion_2025,zhu_non-abelian_2026, Christos2026nonabelian, yin_low-density_2025, de_roeck_low-density_2025, mcdonough2026calderbankshorsteanecodesgroupvaluedqudits}.
Our formalism is adapted to this setting, and can extend the same insights regarding, say, emergent symmetries in more traditional systems to non-local or expanding geometries.

Another application is to the construction of new exactly solvable models of fractonic topological orders with novel phenomenology.
A companion work~\cite{Christos2026nonabelian} pointed out how mapping cones of higher levels in the chain map hierarchy encode stabilizer models of nonabelian topological orders.
This construction extends to any chain complex, not just two-dimensional manifold codes, providing nonabelian analogues of fractonic~\cite{Chamon2005fracton,Haah_2011,Vijay_2016, Tantivasadakarn_2021} and topological spin-glass~\cite{placke_topological_2024} order.

\subparagraph*{Note added:}
During the preparation of this manuscript, Ref.~\cite{Holmes2026quantumlogiccodes} appeared, which also reports a logical \(\CZLogical{}{}\) between the two encoded qubits of the toric code. This appeared shortly after our logical \(\CZLogical{}{}\) gate was presented at QEC 2026. There is no overlap between the approach of Ref.~\cite{Holmes2026quantumlogiccodes} and our formalism, and none of the other novel gates we report were identified in Ref.~\cite{Holmes2026quantumlogiccodes}.
Also after QEC 2026, a blog post~\cite{He2026RoundRobin} observed that (generalized) \emph{round robin} \(\mathsf{C^{\ell-1}Z}\) gates~\cite{Yoder2016RoundRobin} act as logical (possibly trivial) \(\overline{\mathsf{C^{\ell-1}Z}}\) gates, and that these generate all logical circuits of physical \(\mathsf{C^{\ell-1}Z}\)s.
Through the lens of our work, trivial round robin  gates coincide with the \(\mathsf{C^{\ell-1}Z}\)-type chain map hierarchy boundaries, and logically nontrivial round robin gates coincide with dense canonical logical representatives from \(\CMH{\ell}\).
However, our general formalism which organizes  these gates into a chain complex does not appear in Ref.~\cite{He2026RoundRobin}, nor do any of our new gates.

\begin{acknowledgements}
    We thank Nikolas Breuckmann, Margarita Davydova, Nik Gjonbalaj, Jeongwan Haah, Gideon Lee, John Preskill, Dominic Williamson, and Willers Yang for useful discussions.
    We especially thank Fiona Burnell, Maine Christos, Chiu Fan Bowen (Leo) Lo, and Tibor Rakovszky for collaborations on related work. 
    R.S. thanks Stanford Q-FARM and the Leinweber Institute for Theoretical Physics at Stanford for their hospitality during this work.
    This work was supported by the Harvard Graduate School of Arts and Sciences Merit Fellowship Award (RS),
    a Stanford Q-FARM Bloch postdoctoral fellowship (DML),
    a Packard Fellowship in Science and Engineering (DML, VK), 
    the US Department of Energy, Office of Science under Award No.\ DE-SC0019380 (DML), 
    and the Office of Naval Research Young Investigator Program (ONR YIP) under Award Number N00014-24-1-2098 (VK).
\end{acknowledgements}

\appendix
\renewcommand{\theequation}{\thesection\arabic{equation}}

\section{Deferred proofs}
\label{app:DeferredProofs}

In Section~\ref{sec:ChainMapHierarchy}, several technical or lengthy proofs were deferred.
In this appendix, we collect those proofs.

\subsection{Pauli measurements}
\label{subapp:PauliMeasurements}

Existing perspectives on Pauli measurement using code surgery can be reinterpreted in the chain map formalism.
In this subappendix, we review two key results demonstrating that code surgery Pauli measurement gadgets are encoded by chain maps.
The first is Theorem~\ref{thm:CodeSurgeryChainMaps}.
\begin{shaded*}
    \CodeSurgeryChainMaps*
\end{shaded*}
En route to Theorem~\ref{thm:CodeSurgeryChainMaps}, we will encounter a more traditional perspective on code surgery in terms of mapping cones.
\begin{shaded*}
    \MappingConeCode*
\end{shaded*}
Both Theorem~\ref{thm:CodeSurgeryChainMaps} and Lemma~\ref{lem:MappingConeCode} follow from the same calculation, which we now present.

\begin{proof}
Theorem~\ref{thm:CodeSurgeryChainMaps} and Lemma~\ref{lem:MappingConeCode} can be proved by tracking the stabilizer group for the code throughout the process of Eq.~\eqref{eq-measurementcircuit}.
That is, under the successive operations of \(U_{\mathsf{CZ}}^\phii\) and the measurement of the ancillas from code \(C\),
\begin{equation} \label{eq:measurementcircuit_appendix}
M^{\phii}_{Z} = \left(\prod_{e \in \mathsf{B}(C_0)} M^C_{e, Z} \right) \times  \left( \prod_{\mathbf{v} \in \mathsf{B}(C^{-1}) } M^{C}_{\mathbf{v}, X} \right) U_{\mathsf{CZ}}^{\phii}.
\end{equation}

Before the entangling circuit \(U_{\mathsf{CZ}}^\phii\), the codes \(C\) and \(D\) have disjoint stabilizers:
\begin{equation}\label{eqn:InitialStabilizers}
    X_{\mathbf{v}}^C X_{\exd_0 \mathbf{v}}^C,\quad
    Z_{\partial_0 e}^C Z_{e}^C,\quad
    X^D_{\exd_0 \mathbf{w}}, \quad
    Z^D_{\partial_1 q}.
\end{equation}
Here, (co)boundary maps in the subscripts of Pauli operators from code \(C\) are the boundaries for the chain complex \(C_\bullet\), and similarly for \(D\).

The circuit \(U_{\mathsf{CZ}}^\phii\) affects the qubits from \(\mathsf{B}(C^{-1})\) and \(\mathsf{B}(D_0)\) only.
Conjugating all stabilizers from Eq.~\eqref{eqn:InitialStabilizers} by \(U_{\mathsf{CZ}}^\phii\) gives
\begin{equation}
    X_{\mathbf{v}}^C X_{\exd_0 \mathbf{v}}^C Z^D_{\phii_0(\mathbf{v})},\quad
    Z_{\partial_0 e}^C Z_{e}^C,\quad
    Z^C_{\partial_0 \phii_1^{\transpose}(\mathbf{w})} X^D_{\exd_0 \mathbf{w}} , \quad
    Z^D_{\partial_1 q}.
\end{equation}
We used \(\phii_1^{\transpose}(\exd_0 \mathbf{w}) = \partial_0 \phii_1^{\transpose}(\mathbf{w})\).
An equivalent choice of generators is obtained by multiplying \(Z^C_{\partial_0 \phii_1^{\transpose}(\mathbf{w})} X^D_{\exd_0 \mathbf{w}}\) by any other stabilizer. In particular, we multiply by \(Z_{\partial_0 \phii_1^{\transpose}(\mathbf{w})}^C Z_{\phii_1^{\transpose}(\mathbf{w})}^C\), which is a product of \(Z_{\partial_0 e}^C Z_{e}^C\) stabilizers, obtaining
\begin{equation}
    X_{\mathbf{v}}^C X_{\exd_0 \mathbf{v}}^C Z^D_{\phii_0(\mathbf{v})},\quad
    Z_{\partial_0 e}^C Z_{e}^C,\quad
    Z^C_{\phii_1^{\transpose}(\mathbf{w})} X^D_{\exd_0 \mathbf{w}}, \quad
    Z^D_{\partial_1 q}.
\end{equation}

The next operation in \(M^{\phii}_{Z}\) is the measurement of the \(\mathsf{B}(C^{-1})\) qubits in the \(X\) basis.
In the stabilizer group, this appends generators \(x^C_{\mathbf{v}} X^C_{\mathbf{v}}\), where the joint distribution of the random scalars \(x^C_{\mathbf{v}} \in \{\pm1\}\) is determined by the specific code state. This leaves the  \(\mathsf{B}(C^{-1})\) in an unentangled product state, so we neglect them.
Further, only stabilizers with \(Z^C_{v}\) parts that commute with all \(x^C_{\mathbf{v}}X^C_{\mathbf{v}}\) remain in the stabilizer group, which in this case means that no stabilizer with \(Z\)-support on \(\mathsf{B}(C^{-1})\) survives.
The new stabilizer group is generated by
\begin{equation}\label{eqn:AlmostMappingCone}
    x_{\mathbf{v}}^C X_{\exd_0 \mathbf{v}}^C Z^D_{\phii_0(\mathbf{v})},\quad
    Z_{c}^C,\quad
    Z^C_{\phii_1^{\transpose}(\mathbf{w})} X^D_{\exd_0 \mathbf{w}}, \quad
    Z^D_{\partial_1 q},
\end{equation}
where \(c \in \mathsf{ker}(\partial_0^C)\) is a cycle in \(C_0\) (including both boundaries and nontrivial cycles).

At this point, we observe that the values of the logicals \(Z^D_{\phii_0(\mathbf{c})}\) have been revealed, where \(\mathbf{c} \in \mathsf{ker}(\exd^C_0)\) is a cocyle of \(C^{-1}\).
Indeed,
\begin{equation}
    x_{\mathbf{c}}^C X_{\exd_0 \mathbf{c}}^C Z^D_{\phii_0(\mathbf{c})} = x_{\mathbf{c}}^C Z^D_{\phii_0(\mathbf{c})}
\end{equation}
is in the stabilizer group, so the code state is a \((+1)\)-eigenstate of this operator.
Thus, it is an \((x_{\mathbf{c}}^C)\)-eigenstate of \(Z^D_{\phii_0(\mathbf{c})}\).
In particular, if \(\mathbf{c} = \exd^C_{-1} \mathbf{a}\) is a boundary, then \(x_{\exd_{-1} \mathbf{a}}^C = 1\), as it measures the stabilizer \(Z^D_{\phii_0(\exd_{-1} \mathbf{a})} = Z^D_{\partial_1 \phii_{-1}(\mathbf{a})}\).
While we have now measured \(Z^D_{\phii_0(\mathbf{c})}\), the code \(D\) is still entangled with the ancilla code \(C\), so stopping here would only give a destructive measurement.
Performing more measurements and feedback returns the system to a codestate of \(D\) alone.

Before proceeding to the \(Z^C\) measurement step, we complete the proof of Lemma~\ref{lem:MappingConeCode}.
The stabilizers of Eq.~\eqref{eqn:AlmostMappingCone} are almost the mapping cone code stabilizers of Lemma~\ref{lem:MappingConeCode} (with the \(c\) logicals initialized in the \(\ket{0}\) state), except that the \(x_{\mathbf{v}}^C X_{\exd_0 \mathbf{v}}^C Z^D_{\phii_0(\mathbf{v})}\) stabilizers don't have the right signs.
These signs can be flipped by multiplying the code state (conjugating the stabilizers) by Pauli \(Z^C_e\) or \(X^D_{\mathbf{e}}\) operators.
First, multiply by the conjugate \(X^D\) logical for any logical \(Z^D_{\phii_0(\mathbf{c})}\) for which \(x^C_{\mathbf{c}} = -1\). Absorbing the sign flips into new coefficients \(\td{x}^C_{\mathbf{v}}\), we now have that \(\td{x}^C_{\mathbf{c}} = 1\) for all cocycles \(\mathbf{c}\).
Consider the chain \(r \in C_{-1}\) such that \(\td{x}^C_{\mathbf{v}} = (-1)^{\mathbf{v}(r)}\).
Then \(\td{x}^C_{\mathbf{c}} = 1\) implies that \(\mathbf{c}(r) = 0\) for all cocycles \(\mathbf{c}\), which implies that \(r = \partial b\) is a boundary---it is a cycle as \(\exd_{-1}\mathbf{a}(r) = \mathbf{a}(\partial_{-1} r) = 0\) for all \(\mathbf{a} \in C^{-2}\), which is only possible if \(\partial_{-1} r = 0\); and it is homologically trivial as it does not pair with any nontrivial cocyle.
Acting by \(Z^C_b\) on the code state then flips the stabilizers \(\td{x}_{\mathbf{v}}^C X_{\exd_0 \mathbf{v}}^C Z^D_{\phii_0(\mathbf{v})}\) for which \(\exd_0 \mathbf{v}(b) = 1\), which is exactly those with \(\td{x}^C_{\mathbf{v}}=-1\), as desired.
Thus, after the \(X^C\) measurement, the state of the system is related to the mapping cone code code state by a product of Paulis, as claimed.

Finally, we analyze the effect of the \(Z\) measurement of the \(\mathsf{B}(C_0)\) qubits on the stabilizer group.
Continuing from Eq.~\eqref{eqn:AlmostMappingCone}, the stabilizers become
\begin{equation}
    x_{\mathbf{c}}^C Z^D_{\phii_0(\mathbf{c})},\quad
    z^C_{\phii_1^{\transpose}(\mathbf{w})} X^D_{\exd_0 \mathbf{w}}, \quad
    Z^D_{\partial_1 q},
\end{equation}
where we have removed single-qubit stabilizers and \(z^C_{v} \in \{\pm 1\}\) is a measurement outcome.
Similarly to the previous paragraph, we have that \(z^C_{\phii_1^{\transpose}(\mathbf{c})} X^D_{\exd_0 \mathbf{c}} = z^C_{\phii_1^{\transpose}(\mathbf{c})} = 1\) for every cocycle \(\mathbf{c} \in \mathsf{ker}(\exd_0)\), so that if we define the cochain \(\mathbf{r}\in C^0\) satisfying \(z^C_{\phii_1^{\transpose}(\mathrm{w})} = (-1)^{\mathbf{r}(\phii_1^{\transpose}(\mathrm{w}))}\), then we have
\begin{equation}
    \mathbf{r}(\phii_1^\transpose(\mathbf{c})) = \mathbf{c}(\phii_1(\mathbf{r})) = 0, \quad\text{for all }\mathbf{c} \in \mathsf{ker}(\exd_0),
\end{equation}
and hence \(\phii_1(\mathbf{r}) = \partial^D_0 a\) is a cycle.
Then multiplying the codestate by \(Z^D_{a}\) flips exactly those stabilizers \(z^C_{\phii_1^{\transpose}(\mathbf{w})} X^D_{\exd_0 \mathbf{w}}\) for which \(z^C_{\phii_1^{\transpose}(\mathrm{w})} = -1\).
Indeed, \(Z^D_{a}\) anticommutes with those \(X^D_{\exd_0 \mathbf{w}}\) for which
\begin{equation}
    \exd^D_0 \mathbf{w}(a) = \mathbf{w}(\partial^D_0 a) = \mathbf{w}(\phii_1(\mathbf{r})) = 1.
\end{equation}
Thus, the final stabilizer group is
\begin{equation}
    x_{\mathbf{c}}^C Z^D_{\phii_0(\mathbf{c})},\quad
    X^D_{\exd_0 \mathbf{w}}, \quad
    Z^D_{\partial_1 q},
\end{equation}
which are the stabilizers for \(D_\bullet\), with additional (known) logical projections.
This completes the proof of Theorem~\ref{thm:CodeSurgeryChainMaps}.
\end{proof}

\subsection{Clifford logic}
\label{subapp:CliffordLogic}

The full encoding of Clifford logic in the chain map complex \([\mathbf{C}^{[2]}]_\bullet = [C^\bullet, D_\bullet]\) is summarized by Theorem~\ref{thm:Cliffordlogic}.
\begin{shaded*}
    \Cliffordlogic*
\end{shaded*}
\begin{proof}
Parts 1 and 2 are restatements of Theorems~\ref{thm:TransversalGatesFromChainMaps} and \ref{thm:CodeSurgeryChainMaps}, respectively, in the language of the chain map complex.
Recalling that a logical class \([\phii]\) specifies an action on homology and that cycles \(\phii_\bullet\) are chain maps, those theorems translate this to a logical circuit or logical measurement.
The statement that two circuits \(U_{\mathsf{CZ}}^\phii\) and \(U_{\mathsf{CZ}}^\psi\) (similarly, that two measurements \(M_Z^\phii\) and \(M_Z^\psi\)) that act differently on logical qubits correspond to distinct \([\phii] \neq [\psi]\) is just the statement that the map from homological action to logical action is well-defined: the same homology class cannot map to multiple logical actions.

Part 3 is the only result we have not previously shown. It follows from the fact that every \(\mathbb{F}_2\)-linear map of homology \(\phii_*: H^\bullet(C) \to H_\bullet(D)\) is realized as the induced action of some chain map \(\phii_\bullet : C^\bullet \to D_\bullet\), and the previous two results.
    
Indeed, the chain map \(\phii_\bullet\) can be constructed explicitly, which we do for a \(\phii_*\) which is only nonzero on the \(H^0(C) \to H_0(D)\) component (which determines the logical action). 
We give a more abstract presentation in Appendix~\ref{app:CMCDetails}.
Choose a basis \(\{\mathbf{L}_j\} = \mathsf{B}(H^0(C))\) and dual basis \(\{L_j\} = \mathsf{B}(H_0(C))\) such that \(\mathbf{L}_j(L_k) = \delta_{jk}\).
Pick any representative cycles \(c_j \in L_j\) and \(d_j \in \phii_*(\mathbf{L}_j)\).
Then the map
\begin{equation}
    \phii_0(\mathbf{c}) = \sum_j \mathbf{c}(c_j) d_j
\end{equation}
is a (usually dense) chain map concentrated on \(C^0 \to D_0\) which has \(\phii_*\) as its induced action.
Indeed, if \(\mathbf{c} = \exd \mathbf{b}\), then
\begin{align}
    \phii_0(\exd \mathbf{b}) = \sum_j \mathbf{b}(\partial c_j) d_j = 0 = \partial \phii_{-1}(\mathbf{b}),
\end{align}
where we used \(\partial c_j = 0\) and \(\phii_{-1} = 0\).
Similarly
\begin{equation}
    \partial \phii_0(\mathbf{c}) = \sum_j \mathbf{c}(c_j) \partial d_j = 0 = \phii_{1}(\exd \mathbf{c}),
\end{equation}
where we also used \(\partial d_j = 0\).
Thus, \(\phii_\bullet\) is a chain map.
It also has the action of \(\phii_*\) on the basis cocycle representatives \(\mathbf{m}_j \in \mathbf{L}_j\),
\begin{align}
    \phii_0(\mathbf{m}_j) = \sum_k \mathbf{m}_j(c_k) d_k = d_j \in \phii_*(\mathbf{L}_j),
\end{align}
and thus agrees with \(\phii_*\) on all of \(H^0(C)\) by linearity.

This calculation extends to \(H_{-1}([\mathbf{C}^{[2]}])\) by replacing \(C^\bullet\) with \(C[-1]^{\bullet}\).
\end{proof}

\subsection{Chain map hierarchy measurements}
\label{subapp:CMHMeasurements}

This subappendix proves that the measurement circuit from Section~\ref{subsubsec:NonAbelianSurgery},
\begin{equation}\tag{\ref{eq-nonAbelianmeasurementcircuit}}
M_{\mathsf{C}^{\ell - 2} \mathsf{Z}}^{\phii}  = \left(\prod_{e \in \mathsf{B}^C_0} M^C_{e, Z} \right) \times  \left( \prod_{\mathbf{v} \in \mathsf{B}_C^{-1} } M^{C}_{\mathbf{v}, X} \right) U_{\mathsf{C^{\ell-1} Z}}^{\phii},
\end{equation}
performs a logical \(\mathsf{C}^{\ell - 2} \mathsf{Z}\) measurement.

\begin{shaded*}
    \NonAbelianSurgery*
\end{shaded*}
\begin{proof}
The proof mirrors that of Theorem~\ref{thm:Cliffordlogic}, directly tracking the evolution of the stabilizer group, with the additional complication that, after the magic gate \(U_{\mathsf{C^{\ell-1} Z}}^{\phii}\), the stabilizer group becomes non-Pauli.

Initially, the stabilizer group is
\begin{equation}
    X_{\mathbf{v}}^{(1)} X_{\exd \mathbf{v}}^{(1)},\quad
    Z_{\partial e}^{(1)} Z_{e}^{(1)},\quad
    X^{(j)}_{\exd \mathbf{w}}, \quad
    Z^{(j)}_{\partial q} \quad(2\leq j\leq \ell)
\end{equation}
where superscripts denote code blocks, and component indices on (co)boundary maps have been suppressed. Similarly, we omit code block labels on chains which already appear as subscripts of Paulis with code block labels.
After the entangling gate \(U_{\mathsf{C^{\ell-1} Z}}^{\phii}\), the \(X\) stabilizer generators are conjugated to
\begin{equation}
    X_{\mathbf{v}}^{(1)} X_{\exd \mathbf{v}}^{(1)} U_{\mathsf{C^{\ell-2} Z}}^{\phii(\mathbf{v}_{(1)})},\quad
    U_{\mathsf{C^{\ell-2} Z}}^{\phii(\exd \mathbf{w}_{(j)})} X^{(j)}_{\exd \mathbf{w}} = U_{\mathsf{C^{\ell-2} Z}}^{\partial^{[\ell-1]}\phii(\mathbf{w}_{(j)})} X^{(j)}_{\exd \mathbf{w}} .
\end{equation}
In the expression \(\phii(\mathbf{v}_{(j)})\) (and similar) we regard \(\phii\) as a map from cochains in \(C^\bullet_{(j)}\) to maps one level lower in the hierarchy with the reordering isomorphism \([C^{\bullet-1}_{(1)}, [\cdots]] \isom [C^{\bullet}_{(j)}, [\cdots]]\).
We also used that \(\phii\) is a cycle to commute it with the boundary map.

The stabilizer generators \(U_{\mathsf{C^{\ell-2} Z}}^{\partial^{[\ell-1]}\phii(\mathbf{w}_{(j)})} X^{(j)}_{\exd \mathbf{w}}\) can be replaced with 
\begin{equation}
    U_{\mathsf{C^{\ell-2} Z}}^{\phii(\mathbf{w}_{(j)})} X^{(j)}_{\exd \mathbf{w}},
\end{equation}
where the \(\mathsf{C^{\ell-2} Z}\) gates now act between qubits in \(\mathsf{B}(C_{(1)}^0)\) and the other codes,
by multiplying by non-Pauli stabilizers from the \(\ell-1\) level of the chain map hierarchy.
Indeed, if \(C_{(1)}^\bullet\) were not a hypergraph cluster code, this would be clear, as \(U_{\mathsf{C^{\ell-2} Z}}^{\phii(\mathbf{w}_{(j)})}\) is exactly a stabilizer in the setting where all qubits are in \(\mathsf{B}(C_{(j)}^0)\).
To deal with the shift, we are more explicit.
Writing \(\phii' = \phii(\mathbf{w}_{(j)})\), we expand the boundary as
\begin{align}
    U_{\mathsf{C^{\ell-2} Z}}^{\partial^{[\ell-1]}\phii'} &= 
    \exp\left[ i\pi (\partial^{[\ell-1]}\phii)^{\alpha_1 \cdots w_j \cdots \alpha_\ell} n_{\alpha_1}\cdots n_{\alpha_\ell} \right] \nnb \\
    &= \prod_{k \neq j} \exp\left[ i\pi \phii^{ \alpha_1 \cdots w_j \cdots \alpha_\ell}  n_{\alpha_1} \cdots n_{\partial \alpha_k} \cdots  n_{\alpha_\ell} \right] \nnb \\
    &= \prod_{k \neq j} (Z^{(k)}_{\partial \alpha_k})^{\phii^{\alpha_1 \cdots w_j \cdots \alpha_\ell} n_{\alpha_1} \cdots  n_{\alpha_\ell} }. \label{eqn:ExplicitStabilizerDecomp}
\end{align}
Here, all products over \(n_{\alpha_k}\) exclude \(n_{\alpha_j}\), and the final line also excludes \(n_{\alpha_k}\).
For \(k \geq 2\), we see that every term in the product over \(k\) is a power of a (product of) code stabilizers \(Z^{(k)}_{\partial q}\) (which are still stabilizers after the \(U_{\mathsf{C^{\ell-1} Z}}^{\phii}\) gate, as they commute with it).
Thus, these terms all act trivially on the ground state, and can be multiplied into the generator \(U_{\mathsf{C^{\ell-2} Z}}^{\partial^{[\ell-1]}\phii(\mathbf{w}_{(j)})} X^{(j)}_{\exd \mathbf{w}}\) without making the code space smaller---any old codeword is still a codeword.
This operation also does not make the code space larger, as all of the diagonal chain map hierarchy stabilizers are independent of the \(X^{(j)}_{\exd \mathbf{w}}\) part of the stabilizer.
For the \(k=1\) term, the factor in Eq.~\eqref{eqn:ExplicitStabilizerDecomp} is \emph{not} a stabilizer.
Rather, the stabilizer involving \(k=1\) would be
\begin{equation}
    (Z^{(1)}_{\partial \alpha_1} Z^{(1)}_{\alpha_1})^{\phii^{\alpha_1 \cdots w_j \cdots \alpha_\ell} n_{\alpha_1} \cdots  n_{\alpha_\ell} }.
\end{equation}
Instead of removing the \(Z^{(k)}_{\partial \alpha_k}\) factor, multiplying by this term only moves it to the \(\mathsf{B}(C^0_{(1)})\) qubits.
Explicitly, the new stabilizer generator is
\begin{equation}
    (Z^{(1)}_{\alpha_1})^{\phii^{\alpha_1 \cdots w_j \cdots \alpha_\ell} n_{\alpha_1} \cdots  n_{\alpha_\ell} } X^{(j)}_{\exd \mathbf{w}} = U_{\mathsf{C^{\ell-2} Z}}^{\phii(\mathbf{w}_{(j)})} X^{(j)}_{\exd \mathbf{w}},
\end{equation}
as claimed.

Summarizing, the stabilizer group after the entangling gate is generated by
\begin{equation}
    X_{\mathbf{v}}^{(1)} X_{\exd \mathbf{v}}^{(1)} U_{\mathsf{C^{\ell-2} Z}}^{\phii(\mathbf{v}_{(1)})},\quad
    Z_{\partial e}^{(1)} Z_{e}^{(1)},\quad
    U_{\mathsf{C^{\ell-2} Z}}^{\phii(\mathbf{w}_{(j)})} X^{(j)}_{\exd \mathbf{w}}, \quad
    Z^{(j)}_{\partial q}.
\end{equation}
Performing a measurement of the \(\mathsf{B}(C^{-1}_{(1)})\) qubits in the \(X\) basis and eliminating the resulting disentangled product state results in new stabilizers
\begin{equation}
    x^{(1)}_{\mathbf{v}} X_{\exd \mathbf{v}}^{(1)} U_{\mathsf{C^{\ell-2} Z}}^{\phii(\mathbf{v}_{(1)})},\quad
    Z_{\mathcal{M}}^{(1)},\quad
    U_{\mathsf{C^{\ell-2} Z}}^{\phii(\mathbf{w}_{(j)})} X^{(j)}_{\exd \mathbf{w}}, \quad
    Z^{(j)}_{\partial q},
\end{equation}
where \(x^{(1)}_{\mathbf{v}}\) are random measurement outcomes and \(\mathcal{M}_(1) \in \ker \partial_{(1)}\) is a cycle (both boundaries and nontrivial cycles).
As in the Pauli measurement case, the eigenvalue of the operators to be measured have now been revealed: if \(\mathbf{c} \in H^{-1}(C_{(1)})\) is a cocycle, and \(x_{\bm{\mu}}^{(1)}\) denotes a product of \(x^{(1)}_{\mathbf{v}}\) values over the cocycle, then we have that
\begin{equation}
    x^{(1)}_{\mathbf{c}} X_{\exd \mathbf{c}}^{(1)} U_{\mathsf{C^{\ell-2} Z}}^{\phii(\mathbf{c}_{(1)})} = x^{(1)}_{\mathbf{c}} U_{\mathsf{C^{\ell-2} Z}}^{\phii(\mathbf{c}_{(1)})}
\end{equation}
is a stabilizer. 
Thus, in the code space, we must have \(x^{(1)}_{\mathbf{c}} = U_{\mathsf{C^{\ell-2} Z}}^{\phii(\mathbf{c}_{(1)})}\).
Translating this to a circuit of logical measurements in some basis gives Eq.~\eqref{eqn:CMHMeasurementLogical}.

Measuring a \(\mathsf{B}(C^{0}_{(1)})\) qubit in the \(Z\) has the following effect on \(\mathsf{C}^{\ell-2}\mathsf{Z}\) gates targeting that qubit.
If the measurement outcome is \(z_{e} = +1\), then the \(\mathsf{C}^{\ell-2}\mathsf{Z}\) gate cannot give a \(-1\) phase no matter the state of the other qubits it acts on, and the gate becomes trivial.
If \(z_{e} = -1\), then all the other qubits in the gate must still have \(z=-1\) to give a \(-1\) phase, so the gate is effectively replaced with a \(\mathsf{C}^{\ell-3}\mathsf{Z}\) on the remaining qubits.
Thus, writing \(r_e = (1-z_e)/2\), the post-\(Z\)-measurement stabilizer group can be written
\begin{equation}
    x^{(1)}_{\mathbf{c}} U_{\mathsf{C^{\ell-2} Z}}^{\phii(\mathbf{c}_{(1)})},\quad
    (-1)^{\phii^{\alpha_1 \cdots w_j \cdots \alpha_\ell} r_{\alpha_1} n_{\alpha_2} \cdots n_{\alpha_\ell}}
    X^{(j)}_{\exd \mathbf{w}}, \quad
    Z^{(j)}_{\partial q}.
\end{equation}
Alternatively, we can define a cochain \(\mathbf{r}_{(1)}\) such that  
\begin{equation}
    (-1)^{\phii^{\alpha_1 \cdots \alpha_\ell} N_{\alpha_1} n_{\alpha_2} \cdots n_{\alpha_\ell}} = U_{\mathsf{C^{\ell-2} Z}}^{\phii(\mathbf{r}_{(1)})},
\end{equation}
so that the stabilizer group is
\begin{equation}
    x^{(1)}_{\mathbf{c}} U_{\mathsf{C^{\ell-2} Z}}^{\phii(\mathbf{c}_{(1)})},\quad
     U_{\mathsf{C^{\ell-3} Z}}^{\phii(\mathbf{r}_{(1)}, \mathbf{w}_{(j)})}
    X^{(j)}_{\exd \mathbf{w}}, \quad
    Z^{(j)}_{\partial q}.
\end{equation}
In an analogous calculation to the proof of Theorem~\ref{thm:CodeSurgeryChainMaps}, the fact that contracting the chain tensor \(\phii(\mathbf{r}_{(1)})\) with any cocycle \(\mathbf{c}_{(j)}\) (or a combination thereof) gives a gate which is trivial on the code space (as \(X^{(j)}_{\exd \mathbf{c}} = 1\)), shows that \(\phii(\mathbf{r}_{(1)}) =\partial^{[\ell-1]} \psi\) is a boundary.
Then one can check that conjugating the stabilizers by
\begin{equation}
    U_{\mathsf{C^{\ell-2} Z}}^{\psi}
\end{equation}
removes the extra \(U_{\mathsf{C^{\ell-3} Z}}^{\phii(\mathbf{r}_{(1)}, \mathbf{w}_{(j)})}\) from the \(X^{(j)}_{\exd \mathbf{w}}\) stabilizers (up to a product of other chain map hierarchy stabilizers).
Indeed, this calculation is essentially homological, so it follows from the analogous calculation in \textbf{}Theorem~\ref{thm:CodeSurgeryChainMaps} when taking the target code to be \([\mathbf{C}']\).

Exhaustiveness is a corollary of exhaustiveness for the unitary gates \(U_{\mathsf{C^{\ell-1} Z}}^{\phii}\).
Any \(\phii_*\) can be chosen in Eq.~\eqref{eqn:CMHMeasurementLogical}, provided that the dimension of \(H^{-1}(C_{(1)})\) is large enough to express the number of measurements to be made, so any measurement can be performed through a circuit of this kind.
\end{proof}

\subsection{Intra-block \texorpdfstring{\(\mathsf{CCZ}\)}{CCZ} and \texorpdfstring{\(\mathsf{CNOT}\)}{CNOT} gates}
\label{subapp:IntraBlockCCZ}

The subappendix proves that \(0\)-cycles \(\phii_\bullet \in \CMH{\ell}_0\) can be used to define intra-block \(\mathsf{C^{\ell-1}Z}\) gates with a symmetrized logical action, and that block-off-diagonal \(0\)-cycles of \(\phii_\bullet \in [C^\bullet, C^\bullet]_0\) can be used to define \(\mathsf{CNOT}\) gates.
We begin with \(\mathsf{C^{\ell-1}Z}\) gates.

\begin{shaded*}
    \CCZIntraBlock*
\end{shaded*}
\begin{proof}
    Using that \(X_\mathbf{a} V_\ell^{\phii\dagger}X_{\mathbf{a}}\) commutes with \(V_\ell^{\phii}\), we can express
    \begin{equation}
        [V_\ell^\phii, X_{\mathbf{a}}]_{\rm grp} = \exp\left[ i\pi  \phii^{\alpha \cdots \gamma}(X_{\mathbf{a}} n_\alpha \cdots n_\gamma X_{\mathbf{a}} - n_\alpha \cdots n _\gamma) \right].
    \end{equation}
    The conjugation of \(n_e\) by \(X_{\mathbf{a}}\) flips \(n_e \to 1-n_e \equiv 1 + n_e \bmod 2\) when \(\mathbf{a}_e = 1\). That is,
    \begin{equation}
        X_{\mathbf{a}} n_e X_{\mathbf{a}} = a_e + n_e \mod 2.
    \end{equation}
    Substituting this and expanding gives
    \begin{align}
        [V_\ell^\phii, X_{\mathbf{a}}]_{\rm grp} = \exp\left[ i\pi \sum_{\emptyset \neq A \subseteq [\ell]}   \phii^{\alpha_1 \cdots \alpha_\ell}\prod_{i \in A} a_{\alpha_i} \prod_{j \in A^c} n_{\alpha_j} \right],
    \end{align}
    where \([\ell] = \{1, ..., \ell\}\) and summation is implied.
    This can be expressed as a product of \(V_{\ell-|A|}\) circuits,
    \begin{equation}\label{eqn:VphiGroupCommutator}
        [V_\ell^\phii, X_{\mathbf{a}}]_{\rm grp} = \prod_{\emptyset \neq A \subseteq [\ell]} V_{\ell-|A|}^{\phii(\mathbf{a}^{\otimes A})}.
    \end{equation}
    Here, \(\mathbf{a}^{\otimes A} = \bigotimes_{i \in A} \mathbf{a}_i\) is the repeated tensor product of the cochain \(\mathbf{a}\) from each of the cochain complex copies \(C^\bullet_{(i)}\).
    Thus, \(\phii(\mathbf{a}^{\otimes A})\) is the contraction of \(\phii\) with \(\mathbf{a}\) on each of the legs corresponding to \(A\).

    From Eq.~\eqref{eqn:VphiGroupCommutator}, we show that \(V^\phii_\ell\) is logical with a logical action that, up to logical gates from the \((\ell-1)\)th level of the Clifford hierarchy, is the symmterization of the logical tensor \(\phii_*\).
    We use induction on \(\ell\).
    The induction base is provided by \(\ell=1\), where \(V^\phii_1\) is just the logical \(\bar{Z}\) encoded by the cycle \(\phii\).
    Suppose that the theorem holds at level \(\ell-1\).
    Then \(V_\ell^\phii\) preserves the code space: if \(\mathbf{a} = \exd \mathbf{b}\), the right hand side of Eq.~\eqref{eqn:VphiGroupCommutator} is a product of \(V_{\ell-|A|}^{\phii(\exd\mathbf{b}^{\otimes A})} = V_{\ell-|A|}^{\partial^{[\ell-1]}\phii(\mathbf{b}_i\otimes\exd\mathbf{b}^{\otimes A\setminus\{i\}})}\) gates, which by the inductive hypothesis are logical and have trivial logical action as \(\partial^{[\ell-1]} \phii\) is a null homotopy.
    Now consider the action of \(V_\ell^\phii\) on \(X_\mathbf{a}\), where \(\mathbf{a} \in \mathbf{L}\in H^0(C)\) is a cycle.
    Ignoring terms in Eq.~\eqref{eqn:VphiGroupCommutator} from levels of the Clifford hierarchy lower than \(\ell-1\), we have 
    \begin{align}
        [V_\ell^\phii, X_{\mathbf{a}}]_{\rm grp} &= \prod_{i \in [\ell]} V_{\ell-1}^{\phii(\mathbf{a}_i)} \times \cdots \\
        &= V_{\ell-1}^{\sum_{i \in [\ell]} \phii(\mathbf{a}_i)} \times \cdots 
    \end{align}
    The logical action for each \(V_{\ell-1}^{\phii(\mathbf{a}_i)}\) is, by the hypothesis, \([\phii_*(\mathbf{L}_i)]^{\mathbb{S}}\), the symmetrized logical action of \(\phii_*(\mathbf{L}_i)\) on the remaining \(\ell-1\) legs.
    The level \(\ell\) part of the logical action of \(V^\phii_\ell\) is thus
    \begin{equation}
        \sum_{i \in [\ell]} [\phii_*(\mathbf{L}_i)]^{\mathbb{S}} = \phii^{\mathbb{S}}_*(\mathbf{L}_1),
    \end{equation}
    corresponding to the full symmetrization of \(\phii_*\).
\end{proof}

We now proceed to Theorem~\ref{thm:CNOTIntraBlock}, describing intra-block \(\mathsf{CNOT}\) gates.

\begin{shaded*}
    \CNOTIntraBlock*
\end{shaded*}
\begin{proof}
    We show this by direct calculation.
    In particular, let us note that since $\phii_0$ is block off-diagonal, 
    \begin{equation}
        V_{\mathsf{CNOT}}^{\phii} = \exp\left(i \pi m^{\alpha} \phii_{\alpha}^{\ \beta} n_{\beta} \right)
    \end{equation}
    The gate above is product of commuting $\mathsf{CNOT}$ gates. 
    As a consequence, we have that: 
    \begin{equation}
        [V_{\mathsf{CNOT}}^{\phii}, X_{\mathbf{a}}] = \exp\left(i\pi m^{\alpha} \phii_{\alpha}^{\ \beta} X_{\mathbf{a}} n_{\beta} X_{\mathbf{a}} \right) V^{\phii}_{\mathsf{CNOT}} \nonumber
    \end{equation}
    Using the fact that $X_{\mathbf{a}} n_{\beta} X_{\mathbf{a}} \overset{2}{=} \mathbf{a}_{\beta} + n_{\beta}$ where $\overset{2}{=}$ is equality modulo two, we have that: 
    \begin{equation}
        [V_{\mathsf{CNOT}}^{\phii}, X_{\mathbf{a}}]  = \exp\left(i \pi m^{\alpha} \phii^{\ \beta}_{\alpha} \mathbf{a}_{\beta} \right) = X_{\phii(\mathbf{a})}
    \end{equation}
    An identical calculation is used to show that $[V_{\mathsf{CNOT}}^{\phii}, Z_{b}] = Z_{\phii^{\mathsf{T}}(b)}$.
\end{proof}

\section{Details and Generalizations of the Chain Map Complex}
\label{app:CMCDetails}

In this appendix we provide additional details regarding the chain map complex.
In Appendix~\ref{subapp:OtherRings} we make immediate generalizations to chain complexes over other commutative rings.
This includes rings other than fields, which are relevant for codes over qudits with non-prime-power numbers of levels
Appendix~\ref{subapp:MapTensorIsom} proves the isomorphism between the chain map complex and the tensor product, and Appendix~\ref{subapp:HomotopyVsLogical} reviews the distinctions between the homology classes of the chain map complex and classes of gates with the same logical action.

\subsection{Rings other than \texorpdfstring{\(\FF_2\)}{F2}}
\label{subapp:OtherRings}

The chain map complex generalizes without important modification to chain complexes over any commutative ring, with all the properties we used in the main text continuing to hold~\cite{Weibel1994HomologicalBook}.
Here, we explicitly work through the construction of \([C, D]_\bullet\) in this more general setting.

The definition of sparse, based chain complexes mirrors the case of fields: 
\begin{defn}[Sparse, based chain complex]\label{def:ChainComplexAppendix}
    A chain complex \(C_\bullet\) over the commutative ring \(R\) is a sequence of \(R\) modules and \(R\)-linear maps (module homomorphisms)
    \begin{equation}
         \cdots \xrightarrow[\partial_2]{} C_1 \xrightarrow[\partial_1]{} C_0 \xrightarrow[\partial_0]{} C_{-1} \xrightarrow[\partial_{-1}]{}  \cdots 
    \end{equation}
    such that \(\partial_j \partial_{j+1} = 0\) for all \(j \in \ZZ\).
    \(C_\bullet\) is \emph{based} if all but finitely many of the \(C_j\) are zero and each nonzero \(C_j\) is free and equipped with finite basis \(\mathsf{B}(C_j)\), and \emph{sparse} if the matrices for \(\partial_j\) in these bases are sparse.
    
    Associated to every chain complex \(C_\bullet\) is a cochain complex \(C^\bullet\),
    \begin{equation}
        \cdots \xrightarrow[]{\exd_{-1}} C^{-1} \xrightarrow[]{\exd_0} C^0 \xrightarrow[]{\exd_1} C^{1} \xrightarrow[]{\exd_2}  \cdots 
    \end{equation}
    where \(C^j = \Hom(C_j, R)\) is the space of \(R\)-linear functionals on \(C_j\) and \(\exd_j \mathbf{x} = \mathbf{x} \partial_j\) for \(\mathbf{x} \in C^{j-1}\).
    \(C^\bullet\) is based and sparse if \(C_\bullet\) is, with respect to the dual basis to \(\mathsf{B}(C_j)\), denoted \(\mathsf{B}(C^j)\).
\end{defn}

Similarly, the definition of the chain map complex is semantically extremely similar to the \(\mathbb{F}_2\) case.

\begin{defn}[Chain map complex]\label{def:ChainMapComplexAppendix}
    The \emph{internal hom} or \emph{chain map complex} associated to two chain complexes \(C_\bullet\) and \(D_\bullet\), written $[C,D]_\bullet$, is a chain complex
    \begin{equation}
         \cdots \xrightarrow[\partial^{[2]}_2]{} [C,D]_1 \xrightarrow[\partial^{[2]}_1]{} [C,D]_0 \xrightarrow[\partial^{[2]}_0]{} [C,D]_{-1} \xrightarrow[\partial^{[2]}_{-1}]{}  \cdots 
    \end{equation}
    with components
    \begin{equation}\label{eqn:CMCComponentsAppendix}
        [C,D]_j = \bigoplus_{i \in \mathbb{Z}} \Hom(C_i, D_{i+j}),
    \end{equation}
    that is, collections of $R$-linear maps from $C_i$ to $D_{i + j}$.
    Given $f_{\bullet} = (f_k) \in [C, D]_{j}$, the boundary maps $\partial^{[2]}: [C, D]_{j} \to [C, D]_{j-1}$ are defined by:
    \begin{equation}\label{eqn:CMCBoundaryIndicesAppendix}
        (\partial^{[2]} f)_{k} = \partial_{j+k}^{(D)}  f_k - (-1)^j f_{k-1}  \partial_{k}^{(C)}.
    \end{equation}
\end{defn}

Writing \(\partial^{(C)} = \partial\) and \(\partial^{(D)} = \td\partial\) and denoting the degree of \(f\) by \(|f|\), Eq.~\eqref{eqn:CMCBoundaryIndicesAppendix} can be more compactly expressed
\begin{equation}\label{eqn:CMCBoundaryImplicit}
    \partial^{[2]} f = \tilde{\partial} f - (-1)^{|f|} f  \partial,
\end{equation}
where component indices must be inferred.

As a remark, if both \(C_\bullet\) and \(D_\bullet\) were unbounded (infinitely many nonzero \(C_j\) and \(D_j\)), then Eq.~\eqref{eqn:CMCComponentsAppendix} would have to be modified to
\begin{equation}
    [C,D]_j = \prod_{i \in \mathbb{Z}} \Hom(C_i, D_{i+j}).
\end{equation}
That is, the sequences of maps \((f_k)\) can have infinitely many nonzero terms.

The only difference in the definition above to the main text is in \(\partial^{[2]}\), where an alternating sign \((-1)^{|f|}\) now appears which was invisible in the \(\mathbb{F}_2\) case. 
This ensures that \((\partial^{[2]})^2 = 0\). Indeed, we have
\begin{subequations}
\begin{align}
    \partial^{[2]} \partial^{[2]} f &= \partial^{[2]} \left( \td\partial f - (-1)^{|f|} f\partial\right) \\
    &= \td\partial^2  f - f \partial^2 +  (-1)^{|f|}  \left( \td\partial  f \partial  - \td\partial f \partial \right)  \\
    &= 0.
\end{align}
\end{subequations}

The 0-cycles of \([C,D]_\bullet\) still have the interpretation of chain maps. They are exactly the collections of maps \(f_\bullet\) such that
\begin{equation}
    \partial^{[2]}_0 f = \td\partial f - f \partial =0,
\end{equation}
which is the commutativity condition for the diagram of Eq.~\eqref{eqn:ChainMapDiagram}. 
Null homotopies continue to be identified with the image of \(\partial^{[2]}_1\), so that the 0-homology \(H_0([\C,\D])\) encodes homotopy classes of gates. 
Null homotopic chain maps continue to have trivial action on homology, so that all chain maps belonging to the same homotopy class have the same induced action on homology.

An important example of the chain map complex is the \emph{dual complex} \(C^\vee_\bullet = [C,R_0]_\bullet\), where
\begin{equation}
    R_0 = \cdots \to 0 \to R \to 0 \to \cdots
\end{equation}
is the \emph{unit} chain complex with a single copy of \(R\) at component zero.
\(C^\vee_\bullet\) is essentially the same as the cochain complex \(C^\bullet\), up to a change of indices and sign conventions.
Indeed, the only nonzero component of \(C^\vee_j\) in Eq.~\eqref{eqn:CMCComponentsAppendix} arises when the target of the map is in degree zero, in which case the domain is \(C_{-j}\). That is
\begin{equation}\label{eqn:CveeIsomCochainComponent}
    C^\vee_j \isom \Hom(C_{-j},R) = C^{-j}.
\end{equation}
Similarly, the boundary map of \(C^\vee_\bullet\) is identical to the coboundary map for \(C^\bullet\) up to a sign. Denoting the dual differential as \(\partial^\vee\) and making indices explicit:
\begin{equation}
    (\partial^\vee_j f)_{1-j} = -(-1)^{j}f_{-j} \partial_{1-j} = -(-1)^{j} \exd_{1-j} f_{-j}.
\end{equation}
That is, \(\partial^\vee_\bullet = -(-1)^{\bullet}\exd_{1-\bullet}\). 
The sign alternation is trivial when \(R = \mathbb{F}_2\), so the main text does not distinguish between \(C^\bullet\) and \(C^\vee_\bullet\). 
Here, we will retain \(C^\vee_\bullet\).

\subsection{Isomorphism with the tensor product}
\label{subapp:MapTensorIsom}

It is a familiar fact that linear maps between vector spaces \(V\) and \(W\) can be regarded as vectors in the space \(W \otimes V^*\), where \(V^*\) is the dual space to \(V\). 
The same turns out to be true for finite dimensional free modules over any commutative ring \(R\), and even for based chain complexes.

\begin{theo}\label{thm:HomTensorIsomAppendix}
    For any based chain complexes \(C_\bullet\) and \(D_\bullet\), there is an isomorphism
    \begin{equation}
         [C,D]_\bullet \isom (D \otimes C^\vee)_\bullet.
    \end{equation}
\end{theo}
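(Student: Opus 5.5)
We build the isomorphism degreewise from the module-level identification $\Hom(V,W)\isom W\otimes \Hom(V,R)$ for finite free $R$-modules. We then check that, with a suitable sign twist, it intertwines $\partial^{[2]}$ with the tensor-product boundary.

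\emph{Step 1 (components).} By Eq.~\eqref{eqn:CveeIsomCochainComponent}, $C^\vee_{-i}=\Hom(C_i,R)$. The tensor product therefore has components
\begin{equation*}
(D\otimes C^\vee)_j=\bigoplus_{i} D_{i+j}\otimes C^\vee_{-i}=\bigoplus_i D_{i+j}\otimes \Hom(C_i,R),
\end{equation*}
where we reindex Definition~\ref{def:TensorProduct} so that the $D$-factor sits at degree $i+j$. This matches Eq.~\eqref{eqn:CMCComponentsAppendix} summand by summand. On each summand we define
\begin{equation*}
\Phi(d\otimes \mathbf{x})=\epsilon\,(c\mapsto \mathbf{x}(c)\,d),
\end{equation*}
with a sign $\epsilon=\pm1$ depending only on the degrees. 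Because $C_i$ is free with finite basis $\mathsf{B}(C_i)$, $\Phi$ is bijective: its inverse sends $f$ to $\epsilon\sum_{b\in\mathsf{B}(C_i)} f(b)\otimes\mathbf{b}$. This is the usual argument, valid for finitely generated free modules over any commutative ring. Boundedness of based complexes makes the direct sums finite, so no product/sum issue arises.

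\emph{Step 2 (boundaries).} We need the tensor boundary $\partial_\otimes(d\otimes\mathbf{x})=\tilde\partial d\otimes\mathbf{x}+(-1)^{|d|}d\otimes\partial^\vee\mathbf{x}$, i.e.\ the Koszul-signed version of Eq.~\eqref{eqn:TensorBoundary}. We also use $\partial^\vee_k=-(-1)^k\exd_{1-k}$. With these, $\Phi(\tilde\partial d\otimes \mathbf{x})$ reproduces the $\tilde\partial f$ term of Eq.~\eqref{eqn:CMCBoundaryImplicit}. The second term gives the map
\begin{equation*}
c\mapsto (-1)^{|d|}\bigl(-(-1)^{-i}\bigr)\,\mathbf{x}(\partial c)\,d,
\end{equation*}
which is a multiple of $f\partial$. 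Here $|d|=i+j$ and $|\mathbf{x}|=-i$, so this sign equals $-(-1)^{j}$. That is exactly the sign $-(-1)^{|f|}$ required in Eq.~\eqref{eqn:CMCBoundaryImplicit}. So $\epsilon=1$ should already work.

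\emph{Step 3 (sign conventions).} We expect this bookkeeping to be the main obstacle. The paper's dual-complex sign for $\partial^\vee$ and the Koszul sign in $\partial_\otimes$ must combine to exactly $(-1)^{|f|}$, and a mismatch by $(-1)^{i}$ is easy to make. If one appears, we would fix it by choosing $\epsilon=(-1)^{i(i+1)/2}$ or $(-1)^{ij}$ on the $(i,j)$ summand. We would then recheck both terms, since the first term now also picks up the ratio of $\epsilon$ between the summands $(i,j)$ and $(i,j-1)$.

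\emph{Step 4 (based structure).} Finally, $\Phi$ maps the product basis $\{d\otimes\mathbf{b}\}$ to the elementary matrix maps. Hence the isomorphism respects bases and sparsity, which is what Theorem~\ref{thm:HomTensorIsom} and Corollary~\ref{corr:auxiliarycode} use. Over $\mathbb{F}_2$ all signs vanish and the statement reduces to Theorem~\ref{thm:HomTensorIsom}.
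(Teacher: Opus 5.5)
Your proof is correct and is the paper's argument: the paper uses your map with $\epsilon=1$, namely $\gamma(x\otimes f)=[c\mapsto x\,f(c)]$, and matches the two terms of $\partial_\otimes$ against $\tilde\partial f$ and $-(-1)^{|f|}f\partial$ using $|x|+|f|=j$, exactly as in your Step 2. Your Step 2 sign computation already settles it, so the fallback twists in Step 3 are unnecessary; the only thing left to state is that a chain map bijective in every degree is automatically a chain isomorphism.
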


The tensor product of chain complexes is symmetric, so we also have \([C,D]_\bullet \isom (C^\vee \otimes D)_\bullet\), but the version stated in Theorem~\ref{thm:HomTensorIsomAppendix} involves fewer minus signs in the proof.

\begin{proof}
    The components of \((D \otimes C^\vee)_\bullet\) are
    \begin{subequations}
    \begin{align}
        (D \otimes C^\vee)_j &:= \bigoplus_{i \in \ZZ} D_{i+j} \otimes C^\vee_{-i} \\
        &\isom \bigoplus_{i \in \ZZ} D_{i+j} \otimes C^{i} \\
        &\isom \bigoplus_{i \in \ZZ} \Hom(C_i, D_{i+j})
        \\
        &=: [C, D]_j.
    \end{align}
    \end{subequations}
    In the second line we used Eq.~\eqref{eqn:CveeIsomCochainComponent}, and in the third line we used that the module of \(R\)-linear maps between finite dimensional free modules \(C_i\) and \(D_{i+j}\) is isomorphic to the tensor product \(D_{i+j} \otimes C^i\). Thus, we have an isomorphism between the components of \((D \otimes C^\vee)_\bullet\) and \([C, D]_\bullet\). It remains to show that this gives a chain map.

    We construct the chain isomorphism \(\gamma\) by components. For a homogeneous tensor \(x \otimes f \in D_{i+j} \otimes C^\vee_{-i}\), we define
    \begin{equation}
        \gamma(x \otimes f) := \left[ c_i \mapsto x f(c_i) \right] \in \Hom(C_i, D_{i+j}).
    \end{equation}
    This is just the canonical isomorphism from \(D_{i+j} \otimes C^i\) to \(\Hom(C_i, D_{i+j})\), and we now verify that \(\gamma\) is in fact a chain map.
    
    We must show that \(\gamma \partial_\otimes = \partial^{[2]} \gamma\), where \(\partial_\otimes\) is the boundary map for the tensor product. Recall that
    \begin{subequations}
    \begin{align}
        \partial_\otimes(x \otimes f) &:=  \td\partial x \otimes f + (-1)^{|x|} x \otimes \partial^\vee f \\
        &= \td\partial x \otimes f - (-1)^{|x|+|f|} x \otimes f\partial,
    \end{align}
    \end{subequations}
    where \(|x| = i+j\) is the degree of \(x\) and \(|f|=-i\) is the degree of \(f\).
    The two terms \(\td\partial x \otimes f\) and \(x \otimes f\partial\) belong to different components of \(( D \otimes C^\vee)_{j-1}\), namely \(D_{i+j-1} \otimes C^\vee_{-i}\) and \(D_{i+j} \otimes C^\vee_{-i-1}\), respectively. Mapping each of these by \(\gamma\) gives
    \begin{subequations}
    \begin{align}
        \gamma( \td\partial x \otimes f ) &= \left[ c_i \mapsto \td\partial x f(c_i) \right], \label{eqn:gd1} \\
        \gamma( x \otimes f \partial) &= \left[ c_{i+1} \mapsto x f(\partial c_{i+1}) \right]. \label{eqn:gd2}
    \end{align}
    \end{subequations}
    The full \(\gamma \partial_\otimes(f\otimes x)\) is the sum of these two maps, the second having a coefficient \(- (-1)^{|x|+|f|}\).
    To compute \(\partial^{[2]} \gamma\), denote \(g = \gamma(x\otimes f)\). Then we have \(\partial^{[2]} g = \td\partial g - (-1)^{|g|}g\partial\), where \(|g| = |x|+|f|\) is the degree of \(g\). Explicitly, the two terms in the differential evaluate to
    \begin{subequations}
    \begin{align}
        \td\partial g &= \left[ c_i \mapsto  \td\partial x f(c_i) \right], \label{eqn:dg1} \\
        g\partial &= \left[ c_{i+1} \mapsto x f(\partial c_{i+1}) \right] \label{eqn:dg2}
    \end{align}
    \end{subequations}
    We manifestly have equality between the pairs Eqs.~(\ref{eqn:gd1},\ref{eqn:dg1}) and Eqs.~(\ref{eqn:gd2},\ref{eqn:dg2}), and the minus sign coefficients on the latter also match.

    Thus, \(\gamma\) is a chain map composed of component-wise isomorphisms. This automatically makes \(\gamma\) a chain isomorphism.
\end{proof}

The proof above generalizes without alteration to any bounded chain complex (finitely many nonzero components) over modules with the property that \(D_{i+j} \otimes C^i \isom \Hom(C_i, D_{i+j})\).
The most general class of modules for which this holds are \emph{finitely generated projective} modules---those which occur as direct summands of free modules.
Such chain complexes have not, to our knowledge, been used for the construction of quantum codes before, but we note the generalization here nonetheless.

Also note that the theorem makes no reference to sparsity.
Nonetheless, if both \(C_\bullet\) and \(D_\bullet\) are sparse, then so are \([C,D]_\bullet\) and \((D \otimes C^\vee)_\bullet\).

As a final aside, we remark that the theorem statement is the defining property of a \emph{compact closed monoidal category}.
A more abstract version of the theorem would be: the category of bounded chain complexes over free finite-dimensional \(R\)-modules (or finitely generated projective modules), when equipped with the usual tensor product \(\otimes\) and internal hom \([-,-]\), becomes a compact closed monoidal category.

\subsection{Homotopy vs logical action}
\label{subapp:HomotopyVsLogical}

The \(X\) logicals of the CSS code associated to \(C_\bullet\) are determined by the homology \(H_0(C^\vee) \isom H^0(C)\).
Similarly, the logical action of a \(\mathsf{CZ}\) gate corresponding to the chain map \(\phii: C^\vee_\bullet \to D_\bullet\) is determined by the induced action of \(\phii\) on homology, \((\phii_{*})_0: H^0(C) \to H_0(D)\).
Two chain maps with the same logical action need not, in general, be related by a chain homotopy.
As such, there are generally more homology classes in \(H_0([C^\vee,D])\) than there are logical \(\mathsf{CZ}\) gates between \(C^\vee_\bullet\) and \(D_\bullet\).
This was encounted in the main text already in the case of \(\mathbb{F}_2\), but the separation between logical action and homology action is still more severe for general commutative rings \(R\).
Here we explain the distinction between homotopy classes of chain maps and classes of chain maps with the same induced action on \(H^0(C)\).

Over a field, such as \(\mathbb{F}_2\) in the main text, the only distinction between homotopy class and logical action is easy to identify.
Namely, that logical action is determined only by the action on \(H^0(C)\), but the homotopy class is determined by the action on \(H_i(C^\vee) \isom H^{-i}(C)\) for all \(i\).
If the code corresponding to \(C_\bullet\) has nontrivial redundancies of the \(X\) checks (\(Z\) checks), then the action of the gate on these redundancies is encoded by the induced action on \(H^{-1}(C)\) (\(H_{1}(C)\)), and similarly for redundancies of these redundancies, and so on.
Two gates with the same logical action but different effects on redundancies will not be homotopy equivalent.

The obstruction from redundancies turns out to be the only distinction between homotopy class and logical action for chain complexes over a field.
There is a bijection between homotopy classes of chain maps and collections of linear maps \(\phii_{i*}: H^{-i}(C) \to H_i(D)\).
This is easy to show from Theorem~\ref{thm:HomTensorIsom}.
We can compute homology classes of \([C^\vee,D]_\bullet\) (the \(0\)th homology will give chain maps up to homotopy) as
\begin{subequations}\label{eqn:Hmpty2Hom}
\begin{align}
    H_j([C^\vee,D]) &\isom H_j(C \otimes D) \\
    &\isom \bigoplus_{i \in \ZZ} H_{-i}(C) \otimes H_{i+j}(D) \label{eqn:Hmpty2Hom_Kunneth} \\
    &\isom \bigoplus_{i \in \ZZ} H^{-i}(C)^* \otimes H_{i+j}(D) \label{eqn:Hmpty2Hom_HomDual2Cohom} \\
    &\isom \bigoplus_{i \in \ZZ} \Hom(H^{-i}(C), H_{i+j}(D)). \label{eqn:Hmpty2Hom_Compact}
\end{align}
\end{subequations}
In addition to Theorem~\ref{thm:HomTensorIsom}, we used 
the K\"unneth formula [Eq.~\eqref{eqn:Hmpty2Hom_Kunneth}], 
that the cohomology group \(H^i(C) \isom H_i(C)^*\) is the dual of the homology group [Eq.~\eqref{eqn:Hmpty2Hom_HomDual2Cohom}], 
and that \(W \otimes V^* \isom \Hom(V,W)\) for vector spaces [Eq.~\eqref{eqn:Hmpty2Hom_Compact}].
The final result is that \(H_j([C^\vee,D])\) is isomorphic to the collection of linear maps \(\phii_*: H^{-\bullet}(C) \to H_{\bullet+j}(D)\).

More fundamentally, every chain complex \(C_\bullet\) over a field is homotopy equivalent to its homology complex \(H_\bullet(C)\), where components are given by \(H_i(C)\) and all boundary maps are zero.
It follows that there is a homotopy equivalence between \([C^\vee,D]_\bullet\) and \([H_\bullet(C^\vee), H_\bullet(D)]\). The latter has zero boundary maps, and components given by Eq.~\eqref{eqn:Hmpty2Hom_Compact}.

Over more general rings, there are more subtle distinctions between logical action and homotopy class than the action on redundancies.
There can be chain maps with identical action on all homology groups which, nonetheless, are not homotopic.
In the calculation Eq.~\eqref{eqn:Hmpty2Hom}, both the steps (\ref{eqn:Hmpty2Hom_HomDual2Cohom}) and (\ref{eqn:Hmpty2Hom_Compact}) generally fail.
It is no longer true that \(H_{-i}(C) \isom H^{-i}(C)^*\), and further \(H^{-i}(C)\) can fail to be a free module (or even projective) so that we may have \(H^{-i}(C)^* \otimes H_{i+j}(D) \not\isom \Hom(H^{-i}(C), H_{i+j}(D))\).
Thus, homotopy classes of chain maps are generally smaller than even equivalence classes determined by the action on homology, let alone classes determined only by the action on \(H^0(C)\).

\section{Interpretation of Cohomology Invariants as Intersection Invariants}
\label{app:CupsAndPoincare}

In this appendix, we review the notion of the cap product, which provides a physical interpretation for the cohomology invariants of the main text.
In particular, given a cup product, the cap product is a bilinear product defined as: 
\begin{defn}[(Sparse) Cap Product] Suppose that $C_{\bullet}$ is a chain complex equipped with a cup product. 
Then, the cap product is defined as: 
\begin{align}
    \capp\, &: \bigoplus_{p, q} C_p \times C^{q} \to C^{p - q} \nonumber \\
    b_p\capp \mathbf{a}_q &\equiv \sum_{\mathbf{x} \in \mathsf{B}(C^{p - q})} \left(\int_{b_p} \mathbf{a}_q \cupp \mathbf{x} \right) x 
\end{align}
where $b_p \in C_p$, $\mathbf{a}_q \in C^q$, $x = \mathbf{x}^{\mathsf{T}}$ is the basis chain that is dual to the basis cochain $\mathbf{x}$, and the integral symbol indicates evaluation $\int_b \mathbf{c} \equiv \mathbf{c}(b)$ and is used to make contact to the theory of calculus on manifolds.
If the cup product is sparse, we say the cap product is sparse.
\end{defn}

Intuitively, the cap product of a chain $b_p$ with a co-chain $\mathbf{a}_q$ constructs a $(p - q)$-chain by ``laying down'' every basis chain $x$ such that $\mathbf{a}_q \cupp \mathbf{x}$ pairs with $b_q$.

We will now prove that the cap product, similar to the cup product maps a cycle-cocycle pair to a cycle and furthermore, defines a product between homology and cohomology.
This product can be viewed as a ``duality'' map.
In particular, fixing a cycle $b_p$, the map from co-cycles $\mathbf{a}_q$ to cycles $c_{p - q} = b_p \capp \mathbf{a}_q$ via the cap product is closely related to the concept of \textbf{Poincar\'e duality} in manifolds.
With this in mind, we now prove the following theorem:
\begin{theo}[Generalized Poincar\'e Duality from the Cap Product] Suppose that $C_{\bullet}$ is a chain complex equipped with a cup product, then for all $p$-cycles $b_p$ and $q$-cocycles $\mathbf{a}_q$, the cap product $b_p \capp \mathbf{a}_q$ is a $(p - q)$-cycle. 

Moreover, the homology class of $b_p \capp \mathbf{a}_q$ only depends on the homology class of $b_p$ and the cohomology class of $\mathbf{a}_q$.
Hence, the cap product defines a product on homology and cohomology: 
\begin{equation}
    [b_p] \capp [\mathbf{a}_q] = [b_p \capp \mathbf{a}_q]
\end{equation}

\end{theo}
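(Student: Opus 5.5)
The plan is to work entirely by dualizing: every statement about the chain $b_p\capp\mathbf a_q$ is tested by pairing it against cochains, and Stokes' theorem $\int_{\partial b}\mathbf c=\int_b\exd\mathbf c$ together with the Leibniz rule of Definition~\ref{def:SparseCup} converts those pairings into identities about cup products. First I will compute the boundary of the cap product. For any basis cochain $\mathbf y\in\mathsf B(C^{p-q-1})$, the definition gives
\begin{equation}
\mathbf y\bigl(\partial(b_p\capp\mathbf a_q)\bigr)=(\exd\mathbf y)(b_p\capp\mathbf a_q)=\int_{b_p}\mathbf a_q\cupp\exd\mathbf y,
\end{equation}
where the last step expands $\exd\mathbf y$ in the basis and uses bilinearity of $\cupp$, so that $\mathbf z(b_p\capp\mathbf a_q)=\int_{b_p}\mathbf a_q\cupp\mathbf z$ for every $\mathbf z\in C^{p-q}$.

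Next, the Leibniz rule gives $\mathbf a_q\cupp\exd\mathbf y=\exd(\mathbf a_q\cupp\mathbf y)+\exd\mathbf a_q\cupp\mathbf y$ over $\FF_2$. Substituting and applying Stokes,
\begin{equation}
\int_{b_p}\mathbf a_q\cupp\exd\mathbf y=\int_{\partial b_p}\mathbf a_q\cupp\mathbf y+\int_{b_p}\exd\mathbf a_q\cupp\mathbf y .
\end{equation}
Both terms vanish when $b_p$ is a cycle and $\mathbf a_q$ is a cocycle. Since $\mathbf y$ was arbitrary, $\partial(b_p\capp\mathbf a_q)=0$. This identity is really $\partial(b\capp\mathbf a)=\partial b\capp\mathbf a+b\capp\exd\mathbf a$, and it drives the rest of the argument.

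For well-definedness I will use bilinearity to reduce to two cases.
\begin{enumerate}
\item If $b_p=\partial c_{p+1}$, then $b_p\capp\mathbf a_q=\partial(c\capp\mathbf a_q)+c\capp\exd\mathbf a_q=\partial(c\capp\mathbf a_q)$, which is a boundary.
\item If $\mathbf a_q=\exd\mathbf e_{q-1}$, I want $b\capp\exd\mathbf e=\partial(b\capp\mathbf e)$, again read off from the same identity with $\partial b=0$.
\end{enumerate}
To justify both, I will re-derive the general identity by pairing with an arbitrary $\mathbf y$, using Leibniz on $\mathbf e\cupp\mathbf y$ and Stokes exactly as above, in place of the cocycle/cycle specialization.

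The main obstacle is bookkeeping rather than ideas. The cap product is only defined through pairings with basis cochains, so I must carefully justify the identity $\mathbf z(b\capp\mathbf a)=\int_b\mathbf a\cupp\mathbf z$ for non-basis $\mathbf z$. I also need to track degrees: with the paper's convention $C^{p-q}$ labels the target even though the output is a chain, so I will treat $b\capp\mathbf a$ as an element of $C_{p-q}$. Over $\FF_2$ no signs arise, so once that pairing lemma is established, everything else follows by linearity.
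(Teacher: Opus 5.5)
Your proposal is correct and follows essentially the same argument as the paper: pair against basis cochains, move $\partial$ onto $\exd\mathbf y$, then combine the Leibniz rule with Stokes. The only difference is packaging. You derive the general identity $\partial(b\capp\mathbf a)=\partial b\capp\mathbf a+b\capp\exd\mathbf a$ once and then specialize it, which also gives the case $\partial c\capp\mathbf a_q=\partial(c\capp\mathbf a_q)$ that the paper only asserts ``similarly.''
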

\begin{proof}
    Suppose that $b_p$ and $\mathbf{a}_q$ are a $p$-cycle and $q$-cocycle respectively, we want to show that $\partial (b_p \capp \mathbf{a}_q) = 0$.
    It suffices to show that for all $\mathbf{y} \in \mathsf{B}(C^{p - q - 1})$, 
    \begin{equation}
        \mathbf{y}[ \partial (b_p \capp \mathbf{a}_q)] = 0
    \end{equation}
    We can see this by explicit computation:
    \begin{align} \label{eq-capcalc}
        \mathbf{y}[\partial(b_p \capp \mathbf{a}_q)] &= \sum_{\mathbf{x} \in \mathsf{B}(C^{p - q})} \left(\int_{b_p} \mathbf{a}_q \cupp \mathbf{x} \right) \mathbf{y}(\partial x) \nonumber\\ 
        &= \sum_{\mathbf{x} \in \mathsf{B}(C^{p - q})} \left(\int_{b_p} \mathbf{a}_q \cupp \mathbf{x} \right) \exd \mathbf{y}(x) \nonumber \\
        &= \int_{b_p} \mathbf{a}_q \cupp \exd \mathbf{y} 
    \end{align}
where in the last step, we used the fact that $\exd \mathbf{y} = \sum_{x} \exd \mathbf{y}(x) \mathbf{x}$.
Now, note that, by the Leibniz rule:
\begin{equation}
    \mathbf{a}_q \cupp \exd \mathbf{y} = \exd (\mathbf{a}_q \cupp \mathbf{y}) + \exd \mathbf{a}_q \cupp \mathbf{y} = \exd (\mathbf{a}_q \cupp \mathbf{y})
\end{equation}
where in the last step, we used that $\exd \mathbf{a}_q = 0$ because it is a co-cycle.
Hence, we have that: 
\begin{equation}
    \mathbf{y}[\partial(b_p \capp \mathbf{a}_q)] = \int_{b_p} \exd(\mathbf{a}_q \cupp \mathbf{y})  = \int_{\partial b_p} \mathbf{a}_q \cupp \mathbf{y} = 0
\end{equation}
Since this is true for any basis element $\mathbf{y} \in \mathsf{B}(C^{p - q - 1})$, it follows that $\partial(b_p \capp \mathbf{a}_q) = 0$ and hence $b_p \capp \mathbf{a}_q$ is a cycle.

We now show that $b_p \capp \mathbf{a}_q$ depends only on the homology of $b_p$ and $\mathbf{a}_q$.
Indeed, we now show that, if  $\mathbf{c} \in C^{q - 1}$, then:
\begin{equation}
    b_p \capp (\mathbf{a}_q + \exd \mathbf{c}) = b_p \capp \mathbf{a}_q + \partial (b_p \capp \mathbf{c})
\end{equation}
and hence, $b_p \capp (\mathbf{a}_q + \exd \mathbf{c})$ is in the same homology class as $b_p \capp \mathbf{a}_q$.
Similarly, we can show that if $c \in C_{p + 1}$, $\partial c\capp \mathbf{a}_q = \partial (c \capp \mathbf{a}_q)$.
We show the former and the latter follows similarly. 

To show that $b_p \capp \exd \mathbf{c} = \partial (b_p \capp \mathbf{c})$, let $\mathbf{y} \in \mathsf{B}(C^{p - q})$.
Then:
\begin{align}
     \mathbf{y}(b_q \capp \exd \mathbf{c}) &= \int_{b_p} \exd \mathbf{c} \cupp \mathbf{y} = \int_{b_p} \mathbf{c} \cupp \exd \mathbf{y} \\
     &= \exd \mathbf{y} ( b_q \capp \mathbf{c}) = \mathbf{y}[\partial ( b_q \capp \mathbf{c})] 
\end{align}
where the first equality on the first line follows from a calculation similar to Eq.~\eqref{eq-capcalc} and the second equality on the first line follows from the Leibniz rule and the fact that $\partial b_p = 0$.
Since the above holds for all $\mathbf{y}$, we have that $b_q \capp \exd \mathbf{c} = \partial (b_q \capp \mathbf{c})$.
We can repeat a similar calculation to show $\partial c \capp \mathbf{a}_q = \partial (c \capp \mathbf{a}_q)$.
These show that the cap product defines a product on homology and cohomology.
\end{proof}

To get an intuitive feeling of the cap product, we evaluate it for a particular case in the 2D toric code.
This will be particularly helpful for building intuition for how the cup product is used to construct logical gates.

We report the cap product between the unique $2$-cycle of the toric code,%
\footnote{Recall that we are using the convention where $Z$ stabilizers live on $C_2$.} 
denoted $\mathcal{M} \in C_2$, and a co-cycle of the toric code $\mathbf{a} \in C^1$.
Recall that for the toric code $\mathcal{M}$ is simply the sum of all plaquettes on the toroidal lattice: 
\begin{equation} \label{eq-fundamentalclass}
    \mathcal{M} = \sum_{p \in \mathsf{B}(C_2)} p
\end{equation}
which is a $2$-cycle because $\partial \mathcal{M} = 0$ as a consequence of  periodic boundary conditions
Then, the cap product $\mathcal{M} \capp \mathbf{a}$ is given by: 
\begin{equation}
    \textcolor{dodgerblue}{b} = \mathcal{M} \capp \textcolor{red}{\mathbf{a}} =      \begin{tikzpicture}[scale = 0.8, baseline = {([yshift=-.5ex]current bounding box.center)}]
        \foreach \i in {0, 1, 2}{
            \draw[color = lightgray] (\i, -0.5) -- (\i, 1.5);}
        \foreach \i in {0, 1}{
            \draw[color = lightgray] (-0.5, \i) -- (2.5, \i);}
        \draw[color = red, line width = 0.5 pt, dashed] (-0.5, 0.5) -- (2.5, 0.5);
        \foreach \i in {0, 1, 2}{
            \draw[color = red, line width = 1 pt] (\i, 0) -- (\i, 1);
            \draw[color = dodgerblue, line width = 1 pt] (\i - 0.5, 1) -- (\i + 0.5, 1);
        }
    \end{tikzpicture}  
\end{equation}
Note that, in the code language, $\mathbf{a}$ is an $X$-logical, and $\mathcal{M} \capp \mathbf{a}$ is a $Z$-logical that runs parallel to it.
This association between $X$ and $Z$ logical operators is the famous \textit{Poincar\'e duality} of manifolds.

Finally, we can see how the cohomology invariants can be interpreted as measuring a notion of \textbf{intersection} by using the cap product.
In particular, it is easy to verify that: 
\begin{equation}
    \mathcal{I}_{\mathcal{M}_{(2)}}(\mathbf{a}, \mathbf{b}) = \int_{\mathcal{M}_{(2)}\, \capp\,  \mathbf{a}} \mathbf{b},
\end{equation}
i.e. the invariant is the evaluation of the co-cycle $\mathbf{b}$ on the cycle $\mathcal{M}_{(2)} \capp \mathbf{a}$.
A concrete example of this follows from the toric code, if $\mathcal{M}_{(2)} = \mathcal{M}$ written in Eq.~\eqref{eq-fundamentalclass}, then:
\begin{equation}
    \mathcal{I}_{\mathcal{M}_{(2)}}(\textcolor{red}{\mathbf{a}}, \textcolor{orange}{\mathbf{b}}) = \int_{\textcolor{dodgerblue}{c} = \mathcal{M} \capp \textcolor{red}{\mathbf{a}}} \textcolor{orange}{\mathbf{b}} =      \begin{tikzpicture}[scale = 0.8, baseline = {([yshift=-.5ex]current bounding box.center)}]
        \foreach \i in {0, 1, 2}{
            \draw[color = lightgray] (\i, -0.5) -- (\i, 2.5);}
        \foreach \i in {0, 1, 2}{
            \draw[color = lightgray] (-0.5, \i) -- (2.5, \i);}
        \draw[color = red, line width = 0.5 pt, dashed] (-0.5, 0.5) -- (2.5, 0.5);
        \draw[color = orange(ryb), line width = 1 pt, dashed] (0.5, -0.5) -- (0.5, 2.5);
        \foreach \i in {0, 1, 2}{
            \draw[color = red, line width = 1 pt] (\i, 0) -- (\i, 1);
            \draw[color = lightorange(ryb), line width = 1.5 pt] (0, \i) -- (1, \i);
            \draw[color = dodgerblue, line width = 1 pt] (\i - 0.5, 1) -- (\i + 0.5, 1);
        }
    \end{tikzpicture} = 1
\end{equation}
We can see that 
$\mathcal{I}_{\mathcal{M}_{(2)}(b)}$ intuitively captures the ``intersection'' between the dashed lines representing the two co-cycles.
The way it does this is by first shifting co-cycle $\textcolor{red}{\mathbf{a}}$ onto the cycle $\textcolor{dodgerblue}{c} = \mathcal{M} \capp \textcolor{red}{\mathbf{a}}$, which directly intersects with $\textcolor{orange}{\mathbf{b}}$.
This line of reasoning can be generalized to $\ell$-fold cup products.

\bibliography{UnifyingLDPCComputation}

\end{document}